\documentclass[aps,pra,11pt,onecolumn,nofootinbib,tightenlines,superscriptaddress]{revtex4-2}

\usepackage{mathrsfs}

\usepackage{graphicx}
\usepackage{comment}
\usepackage{amsmath}
\usepackage{amsfonts}
\usepackage{amssymb}
\usepackage{braket}
\usepackage{amsthm}

\usepackage{enumerate}

\usepackage{natbib}
\usepackage{appendix}
\usepackage{multirow}
\usepackage{array}
\usepackage{physics}
\usepackage{mathtools}
\usepackage{xcolor}
\usepackage{wasysym}
\usepackage{needspace}
\usepackage{etoolbox}

\makeatletter
\def\label#1{\@bsphack
  \begingroup
  \UseHookWithArguments{label}{1}{#1}%
  \protected@write\@auxout{}%
    {\string\newlabel{#1}{{\@currentlabel}{\thepage}%
    {\@currentlabelname}{\@currentHref}{\@kernel@reserved@label@data}}}%
  \endgroup
  \@esphack}
\let\ltx@label\label
\makeatother
\usepackage[colorlinks=true,urlcolor=blue,citecolor=blue,linkcolor=blue]{hyperref}

\usepackage[paperwidth=225mm,paperheight=307mm,centering,hmargin=2.45cm,vmargin=2.5cm]{geometry}

\makeatletter
\patchcmd{\@bibdataout@aps}{author="08"}{author="48"}{}{}
\patchcmd{\@bibdataout@aps}{author="08"}{author="48"}{}{}
\makeatother

\DeclareMathOperator*{\argmax}{arg \,max}

\DeclareMathOperator{\KM}{KM}

\newtheorem{theorem}{Theorem}
\newtheorem{lemma}[theorem]{Lemma}
\newtheorem{proposition}[theorem]{Proposition}
\newtheorem{corollary}[theorem]{Corollary}
\newtheorem{definition}[theorem]{Definition}
\newtheorem{remark}{Remark}

\DeclareMathOperator{\supp}{supp}

\def \d {\mathrm{d}}

\newcommand{\widebar}[1]{\overline{\mkern-4mu#1\mkern-1mu}}

\allowdisplaybreaks

\newdimen\savedTOChfuzz

\begin{document}

\title{\Large Log-Euclidean Rényi Conditional Mutual Information
}

\begin{abstract}
Quantum conditional mutual information (QCMI) is a fundamental measure of conditional
independence, with central applications throughout quantum information theory
and many-body physics. Despite its importance, finding a fully quantum
R\'enyi generalization that satisfies all desirable structural properties has
remained an open question. In this work, we introduce a quantum R\'enyi
conditional mutual information possessing these desirable properties: it is
monotone under local quantum channels, additive under tensor products,
monotone in the R\'enyi parameter, and converges to the QCMI in the limit $\alpha \to 1$. We establish bounds on this quantity in terms of relative entropies to the Petz-recovered state, showing that it provides a distinct measure of approximate conditional independence. As a consequence, we derive recovery-based upper and lower bounds on the QCMI in terms of quantum generalizations of the Kullback–Leibler divergence, improving upon the best known results in certain regimes. Finally,  we establish bounds in terms of relative entropies
to the sets of Markov and separable states and derive R\'enyi chain rules, which
appear to be novel even in the classical setting.
\end{abstract}

\author{Roberto Rubboli}
\email{ror@math.ku.dk}
\affiliation{Department of Mathematical Sciences, University of Copenhagen, Universitetsparken 5, 2100 Denmark}
\author{Amir Arqand}
\affiliation{Institute for Quantum Computing and Department of Physics and Astronomy,
University of Waterloo, Waterloo, Ontario, Canada, N2L 3G1}
\author{Mark M. Wilde}
\affiliation{School of Electrical and Computer Engineering,
Cornell University, Ithaca, New York 14850, USA}

\maketitle

\savedTOChfuzz=\hfuzz
\hfuzz=30pt
\tableofcontents
\hfuzz=\savedTOChfuzz

\section{Introduction}
\label{sec: intro}
\subsection{Background}

The quantum conditional mutual information (QCMI) is a fundamental measure of conditional independence in quantum information theory \cite{lieb1973fundamental,LiebRuskai1973Proof}, with various operational meanings \cite{DevetakYard2008ExactCost,YardDevetak2009OptimalSourceCoding,QiSharmaWilde2018,Sharma2020,Berta2018,Berta2018PRL}, and it has found applications across several areas of physics \cite{KatoBrandao2019Quantum,Kim2023,anshu2023one}. For a tripartite state $\rho_{ABC}$, it is defined as
\begin{equation}
I(A: B|C)_{\rho}\coloneqq H(AC)_{\rho}+H(BC)_{\rho}-H(C)_{\rho}-H(ABC)_{\rho} \,,
\end{equation}
where $H(AC)\coloneqq -\Tr[\rho_{AC}\log \rho_{AC}]$, with similar definitions for the other marginal entropies.
The QCMI quantifies the correlations between systems $A$ and $B$ that remain after conditioning on a third system $C$.

The QCMI is closely related to the structure of quantum Markov chains. In particular, the condition $I(A:B|C)_\rho=0$ holds if and only if $\rho_{ABC}$ is a quantum Markov chain in the order $A-C-B$ \cite{hayden04CMI}, in the sense that the full state can be recovered from its marginal $\rho_{AC}$ by acting only on the conditioning system $C$. Equivalently, there exists a quantum channel $\mathcal{R}_{C\to BC}$ such that
\begin{align}
    \mathcal{R}_{C\to BC}(\rho_{AC})=\rho_{ABC}.
\end{align}
The conditional mutual information also provides a quantitative measure of approximate conditional independence. In particular, a series of works showed that if the QCMI is small, then $\rho_{ABC}$ can be approximately recovered from its marginal $\rho_{AC}$ by acting only on the conditioning system $C$~\cite{fawzi2015quantum,brandao2015,Wilde2015Recoverability,sutter2016universal,junge2015universal,sutter2017multivariate}. More precisely, there exists a recovery channel $\mathcal{R}_{C\to BC}$, which can be chosen as a rotated Petz recovery map, such that the reconstructed state $\mathcal{R}_{C\to BC}(\rho_{AC})$ has high fidelity with $\rho_{ABC}$~\cite{Wilde2015Recoverability,junge2015universal}.

Although the QCMI is well understood, its R\'enyi generalizations remain subtle in the quantum setting~\cite{berta2015renyi}. A satisfactory R\'enyi QCMI should reduce to a classical R\'enyi conditional mutual information in the commuting case, obey data processing under local quantum channels on \(A\) and \(B\), and be additive under tensor products, reflecting the extensivity of correlations for independent systems. The main obstruction is non-commutativity, which prevents a direct extension of the classical definition. Consequently,
prior to our work, no R\'enyi-type QCMI was known to satisfy all of these desirable properties~\cite{berta2015renyi}.

\subsection{Summary of contributions}

In this work, we address this problem by introducing several log-Euclidean Rényi generalizations of the QCMI that satisfy the properties mentioned above. We begin with one particular variant, defined for a tripartite state $\rho_{ABC}$ and $\alpha \in (0,1)\cup (1,\infty)$ by
\begin{align}
I_\alpha^{\flat}(A:B|C)_\rho
\coloneqq \frac{1}{\alpha - 1}
\log \Tr\!\left[
\exp\!\big(
\alpha \log \rho_{ABC}
+ (1-\alpha)\big(
\log \rho_{AC}
+ \log \rho_{BC}
- \log \rho_C
\big)
\big)
\right].
\end{align}
For states that are not full rank, this expression is understood through the limiting definitions in Section~\ref{sec: notation}. 
We show that it satisfies the data-processing inequality for all $\alpha \in (0,1)$, as well as additivity and monotonicity in the parameter $\alpha$ for all $\alpha > 0$. Moreover, it converges to the standard conditional mutual information in the limit $\alpha \to 1$. 

We further show that this quantity provides a faithful
characterization of exact conditional independence, vanishing precisely when the
underlying tripartite state is a quantum Markov chain. Moreover, it provides
a quantitative measure of approximate conditional independence through the
following bounds. For $\alpha\in (0,1)$, we obtain 
\begin{equation}
I^\flat_\alpha(A:B|C)_\rho
\geq
\max\!\left\{
\int_{\mathbb{R}}dt\, \beta_0(t)\,
D_{\alpha,1-\alpha}\!\left(
\rho_{ABC}\middle\|\mathcal{R}^{[t]}_{C\to BC}(\rho_{AC})
\right),
D_\alpha^{\mathbb{M}}\!\left(
\rho_{ABC}\middle\|
\int_{\mathbb{R}}dt\, \beta_0(t)\,
\mathcal{R}^{[t]}_{C\to BC}(\rho_{AC})
\right)
\right\},
\end{equation}
while, for $\alpha>1$ and full-rank $\rho_{ABC}$, we obtain
\begin{equation}
I^\flat_\alpha(A:B|C)_\rho
\leq
\int_{\mathbb{R}}dt\, \beta_0(t)\,
D_{\alpha,\alpha-1}\!\left(
    \rho_{ABC}
    \middle\|
    \mathcal{R}^{[t]}_{C\to BC}(\rho_{AC})
\right)\, .
\end{equation}
The bounds use the rotated Petz recovery map $\mathcal R^{[t]}_{C\to BC}$~\cite{Wilde2015Recoverability} and the probability density defined in~\eqref{eq: rotated recovery} and~\eqref{eq:beta}, respectively.
Moreover, $D^\mathbb{M}_{\alpha}$ denotes the measured R\'enyi relative entropy while $D_{\alpha,1-\alpha}$ and $D_{\alpha,\alpha-1}$  are the minimal and
maximal R\'enyi relative entropies for $\alpha\in[0,1/2]$ and $\alpha\geq2$,
respectively~\cite{hayashi2016correlation, mosonyi2024geometric,rubboli2026maximal}; see
Section~\ref{sec: notation} for further details.

In the limit $\alpha\to1$, the lower bound involving the measured relative
entropy recovers the result of~\cite{sutter2017multivariate}, whereas the
remaining bounds yield the following new  estimates for the QCMI 
\begin{equation}
\begin{aligned}
\int_{\mathbb{R}}dt\, \beta_0(t)\,
D^\star\!\left(
    \rho_{ABC}
    \middle\|
    \mathcal{R}^{[t]}_{C\to BC}(\rho_{AC})
\right)\,
\leq I(A:B|C)_\rho
\leq
\int_{\mathbb{R}}dt\, \beta_0(t)\,
D^{\text{\normalfont\leftmoon}}\!\!\left(
    \rho_{ABC}
    \middle\|
    \mathcal{R}^{[t]}_{C\to BC}(\rho_{AC})
\right)\, .
\end{aligned}
\end{equation}
Here, $D^\star$ and $D^{\text{\normalfont\leftmoon}}\!$ denote quantum
extensions of the classical Kullback--Leibler (KL) divergence, with
$D^\star$ known as Keyl's rate function~\cite{keyl2006estimation,audenaert13_alphaz} (see Section~\ref{sec: notation} for more details). In particular, these bounds are tight classically, and the lower bound always improves upon the fidelity-based bound
of~\cite{junge2015universal}. In certain cases, it is also tighter than the
bound based on measured relative entropy, although neither of these two bounds
dominates the other in general. Our upper bound resembles the one derived using continuity bounds in~\cite{berta2015renyi,fawzi2015quantum}, which holds for arbitrary recovery channels, but contains no dimension-dependent factors.

For the QCMI, the standard formula admits equivalent variational reformulations involving optimizations over suitable sets of states. Although these formulations coincide with the usual definition of conditional mutual information~\cite{berta2015renyi}, they generally cease to be equivalent in the R\'enyi case. Motivated by this distinction, we introduce a second quantity, defined by
\begin{align}
I_\alpha^{\flat,*}(A:B|C)_\rho
\coloneqq \inf_{\sigma_{BC},\,\eta_{AC}} \sup_{\omega_C}
\frac{1}{\alpha - 1}
\log \Tr\!\left[
\exp\!\big(
\alpha \log \rho_{ABC}
+ (1-\alpha)\big(
\log \eta_{AC}
+ \log \sigma_{BC}
- \log \omega_C
\big)
\big)
\right]\,,
\end{align}
where all optimizations are over states on the corresponding systems.
For $\alpha\in(0,2)$, we show that this quantity is not larger than $I_\alpha^{\flat}$. It obeys data processing for $\alpha\in (0,1)$ and is additive for $\alpha\in[1/2,2]$. It also converges to the standard conditional mutual information in the limit $\alpha \to 1$. In addition, for $\alpha\in(0,1)$, we prove the following bounds
\begin{align}
    \inf_{\sigma_{AB} \in \mathrm{Sep}(A:B)}D^{\mathrm{LOCC}_1(A\to B)}_{\alpha}\left(\rho_{AB}\middle\|\sigma_{AB}\right)\leq  I^{\flat,*}_{\alpha}(A:B|C)_\rho \leq \inf_{\omega_{ABC} \in \mathcal{M}_{A:B|C} }D^\flat_{\alpha}(\rho_{ABC}\| \omega_{ABC}),
\end{align}
where $\mathcal{M}_{A:B|C}$ is the set of Markov chains with order $A-C-B$, the symbol \(D^\flat_{\alpha}\) denotes the log-Euclidean R\'enyi relative
entropy, as defined in Section~\ref{sec: notation}, and
\(D^{\mathrm{LOCC}_1(A\to B)}_{\alpha}\) denotes the R\'enyi relative
entropy measured with respect to one-way LOCC measurements from \(A\) to \(B\).
The latter consists of a local measurement on \(A\), followed by a
local measurement on \(B\) conditioned on the classical outcome of the measurement 
on \(A\).
The corresponding $\alpha=1$ lower bounds were established in~\cite{brandao2011faithful,li2018squashed,li2014relative}, while the upper bound generalizes the result of~\cite{ibinson2008robustness}.

\medskip
Other contributions of our paper include a family of chain rules for the
log-Euclidean R\'enyi conditional mutual informations. More precisely, for
\((\alpha-1)(\beta-1)>0\) and
\(\frac{\beta}{\beta-1}
=
\frac{\alpha}{\alpha-1}
+
\frac{1}{\gamma-1}\), we prove that, whenever \(\gamma \in (0,1)\),
\begin{align}
    I_\alpha^{\flat}(A:B|C)_\rho
    &\leq
    H_\gamma^{\flat,\downarrow}(B|C)_\rho
    -
    H_\beta^{\flat,\downarrow}(B|AC)_\rho,
    \\
    I_\alpha^{\flat}(A:B|C)_\rho
    &\leq
    I_\beta^{\flat,\uparrow \uparrow}(AC:B)_\rho
    -
    I_\gamma^{\flat,\uparrow \uparrow}(C:B)_\rho,
    \\
    I_\alpha^{\flat}(A:B|CD)_\rho
    & \leq
    I_\beta^{\flat}(A:BC|D)_\rho
    -
    I_\gamma^{\flat}(A:C|D)_\rho .
\end{align}
For \(\gamma>1\), the same relations hold with all inequalities reversed.
Here, \(H_\theta^{\flat}\) and \(I_\theta^{\flat}\) denote the
log-Euclidean R\'enyi conditional entropy and mutual information,
respectively; see Section~\ref{sec: notation}. We also establish analogous
chain rules for the corresponding optimized quantities, under appropriate
parameter relations. To the best of our knowledge, these relations appear to be novel even in the classical setting. Moreover, they are tight as the parameters approach one, in which case they recover the known chain rules for the QCMI.

\section{Notation}
\label{sec: notation}

\subsection{Entropies and information}

Throughout our paper, we assume that all Hilbert spaces are finite dimensional, and all logarithms are natural logarithms. For a state $\rho$ and a positive semidefinite operator $\sigma$, we write $\rho\ll\sigma$ when $\supp(\rho)\subseteq\supp(\sigma)$, and $\rho\perp\sigma$ when their supports are orthogonal.
The Umegaki relative entropy is defined for a state $\rho$ and a positive semidefinite operator $\sigma$ as~\cite{Umegaki62}
\begin{align}
    D(\rho\|\sigma) \coloneqq 
    \begin{cases}
\Tr[\rho(\log\rho-\log\sigma)], & \rho\ll\sigma,\\
+\infty, & \text{otherwise},
\end{cases}
\end{align}
The von Neumann conditional entropy is defined as
\begin{align}
    H(A|B)_\rho \coloneqq-D(\rho_{AB}\|I_A\otimes \rho_B) = \sup_{\sigma_B} -D(\rho_{AB}\|I_A\otimes \sigma_B) \,,
\end{align}
where the supremum is over every state $\sigma_B$.
The mutual information is defined as
\begin{align}
    I(A:B)_\rho \coloneqq D(\rho_{AB}\|\rho_A\otimes \rho_B) =\inf_{\sigma_A,\sigma_B}D(\rho_{AB}\|\sigma_A\otimes \sigma_B) \,,
\end{align}
where the infimum is over all states $\sigma_A$ and $\sigma_B$. For $\alpha\in(0,1)\cup(1,\infty)$, the Petz R\'enyi relative entropy is defined as~\cite{petz1986quasi}
\begin{equation}
\widebar{D}_\alpha(\rho\|\sigma)\coloneqq
\begin{cases}
\frac{1}{\alpha-1}\log\Tr[\rho^\alpha\sigma^{1-\alpha}],&\text{if }(\alpha<1\text{ and }\rho\not\perp\sigma)\text{ or }\rho\ll\sigma,\\[1ex]
+\infty,&\text{otherwise}.
\end{cases}
\label{eq:petz-renyi}
\end{equation}
It satisfies the data-processing inequality for $\alpha\in(0,1)\cup(1,2]$. The sandwiched R\'enyi relative entropy is defined for the same parameter domain $\alpha\in(0,1)\cup(1,\infty)$ as~\cite{muller2013quantum,wilde2014strong}
\begin{equation}
\widetilde D_\alpha(\rho\|\sigma)\coloneqq
\begin{cases}
\frac{1}{\alpha-1}\log\Tr\!\left[\left(\sigma^{\frac{1-\alpha}{2\alpha}}\rho\sigma^{\frac{1-\alpha}{2\alpha}}\right)^\alpha\right],&\text{if }(\alpha<1\text{ and }\rho\not\perp\sigma)\text{ or }\rho\ll\sigma,\\[1ex]
+\infty,&\text{otherwise}.
\end{cases}
\label{definition sandwiched}
\end{equation}
All negative powers are evaluated on the corresponding supports.

We use two further families of R\'enyi relative entropies. For $\alpha\in(0,1)$, the reverse sandwiched R\'enyi relative entropy is defined as
\begin{equation}
\begin{aligned}
D_{\alpha,1-\alpha}(\rho\|\sigma)
&\coloneqq\frac{\alpha}{1-\alpha}\widetilde D_{1-\alpha}(\sigma\|\rho)=\begin{cases}
\frac{1}{\alpha-1}\log\Tr\!\left[\left(\rho^{\frac{\alpha}{2(1-\alpha)}}\sigma\rho^{\frac{\alpha}{2(1-\alpha)}}\right)^{1-\alpha}\right],&\text{if }\rho\not\perp\sigma,\\[1ex]
+\infty,&\text{otherwise}.
\end{cases}
\end{aligned}
\label{eq: reverse sandwiched}
\end{equation}
It is the minimal R\'enyi relative entropy for $\alpha\in[0,1/2]$~\cite{hayashi2016correlation,mosonyi2024geometric}, is monotone in $\alpha$ on $(0,1)$~\cite{rubboli2024magic}, and satisfies the data-processing inequality precisely for $\alpha\in[0,1/2]$~\cite{zhang2020wigner}, with $\alpha=0$ defined by a limit.

For $\alpha>1$, we define
\begin{equation}
D_{\alpha,\alpha-1}(\rho\|\sigma)\coloneqq
\begin{cases}
\frac{1}{\alpha-1}\log\Tr\!\left[\left(\rho^{\frac{\alpha}{2(\alpha-1)}}\sigma^{-1}\rho^{\frac{\alpha}{2(\alpha-1)}}\right)^{\alpha-1}\right],&\text{if }\rho\ll\sigma,\\[1ex]
+\infty,&\text{otherwise}.
\end{cases}
\label{eq: moon renyi}
\end{equation}
This family coincides with the maximal R\'enyi relative entropy for $\alpha\geq2$~\cite{rubboli2026maximal}. It is monotone in $\alpha$ on $(1,\infty)$~\cite{rubboli2024magic} and satisfies the data-processing inequality precisely for $\alpha\geq2$~\cite{zhang2020wigner}. At $\alpha=2$, it equals both the geometric R\'enyi relative entropy and the Petz R\'enyi relative entropy: $D_{2,1}=\widebar{D}_2$~\cite{petz1986quasi}.

As $\alpha\downarrow0$, $D_{\alpha,1-\alpha}$ converges to the min-relative entropy, while as $\alpha\to\infty$, $D_{\alpha,\alpha-1}$ converges to the max-relative entropy~\cite{renner2008security,datta2009min}; see also~\cite[Lemma~23]{rubboli2024magic}. These are defined by
\begin{equation}
D_{\min}(\rho\|\sigma)\coloneqq-\log\Tr[P_\rho\sigma],\qquad
D_{\max}(\rho\|\sigma)\coloneqq\inf\{\lambda\in\mathbb R:\rho\leq e^\lambda\sigma\},
\end{equation}
where $P_\rho$ is the projector onto the support of $\rho$.

The limits of $D_{\alpha,1-\alpha}$ and $D_{\alpha,\alpha-1}$ as $\alpha\to1^-$ and $\alpha\to1^+$, respectively, yield two quantum extensions of the classical relative entropy that generally differ for noncommuting states.
 To introduce these quantities, we first express the Umegaki relative entropy using projectors onto the subspaces spanned by the leading eigenvectors of $\rho$. Write $\rho=\sum_{j=1}^d p_j|j\rangle\langle j|$, with $p_1\geq\cdots\geq p_d\geq p_{d+1}=0$, and set $P_k=\sum_{j=1}^k|j\rangle\langle j|$. For $\rho\ll\sigma$, the identity $\rho=\sum_k(p_k-p_{k+1})P_k$ gives
\begin{equation}
D(\rho\|\sigma)=\Tr[\rho\log\rho]-\sum_{k=1}^d(p_k-p_{k+1})\Tr[P_k(\log\sigma)P_k].
\label{eq: umegaki projector form}
\end{equation}
Moving the logarithm outside each projection gives Keyl's rate function~\cite{keyl2006estimation,grootveld2026tomography}:
\begin{equation}
D^\star(\rho\|\sigma)\coloneqq
\begin{cases}
\Tr[\rho\log\rho]-\sum_{k=1}^d(p_k-p_{k+1})\Tr[\log(P_k\sigma P_k)],&\text{if }\supp(\rho)\cap\ker(\sigma)=\{0\},\\
+\infty,&\text{otherwise}.
\end{cases}
\label{eq: keyl rate}
\end{equation}
Replacing $\sigma$ by $\sigma^{-1}$ and changing the minus sign to a plus in the second term gives
\begin{equation}
D^{\text{\normalfont\leftmoon}}\!(\rho\|\sigma)\coloneqq
\begin{cases}
\Tr[\rho\log\rho]+\sum_{k=1}^d(p_k-p_{k+1})\Tr[\log(P_k\sigma^{-1}P_k)],&\text{if }\rho\ll\sigma,\\
+\infty,&\text{otherwise}.
\end{cases}
\label{eq: moon rate}
\end{equation}
 Terms with $p_k=p_{k+1}$ are omitted, and the expressions are therefore independent of the eigenbasis within degenerate eigenspaces.
The inverse $\sigma^{-1}$ in~\eqref{eq: moon rate} is taken on $\supp(\sigma)$.
\par
The quantity $D^\star$ is Keyl's large-deviation rate function for quantum state estimation~\cite{keyl2006estimation,grootveld2026tomography}. Its representation-theoretic framework also appears in quantum measurement and moment-map estimation~\cite{notzel2016measurements,botero2021moment,franks2023minimal}, transformations of pairs of states and hypothesis testing~\cite{lipka2024dichotomies,ji2025beyond,hayashi2026reverse}, and quantum thermodynamics~\cite{watanabe2026reliability}. It has also been used to derive single-letter lower bounds on composite-hypothesis-testing error exponents~\cite{beigi2026additivity}.
The two divergences defined above also arise as limits of the R\'enyi divergences in~\eqref{eq: reverse sandwiched} and~\eqref{eq: moon renyi}, respectively~\cite[Theorems~2 and~3]{audenaert13_alphaz}:
\begin{equation}
\lim_{\alpha\to1^-}D_{\alpha,1-\alpha}(\rho\|\sigma)=D^\star(\rho\|\sigma),\qquad
\lim_{\alpha\to1^+}D_{\alpha,\alpha-1}(\rho\|\sigma)=D^{\text{\normalfont\leftmoon}}\!(\rho\|\sigma).
\label{eq: recovery divergence limits}
\end{equation}
Both limits hold for arbitrary states, with the finite and infinite cases specified in~\eqref{eq: keyl rate} and~\eqref{eq: moon rate}, respectively.

Writing $F(\rho,\sigma)\coloneqq\left\|\sqrt\rho\sqrt\sigma\right\|_1^2$ for the squared fidelity, we have
\begin{equation}
D_{\min}(\rho\|\sigma)\leq-\log F(\rho,\sigma)\leq D^\star(\rho\|\sigma)\leq D(\rho\|\sigma)\leq D^{\text{\normalfont\leftmoon}}\!(\rho\|\sigma)\leq \widebar{D}_2(\rho\|\sigma)\leq D_{\max}(\rho\|\sigma).
\label{eq: keyl moon comparison}
\end{equation}

The two outer chains of inequalities follow from monotonicity in $\alpha$ of $D_{\alpha,1-\alpha}$ and $D_{\alpha,\alpha-1}$~\cite[Lemma~24]{rubboli2024magic} for $\alpha\in(0,1)$ and $\alpha\in (1,\infty)$, respectively. Moreover, $-\log F$ coincides with $D_{\alpha,1-\alpha}$ for $\alpha=1/2$. The inequalities $D^\star(\rho\|\sigma) \leq D(\rho\|\sigma) \leq D^{\text{\normalfont\leftmoon}}\!(\rho\|\sigma)$ follow, for example, by applying the Araki--Lieb--Thirring inequality and taking the limits $\alpha\to1^-$ and $\alpha\to1^+$, respectively; see~\cite[Proposition 6]{lin2015investigating}. Since $D^\star$ is additive and agrees with classical relative entropy, data processing would imply $D^\star\geq D$. The strict inequality $D^\star<D$ for some states therefore rules out data processing.

For full-rank positive definite operators $\rho$ and $\sigma$, and
$\alpha\in(0,1)\cup(1,\infty)$, the log-Euclidean relative entropy is
defined as~\cite{hiai1993golden,audenaert13_alphaz,mosonyi2017strong}
\begin{align}
D_\alpha^\flat(\rho\|\sigma)\coloneqq
\frac{1}{\alpha-1}\log\Tr[\exp\!\left(
  \alpha \log\rho +(1-\alpha)\log\sigma
\right)].
\label{eq:log-Euclidean}
\end{align}
In the case where $\rho$ and $\sigma$ are not full-rank, it is defined as $\lim_{\delta\downarrow0} D_\alpha^\flat(\rho+\delta I\|\sigma+\delta I)$.
 We also set $Q_\alpha^\flat(\rho\|\sigma)=\exp((\alpha-1)D_\alpha^\flat(\rho\|\sigma))$. The log-Euclidean Rényi divergence admits the following variational
characterization~\cite[Theorem~III.6]{mosonyi2017strong}:
\begin{align}
\label{eq: variational form LE}
D_\alpha^\flat(\rho\|\sigma)
=\frac{1}{1-\alpha}\min_{\tau}
\bigl\{\alpha D(\tau\|\rho)+(1-\alpha)D(\tau\|\sigma)\bigr\}\,.
\end{align}
For $\alpha<1$, the minimum is over states supported on the intersection
of the supports of $\rho$ and $\sigma$, and its value is $+\infty$ if
there is no such state. For $\alpha>1$, this formula applies when
$\rho\ll\sigma$, with $\tau$ supported on $\rho$. Thus, each relative
entropy in the objective is finite. This variational form will be
useful in our derivations.
The log-Euclidean conditional entropies are defined as
\begin{align}
    H^{\flat,\downarrow}_\alpha(A|B)_\rho \coloneqq -D_{\alpha}^\flat(\rho_{AB}\|I_A\otimes \rho_B) \, , \quad H^{\flat,\uparrow}_\alpha(A|B)_\rho  \coloneqq \sup_{\sigma_B}-D_{\alpha}^\flat(\rho_{AB}\|I_A\otimes \sigma_B) \,,
\end{align}
where the supremum is over every full-rank state $\sigma_B$.
The log-Euclidean mutual informations are defined as
\begin{align}
    I^{\flat,\uparrow\uparrow}_\alpha(A:B)_\rho \coloneqq D_{\alpha}^\flat(\rho_{AB}\|\rho_A\otimes \rho_B) \, , \quad I^{\flat,\downarrow \downarrow}_\alpha(A:B)_\rho \coloneqq \inf_{\sigma_A,\sigma_B}D_{\alpha}^\flat(\rho_{AB}\|\sigma_A\otimes \sigma_B) \,.
\end{align}
Here, the infimum is taken over all full-rank states $\sigma_A$ and $\sigma_B$.

The measured Rényi relative entropy is defined for $\alpha\in(0,1)\cup(1,\infty)$ as~\cite[Eqs.~(3.116)--(3.117)]{Fuchs1996}
\begin{align}
    D^{\mathbb{M}}_\alpha(\rho\|\sigma) \coloneqq \max_{\mathcal{M}} D_\alpha (\mathcal{M}(\rho) \| \mathcal{M}(\sigma) ),
\end{align}
where the maximization is taken over every measurement $\mathcal{M}$, and $D_\alpha(p\|q)\coloneqq \frac{1}{\alpha-1}\log{\sum_x p(x)^\alpha q(x)^{1-\alpha}}$ is the classical R\' enyi relative entropy  (see also \cite{berta2017variational}). 
We also define
\begin{equation}
Q_\alpha^{\mathbb{M}}(\rho \| \sigma) \coloneqq \exp((\alpha-1)D^{\mathbb{M}}_\alpha(\rho\|\sigma)).     
\end{equation}
The latter quantity admits the following variational form for all $\alpha \in (0, 1)$~\cite[Lemma~3 and Theorem~4]{berta2017variational}:
\begin{align}
\label{eq: variational QalphaM}
    Q_\alpha^{\mathbb{M}}(\rho \| \sigma) = \inf_{\omega > 0} \left\{ \alpha \, \Tr\!\left[\rho \, \omega^{\frac{\alpha-1}{\alpha}}\right] + (1 - \alpha) \, \Tr\!\left[\sigma \, \omega\right] \right\}.
\end{align}

We also define
\begin{align}
    D^{\mathrm{LOCC}_1(A\to B)}_{\alpha}\!\left(\rho_{AB} \middle\| \sigma_{AB}\right)
    \coloneqq \max_{\mathcal{M} \in \mathrm{LOCC}_1(A\to B)}
    D_\alpha\!\left(\mathcal{M}(\rho_{AB}) \middle\| \mathcal{M}(\sigma_{AB})\right).
\end{align}
Here, \(\mathrm{LOCC}_1(A\to B)\) denotes the class of measurements realizable by one-way local operations and classical communication, whereby a measurement is performed on \(A\) and subsequently a measurement on \(B\) conditioned on the outcome obtained on \(A\).

We next recall the recovery maps that will be used in our recovery bounds. For a tripartite state $\rho_{ABC}$ and $t\in\mathbb R$, the rotated Petz recovery map is defined as~\cite{Wilde2015Recoverability}
\begin{align}
 \mathcal R^{[t]}_{C\to BC}(X_C)
 &\coloneqq\rho_{BC}^{\frac{1+it}{2}}
 \left(\rho_C^{-\frac{1+it}{2}}X_C
 \rho_C^{-\frac{1-it}{2}}\otimes I_B\right)
 \rho_{BC}^{\frac{1-it}{2}}.
 \label{eq: rotated recovery}
\end{align}
Inverse powers are taken on the corresponding supports. The map is trace preserving on $\supp(\rho_C)$ and admits a channel extension to the whole input space. Its action on $\rho_{AC}$ is independent of this extension; the identity channel on $A$ is implicit. The recovery bounds will also contain the probability density 
\begin{align}\label{eq:beta}
\beta_0(t)\coloneqq \frac{\pi}{2(\cosh(\pi t)+1)}\,.
\end{align}

\subsection{Log-Euclidean conditional mutual information}

We now introduce two variants of the log-Euclidean conditional mutual information (LE-CMI).
\begin{definition}
Let $\rho_{ABC}$ be a full-rank quantum state, and let $\alpha\in(0,1)\cup(1,\infty)$. We define
\begin{align}
I_\alpha^\flat(A:B|C)_\rho
&\coloneqq \frac{1}{\alpha-1}
\log \Tr\!\left[
 \exp\!\left(
\alpha \log \rho_{ABC}
+(1-\alpha)
(\log \rho_{AC}+\log \rho_{BC}-\log \rho_C)
\right)\right],
\\
I_\alpha^{\flat,*}(A:B|C)_\rho
&\coloneqq
\inf_{\sigma_{BC},\eta_{AC}}\sup_{\omega_C}
\frac{1}{\alpha-1}
\log \Tr\!\left[
\exp\!\left(
\alpha \log \rho_{ABC}
+(1-\alpha)
(\log \eta_{AC}+\log \sigma_{BC}-\log \omega_C)
\right)\right],
\end{align}
where the latter optimizations are over full-rank states.
\end{definition}
For states that are not full rank, we use the following limiting
definitions. For $X\in\{ABC,AC,BC,C\}$, let
$\rho_X(\varepsilon)=(1-\varepsilon)\rho_X+
\varepsilon I_X/d_X$. We define
\begin{align}
 I_\alpha^\flat(A:B|C)_\rho
 &:=\lim_{\varepsilon\downarrow0}
 D_\alpha^\flat\left(
 \rho_{ABC}\middle\|
 \exp\!\left(
 \log\rho_{AC}(\varepsilon)
 +\log\rho_{BC}(\varepsilon)
 -\log\rho_C(\varepsilon)
 \right)\right),
 \label{eq:flat-LE-renyi-CMI-limit}\\
 I_\alpha^{\flat,*}(A:B|C)_\rho
 &:=\lim_{\varepsilon\downarrow0}
 I_\alpha^{\flat,*}(A:B|C)_{\rho(\varepsilon)}.
\end{align}

For compactness, it will be useful to introduce the following auxiliary function. Let $\rho_{ABC}$, $\eta_{AC}$, $\sigma_{BC}$, and $\omega_C$ be full-rank states, and let $\alpha\in(0,1)\cup(1,\infty)$. We define
\begin{align}
\Delta_\alpha^\flat(\rho_{ABC},\eta_{AC},\sigma_{BC},\omega_C)
&\coloneqq\frac{1}{\alpha-1}\log\Tr\!\big[\exp\!\big(
\alpha\log\rho_{ABC}+(1-\alpha)
(\log\eta_{AC}+\log\sigma_{BC}-\log\omega_C)\big)\big].
\label{eq:Delta-alpha-quantity}
\end{align}
In the case where $\rho_{ABC}$ is not full rank, it is defined as $\lim_{\varepsilon\downarrow0}
   \Delta_\alpha^\flat(\rho_{ABC}(\varepsilon),\eta_{AC},\sigma_{BC},\omega_C)$.

The argument of the trace appearing in the definitions of the log-Euclidean relative entropy and the LE-CMI has a form reminiscent of the log-Euclidean Fréchet mean~\cite{arsigny2007geometric}, from which the name has its roots.

\begin{remark}
    The LE-CMI can be written in terms of the log-Euclidean relative entropy explicitly as 
    \begin{align}
I_\alpha^\flat(A:B|C)_\rho & =D_\alpha^\flat(\rho_{ABC}\|\exp(\log \rho_{AC}+\log \rho_{BC}-\log \rho_C))\,, \\
         I_\alpha^{\flat,*}(A:B|C)_\rho & =\inf_{\sigma_{BC},\eta_{AC}}\sup_{\omega_C} D_\alpha^\flat(\rho_{ABC}\|\exp(\log \eta_{AC}+\log \sigma_{BC}-\log \omega_C)) \,,
    \end{align}
    with the first formula resembling that identified in \cite[Section~2]{LiebRuskai1973Proof} for conditional mutual information. For states that are not full rank, these expressions are understood through the limiting definitions above.

    For full-rank states, the matrix-valued quantum conditional mutual information (MCMI), first considered in~\cite{kuwahara2020clustering} and subsequently termed as such in~\cite{capel2025quasi}, is defined by
$\log \rho_{ABC}+\log \rho_C-\log \rho_{AC}-\log \rho_{BC}$.
Its operator norm admits the following representation
\begin{align}
&\left\|
\log \rho_{ABC}
+\log \rho_C
-\log \rho_{AC}
-\log \rho_{BC}
\right\|_\infty
\nonumber\\
&=
\max\big\{
D_\infty^\flat\!\left(
\rho_{ABC}\middle\|
\exp\!\left(
\log \rho_{AC}
+\log \rho_{BC}
-\log \rho_C
\right)
\right), D_\infty^\flat\!\left(
\exp\!\left(
\log \rho_{AC}
+\log \rho_{BC}
-\log \rho_C
\right)
\middle\|
\rho_{ABC}
\right)
\big\}.
\end{align}
In particular, the first term in the maximum coincides with
$\lim_{\alpha\to\infty} I_\alpha^\flat(A:B|C)_\rho$.
\end{remark}

\section{Variational expressions}
In this section, we derive variational expressions for the log-Euclidean conditional entropies, mutual information, and conditional mutual information. These forms will serve as a central tool for the subsequent analysis.

In all variational expressions below, the auxiliary state $\tau$ is
restricted to $\operatorname{supp}\tau\subseteq\operatorname{supp}\rho$.
The objective functions are then finite and continuous on this compact
set. For states that are not full rank, the expressions below are
understood through the limiting definitions above.

\subsection{Variational expressions for conditional entropies and mutual information}

Let us start by deriving variational expressions for the conditional entropies and mutual informations. The result for conditional entropies has been partly derived in~\cite[Lemma 7]{rubboli2026strong}.
\begin{lemma}
\label{lem:LE_cond}
Let $\rho_{AB}$ be a quantum state. Then, for all $\alpha\in (0,1)$, 
    \begin{align}\label{eq:LE_cond}
    H_\alpha^{\flat,\downarrow}(A|B)_\rho & =\sup_{\tau_{AB}}\left\{\frac{\alpha}{\alpha-1}D(\tau_{AB}\|\rho_{AB})-D(\tau_B\|\rho_B)+H(A|B)_\tau\right\},\\
        H_\alpha^{\flat,\uparrow}(A|B)_\rho & =\sup_{\tau_{AB}}\left\{\frac{\alpha}{\alpha-1}D(\tau_{AB}\|\rho_{AB})+H(A|B)_\tau\right\}\,,
    \end{align}
   where the optimization is over every state $\tau_{AB}\ll\rho_{AB}$. Moreover, the same variational formulas remain valid for all \(\alpha>1\), with the supremum replaced by an infimum.
\end{lemma}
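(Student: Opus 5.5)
The plan is to derive both formulas from the variational characterization~\eqref{eq: variational form LE} of $D_\alpha^\flat$, applied with $\sigma=I_A\otimes\rho_B$ (for $H^{\flat,\downarrow}_\alpha$) and $\sigma=I_A\otimes\sigma_B$ (for $H^{\flat,\uparrow}_\alpha$). Although~\eqref{eq: variational form LE} is stated for states $\sigma$, it is homogeneous in the second argument, so it applies to positive semidefinite operators such as $I_A\otimes\sigma_B$.

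\textbf{Step 1: a decomposition identity.} For any state $\tau_{AB}\ll\rho_{AB}$ and any state $\sigma_B$, I use
\begin{align}
D(\tau_{AB}\|I_A\otimes\sigma_B)=\Tr[\tau_{AB}\log\tau_{AB}]-\Tr[\tau_B\log\sigma_B]=-H(A|B)_\tau+D(\tau_B\|\sigma_B).
\end{align}
The second equality follows by adding and subtracting $\Tr[\tau_B\log\tau_B]$.

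\textbf{Step 2: the formula for $H^{\flat,\downarrow}_\alpha$.} Take $\sigma_B=\rho_B$ and insert the identity into~\eqref{eq: variational form LE}:
\begin{align}
-D_\alpha^\flat(\rho_{AB}\|I_A\otimes\rho_B)=-\frac{1}{1-\alpha}\min_{\tau}\Big\{\alpha D(\tau_{AB}\|\rho_{AB})+(1-\alpha)\big(D(\tau_B\|\rho_B)-H(A|B)_\tau\big)\Big\}.
\end{align}
\begin{itemize}
\item For $\alpha\in(0,1)$, the prefactor $-1/(1-\alpha)$ is negative. The minimum therefore becomes a supremum of $\frac{\alpha}{\alpha-1}D(\tau_{AB}\|\rho_{AB})-D(\tau_B\|\rho_B)+H(A|B)_\tau$, which is the first claimed formula.
\item For $\alpha>1$, the prefactor equals $1/(\alpha-1)>0$. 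The same algebra then gives an infimum of the same objective.
\end{itemize}
The support conditions from~\eqref{eq: variational form LE} reduce to $\tau_{AB}\ll\rho_{AB}$, because $\supp(\rho_{AB})\subseteq\supp(I_A\otimes\rho_B)$. For non-full-rank $\rho_{AB}$ I will rely on the limiting convention, as the paper stipulates for all variational expressions.

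\textbf{Step 3: the formula for $H^{\flat,\uparrow}_\alpha$.} The same computation with a general full-rank $\sigma_B$ gives
\begin{align}
-D_\alpha^\flat(\rho_{AB}\|I_A\otimes\sigma_B)=\operatorname{opt}_\tau\, f(\tau,\sigma_B),\qquad f(\tau,\sigma_B)\coloneqq\frac{\alpha}{\alpha-1}D(\tau_{AB}\|\rho_{AB})+H(A|B)_\tau-D(\tau_B\|\sigma_B),
\end{align}
where $\operatorname{opt}$ is $\sup$ for $\alpha<1$ and $\inf$ for $\alpha>1$.
\begin{itemize}
\item For $\alpha\in(0,1)$, the claim is immediate. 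Interchange the two suprema and use $\sup_{\sigma_B}\{-D(\tau_B\|\sigma_B)\}=0$. This value is attained at $\sigma_B=\tau_B$, or approached by full-rank $\sigma_B\to\tau_B$, using continuity of $D(\tau_B\|\cdot)$ along $(1-\varepsilon)\tau_B+\varepsilon I_B/d_B$.
\item For $\alpha>1$, the task is to justify $\sup_{\sigma_B}\inf_\tau f=\inf_\tau\sup_{\sigma_B} f$. I expect this minimax interchange to be the main obstacle; I would use Sion's minimax theorem.
\end{itemize}

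\textbf{Checking Sion's hypotheses for $\alpha>1$.}
\begin{itemize}
\item \emph{Convexity in $\tau$.} Writing out $f$, the coefficient of $\Tr[\tau\log\tau]$ is $\frac{\alpha}{\alpha-1}-1=\frac{1}{\alpha-1}>0$. The remaining terms $-\frac{\alpha}{\alpha-1}\Tr[\tau\log\rho_{AB}]+\Tr[\tau_B\log\sigma_B]$ are linear in $\tau$. Hence $f$ is convex and continuous in $\tau$ on the compact convex set $\{\tau\ll\rho_{AB}\}$.
\item \emph{Concavity in $\sigma_B$.} Here $f$ is concave, since $\sigma_B\mapsto\Tr[\tau_B\log\sigma_B]$ is concave by operator concavity of the logarithm. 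The domain of full-rank states is convex.
\end{itemize}
Sion's theorem then yields the interchange. The inner supremum over $\sigma_B$ is again $0$, approached as in the case $\alpha<1$, and this gives the stated infimum formula.
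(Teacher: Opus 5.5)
Your proposal is correct and follows essentially the same route as the paper. Both use the variational form \eqref{eq: variational form LE} with the decomposition $D(\tau_{AB}\|I_A\otimes\sigma_B)=D(\tau_B\|\sigma_B)-H(A|B)_\tau$, interchange the two suprema directly for $\alpha<1$, and use Sion's minimax theorem for $\alpha>1$; your objective $f$ in the sup--inf form is just the negative of the paper's inf--sup objective, and it has the same convexity and continuity properties.
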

 \begin{proof}
 Let us start with the case $\alpha>1$.
    First, observe that $H(A|B)_\tau=D(\tau_B\|\rho_B)-D(\tau_{AB}\|I_A\otimes\rho_B)$.
    Using the variational form for the log-Euclidean relative entropy in~\eqref{eq: variational form LE} we obtain
    \begin{align}
        H_\alpha^{\flat,\downarrow}(A|B)_\rho&=-D_\alpha^\flat(\rho_{AB}\|I_A\otimes\rho_B) \\
        &=-\sup_{\tau_{AB}}\left\{\frac{\alpha}{1-\alpha}D(\tau_{AB}\|\rho_{AB})+D(\tau_{AB}\|I_A\otimes\rho_B)\right\} \\
        &=-\sup_{\tau_{AB}}\left\{\frac{\alpha}{1-\alpha}D(\tau_{AB}\|\rho_{AB})+D(\tau_B\|\rho_B)-H(A|B)_\tau\right\} \\
        &=\inf_{\tau_{AB}}\left\{\frac{\alpha}{\alpha-1}D(\tau_{AB}\|\rho_{AB})-D(\tau_B\|\rho_B)+H(A|B)_\tau\right\},
    \end{align}
       For the optimized conditional entropy, we similarly have
        \begin{align}
            H_\alpha^{\flat,\uparrow}(A|B)_\rho&=-\inf_{\sigma_B}D_\alpha^\flat(\rho_{AB}\|I_A\otimes\sigma_B) \\
            &=-\inf_{\sigma_B}\sup_{\tau_{AB}}\left\{\frac{\alpha}{1-\alpha}D(\tau_{AB}\|\rho_{AB})+D(\tau_{AB}\|I_A\otimes\sigma_B)\right\} \\
            &=-\inf_{\sigma_B}\sup_{\tau_{AB}}\left\{\frac{1}{1-\alpha}\Tr[\tau_{AB}\log{\tau_{AB}}]-\frac{\alpha}{1-\alpha}\Tr[\tau_{AB}\log{\rho_{AB}}]-\Tr[\tau_{AB}\log{(I_A\otimes \sigma_B})]\right\} \\
            &=-\sup_{\tau_{AB}}\left\{\frac{\alpha}{1-\alpha}D(\tau_{AB}\|\rho_{AB})+\inf_{\sigma_B}D(\tau_{AB}\|I_A\otimes\sigma_B)\right\} \\
            &=\inf_{\tau_{AB}}\left\{\frac{\alpha}{\alpha-1}D(\tau_{AB}\|\rho_{AB})-\inf_{\sigma_B}D(\tau_{AB}\|I_A\otimes\sigma_B)\right\} \\
            &=\inf_{\tau_{AB}}\left\{\frac{\alpha}{\alpha-1}D(\tau_{AB}\|\rho_{AB})-D(\tau_{AB}\|I_A\otimes\tau_B)\right\} \\
            &=\inf_{\tau_{AB}}\left\{\frac{\alpha}{\alpha-1}D(\tau_{AB}\|\rho_{AB})+H(A|B)_\tau\right\}.
        \end{align}
       Here, in the fourth equality, we invoked Sion's minimax theorem~\cite{sion1958general} to interchange the infimum and the supremum. To verify its assumptions, note that the set of states satisfying $\tau_{AB}\ll\rho_{AB}$ is compact and convex, while the set of full-rank states $\sigma_B$ is convex. Furthermore, the objective function is concave in~$\tau_{AB}$. Since the logarithm is operator concave, the objective function is also convex in $\sigma_B$. Finally, the objective function is continuous in $\tau_{AB}$ and lower semicontinuous in $\sigma_B$ (as follows, for example, from the lower semicontinuity of the relative entropy in its second argument). Therefore, the conditions of Sion's minimax theorem are satisfied, allowing us to exchange the infimum and the supremum.
In addition, the penultimate line holds by noting that the infimum over full-rank $\sigma_B$ equals the value at $\tau_B$, approached by full-rank approximations when $\tau_B$ is singular.

The proof for the case $\alpha<1$ is analogous and simpler, as no minimax theorem is needed.   
    \end{proof}

Similarly, for the mutual information, the following variational expressions hold:
\begin{lemma}\label{cor:LE_mutual}Let $\rho_{AB}$ be a quantum state. Then, 
    for all $\alpha\in (0,1)$, 
    \begin{align}
        \label{eq:LE_mutual}
       I_\alpha^{\flat, \uparrow \uparrow}(A:B)_\rho & =\inf_{\tau_{AB}}\left\{\frac{\alpha}{1-\alpha}D(\tau_{AB}\|\rho_{AB})+D(\tau_A\|\rho_A)+D(\tau_B\|\rho_B)+I(A:B)_\tau\right\},\\ 
       I_\alpha^{\flat,\downarrow \downarrow}(A:B)_\rho & =\inf_{\tau_{AB}}\left\{\frac{\alpha}{1-\alpha}D(\tau_{AB}\|\rho_{AB})+I(A:B)_\tau\right\},
    \end{align}
where the optimization is over every state $\tau_{AB}\ll\rho_{AB}$. Moreover, the same variational formulas remain valid for \(\alpha>1\), with the infimum replaced by a supremum.
\end{lemma}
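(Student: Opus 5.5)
We follow the template of the proof of Lemma~\ref{lem:LE_cond}, starting from the variational form~\eqref{eq: variational form LE} of $D_\alpha^\flat$. For $\alpha\in(0,1)$ it gives
\begin{align}
D_\alpha^\flat(\rho_{AB}\|\sigma_A\otimes\sigma_B)=\inf_{\tau_{AB}}\left\{\frac{\alpha}{1-\alpha}D(\tau_{AB}\|\rho_{AB})+D(\tau_{AB}\|\sigma_A\otimes\sigma_B)\right\},
\end{align}
with $\tau_{AB}\ll\rho_{AB}$. For $\alpha>1$ we obtain the same expression with a supremum, because the prefactor $\frac{1}{1-\alpha}$ is negative and the positive factor $\alpha-1$ can be pulled out.

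The key algebraic identity, valid whenever $\tau_{AB}\ll\rho_{AB}$ and $\sigma_A,\sigma_B$ are full rank, is
\begin{align}
D(\tau_{AB}\|\sigma_A\otimes\sigma_B)=I(A:B)_\tau+D(\tau_A\|\sigma_A)+D(\tau_B\|\sigma_B).
\end{align}
It follows by expanding $\log(\sigma_A\otimes\sigma_B)=\log\sigma_A\otimes I+I\otimes\log\sigma_B$ and adding and subtracting $\Tr[\tau_A\log\tau_A]+\Tr[\tau_B\log\tau_B]$. Taking $\sigma_A=\rho_A$ and $\sigma_B=\rho_B$ immediately yields the formula for $I_\alpha^{\flat,\uparrow\uparrow}$, for both ranges of $\alpha$. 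Here $\tau_{AB}\ll\rho_{AB}$ implies $\tau_A\ll\rho_A$ and $\tau_B\ll\rho_B$, so every term is finite. For non-full-rank $\rho$, we pass through the regularised definition, noting that supports of marginals are contained in the marginal supports.

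For $I_\alpha^{\flat,\downarrow\downarrow}$ with $\alpha<1$, we take the infimum over $\sigma_A,\sigma_B$ of a double infimum and swap the two infima freely. The inner infimum $\inf_{\sigma_A,\sigma_B}[D(\tau_A\|\sigma_A)+D(\tau_B\|\sigma_B)]=0$ is approached by full-rank perturbations of $\tau_A$ and $\tau_B$. This leaves $\frac{\alpha}{1-\alpha}D(\tau\|\rho)+I(A:B)_\tau$.

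For $\alpha>1$ we face $\inf_{\sigma_A,\sigma_B}\sup_{\tau}$ and must exchange the order; this is the main obstacle. As in Lemma~\ref{lem:LE_cond}, we apply Sion's minimax theorem. The set $\{\tau_{AB}\ll\rho_{AB}\}$ is compact and convex. The objective, written as
\begin{align}
\frac{1}{\alpha-1}\Tr[\tau\log\tau]-\frac{\alpha}{\alpha-1}\Tr[\tau\log\rho]-\Tr[\tau(\log\sigma_A\otimes I+I\otimes\log\sigma_B)],
\end{align}
is concave in $\tau$ for $\alpha>1$ and continuous on that set.

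The remaining difficulty is that $(\sigma_A,\sigma_B)\mapsto\sigma_A\otimes\sigma_B$ is not convex. However, the objective depends on the pair only through $-\Tr[\tau_A\log\sigma_A]-\Tr[\tau_B\log\sigma_B]$. This is jointly convex on the convex product set of full-rank pairs, by operator concavity of $\log$ applied separately in each variable, and it is lower semicontinuous. Sion therefore applies, giving $\sup_\tau\{\frac{\alpha}{1-\alpha}D(\tau\|\rho)+\inf_{\sigma_A,\sigma_B}D(\tau\|\sigma_A\otimes\sigma_B)\}$. The inner infimum equals $I(A:B)_\tau$, approached by full-rank approximations of $\tau_A\otimes\tau_B$. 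Multiplying by the sign convention then yields the stated supremum formula.
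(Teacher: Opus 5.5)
Your proposal is correct and follows the paper's proof essentially step by step. You use the variational form \eqref{eq: variational form LE}, the decomposition $D(\tau_{AB}\|\sigma_A\otimes\sigma_B)=I(A:B)_\tau+D(\tau_A\|\sigma_A)+D(\tau_B\|\sigma_B)$, and, for the optimized quantity with $\alpha>1$, Sion's minimax theorem based on concavity in $\tau_{AB}$ and joint convexity in $(\sigma_A,\sigma_B)$.

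One slip: in your displayed objective the first two coefficients should be $\frac{1}{1-\alpha}$ and $-\frac{\alpha}{1-\alpha}$. As you wrote them, the expression would be convex in $\tau$; it is precisely $\frac{1}{1-\alpha}<0$ for $\alpha>1$ that yields the concavity in $\tau$ you correctly claim.
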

    \begin{proof}
    The proof is similar to the proof of Lemma~\ref{lem:LE_cond}. Let us start with the case $\alpha >1$.
    First observe that
    \begin{align}
        \label{eq:mutual_dec}
        D(\tau_{AB}\|\rho_A \otimes\rho_B)=I(A:B)_\tau + D(\tau_A\|\rho_A) + D(\tau_B\|\rho_B).
    \end{align}
    The variational form in~\eqref{eq: variational form LE} gives
    \begin{align}
        I_\alpha^{\flat,\uparrow \uparrow}(A:B)_\rho&=D_\alpha^\flat(\rho_{AB}\|\rho_A\otimes\rho_B) \\
        &=\sup_{\tau_{AB}}\left\{\frac{\alpha}{1-\alpha}D(\tau_{AB}\|\rho_{AB})+D(\tau_{AB}\|\rho_A\otimes\rho_B)\right\} \\
        &=\sup_{\tau_{AB}}\left\{\frac{\alpha}{1-\alpha}D(\tau_{AB}\|\rho_{AB})+D(\tau_A\|\rho_A)+D(\tau_B\|\rho_B)+I(A:B)_\tau\right\},
    \end{align}
    where the second line follows from the definition and the last line from~\eqref{eq:mutual_dec}.

        For the optimized quantity, we obtain
        \begin{align}
            I_\alpha^{\flat,\downarrow\downarrow}&(A:B)_\rho\notag \\
            &=\inf_{\eta_A,\sigma_B}D_\alpha^\flat(\rho_{AB}\|\eta_A\otimes\sigma_B) \\
            &=\inf_{\eta_A,\sigma_B}\sup_{\tau_{AB}}\left\{\frac{\alpha}{1-\alpha}D(\tau_{AB}\|  \rho_{AB})+D(\tau_{AB}\|\eta_A\otimes\sigma_B)\right\} \\
            &=\inf_{\eta_A,\sigma_B}\sup_{\tau_{AB}}\left\{\frac{1}{1-\alpha}\Tr[\tau_{AB}\log{\tau_{AB}}]-\frac{\alpha}{1-\alpha}\Tr[\tau_{AB}\log{\rho_{AB}}]-\Tr[\tau_A\log{\eta_A}]-\Tr[\tau_B\log{\sigma_B}]\right\} \\&=\sup_{\tau_{AB}}\left\{\frac{\alpha}{1-\alpha}D(\tau_{AB} \|\rho_{AB})+\inf_{\eta_A,\sigma_B}D(\tau_{AB}\|\eta_A\otimes\sigma_B)\right\} \\
            &=\sup_{\tau_{AB}}\left\{\frac{\alpha}{1-\alpha}D(\tau_{AB}\| \rho_{AB})+D(\tau_{AB}\|\tau_A\otimes\tau_B)\right\} \\
            &=\sup_{\tau_{AB}}\left\{\frac{\alpha}{1-\alpha}D(\tau_{AB}\|\rho_{AB})+I(A:B)_\tau\right\}.
        \end{align}
        Here, in the fourth equality, we applied Sion's minimax theorem, exactly as in the derivation of the conditional entropy variational formula. Indeed, by expanding the relative entropy terms, one readily verifies that the objective function is concave in $\tau_{AB}$ and jointly convex in $(\eta_A,\sigma_B)$.
        The penultimate line holds by noting that the infimum over full-rank $\eta_A,\sigma_B$ equals the value at the marginals of $\tau_{AB}$, approached by full-rank approximations when necessary.
    \end{proof}

\subsection{Variational expressions for conditional mutual informations}

\label{sec: variational from LE-CMI}

\begin{lemma}
\label{lem: variational form}
Let $\rho_{ABC}$ be a quantum state, and let $\eta_{AC}$, $\sigma_{BC}$, and $\omega_C$ be full-rank states. Then, for all $\alpha \in (0,1)$, 
\begin{equation}
    \begin{aligned}
 &\Delta_\alpha^\flat(\rho_{ABC},\eta_{AC},\sigma_{BC},\omega_C) \\
 & \qquad = \inf_{\tau_{ABC}}\Big\{\frac{\alpha}{1-\alpha} D(\tau_{ABC}\|\rho_{ABC})+ D(\tau_{AC}\|\eta_{AC})+D(\tau_{BC}\|\sigma_{BC})-D(\tau_C\|\omega_C)+I(A:B|C)_{\tau}\Big\}.
    \end{aligned}
    \end{equation}
    For $\alpha>1$, the corresponding expression is obtained by replacing the infimum with a supremum.
    \end{lemma}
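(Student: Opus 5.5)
The plan is to reduce the statement to the variational formula~\eqref{eq: variational form LE} for the log-Euclidean R\'enyi divergence, applied with a suitably chosen second argument. Then an exact entropy identity will rewrite the relative entropy to that argument as the bracket in the statement. Set
\begin{align}
\Gamma_{ABC}\coloneqq\exp\!\big(\log\eta_{AC}+\log\sigma_{BC}-\log\omega_C\big).
\end{align}
This is a full-rank positive operator, in general not normalized. By definition,
\[
\Delta_\alpha^\flat(\rho_{ABC},\eta_{AC},\sigma_{BC},\omega_C)=D_\alpha^\flat(\rho_{ABC}\|\Gamma_{ABC}).
\]
The first step is to check that~\eqref{eq: variational form LE} holds when the second argument is an arbitrary positive definite operator rather than a state. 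This is harmless. Writing $\Gamma=c\,\hat\Gamma$ with $\hat\Gamma$ a state, both sides shift by the same constant $-\log c$: on the left because $\log\Gamma=\log\hat\Gamma+(\log c) I$, and on the right because $D(\tau\|\Gamma)=D(\tau\|\hat\Gamma)-\log c$, with prefactor $(1-\alpha)/(1-\alpha)=1$. Thus, for $\alpha\in(0,1)$,
\begin{align}
\Delta_\alpha^\flat=\inf_{\tau_{ABC}}\Big\{\frac{\alpha}{1-\alpha}D(\tau_{ABC}\|\rho_{ABC})+D(\tau_{ABC}\|\Gamma_{ABC})\Big\},
\end{align}
with $\tau\ll\rho$. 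For $\alpha>1$ the same formula holds with a supremum, since dividing by $1-\alpha<0$ turns the minimum into a maximum. Because $\Gamma$ is full rank, the support conditions reduce to $\tau\ll\rho$.

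The second step is the identity
\begin{align}
D(\tau_{ABC}\|\Gamma_{ABC})&=\Tr[\tau\log\tau]-\Tr[\tau_{AC}\log\eta_{AC}]-\Tr[\tau_{BC}\log\sigma_{BC}]+\Tr[\tau_C\log\omega_C]\\
&=D(\tau_{AC}\|\eta_{AC})+D(\tau_{BC}\|\sigma_{BC})-D(\tau_C\|\omega_C)+I(A:B|C)_\tau.
\end{align}
The first line uses $\log\Gamma=\log\eta_{AC}\otimes I_B+\log\sigma_{BC}\otimes I_A-\log\omega_C\otimes I_{AB}$ and partial traces. For the second line, one adds and subtracts $\Tr[\tau_{AC}\log\tau_{AC}]+\Tr[\tau_{BC}\log\tau_{BC}]-\Tr[\tau_C\log\tau_C]$. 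The leftover entropic terms are exactly $-H(ABC)_\tau+H(AC)_\tau+H(BC)_\tau-H(C)_\tau=I(A:B|C)_\tau$. All terms are finite because $\eta,\sigma,\omega$ are full rank. Substituting this identity into the variational formula gives the claim for full-rank $\rho$.

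The main obstacle is a $\rho_{ABC}$ that is not full rank, where $\Delta^\flat_\alpha$ is defined as a limit over $\rho(\varepsilon)$. For $\alpha<1$, I would use that~\eqref{eq: variational form LE} is already stated for general $\rho$, with $\tau$ restricted to $\supp\rho$. I would then check that the definition of $D^\flat_\alpha$ via $\delta I$-regularization agrees with the $\rho(\varepsilon)$ limit. This should hold because $\Gamma$ is full rank, so only the $\rho$-regularization matters, and the limit of the log-Euclidean trace functional is continuous in this sense. Alternatively, I would pass to the limit directly in the variational expression: the objective is continuous in $\rho(\varepsilon)$ for $\tau\ll\rho$, and terms with $\tau\not\ll\rho$ blow up as $\varepsilon\to0$. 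For $\alpha>1$, the same reasoning applies with suprema, using $\rho\ll\Gamma$, which is automatic.
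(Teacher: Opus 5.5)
Your proposal is correct and is essentially the paper's argument: the paper applies the Gibbs variational principle directly to the exponent $\alpha\log\rho_{ABC}+(1-\alpha)(\log\eta_{AC}+\log\sigma_{BC}-\log\omega_C)$, which is exactly what underlies~\eqref{eq: variational form LE} for your unnormalized $\Gamma_{ABC}$ (so your scaling check is only extra bookkeeping). It then uses the same decomposition of $D(\tau_{ABC}\|\Gamma_{ABC})$ into marginal relative entropies plus $I(A:B|C)_\tau$, which the paper cites and you verify by hand; your treatment of singular $\rho_{ABC}$ through the support-restricted form of~\eqref{eq: variational form LE} matches the paper's brief remark and is, if anything, more explicit.
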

    
\begin{proof}
We apply the Gibbs variational principle, recalled in Lemma~\ref{lem: Gibbs}, to obtain
    \begin{align}
&-\log \Tr\!\Big[
\exp\!\big(
\alpha \log \rho_{ABC}
+(1-\alpha)(\log \eta_{AC}+\log \sigma_{BC}-\log \omega_C)
\big)
\Big] \notag \\
&\quad=
\inf_{\tau_{ABC}}
\Big\{
\Tr[\tau_{ABC}\log \tau_{ABC}]
-
\Tr\!\Big[\tau_{ABC}\big(
\alpha \log \rho_{ABC}
+(1-\alpha)(\log \eta_{AC}+\log \sigma_{BC}-\log \omega_C)
\big)\Big]
\Big\} \label{eq: equation for minimax}\\
&\quad=
\inf_{\tau_{ABC}}
\Big\{
\alpha \Tr\!\big[\tau_{ABC}(\log \tau_{ABC}-\log \rho_{ABC})\big]
+(1-\alpha)\Tr[\tau_{ABC}\log \tau_{ABC}] \notag \\
&\qquad\qquad\qquad
-(1-\alpha)\Tr\!\big[\tau_{ABC}(\log \eta_{AC}+\log \sigma_{BC}-\log \omega_C)\big]
\Big\}\\
& \quad=\inf_{\tau_{ABC}}
\Big\{
\alpha D(\tau_{ABC}\|\rho_{ABC}) +(1-\alpha)
\big(
D(\tau_{AC}\|\eta_{AC})
+D(\tau_{BC}\|\sigma_{BC})
-D(\tau_C\|\omega_C)
+I(A\!:\!B|C)_\tau
\big)
\Big\},
\end{align}
where the last equality follows from \cite[Lemma 1]{berta2015renyi}.
Therefore,
\begin{equation}
\begin{aligned}
&(1-\alpha)\Delta_\alpha^\flat(\rho_{ABC},\eta_{AC},\sigma_{BC},\omega_C) \\
&\qquad  =
\inf_{\tau_{ABC}}
\Big\{
\alpha D(\tau_{ABC}\|\rho_{ABC}) +(1-\alpha)
\big(
D(\tau_{AC}\|\eta_{AC})
+D(\tau_{BC}\|\sigma_{BC})
-D(\tau_C\|\omega_C)
+I(A\!:\!B|C)_\tau
\big)
\Big\}.
\end{aligned}
\end{equation}
Dividing by \(1-\alpha>0\), we obtain
\begin{equation}
\begin{aligned}
&\Delta_\alpha^\flat(\rho_{ABC},\eta_{AC},\sigma_{BC},\omega_C) \\
&\quad =
\inf_{\tau_{ABC}}
\Big\{
\frac{\alpha}{1-\alpha}D(\tau_{ABC}\|\rho_{ABC})
+D(\tau_{AC}\|\eta_{AC})
+D(\tau_{BC}\|\sigma_{BC})
-D(\tau_C\|\omega_C)
+I(A\!:\!B|C)_\tau
\Big\},
\end{aligned}
\end{equation}
which proves the claim. 

The case $\alpha>1$ is analogous.
For states that are not full rank, these expressions are understood
through the limiting definitions above;~\eqref{eq: variational form LE}
gives the same calculation with $\tau_{ABC}\ll\rho_{ABC}$.
\end{proof}

As a result, we obtain the following variational form for the LE-CMI.

\begin{corollary}
\label{cor: variational form}
Let $\rho_{ABC}$ be a quantum state. Then, for all $\alpha \in (0,1)$, 
    \begin{multline}
        \label{eq: variational QCMI non-optimized}I_\alpha^\flat(A:B|C)_{\rho} = \\
        \inf_{\tau_{ABC}}\Big\{\frac{\alpha}{1-\alpha} D(\tau_{ABC}\|\rho_{ABC})+ D(\tau_{AC}\|\rho_{AC})+D(\tau_{BC}\|\rho_{BC})-D(\tau_C\|\rho_C)+I(A:B|C)_{\tau}\Big\}.
    \end{multline}
    For $\alpha>1$, the corresponding expression is obtained by replacing the infimum with a supremum.
\end{corollary}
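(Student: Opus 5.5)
The corollary follows by specializing Lemma~\ref{lem: variational form} to the marginals of $\rho_{ABC}$ itself. For full-rank $\rho_{ABC}$, the marginals $\rho_{AC}$, $\rho_{BC}$ and $\rho_C$ are full rank. Comparing the definition of $I_\alpha^\flat$ with the auxiliary function in~\eqref{eq:Delta-alpha-quantity} gives
\begin{equation}
I_\alpha^\flat(A:B|C)_\rho=\Delta_\alpha^\flat(\rho_{ABC},\rho_{AC},\rho_{BC},\rho_C).
\end{equation}
Applying Lemma~\ref{lem: variational form} with $\eta_{AC}=\rho_{AC}$, $\sigma_{BC}=\rho_{BC}$ and $\omega_C=\rho_C$ then yields the claimed infimum for $\alpha\in(0,1)$ and the supremum for $\alpha>1$, with $\tau_{ABC}$ ranging over all states.

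The remaining work is the case where $\rho_{ABC}$ is not full rank. There $I_\alpha^\flat$ is defined by~\eqref{eq:flat-LE-renyi-CMI-limit}: the marginals are replaced by the full-rank perturbations $\rho_X(\varepsilon)$, and the first argument is handled through the $\delta$-regularization in the definition of $D_\alpha^\flat$. For fixed $\varepsilon>0$ the plan is as follows.
\begin{enumerate}[(i)]
\item Identify the fixed-$\varepsilon$ quantity with $\Delta_\alpha^\flat(\rho_{ABC},\rho_{AC}(\varepsilon),\rho_{BC}(\varepsilon),\rho_C(\varepsilon))$. This uses the last part of Lemma~\ref{lem: variational form}, which states that~\eqref{eq: variational form LE} gives the same formula with $\tau_{ABC}\ll\rho_{ABC}$.
\item Apply Lemma~\ref{lem: variational form} to write this as the infimum (respectively supremum) over $\tau_{ABC}\ll\rho_{ABC}$ of
\begin{equation}
\frac{\alpha}{1-\alpha}D(\tau_{ABC}\|\rho_{ABC})+D(\tau_{AC}\|\rho_{AC}(\varepsilon))+D(\tau_{BC}\|\rho_{BC}(\varepsilon))-D(\tau_C\|\rho_C(\varepsilon))+I(A:B|C)_\tau .
\end{equation}
\item Pass to the limit $\varepsilon\downarrow0$ inside the optimization.
\end{enumerate}

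Step (iii) is the main obstacle, and I would handle it as follows. For $\tau_{ABC}\ll\rho_{ABC}$, each marginal $\tau_X$ is supported in $\supp\rho_X$. Hence $D(\tau_X\|\rho_X(\varepsilon))\to D(\tau_X\|\rho_X)$, and the convergence is uniform over the compact set $\{\tau:\tau\ll\rho_{ABC}\}$. To see the uniformity, note that on $\supp\rho_X$ the operator $\log\rho_X(\varepsilon)$ converges in norm to $\log\rho_X$. On the orthogonal complement of $\supp\rho_X$, the term $\tau_X\log\rho_X(\varepsilon)$ vanishes, because $\rho_X(\varepsilon)$ commutes with the projector onto $\supp\rho_X$ and $\tau_X$ lives inside that support.

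Uniform convergence of the objective on a compact set then allows the limit to be exchanged with the infimum or supremum. This gives the stated formula with $\tau_{ABC}\ll\rho_{ABC}$, which matches the convention announced at the start of this section. As a final consistency check, one can verify that the expression obtained coincides with the limit defining $I_\alpha^\flat$, rather than with some differently regularized object.
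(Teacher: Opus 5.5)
Your proposal is correct and follows the same route as the paper: you apply Lemma~\ref{lem: variational form} with the regularized marginals $\rho_X(\varepsilon)$, then pass to the limit $\varepsilon\downarrow0$ using uniform convergence of the objective over $\{\tau_{ABC}\ll\rho_{ABC}\}$. Your justification of that uniform convergence supplies a detail the paper only asserts: $\log\rho_X(\varepsilon)$ converges in norm on $\supp\rho_X$, and $\rho_X(\varepsilon)$ commutes with the support projector, whose range contains $\supp\tau_X$.
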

\begin{proof}
The result follows by applying Lemma~\ref{lem: variational form} to
the full-rank approximations of the marginals and taking the limit in~\eqref{eq:flat-LE-renyi-CMI-limit}. The
variational objective converges uniformly over
$\tau_{ABC}\ll\rho_{ABC}$, which also establishes the existence of this limit.
\end{proof}

\begin{corollary}
\label{cor: variational form triply optimized}
Let $\rho_{ABC}$ be a quantum state, and let $\alpha\in(0,1)$. Then, 
    \begin{align}
\label{eq: variational triply optimized}
I_\alpha^{\flat,*}(A:B|C)_\rho=\inf_{\tau_{ABC}}\left\{\frac{\alpha}{1-\alpha}D(\tau_{ABC}\|\rho_{ABC})+I(A:B|C)_\tau\right\}.
\end{align}
For $\alpha\in(1,2]$, the corresponding expression is obtained by replacing the infimum with a supremum.
\end{corollary}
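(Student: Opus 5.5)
The plan is to start from Lemma~\ref{lem: variational form}, which gives $\Delta_\alpha^\flat(\rho_{ABC},\eta_{AC},\sigma_{BC},\omega_C)$ as an optimization over $\tau_{ABC}\ll\rho_{ABC}$ of
\begin{equation}
F(\tau,\eta,\sigma,\omega)\coloneqq\frac{\alpha}{1-\alpha} D(\tau_{ABC}\|\rho_{ABC})+ D(\tau_{AC}\|\eta_{AC})+D(\tau_{BC}\|\sigma_{BC})-D(\tau_C\|\omega_C)+I(A:B|C)_{\tau}.
\end{equation}
I would then interchange the outer optimizations over $\eta,\sigma,\omega$ with the one over $\tau$, using Sion's minimax theorem where the order genuinely matters. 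Once the order is swapped, the inner optimizations are trivial. The term $-D(\tau_C\|\omega_C)$ is maximized at $\omega_C=\tau_C$, with value $0$. The terms $D(\tau_{AC}\|\eta_{AC})$ and $D(\tau_{BC}\|\sigma_{BC})$ are minimized at $\eta_{AC}=\tau_{AC}$ and $\sigma_{BC}=\tau_{BC}$, with value $0$. When these marginals are singular, the optima are approached by full-rank approximations, exactly as in the proofs of Lemmas~\ref{lem:LE_cond} and~\ref{cor:LE_mutual}. What remains is $\frac{\alpha}{1-\alpha}D(\tau\|\rho)+I(A:B|C)_\tau$. For non-full-rank $\rho$, I would handle the limiting definition by the uniform-convergence argument of Corollary~\ref{cor: variational form}.

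\emph{Case $\alpha\in(0,1)$.} Here $I^{\flat,*}_\alpha=\inf_{\eta,\sigma}\sup_\omega\inf_\tau F$. Since $\inf_{\eta,\sigma}$ commutes with $\inf_\tau$, the only nontrivial swap is $\sup_\omega\inf_\tau=\inf_\tau\sup_\omega$. By the Gibbs computation in the proof of Lemma~\ref{lem: variational form},
\begin{equation}
F=\frac{1}{1-\alpha}\Big(\Tr[\tau\log\tau]-\Tr\big[\tau\big(\alpha\log\rho+(1-\alpha)(\log\eta+\log\sigma-\log\omega)\big)\big]\Big).
\end{equation}
This expression has the following properties:
\begin{itemize}
\item It is convex and continuous in $\tau$ on the compact convex set $\{\tau\ll\rho\}$.
\item Its $\omega$-dependence is $\Tr[\tau_C\log\omega_C]$, which is concave in $\omega$ because $\log$ is operator concave.
\item It is lower semicontinuous in $\omega$.
\end{itemize}
Hence Sion's theorem applies. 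After the swap, the trivial inner optimizations give the claimed formula.

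\emph{Case $\alpha\in(1,2]$.} Here $I^{\flat,*}_\alpha=\inf_{\eta,\sigma}\sup_\omega\sup_\tau F$. The two suprema commute freely, and $\sup_\omega$ sets $\omega=\tau_C$. This leaves
\begin{equation}
\inf_{\eta,\sigma}\sup_\tau G,\qquad G(\tau,\eta,\sigma)\coloneqq\frac{\alpha}{1-\alpha}D(\tau\|\rho)+D(\tau_{AC}\|\eta)+D(\tau_{BC}\|\sigma)+I(A:B|C)_\tau .
\end{equation}
$G$ is jointly convex in $(\eta,\sigma)$, since it contains $-\Tr[\tau_{AC}\log\eta]-\Tr[\tau_{BC}\log\sigma]$. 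The main obstacle, and the source of the restriction $\alpha\le2$, is concavity of $G$ in $\tau$, because setting $\omega=\tau_C$ introduced a convex term $\Tr[\tau_C\log\tau_C]$.

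To check concavity, I would collect the nonlinear entropic terms of $G$: the $AC$ and $BC$ entropies cancel against the $\log\eta,\log\sigma$ terms up to linear pieces. What is left is
\begin{equation}
\frac{1}{\alpha-1}H(ABC)_\tau-H(C)_\tau=H(AB|C)_\tau+\frac{2-\alpha}{\alpha-1}H(ABC)_\tau ,
\end{equation}
plus terms linear in $\tau$. Since the conditional entropy $H(AB|C)$ is concave and $H(ABC)$ is concave, this is concave precisely when $\alpha\le2$. Sion's theorem then lets me swap $\inf_{\eta,\sigma}\sup_\tau$. Taking $\eta=\tau_{AC}$ and $\sigma=\tau_{BC}$ yields the supremum version of~\eqref{eq: variational triply optimized}.
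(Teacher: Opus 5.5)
Your proposal is correct and follows the paper's proof essentially step for step. For $\alpha<1$ you use Sion's theorem to exchange $\sup_{\omega_C}$ with $\inf_{\tau_{ABC}}$. For $1<\alpha\le 2$ you commute the two suprema, eliminate $\omega_C$, and use concavity of $\frac{2-\alpha}{\alpha-1}H(ABC)_\tau+H(AB|C)_\tau$ to exchange $\inf_{\eta_{AC},\sigma_{BC}}$ with $\sup_{\tau_{ABC}}$, exactly as the paper does.

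Two minor corrections. First, Sion's theorem needs upper (not lower) semicontinuity in the maximized variable $\omega_C$; this holds anyway, because the objective is continuous on full-rank $\omega_C$. Second, for singular $\rho_{ABC}$ the uniform-convergence argument gives only one inequality, since the admissible $\tau$ shrink from all states to $\tau\ll\rho$ in the limit. The other direction needs the argument the paper sketches: compactness applied to (near-)optimizers $\tau_\varepsilon$, together with lower semicontinuity of the relative entropy.
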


\begin{proof}
We first assume that $\rho_{ABC}$ is full rank. For $0<\alpha<1$,
Lemma~\ref{lem: variational form} expresses $\Delta_\alpha^\flat$ as
an infimum over $\tau_{ABC}$. Its objective is convex and continuous
in $\tau_{ABC}$ and concave and continuous in the full-rank state
$\omega_C$. The state space of $\tau_{ABC}$ is compact and convex.
Sion's minimax theorem therefore allows the infimum over $\tau_{ABC}$
and the supremum over $\omega_C$ to be exchanged. The latter supremum
sets $-D(\tau_C\|\omega_C)$ to zero, approached by full-rank states $\omega_C$ if $\tau_C$ is singular. The remaining infima over
$\tau_{ABC},\eta_{AC},\sigma_{BC}$ commute, and the marginal relative
entropies have infimum zero. This proves the stated formula.

Let now $1<\alpha\le2$. The same lemma and commutation of the two
suprema give
\begin{align}
 I_\alpha^{\flat,*}(A:B|C)_\rho
 &=\inf_{\eta_{AC},\sigma_{BC}}\sup_{\tau_{ABC}}
   \left\{\frac{\alpha}{1-\alpha}D(\tau_{ABC}\|\rho_{ABC})
   +D(\tau_{AC}\|\eta_{AC})+D(\tau_{BC}\|\sigma_{BC})
   +I(A:B|C)_\tau\right\}.
\end{align}
Here, we first optimized over $\omega_C$, again using
$\sup_{\omega_C>0}-D(\tau_C\|\omega_C)=0$. Expanding the entropies gives
\begin{align}
 &\frac{\alpha}{1-\alpha}D(\tau_{ABC}\|\rho_{ABC})
   +D(\tau_{AC}\|\eta_{AC})+D(\tau_{BC}\|\sigma_{BC})
   +I(A:B|C)_\tau\nonumber\\ 
 &=\frac{2-\alpha}{\alpha-1}H(ABC)_\tau+H(AB|C)_\tau
   +\frac{\alpha}{\alpha-1}\operatorname{Tr}[\tau_{ABC}\log\rho_{ABC}]
   -\operatorname{Tr}[\tau_{AC}\log\eta_{AC}]
   -\operatorname{Tr}[\tau_{BC}\log\sigma_{BC}].
\end{align}
Both entropy terms are concave in $\tau_{ABC}$ and their coefficients
are nonnegative. The expression is therefore concave and continuous in
$\tau_{ABC}$, and jointly convex and continuous in the full-rank pair
$(\eta_{AC},\sigma_{BC})$. Sion's theorem now gives
\begin{align}
 I_\alpha^{\flat,*}(A:B|C)_\rho
 &=\sup_{\tau_{ABC}}\inf_{\eta_{AC},\sigma_{BC}}
       \left\{\frac{\alpha}{1-\alpha}D(\tau_{ABC}\|\rho_{ABC})
   +D(\tau_{AC}\|\eta_{AC})+D(\tau_{BC}\|\sigma_{BC})
   +I(A:B|C)_\tau\right\}\nonumber\\
       &=\sup_{\tau_{ABC}}\left\{
       \frac{\alpha}{1-\alpha}D(\tau_{ABC}\|\rho_{ABC})
        +I(A:B|C)_\tau\right\}.
\end{align}
When a marginal of $\tau$ is singular, its relative-entropy infimum
is approached by full-rank approximations of that marginal.

For states that are not full rank, these expressions are understood
through the limiting definitions above. Compactness, boundedness of
the CMI, and lower semicontinuity of relative entropy control limits
of the optimizing states; the reverse bounds follow by evaluating at
an optimizer supported on $\rho$.
\end{proof}

\begin{remark}
\label{rem: full-support CE and MI}
The extrema in the variational representations of the log-Euclidean conditional entropies, mutual informations, and conditional mutual informations are attained. Thus, the corresponding infima and suprema can be replaced by minima and maxima, respectively. Indeed, their objective functions are continuous on the compact set $\{\tau:\tau\ll\rho\}$.
\end{remark}

\section{Basic properties of log-Euclidean conditional mutual information}

We defined the LE-CMI
through the auxiliary function~\(\Delta_\alpha^\flat\). Accordingly, we
first record the basic properties of \(\Delta_\alpha^\flat\), which will
be used throughout the analysis.

\subsection{Properties of the function \texorpdfstring{$\Delta_\alpha^\flat$}{Delta}}

\begin{lemma}
\label{lem: general limit}
Let $\rho_{ABC}$ be a quantum state, and let $\eta_{AC}$, $\sigma_{BC}$, and $\omega_C$ be full-rank states.
Then
\begin{align}
&\lim_{\alpha\to 1}
\Delta_\alpha^\flat(\rho_{ABC},\eta_{AC},\sigma_{BC},\omega_C) =
D\!\left(
\rho_{ABC}
\middle\|
\exp(
\log \eta_{AC}
+\log \sigma_{BC}
-\log \omega_C)
\right).
\end{align}
\end{lemma}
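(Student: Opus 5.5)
Set $\sigma\coloneqq\exp(\log\eta_{AC}+\log\sigma_{BC}-\log\omega_C)$. It is positive definite, since $\eta_{AC},\sigma_{BC},\omega_C$ are full rank, although in general it is not normalized. By definition, for full-rank $\rho_{ABC}$,
\[
\Delta_\alpha^\flat(\rho_{ABC},\eta_{AC},\sigma_{BC},\omega_C)=D_\alpha^\flat(\rho_{ABC}\|\sigma)=\frac{1}{\alpha-1}\log f(\alpha),
\]
where $f(\alpha)\coloneqq\Tr[\exp(\alpha\log\rho_{ABC}+(1-\alpha)\log\sigma)]$. The plan is to treat the full-rank case with l'H\^opital's rule, which is just the derivative of $\log f$ at $\alpha=1$, and then to pass to singular $\rho_{ABC}$ through the variational formula of Lemma~\ref{lem: variational form}.

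\emph{Full-rank case.} We have $f(1)=\Tr\rho_{ABC}=1$, so the limit equals $\frac{d}{d\alpha}\log f(\alpha)\big|_{\alpha=1}=f'(1)$. The map $\alpha\mapsto H(\alpha)\coloneqq\alpha\log\rho+(1-\alpha)\log\sigma$ is affine with $H'(\alpha)=\log\rho-\log\sigma$. We then apply the standard first-order formula $\frac{d}{d\alpha}\Tr[e^{H(\alpha)}]=\Tr[e^{H(\alpha)}H'(\alpha)]$. This follows from Duhamel's formula together with cyclicity of the trace. At $\alpha=1$ we have $e^{H(1)}=\rho$, so
\[
f'(1)=\Tr[\rho(\log\rho-\log\sigma)]=D(\rho_{ABC}\|\sigma).
\]
Since $f$ is real-analytic in $\alpha$, the two one-sided limits agree, which settles this case.

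\emph{Singular $\rho_{ABC}$.} By definition, $\Delta_\alpha^\flat$ is here the limit as $\varepsilon\downarrow0$ of the full-rank quantities for $\rho(\varepsilon)$. Exchanging $\lim_{\varepsilon\downarrow 0}$ with $\lim_{\alpha\to1}$ directly requires some uniformity, and I expect this to be the main obstacle. I would avoid the exchange by working with the variational formula (cf.\ \eqref{eq: variational form LE}),
\[
D_\alpha^\flat(\rho\|\sigma)=\inf_{\tau\ll\rho}\Big\{\tfrac{\alpha}{1-\alpha}D(\tau\|\rho)+D(\tau\|\sigma)\Big\}\qquad(\alpha<1),
\]
with a supremum in place of the infimum for $\alpha>1$.

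For $\alpha<1$, choosing $\tau=\rho$ gives the upper bound $D(\rho\|\sigma)$. For the lower bound, let $\tau_\alpha$ be a minimizer; one exists by Remark~\ref{rem: full-support CE and MI}. The value at $\tau_\alpha$ is at most $D(\rho\|\sigma)$, and $D(\tau\|\sigma)$ is bounded on states because $\sigma>0$. Hence
\[
D(\tau_\alpha\|\rho)\le\tfrac{1-\alpha}{\alpha}\,C\to0,
\]
for a constant $C$ independent of $\alpha$. By Pinsker's inequality this forces $\tau_\alpha\to\rho$. Lower semicontinuity of $D(\cdot\|\sigma)$, which is in fact continuous here since $\sigma>0$, then gives $\liminf_{\alpha\to1^-}\Delta_\alpha^\flat\ge D(\rho\|\sigma)$.

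For $\alpha>1$, choosing $\tau=\rho$ gives the lower bound $D(\rho\|\sigma)$. For the upper bound, the supremum is at most
\[
\sup_\tau\Big\{D(\tau\|\sigma)-\tfrac{\alpha}{\alpha-1}D(\tau\|\rho)\Big\}.
\]
Any near-maximizer must satisfy $D(\tau\|\rho)\le\tfrac{\alpha-1}{\alpha}C'$. Such $\tau$ lie in $\{\tau\ll\rho\}$ and converge to $\rho$, and continuity of $D(\cdot\|\sigma)$ on this set finishes the argument.

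This second argument in fact covers the full-rank case as well, so the derivative computation serves mainly as a sanity check. The only delicate point is making sure the variational formula applies in the singular setting, which the paper asserts in the sentence following Lemma~\ref{lem: variational form}.
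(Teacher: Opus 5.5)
Your proposal is correct. The full-rank part is the paper's argument verbatim: the paper also reduces the limit to the derivative at $\alpha=1$ of $\log\Tr e^{\alpha\log\rho_{ABC}+(1-\alpha)H_{ABC}}$ and evaluates it with Duhamel's formula and cyclicity. The routes differ only for singular $\rho_{ABC}$.

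The paper handles that case in one sentence: the limiting definition ``gives the same calculation on the support.'' Implicitly, for full-rank $\sigma$ and $\alpha>0$, the $\varepsilon\downarrow0$ limit of the trace exponential is $\Tr\bigl[P\exp\bigl(\alpha\log\rho+(1-\alpha)P(\log\sigma)P\bigr)\bigr]$, where $P$ is the support projector of $\rho$ and everything is computed on $\supp\rho$. This equals $1$ at $\alpha=1$ and can be differentiated exactly as before.

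You avoid both differentiation and the interchange of $\varepsilon\downarrow0$ with $\alpha\to1$. From the variational formula you build a two-sided sandwich. Taking $\tau=\rho$ gives one bound. The other comes from the concentration estimate $D(\tau_\alpha\|\rho)=O(|1-\alpha|)$, combined with Pinsker's inequality and continuity of $D(\cdot\|\sigma)$ for $\sigma>0$. A small nit: for $\alpha>1$, the expression you call an upper bound is in fact equal to the supremum, since $\frac{\alpha}{1-\alpha}=-\frac{\alpha}{\alpha-1}$.

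What each route buys:
\begin{itemize}
\item The paper's route is shorter. It also yields differentiability at $\alpha=1$, the same computation that extends to the second-order expansion in Appendix~\ref{app: Taylor expansion}.
\item Your route needs only the variational formula and elementary continuity. It treats full-rank and singular $\rho$ uniformly. It is essentially the concentration argument the paper itself uses for non-full-rank states at the end of Lemma~\ref{lem: limit Istar}.
\end{itemize}

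Both routes ultimately rest on the same fact about the singular limit, which neither proves. The paper uses it as the compressed trace-exponential formula; you use it as the validity of the variational formula in the singular case, and you correctly flag this as the one delicate point.
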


\begin{proof}
For a Hermitian operator $H_{ABC}$, define
\begin{equation}
f(\alpha)
\coloneqq
\log \Tr\!\left[
e^{\alpha \log \rho_{ABC}+(1-\alpha)H_{ABC}}
\right].
\end{equation}
Since $f(1)=\log \Tr \rho_{ABC}=0$, the desired limit equals $f'(1)$ after recalling the definition in~\eqref{eq:Delta-alpha-quantity}. Writing
\begin{equation}
X(\alpha)\coloneqq \alpha \log \rho_{ABC}+(1-\alpha)H_{ABC},
\end{equation}
we have $X'(\alpha)=\log \rho_{ABC}-H_{ABC}$. By the Duhamel formula and cyclicity of the trace,
\begin{equation}
\frac{\d}{\d\alpha}\Tr e^{X(\alpha)}
=
\Tr\!\left[e^{X(\alpha)}X'(\alpha)\right].
\end{equation}
Therefore,
\begin{align}
f'(1)
&=
\frac{\Tr\!\left[
e^{X(1)}(\log \rho_{ABC}-H_{ABC})
\right]}{\Tr e^{X(1)}}  =
\Tr \left[\rho_{ABC}(\log \rho_{ABC}-H_{ABC})\right],
\end{align}
because $X(1)=\log \rho_{ABC}$ and $\Tr\rho_{ABC}=1$. Substituting $H_{ABC} \coloneqq \log \eta_{AC}+\log \sigma_{BC}-\log \omega_C$ gives the claim. For states that are not full rank, the limiting definition gives the same calculation on the support of $\rho_{ABC}$.
\end{proof}

\begin{lemma}
\label{lem: monotonicity Delta}
Let $\rho_{ABC}$ be a quantum state, and let $\eta_{AC}$, $\sigma_{BC}$, and $\omega_C$ be full-rank states.
Then, the function $\alpha \mapsto \Delta_\alpha^\flat(\rho_{ABC},\eta_{AC},\sigma_{BC},\omega_C)$ is monotonically increasing on the interval $\alpha > 0$, with its value at $\alpha=1$ defined by Lemma~\ref{lem: general limit}.
\end{lemma}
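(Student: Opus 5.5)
Set $H_{ABC}\coloneqq\log\eta_{AC}+\log\sigma_{BC}-\log\omega_C$ and $f(\alpha)\coloneqq\log\Tr[\exp(\alpha\log\rho_{ABC}+(1-\alpha)H_{ABC})]$, as in the proof of Lemma~\ref{lem: general limit}. Then $\Delta_\alpha^\flat=f(\alpha)/(\alpha-1)$ for $\alpha\neq1$. We first assume $\rho_{ABC}$ full rank. The plan is the standard convexity argument used for Rényi divergences.

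\textbf{Step 1: $f$ is convex.} The map $\alpha\mapsto X(\alpha)=\alpha\log\rho_{ABC}+(1-\alpha)H_{ABC}$ is affine in $\alpha$. The function $X\mapsto\log\Tr e^{X}$ is convex on Hermitian matrices. One way to see this is the Gibbs variational principle (Lemma~\ref{lem: Gibbs}):
\begin{equation}
\log\Tr e^{X}=\sup_\tau\{\Tr[\tau X]-\Tr[\tau\log\tau]\},
\end{equation}
which is a supremum of affine functions of $X$. Hence $f$ is convex on $\mathbb R$, and it is also smooth.

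\textbf{Step 2: the slope argument.} We have $f(1)=\log\Tr\rho_{ABC}=0$. For a convex function, the difference quotient $\alpha\mapsto (f(\alpha)-f(1))/(\alpha-1)$ is nondecreasing on $\mathbb R\setminus\{1\}$. At $\alpha=1$ the quotient is filled in by $f'(1)$, and the slope inequalities show this is consistent with the ordering on both sides. This difference quotient is exactly $\Delta_\alpha^\flat$. Lemma~\ref{lem: general limit} identifies $f'(1)$ with the value assigned at $\alpha=1$. Therefore $\alpha\mapsto\Delta^\flat_\alpha$ is monotonically increasing on all of $(0,\infty)$.

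\textbf{Step 3: rank-deficient $\rho_{ABC}$.} Here $\Delta_\alpha^\flat$ is defined as the limit of the full-rank quantity for $\rho_{ABC}(\varepsilon)$. Monotonicity in $\alpha$ holds for each $\varepsilon>0$ and is preserved under pointwise limits, so it carries over, including at $\alpha=1$ where Lemma~\ref{lem: general limit} was stated for general $\rho$.

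The main subtlety is making sure these limits exist and are finite. Alternatively, one can argue directly from the variational form (Lemma~\ref{lem: variational form}), restricted to $\tau\ll\rho_{ABC}$. In that form the coefficient $\alpha/(1-\alpha)$ is increasing in $\alpha$ on $(0,1)$ and on $(1,\infty)$, and $D(\tau\|\rho)\ge0$. So on $(0,1)$ the infimum of a pointwise increasing family is increasing. On $(1,\infty)$ the coefficient is negative, but the supremum of a pointwise increasing family is still increasing. Comparing across $\alpha=1$ is handled by taking $\tau=\rho_{ABC}$. For $\alpha<1$ this gives $\Delta_\alpha^\flat\le D(\rho\|e^{H})$, the $\alpha=1$ value by Lemma~\ref{lem: general limit}. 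For $\alpha>1$ it gives $\Delta_\alpha^\flat\ge D(\rho\|e^{H})$.

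This variational route handles all cases, including rank-deficient $\rho$, uniformly. I would present it as the main proof and mention convexity as a remark.
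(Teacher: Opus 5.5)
Your variational argument, which you chose as the main proof, is correct and is essentially the paper's proof: $\alpha/(1-\alpha)$ is increasing on each interval and $D(\tau_{ABC}\|\rho_{ABC})\geq 0$, so the infimum (for $\alpha<1$) and the supremum (for $\alpha>1$) are increasing in $\alpha$. The only differences are that you cross $\alpha=1$ by evaluating at $\tau_{ABC}=\rho_{ABC}$, whereas the paper uses that the two one-sided limits from Lemma~\ref{lem: general limit} coincide, and that your convexity/secant-slope argument for $\alpha\mapsto\log\Tr\exp(\alpha\log\rho_{ABC}+(1-\alpha)H_{ABC})$ is also a valid alternative for full-rank $\rho_{ABC}$.
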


\begin{proof}
    From the proof of  Lemma~\ref{lem: variational form}, we obtain the following for all $\alpha\in(0,1)$:
    \begin{align}
\Delta_\alpha^\flat(\rho_{ABC},\eta_{AC},\sigma_{BC},\omega_C)
        &=\inf_{\tau_{ABC}\ll\rho_{ABC}}
\Big\{
\frac{\alpha}{1-\alpha} \Tr\!\big[\tau_{ABC}(\log \tau_{ABC}-\log \rho_{ABC})\big]
+\Tr[\tau_{ABC}\log \tau_{ABC}] \notag \\
&\qquad\qquad\qquad
-\Tr\!\big[\tau_{ABC}(\log \eta_{AC}+\log \sigma_{BC}-\log \omega_C)\big]
\Big\}.
\end{align}
The monotonicity follows from the fact that the function $\alpha \mapsto \frac{\alpha}{1-\alpha}$ is monotonically increasing, together with the non-negativity of the relative entropy, $\Tr\!\big[\tau_{ABC}(\log \tau_{ABC}-\log \rho_{ABC})\big] = D(\tau_{ABC}\|\rho_{ABC}) \geq 0$. The monotonicity in the interval $\alpha > 1$ is established analogously, with the variational characterization involving a supremum rather than an infimum. Finally, the ordering $\Delta_\alpha \leq \Delta_{\alpha'}$ for $\alpha < 1 < \alpha'$ follows from the fact that the limits from below and above at $\alpha = 1$ coincide, as shown in Lemma~\ref{lem: general limit}.
\end{proof}

\subsection{Relationship between different conditional mutual informations}

\begin{corollary}\label{cor:vN limit}
    Let $\rho_{ABC}$ be a quantum state. Then,
\begin{equation}
    \lim_{\alpha \to 1} I^\flat_\alpha(A:B|C)_{\rho}  = I(A:B|C)_\rho\, .
\end{equation}  
\end{corollary}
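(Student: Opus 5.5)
The plan is to reduce the claim to Lemma~\ref{lem: general limit} in the full-rank case, and then to handle the rank-deficient case through the variational formula of Corollary~\ref{cor: variational form}.

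\emph{Full-rank case.} Suppose $\rho_{ABC}$ is full rank; then so are its marginals. In this case $I^\flat_\alpha(A:B|C)_\rho=\Delta^\flat_\alpha(\rho_{ABC},\rho_{AC},\rho_{BC},\rho_C)$. Lemma~\ref{lem: general limit} gives the limit
\begin{equation}
D\!\left(\rho_{ABC}\middle\|\exp(\log\rho_{AC}+\log\rho_{BC}-\log\rho_C)\right).
\end{equation}
By definition of the Umegaki relative entropy, this equals
\begin{equation}
\Tr[\rho_{ABC}\log\rho_{ABC}]-\Tr[\rho_{AC}\log\rho_{AC}]-\Tr[\rho_{BC}\log\rho_{BC}]+\Tr[\rho_C\log\rho_C].
\end{equation}
The logarithm of the exponential is just the Hermitian operator $\log\rho_{AC}+\log\rho_{BC}-\log\rho_C$, and taking partial traces turns each term into a marginal entropy. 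The expression is therefore $-H(ABC)+H(AC)+H(BC)-H(C)=I(A:B|C)_\rho$.

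\emph{General case.} For general $\rho$, the cleanest route avoids exchanging limits in $\varepsilon$ and $\alpha$ and uses Corollary~\ref{cor: variational form} directly. For $\alpha<1$, choosing $\tau=\rho$ gives the upper bound $I^\flat_\alpha\le I(A:B|C)_\rho$. For $\alpha>1$, the same choice gives $I^\flat_\alpha\ge I(A:B|C)_\rho$. For the reverse directions, set $\lambda=\alpha/|1-\alpha|\to\infty$ and let $\tau_\alpha$ be optimizers, which exist by Remark~\ref{rem: full-support CE and MI}. The remaining terms of the objective are bounded uniformly on $\{\tau\ll\rho\}$: the marginal relative entropies are finite there and continuous on this compact set after the limiting definition, and the CMI is bounded. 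Comparing the optimum with the value at $\tau=\rho$ then yields $\lambda D(\tau_\alpha\|\rho)\le C$. Hence $D(\tau_\alpha\|\rho)\to0$, which by Pinsker's inequality gives $\tau_\alpha\to\rho$. Continuity of the bounded part of the objective on the compact set then gives the lower bound $\liminf \ge I(A:B|C)_\rho$ for $\alpha\uparrow1$, since the nonnegative penalty term can be dropped. The analogous upper bound for $\alpha\downarrow1$ follows because the penalty $\frac{\alpha}{1-\alpha}D$ is nonpositive there.

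\emph{Main obstacle.} The delicate step is justifying the uniform boundedness and continuity of $D(\tau_{AC}\|\rho_{AC})+D(\tau_{BC}\|\rho_{BC})-D(\tau_C\|\rho_C)$ over $\tau\ll\rho$ in the rank-deficient case. I would handle it by noting that $\tau\ll\rho$ forces each marginal of $\tau$ to lie in the support of the corresponding marginal of $\rho$. On those supports the logarithms $\log\rho_X$ are bounded operators. Each term is therefore a continuous function: a von Neumann entropy plus a linear functional.
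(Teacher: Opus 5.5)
Your proof is correct. In the full-rank case it coincides with the paper's: Lemma~\ref{lem: general limit} with $\eta_{AC}=\rho_{AC}$, $\sigma_{BC}=\rho_{BC}$, $\omega_C=\rho_C$, followed by identifying $D(\rho_{ABC}\|\exp(\log\rho_{AC}+\log\rho_{BC}-\log\rho_C))$ with $I(A:B|C)_\rho$.

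The routes differ for rank-deficient states. The paper only remarks that ``the same differentiation applies on the support of $\rho_{ABC}$''. Implicitly, one writes $I^\flat_\alpha(A:B|C)_\rho=\frac{1}{\alpha-1}\log\Tr[P\exp(\alpha\log\rho_{ABC}+(1-\alpha)PHP)]$ and then differentiates at $\alpha=1$ exactly as in the full-rank case. Here $P$ is the support projection of $\rho_{ABC}$, and $H=\log\rho_{AC}+\log\rho_{BC}-\log\rho_C$ is taken on supports. This requires the compressions $P\log\rho_X(\varepsilon)P$ to converge, which holds because, e.g., $\supp\rho_{ABC}\subseteq\supp\rho_{AC}\otimes\mathcal H_B$.

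You never touch the trace exponential. You start from Corollary~\ref{cor: variational form}, which has already absorbed the $\varepsilon\downarrow0$ limit uniformly in $\tau$, and run a penalty argument. The bound $\frac{\alpha}{|1-\alpha|}D(\tau_\alpha\|\rho_{ABC})\le C$ drives the optimizers to $\rho_{ABC}$. Continuity of the bounded remainder on $\{\tau\ll\rho_{ABC}\}$ then closes both one-sided limits. This is the same mechanism the paper uses for $I^{\flat,*}_\alpha$ at the end of the proof of Lemma~\ref{lem: limit Istar}.

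What each route buys:
\begin{itemize}
\item Your approach is softer but fully rigorous given the earlier corollary. It avoids any interchange of the $\varepsilon$- and $\alpha$-limits and gives the quantitative estimate $D(\tau_\alpha\|\rho_{ABC})=O(|1-\alpha|)$. It also makes your separate full-rank step redundant.
\item The differentiation route is shorter and carries more information. Pushed to second order, it yields the Kubo--Mori variance expansion of Appendix~\ref{app: Taylor expansion}, which a compactness argument cannot produce.
\end{itemize}
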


\begin{proof}
Apply Lemma~\ref{lem: general limit} with
$\eta_{AC}=\rho_{AC}$, $\sigma_{BC}=\rho_{BC}$, and $\omega_C=\rho_C$.
For states that are not full rank, these expressions are understood
through the limiting definitions above, and the same differentiation
applies on the support of $\rho_{ABC}$.
\end{proof}

In Appendix~\ref{app: Taylor expansion}, we also derive an expansion of $I^\flat_\alpha(A:B|C)$ around $\alpha=1$.

\begin{lemma}
\label{lem: limit Istar}
Let $\rho_{ABC}$ be a quantum state. Then,
\begin{equation}
    \lim_{\alpha \to 1} I^{\flat,*}_\alpha(A:B|C)_{\rho} = I(A:B|C)_\rho\, .
\end{equation}
\end{lemma}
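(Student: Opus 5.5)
The plan is to squeeze $I^{\flat,*}_\alpha$ between quantities whose limits we know, using the variational form in Corollary~\ref{cor: variational form triply optimized} on both sides of $\alpha=1$. Throughout, write $g(\tau)\coloneqq I(A:B|C)_\tau$ and $c_\alpha\coloneqq \frac{\alpha}{1-\alpha}$.

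\emph{Bounds from above for $\alpha<1$ and from below for $\alpha>1$.} For $\alpha\in(0,1)$, choosing $\tau_{ABC}=\rho_{ABC}$ in the infimum of Corollary~\ref{cor: variational form triply optimized} gives $I^{\flat,*}_\alpha(A:B|C)_\rho\le I(A:B|C)_\rho$. For $\alpha\in(1,2]$, the same choice in the supremum gives $I^{\flat,*}_\alpha(A:B|C)_\rho\ge I(A:B|C)_\rho$.

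\emph{Matching bounds.} These are the main step, and we obtain them by a continuity argument. For $\alpha\in(0,1)$, let $\tau^{(\alpha)}$ be a minimizer; it exists by Remark~\ref{rem: full-support CE and MI}. Since $g\ge 0$, the objective dominates $c_\alpha D(\tau^{(\alpha)}\|\rho)$, and it is at most $I(A:B|C)_\rho$ by the first step. Hence
\begin{equation}
D(\tau^{(\alpha)}\|\rho_{ABC})\le \frac{1-\alpha}{\alpha}\, I(A:B|C)_\rho\to 0 .
\end{equation}
By Pinsker's inequality, $\tau^{(\alpha)}\to\rho_{ABC}$ in trace norm. The CMI is continuous in finite dimensions (Alicki--Fannes--Winter), so $g(\tau^{(\alpha)})\to I(A:B|C)_\rho$. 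Since $c_\alpha D\ge 0$, we get $I^{\flat,*}_\alpha\ge g(\tau^{(\alpha)})$, and therefore $\liminf_{\alpha\to1^-} I^{\flat,*}_\alpha\ge I(A:B|C)_\rho$.

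For $\alpha\in(1,2]$, the coefficient $c_\alpha$ is negative and tends to $-\infty$. Let $\tau^{(\alpha)}$ be a maximizer. The value is at least $I(A:B|C)_\rho$, and $g\le 2\log\min(d_A,d_B)$. Therefore
\begin{equation}
|c_\alpha|\, D(\tau^{(\alpha)}\|\rho)\le 2\log\min(d_A,d_B)-I(A:B|C)_\rho ,
\end{equation}
so $D(\tau^{(\alpha)}\|\rho)\to0$. The same continuity argument, together with $I^{\flat,*}_\alpha\le g(\tau^{(\alpha)})$, gives $\limsup_{\alpha\to1^+}I^{\flat,*}_\alpha\le I(A:B|C)_\rho$.

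Combining the two steps gives both one-sided limits, and hence the claim. The only delicate point is that Corollary~\ref{cor: variational form triply optimized} applies only for $\alpha\le 2$ on the upper side, which is harmless because the limit is taken at $\alpha\to1$. For non-full-rank $\rho$ the corollary already covers the limiting definition, with $\tau\ll\rho$, so no further work is needed there. As an alternative route, monotonicity in $\alpha$ is not required, since the squeeze above works directly.
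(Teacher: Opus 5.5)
Your proof is correct. For full-rank states it takes a different route from the paper's main argument.

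The paper proves the nontrivial inequality directly from the defining min--max expression. For $\alpha\to1^-$ it fixes $\omega_C=\rho_C$. It then uses monotonicity of $\alpha\mapsto\Delta_\alpha^\flat$ (Lemma~\ref{lem: monotonicity Delta}) and the Mosonyi--Hiai minimax theorem to exchange $\sup_{\alpha\in(1/2,1)}$ with $\min_{\eta_{AC},\sigma_{BC}}$; this exchange relies on Lemma~\ref{lem: lower-semicont Delta} and on the full-rank optimizer results, Lemmas~\ref{lem: full-rank} and~\ref{lem: full-rank Q}. Finally it applies Lemma~\ref{lem: general limit} together with the fact from~\cite{berta2015renyi} that the marginals of $\rho$ are optimal in $D(\rho_{ABC}\|\exp(\log\eta_{AC}+\log\sigma_{BC}-\log\omega_C))$. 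The case $\alpha\to1^+$ is symmetric, with $(\eta_{AC},\sigma_{BC})=(\rho_{AC},\rho_{BC})$ fixed and a maximization over $\omega_C$.

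Only for non-full-rank states does the paper switch to your argument. There it uses the bound $D(\tau_\alpha\|\rho)\le 2\log\min\{d_A,d_B\}\,|1-\alpha|/\alpha$ on the optimizers of Corollary~\ref{cor: variational form triply optimized}, followed by continuity of the CMI.

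You rightly observe that this argument already covers all states at once. Given the variational formula, it dispenses with the minimax theorem over $\alpha$, the semicontinuity and full-rank lemmas, and the marginal-optimality result. It does not even need monotonicity in $\alpha$, since your $\liminf$/$\limsup$ squeeze gives existence of the limit directly. Via Pinsker and a continuity bound for the CMI, it also yields an explicit rate, of order $\sqrt{|1-\alpha|}$ up to logarithmic factors.

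What the paper's route offers in return is mainly structural. For the hard direction it stays with the auxiliary states $\eta_{AC},\sigma_{BC},\omega_C$ of the definition and shows that the limit $\alpha\to1$ can be exchanged with their optimization. It does not lean on the $\tau$-representation, which itself rests on a Sion minimax step, although the paper still uses that representation for the easy direction.
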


\begin{proof}
We first assume that $\rho_{ABC}$ is full rank and prove the limit
$\alpha\to1^-$. The inequality
\begin{align}
\lim_{\alpha \to 1^-} I^{\flat,*}_\alpha(A:B|C)_{\rho}
\leq I(A:B|C)_{\rho}
\end{align}
follows from Corollary~\ref{cor: variational form triply optimized}
by evaluating the optimization at $\tau_{ABC}=\rho_{ABC}$.
For the opposite inequality, consider that
\begin{align}
    \lim_{\alpha \to 1^-}I^{\flat,*}_\alpha(A:B|C)_{\rho}
    &= \sup_{\alpha \in (1/2,1)}I^{\flat,*}_\alpha(A:B|C)_{\rho}\\
    & = \sup_{\alpha \in(1/2,1)}\inf_{\sigma_{BC},\,\eta_{AC}}
        \sup_{\omega_C}\Delta_\alpha^\flat(\rho_{ABC},\eta_{AC},\sigma_{BC},\omega_C)\\
    & \geq \sup_{\alpha \in(1/2,1)}\inf_{\sigma_{BC},\,\eta_{AC}}
        \Delta_\alpha^\flat(\rho_{ABC},\eta_{AC},\sigma_{BC},\rho_C)\\
    & = \sup_{\alpha \in(1/2,1)}\min_{\sigma_{BC},\,\eta_{AC}}
        \Delta_\alpha^\flat(\rho_{ABC},\eta_{AC},\sigma_{BC},\rho_C)\\
    & = \min_{\sigma_{BC},\,\eta_{AC}}\sup_{\alpha \in(1/2,1)}
        \Delta_\alpha^\flat(\rho_{ABC},\eta_{AC},\sigma_{BC},\rho_C)
        \label{eq: minimax for limit}\\
    & = \min_{\sigma_{BC},\,\eta_{AC}}\lim_{\alpha \to 1^-}
        \Delta_\alpha^\flat(\rho_{ABC},\eta_{AC},\sigma_{BC},\rho_C)\\
    & = \min_{\sigma_{BC}>0,\,\eta_{AC}>0}D\!\left(
        \rho_{ABC}\middle\|\exp(
        \log\eta_{AC}+\log\sigma_{BC}-\log\rho_C)\right)\\
    & = D\!\left(\rho_{ABC}\middle\|\exp(
        \log\rho_{AC}+\log\rho_{BC}-\log\rho_C)\right)\\
    & = I(A:B|C)_{\rho}\, .
\end{align}
Here, monotonicity follows from Lemma~\ref{lem: monotonicity Delta}.
For fixed $\alpha$, the minimum is attained on the compact state
spaces by Lemma~\ref{lem: lower-semicont Delta}, and its optimizers
are full rank by Lemma~\ref{lem: full-rank}.
Equation~\eqref{eq: minimax for limit} follows from the Mosonyi--Hiai
minimax theorem~\cite[Corollary~A.2]{mosonyi2011quantum}, since the
objective is lower semicontinuous and monotone in $\alpha$
(applying the bounded increasing transform $\arctan$ to include
infinite boundary values).
At a singular pair of auxiliary states, the limiting trace is strictly
less than one, so the pointwise limit of $\Delta_\alpha^\flat$ is
$+\infty$. We may therefore restrict the subsequent minimum to full-rank
states, apply Lemma~\ref{lem: general limit}, and use that the marginals
of $\rho$ attain the minimum~\cite[Proposition~2]{berta2015renyi}.

Let us now consider the limit $\alpha\to1^+$. Evaluating the variational
formula at $\tau_{ABC}=\rho_{ABC}$ gives
\begin{align}
\lim_{\alpha \to 1^+} I^{\flat,*}_\alpha(A:B|C)_{\rho}
\geq I(A:B|C)_{\rho}.
\end{align}
For the reverse inequality, we have
\begin{align}
    \lim_{\alpha \to 1^+}I^{\flat,*}_\alpha(A:B|C)_{\rho}
    &= \inf_{\alpha \in (1,2)}I^{\flat,*}_\alpha(A:B|C)_{\rho}\\
    & = \inf_{\alpha\in(1,2)}\inf_{\sigma_{BC},\,\eta_{AC}}
        \sup_{\omega_C}\Delta_\alpha^\flat(\rho_{ABC},\eta_{AC},\sigma_{BC},\omega_C)\\
    & \leq \inf_{\alpha\in(1,2)}\sup_{\omega_C}
        \Delta_\alpha^\flat(\rho_{ABC},\rho_{AC},\rho_{BC},\omega_C)\\
    & = \max_{\omega_C}\inf_{\alpha\in(1,2)}
        \Delta_\alpha^\flat(\rho_{ABC},\rho_{AC},\rho_{BC},\omega_C)
        \label{eq: minimax for limit 2}\\
    & = \max_{\omega_C}\lim_{\alpha\to1^+}
        \Delta_\alpha^\flat(\rho_{ABC},\rho_{AC},\rho_{BC},\omega_C)\\
    & = \max_{\omega_C>0}D\!\left(
        \rho_{ABC}\middle\|\exp(
        \log\rho_{AC}+\log\rho_{BC}-\log\omega_C)\right)\\
    & = D\!\left(\rho_{ABC}\middle\|\exp(
        \log\rho_{AC}+\log\rho_{BC}-\log\rho_C)\right)\\
    & = I(A:B|C)_{\rho}\, .
\end{align}
For fixed $1<\alpha<2$, the maximum is attained on the compact state
space by Lemma~\ref{lem: lower-semicont Delta}, and its optimizers
are full rank by Lemma~\ref{lem: full-rank Q}.
Equation~\eqref{eq: minimax for limit 2} follows from
the same minimax theorem, applied to $-\arctan\Delta_\alpha^\flat$.
For singular $\omega_C$, the limiting trace is strictly less than one
and the pointwise limit is $-\infty$. Hence the final maximization
may be restricted to full-rank $\omega_C$, and
Lemma~\ref{lem: general limit} applies; its maximum is attained at
$\rho_C$~\cite[Proposition~2]{berta2015renyi}.

For states that are not full rank, these expressions are understood
through the limiting definitions above. Indeed, on the compact domain
$\tau\ll\rho$, Corollary~\ref{cor: variational form triply optimized}
gives $D(\tau_\alpha\|\rho)\leq
2\log\min\{d_A,d_B\}\,|1-\alpha|/\alpha$ for its optimizing states;
compactness and continuity of ordinary conditional mutual information
then give the same two limits.
\end{proof}

\begin{lemma}
\label{lem: ordering LE-CMIs}
Let $\rho_{ABC}$ be a quantum state. Then, for all $\alpha\in(0,1)\cup(1,2]$,
\begin{equation}
    I^\flat_\alpha(A:B|C)_{\rho} \geq I^{\flat,*}_{\alpha}(A:B|C)_\rho\, .
\end{equation}  
\end{lemma}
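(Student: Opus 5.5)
Both quantities admit variational expressions from the previous section, and I will compare those expressions directly. By Corollary~\ref{cor: variational form}, for $\alpha\in(0,1)$,
\begin{align}
I_\alpha^\flat(A:B|C)_\rho=\inf_{\tau_{ABC}}\Big\{\tfrac{\alpha}{1-\alpha}D(\tau_{ABC}\|\rho_{ABC})+D(\tau_{AC}\|\rho_{AC})+D(\tau_{BC}\|\rho_{BC})-D(\tau_C\|\rho_C)+I(A:B|C)_\tau\Big\},
\end{align}
while Corollary~\ref{cor: variational form triply optimized} expresses $I_\alpha^{\flat,*}$ as the infimum of the same objective with the three marginal terms removed. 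So it suffices to show, pointwise in $\tau_{ABC}\ll\rho_{ABC}$, that
\begin{align}
D(\tau_{AC}\|\rho_{AC})+D(\tau_{BC}\|\rho_{BC})-D(\tau_C\|\rho_C)\geq 0 .
\end{align}
This follows from data processing of the Umegaki relative entropy under the partial trace $\Tr_A$: we have $D(\tau_{AC}\|\rho_{AC})\geq D(\tau_C\|\rho_C)$, and the remaining term $D(\tau_{BC}\|\rho_{BC})$ is nonnegative. Taking the infimum over $\tau$ on both sides then gives $I_\alpha^\flat\geq I_\alpha^{\flat,*}$ for $\alpha\in(0,1)$.

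For $\alpha\in(1,2]$, both variational formulas become suprema. The objective for $I^\flat_\alpha$ again equals the objective for $I^{\flat,*}_\alpha$ plus the same nonnegative combination of marginal relative entropies. Hence the objective for $I_\alpha^\flat$ dominates pointwise, and taking suprema yields $I_\alpha^\flat\geq I_\alpha^{\flat,*}$. The restriction to $\alpha\le 2$ enters only because Corollary~\ref{cor: variational form triply optimized} is established in that range, through the concavity and minimax argument.

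One step needs care: both corollaries must be used with the same admissible set of $\tau$ (namely $\tau_{ABC}\ll\rho_{ABC}$) and the same limiting conventions for non-full-rank $\rho$. Since both corollaries are already stated for general states under these conventions, the comparison is legitimate. If one preferred to avoid the variational route for the optimized quantity, an alternative would be to bound $I^{\flat,*}_\alpha$ from above by choosing $\eta_{AC}=\rho_{AC}$ and $\sigma_{BC}=\rho_{BC}$. One would then need to show that $\sup_{\omega_C}$ is attained at $\omega_C=\rho_C$, and this is harder, so I would use the pointwise comparison above.
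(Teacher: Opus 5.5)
Your proposal is correct and is essentially the paper's own proof. Both compare the variational formulas of Corollaries~\ref{cor: variational form} and~\ref{cor: variational form triply optimized} pointwise in $\tau_{ABC}$, using $D(\tau_{AC}\|\rho_{AC})\geq D(\tau_C\|\rho_C)$ (data processing) and $D(\tau_{BC}\|\rho_{BC})\geq 0$, and then take the infimum for $\alpha<1$ or the supremum for $1<\alpha\leq 2$.
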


\begin{proof}
    The claim follows directly from the variational characterizations in terms of relative entropy obtained in Corollaries~\ref{cor: variational form} and \ref{cor: variational form triply optimized}. Indeed, \( I^\flat_\alpha(A:B|C)_\rho \) involves additional non-negative contributions, and the inequality $D(\tau_{AC}\|\rho_{AC}) \ge D(\tau_C\|\rho_C)$
holds by the data-processing inequality. 
\end{proof}

\subsection{Data-processing inequality and monotonicity in \texorpdfstring{$\alpha$}{alpha}}

\begin{theorem}
Let $\alpha \in (0,1)$, let $\rho_{ABC}$ be a quantum state, let $\mathcal{N}_{A \rightarrow A'} \otimes \mathcal{N}_{B \rightarrow B'}  $ be a quantum channel, and let $\sigma_{A'B'C} = \mathcal{N}_{A \rightarrow A'}\otimes \mathcal{N}_{B \rightarrow B'}(\rho_{ABC})$. Then,
\begin{equation}
    I^\flat_\alpha(A:B|C)_{\rho} \geq I^\flat_\alpha(A':B'|C)_\sigma, \quad \text{and} \quad  I^{\flat,*}_\alpha(A:B|C)_{\rho} \geq I^{\flat,*}_\alpha(A':B'|C)_\sigma.
\end{equation}  
\end{theorem}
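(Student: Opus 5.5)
The plan is to avoid working with the matrix exponentials directly. Instead I would use the variational characterizations for $\alpha\in(0,1)$ from Corollaries~\ref{cor: variational form} and~\ref{cor: variational form triply optimized}. Both express the relevant quantity as an infimum over states $\tau_{ABC}\ll\rho_{ABC}$ of an objective built from Umegaki relative entropies and the ordinary QCMI. The strategy is to push every candidate $\tau_{ABC}$ forward through the channel. Write $\mathcal{N}\coloneqq\mathcal{N}_{A\to A'}\otimes\mathcal{N}_{B\to B'}$, with the identity on $C$ implicit, and set $\tau'_{A'B'C}\coloneqq\mathcal{N}(\tau_{ABC})$. I will then show that the objective for $\sigma$ evaluated at $\tau'$ is at most the objective for $\rho$ evaluated at $\tau$. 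Taking the infimum over $\tau$ then gives the claim, because the infimum for $\sigma$ runs over a set containing all such $\tau'$.

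First I check admissibility: if $\tau_{ABC}\ll\rho_{ABC}$, then $\mathcal{N}(\tau_{ABC})\ll\mathcal{N}(\rho_{ABC})=\sigma_{A'B'C}$, since channels preserve support inclusion. So $\tau'$ is a feasible point in the variational formula for $\sigma$.

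For $I^{\flat,*}_\alpha$, the objective is
\[
\frac{\alpha}{1-\alpha}D(\tau_{ABC}\|\rho_{ABC})+I(A:B|C)_\tau .
\]
The coefficient $\alpha/(1-\alpha)$ is positive, so data processing of the Umegaki relative entropy gives $D(\tau'\|\sigma)\le D(\tau\|\rho)$. The QCMI is monotone under local channels on $A$ and on $B$ (a consequence of strong subadditivity), so $I(A':B'|C)_{\tau'}\le I(A:B|C)_\tau$. This settles the second inequality.

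For $I^\flat_\alpha$, the objective contains additional terms, one of them with a negative sign:
\[
D(\tau_{AC}\|\rho_{AC})+D(\tau_{BC}\|\rho_{BC})-D(\tau_C\|\rho_C).
\]
The key observation is that the channel acts trivially on $C$, which gives $\tau'_C=\tau_C$ and $\sigma_C=\rho_C$. Hence the negative term is exactly unchanged, and no inequality in the wrong direction is needed. For the positive marginal terms, $\tau'_{A'C}=\mathcal{N}_{A\to A'}(\tau_{AC})$ and $\sigma_{A'C}=\mathcal{N}_{A\to A'}(\rho_{AC})$, because $\mathcal{N}_{B\to B'}$ is trace preserving. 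Data processing then gives $D(\tau'_{A'C}\|\sigma_{A'C})\le D(\tau_{AC}\|\rho_{AC})$, and the $BC$ term is handled in the same way. Combined with the two bounds from the previous case, the objective decreases term by term, which proves the first inequality.

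The only delicate point I foresee is the signed term $-D(\tau_C\|\rho_C)$, since monotonicity of the objective could fail if it changed. It is neutralized precisely because the channels are local on $A$ and $B$ and leave $C$ untouched. The non-full-rank case needs no separate argument, because both corollaries are stated for arbitrary states with the infimum over $\tau\ll\rho$, and the support argument above covers it.
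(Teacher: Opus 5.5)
Your proposal is correct and follows essentially the same route as the paper: push each feasible $\tau_{ABC}\ll\rho_{ABC}$ through $\mathcal{N}_{A\to A'}\otimes\mathcal{N}_{B\to B'}$ in the variational formulas, apply data processing of the Umegaki relative entropy and of the QCMI, and note that $-D(\tau_C\|\rho_C)$ is unchanged because the channel acts trivially on $C$. Your remarks on the marginals $\tau'_{A'C}=\mathcal{N}_{A\to A'}(\tau_{AC})$ and on support preservation match how the paper handles these points, including the non-full-rank case.
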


\begin{proof}
    We use the variational form in Corollary~\ref{cor: variational form}. We can upper bound the total infimum with the infimum over states $\mathcal{N}_{A \rightarrow A'} \otimes \mathcal{N}_{B \rightarrow B'}(\tau_{ABC})$. We then obtain
    \begin{equation}
\begin{aligned}
&(1-\alpha)I_\alpha^\flat(A':B'|C)_{\mathcal{N}_{A \rightarrow A'} \otimes \mathcal{N}_{B \rightarrow B'}(\rho_{ABC})} \\
    &\qquad \leq \alpha D(\mathcal{N}_{A \rightarrow A'} \otimes \mathcal{N}_{B \rightarrow B'}(\tau_{ABC})\|\mathcal{N}_{A \rightarrow A'} \otimes \mathcal{N}_{B \rightarrow B'}(\rho_{ABC})) \\
    & \qquad \quad  + (1-\alpha) \Big(D(\mathcal{N}_{A \rightarrow A'}(\tau_{AC})\|\mathcal{N}_{A \rightarrow A'}(\rho_{AC}))+D(\mathcal{N}_{B \rightarrow B'}(\tau_{BC})\|\mathcal{N}_{B \rightarrow B'}(\rho_{BC}))\\
    & \qquad \qquad \qquad \qquad \qquad \qquad \qquad \qquad -D(\tau_C\|\rho_C)+I(A':B'|C)_{\mathcal{N}_{A \rightarrow A'} \otimes \mathcal{N}_{B \rightarrow B'}(\tau_{ABC})}\Big).
\end{aligned}
\end{equation}
    The data-processing inequality (DPI) of the Umegaki relative entropy and the conditional mutual information concludes the proof.
For states that are not full rank, the limiting conventions apply,
and $\tau\ll\rho$ implies $\mathcal N(\tau)\ll\mathcal N(\rho)$. The proof for $I^{\flat,*}_\alpha(A:B|C)$ is similar, using Corollary~\ref{cor: variational form triply optimized} instead.
\end{proof}

An immediate corollary of Lemma~\ref{lem: monotonicity Delta} is the monotonicity in $\alpha$ of the Rényi conditional mutual information.

\begin{corollary}\label{cor:monotonicity}
Let $0<\alpha<\alpha'$, and let $\rho_{ABC}$ be a quantum state. Then
\begin{equation}
 I_\alpha^\flat(A:B|C)_\rho\le I_{\alpha'}^\flat(A:B|C)_\rho.
\end{equation}
If in addition $\alpha'\le2$, then
\begin{equation}
 I_\alpha^{\flat,*}(A:B|C)_\rho
 \le I_{\alpha'}^{\flat,*}(A:B|C)_\rho.
\end{equation}
\end{corollary}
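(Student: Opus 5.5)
The plan is to read both inequalities off the variational characterizations, treating the ranges $\alpha,\alpha'<1$, $\alpha,\alpha'>1$ and $\alpha<1\le\alpha'$ separately.

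For $I_\alpha^\flat$, first apply Lemma~\ref{lem: monotonicity Delta} with the particular choice $\eta_{AC}=\rho_{AC}$, $\sigma_{BC}=\rho_{BC}$, $\omega_C=\rho_C$. When $\rho_{ABC}$ and its marginals are full rank, this is exactly $I_\alpha^\flat(A:B|C)_\rho=\Delta_\alpha^\flat(\rho_{ABC},\rho_{AC},\rho_{BC},\rho_C)$, so monotonicity follows at once for $\alpha,\alpha'$ on the same side of $1$. When they lie on opposite sides, we pass through the common value $I(A:B|C)_\rho$ at $\alpha=1$ given by Corollary~\ref{cor:vN limit}.

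For non-full-rank $\rho$, we work directly with Corollary~\ref{cor: variational form}. For $\alpha<1$ the objective is
\[\frac{\alpha}{1-\alpha}D(\tau\|\rho)+G(\tau),\]
where $G(\tau)$ does not depend on $\alpha$. Since $\alpha\mapsto\alpha/(1-\alpha)$ is increasing and $D\ge0$, the infimum is nondecreasing in $\alpha$. For $\alpha>1$ the coefficient $\alpha/(1-\alpha)$ is negative and increasing toward $-1$, so the supremum is also nondecreasing. Across $\alpha=1$, taking $\tau=\rho$ gives
\[
I^\flat_\alpha(A:B|C)_\rho\le I(A:B|C)_\rho\le I^\flat_{\alpha'}(A:B|C)_\rho
\]
for $\alpha<1<\alpha'$, because $G(\rho)=I(A:B|C)_\rho$. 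This avoids any limiting argument for the crossing.

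For $I^{\flat,*}_\alpha$ with $\alpha'\le2$, the same reasoning applies to Corollary~\ref{cor: variational form triply optimized}, which is valid on $(0,1)\cup(1,2]$; this explains the restriction $\alpha'\le 2$. The objective is $\frac{\alpha}{1-\alpha}D(\tau\|\rho)+I(A:B|C)_\tau$, an infimum for $\alpha<1$ and a supremum for $1<\alpha\le2$. It is monotone on each side by the same coefficient argument, and evaluating at $\tau=\rho$ gives $I^{\flat,*}_\alpha\le I(A:B|C)_\rho\le I^{\flat,*}_{\alpha'}$ for $\alpha<1<\alpha'\le2$. Lemma~\ref{lem: limit Istar} confirms that $I(A:B|C)_\rho$ is the correct value at $\alpha=1$ if $\alpha=1$ is admitted.

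The only delicate step is checking that the variational formulas hold for non-full-rank states, where the LE-CMIs are defined as limits. This was already settled in Corollaries~\ref{cor: variational form} and~\ref{cor: variational form triply optimized}, so the argument reduces to comparing coefficients pointwise in $\tau$ before optimizing.
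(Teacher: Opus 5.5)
Your proposal is correct and follows essentially the paper's proof. For $I^\flat_\alpha$, monotonicity comes from Lemma~\ref{lem: monotonicity Delta}, which is itself the coefficient comparison in the variational form. For $I^{\flat,*}_\alpha$, it comes from the increasing coefficient $\alpha/(1-\alpha)$ in Corollary~\ref{cor: variational form triply optimized}, with the comparison across $\alpha=1$ obtained by evaluating at $\tau=\rho$. The only cosmetic difference is in the non-full-rank case for $I^\flat_\alpha$: you work directly with Corollary~\ref{cor: variational form} and again use the evaluation at $\tau=\rho$, whereas the paper passes the monotonicity of $\Delta^\flat_\alpha$ through the limiting definition.
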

\begin{proof}
The monotonicity of $I_\alpha^\flat(A:B|C)_\rho$ follows directly from
Lemma~\ref{lem: monotonicity Delta}, together with the limiting definition of
$I_\alpha^\flat$.

For $I_\alpha^{\flat,*}(A:B|C)_\rho$, we use
Corollary~\ref{cor: variational form triply optimized}. On each of the
intervals $(0,1)$ and $(1,2]$, the coefficient $\alpha/(1-\alpha)$ is
increasing, and $D(\tau_{ABC}\Vert\rho_{ABC})\geq 0$. Hence, the corresponding
variational infimum or supremum is monotonically increasing in $\alpha$.
Evaluating the variational objective at $\tau=\rho$ gives
\[
I_\alpha^{\flat,*}(A:B|C)_\rho
\leq I(A:B|C)_\rho \quad (\alpha<1),
\qquad
I_\alpha^{\flat,*}(A:B|C)_\rho
\geq I(A:B|C)_\rho \quad (1<\alpha\leq2).
\]
Together with $I_1^{\flat,*}(A:B|C)_\rho=I(A:B|C)_\rho$, this also proves
the comparisons across $\alpha=1$.
\end{proof}

\begin{lemma}
\label{lem: non-negativity}
Let $\rho_{ABC}$ be a quantum state. Then $I_\alpha^\flat(A:B|C)_\rho\ge0$. In addition, if $\alpha \leq 2$, then $I_\alpha^{\flat,*}(A:B|C)_\rho \ge0$.
\end{lemma}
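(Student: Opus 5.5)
The plan is to reduce non-negativity to the value at $\alpha\to0$ (for $I^\flat_\alpha$) or to a direct evaluation of the variational formulas, using only the results established above.

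\textbf{The quantity $I^{\flat,*}_\alpha$.} By Corollary~\ref{cor: variational form triply optimized}, for $\alpha\in(0,1)$,
\[
I^{\flat,*}_\alpha(A:B|C)_\rho=\inf_{\tau_{ABC}}\Big\{\tfrac{\alpha}{1-\alpha}D(\tau_{ABC}\|\rho_{ABC})+I(A:B|C)_\tau\Big\}.
\]
Both terms are non-negative: relative entropy is non-negative by Klein's inequality, and $I(A:B|C)_\tau\ge0$ by strong subadditivity. Hence the infimum is non-negative.

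For $\alpha\in(1,2]$ the formula becomes a supremum with coefficient $\alpha/(1-\alpha)<0$. Evaluating at $\tau=\rho$ gives $I^{\flat,*}_\alpha(A:B|C)_\rho\ge I(A:B|C)_\rho\ge0$. Alternatively, one can invoke Corollary~\ref{cor:monotonicity} together with the $\alpha<1$ case. The limiting (non-full-rank) definition preserves these inequalities by passing to the limit.

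\textbf{The quantity $I^\flat_\alpha$, case $\alpha>1$.} By Corollary~\ref{cor:monotonicity} and Corollary~\ref{cor:vN limit}, $I^\flat_\alpha\ge\lim_{\alpha'\to1}I^\flat_{\alpha'}=I(A:B|C)_\rho\ge0$. Directly, taking $\tau=\rho$ in the supremum of Corollary~\ref{cor: variational form} gives the same bound.

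\textbf{The quantity $I^\flat_\alpha$, case $\alpha\in(0,1)$.} This is the main step, because the term $-D(\tau_C\|\rho_C)$ appears with a negative sign in the variational formula. I would handle it as follows.

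Data processing under the partial trace over $A$ gives $D(\tau_{AC}\|\rho_{AC})\ge D(\tau_C\|\rho_C)$. Therefore
\[
D(\tau_{AC}\|\rho_{AC})+D(\tau_{BC}\|\rho_{BC})-D(\tau_C\|\rho_C)\ge D(\tau_{BC}\|\rho_{BC})\ge0.
\]
The remaining terms $\tfrac{\alpha}{1-\alpha}D(\tau_{ABC}\|\rho_{ABC})$ and $I(A:B|C)_\tau$ are non-negative, so every value of the objective is non-negative, and hence so is the infimum. This is exactly the argument of Lemma~\ref{lem: ordering LE-CMIs}. Equivalently, Lemma~\ref{lem: ordering LE-CMIs} gives $I^\flat_\alpha\ge I^{\flat,*}_\alpha\ge0$.

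\textbf{Not-full-rank states.} Corollary~\ref{cor: variational form} already covers these through the limiting definition, with $\tau\ll\rho$, so no additional argument is needed.

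The only genuinely delicate point is the sign of the $-D(\tau_C\|\rho_C)$ term, and data processing resolves it.
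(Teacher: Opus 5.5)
Your proof is correct and follows the paper's argument essentially step for step. You use the variational formula of Corollary~\ref{cor: variational form triply optimized} for $I^{\flat,*}_\alpha$: for $\alpha<1$ it is a sum of non-negative terms, and for $1<\alpha\leq 2$ evaluating at $\tau=\rho$ gives $\geq I(A:B|C)_\rho\geq 0$. You use Lemma~\ref{lem: ordering LE-CMIs} (data processing absorbing $-D(\tau_C\|\rho_C)$) for $I^\flat_\alpha$ with $\alpha<1$, and monotonicity together with the $\alpha\to1$ limit for $I^\flat_\alpha$ with $\alpha>1$. The only blemish is your opening sentence about reducing to $\alpha\to0$. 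You never use it, and it would not work without separately knowing the sign of that limit, so delete it.
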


\begin{proof}
    Let us consider first the case $\alpha<1$. The non-negativity of $I^{\flat,*}_{\alpha}(A:B|C)_\rho$ is clear from the variational form, as the objective function is the sum of two non-negative terms and hence it is always non-negative. The case $1<\alpha\le2$ follows from the fact that $I^{\flat,*}_{\alpha}(A:B|C)_\rho\geq I(A:B|C)_\rho\geq 0$. For $I^\flat_\alpha(A:B|C)_{\rho}$, the claim for $\alpha<1$ follows then from Lemma~\ref{lem: ordering LE-CMIs}, and for $\alpha>1$ holds since from Corollaries~~\ref{cor:vN limit} and~\ref{cor:monotonicity}, we get that $I^{\flat}_{\alpha}(A:B|C)_\rho\geq I(A:B|C)_\rho\geq 0$. 
\end{proof}

\begin{lemma}
\label{lem: additivity Ialpha}
    Let $\rho_{ABC}$ and $\sigma_{A' B' C'}$ be two quantum states. Then,
    \begin{equation}
        I_\alpha^{\flat}(AA' :BB'|CC')_{\rho\otimes \sigma} = I_\alpha^{\flat}(A:B|C)_\rho+ I_\alpha^{\flat}(A' :B'|C')_\sigma.
    \end{equation}
\end{lemma}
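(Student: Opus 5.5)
The plan is to work directly with the closed-form definition rather than the variational one, since additivity is essentially an algebraic property of logarithms of tensor products. First assume both $\rho_{ABC}$ and $\sigma_{A'B'C'}$ are full rank. The key identity is that for positive definite $X,Y$ on different systems,
\begin{equation}
\log(X\otimes Y)=\log X\otimes I+I\otimes \log Y .
\end{equation}
Since the marginals of $\rho\otimes\sigma$ factorize, namely $(\rho\otimes\sigma)_{AA'CC'}=\rho_{AC}\otimes\sigma_{A'C'}$, $(\rho\otimes\sigma)_{BB'CC'}=\rho_{BC}\otimes\sigma_{B'C'}$ and $(\rho\otimes\sigma)_{CC'}=\rho_C\otimes\sigma_{C'}$, the exponent in the definition of $I^\flat_\alpha(AA':BB'|CC')_{\rho\otimes\sigma}$ becomes $K_\rho\otimes I+I\otimes K_\sigma$. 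Here
\begin{equation}
K_\rho\coloneqq \alpha\log\rho_{ABC}+(1-\alpha)(\log\rho_{AC}+\log\rho_{BC}-\log\rho_C),
\end{equation}
and $K_\sigma$ is defined analogously, with identities implicit on the missing systems.

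Because $K_\rho\otimes I$ and $I\otimes K_\sigma$ commute, we have $\exp(K_\rho\otimes I+I\otimes K_\sigma)=e^{K_\rho}\otimes e^{K_\sigma}$. The trace of this operator factorizes as $\Tr[e^{K_\rho}]\,\Tr[e^{K_\sigma}]$. Taking $\frac{1}{\alpha-1}\log$ then splits the expression into the two summands, which proves the claim for full-rank states and all $\alpha\in(0,1)\cup(1,\infty)$.

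For states that are not full rank, we use the limiting definition \eqref{eq:flat-LE-renyi-CMI-limit}, and this is the step requiring care. The regularization $\rho_X(\varepsilon)\otimes\sigma_{X'}(\varepsilon)$ is not the same as $(\rho\otimes\sigma)_{XX'}(\varepsilon)$. To handle this, I would instead use the variational form of Corollary~\ref{cor: variational form}, which holds for arbitrary states and whose objective is defined directly through Umegaki quantities on $\tau\ll\rho$.

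The inequality ``$\le$'' (for $\alpha<1$, infimum) follows by restricting to product $\tau_{ABC}\otimes\tau'_{A'B'C'}$. This works because every term (relative entropies and the CMI) is additive on product states and product references.

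The reverse direction for general $\tau$ is harder, so I would instead argue via continuity. Both sides are limits of full-rank approximations, and the variational objective converges uniformly in $\tau$, as noted in the proof of Corollary~\ref{cor: variational form}. Since the variational expression with reference marginals $\rho_X(\varepsilon)$ equals the closed form for full-rank marginals, I would apply it with the product-regularized marginals $\rho_X(\varepsilon)\otimes\sigma_{X'}(\varepsilon)$ in place of the joint regularization. These also converge to the true marginals, and the uniform-convergence argument shows the limit is independent of the chosen full-rank approximating family.

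In that setting the full-rank computation above applies verbatim (with $\rho_{ABC}$ itself handled by the support restriction $\tau\ll\rho$), giving additivity at each $\varepsilon$. Passing to $\varepsilon\downarrow0$ then finishes the proof. The case $\alpha>1$ is identical with suprema in place of infima.
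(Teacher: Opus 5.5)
Your proof is correct and follows the same route as the paper: the factorization $\exp(X\otimes I+I\otimes Y)=e^{X}\otimes e^{Y}$ (equivalently, additivity of $D_\alpha^\flat$) handles full-rank states, and a limit argument handles the rest. In the limit step you are more careful than the paper, since you notice that $(\rho\otimes\sigma)_{XX'}(\varepsilon)$ differs from $\rho_X(\varepsilon)\otimes\sigma_{X'}(\varepsilon)$.

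One overstatement needs fixing: uniform convergence does not make the limit independent of an \emph{arbitrary} full-rank approximating family. A non-commuting approximation $M(\varepsilon)\to M$ can send $\Tr[\tau\log M(\varepsilon)]$ to $-\infty$ even for $\tau\ll M$, which is why the paper warns that such limits can be path dependent. The claim does hold for the two families you actually use, because both commute with $\rho_X\otimes\sigma_{X'}$ and their eigenvalues on its support converge to its nonzero eigenvalues.
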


\begin{proof}
The identity follows from additivity of the log-Euclidean relative
entropy and the tensor factorization
$\exp(X\otimes I+I\otimes Y)=\exp(X)\otimes\exp(Y)$.
For states that are not full rank, the result follows by a limit argument. 
\end{proof}

\section{Additivity of \texorpdfstring{$I^{\flat,*}_\alpha$}{the optimized LE-CMI}}
\label{sec: additivity}
In this section, we prove the additivity of \(I^{\flat,*}_\alpha\).
Let us first outline the main ideas of the proof in the regime
\(\alpha\in(1/2,1)\). While subadditivity follows directly from the
variational characterization, superadditivity is more delicate. The first
step is to restrict the supremum over all states to product states of the
form \(\omega_C\otimes\omega_{C'}\). The second step is to show that the
resulting minimization problem is additive. 
To prove the required additivity of the minimization problem, we first
establish joint convexity in the variables \(\sigma_{BC}\) and
\(\eta_{AC}\). By convexity, it then suffices to verify that the directional
derivatives are separately non-negative in all admissible directions. This argument is inspired by
the proof of~\cite[Proposition~23]{cheng2025tight}.
We then obtain a fixed-point equation, from which one can readily check that the
tensor product of the optimizers for the marginal problems is itself an
optimizer of the problem. The fixed-point characterization of optimizers of relative entropies first appeared in~\cite{hayashi2016correlation}, was subsequently considered in~\cite{cheng2025tight}, and has more recently been studied in~\cite{rubboli2024quantum,burri2026doubly}. In~\cite{brahmachari2025fixed,AfhamKuengFerrie2022QuantumMeanStates},
it was used to derive an iterative algorithm that converges to the optimizer through repeated
application of the corresponding fixed-point map, starting from an arbitrary initial point.

Finally, we remark that our proof does not establish additivity in the range $\alpha\in(0,1/2)$. The main obstruction is the lack of joint convexity of the objective function in $(\eta_{AC},\sigma_{BC})$. For the regime $\alpha>1$, we can prove additivity for $\alpha\in[1,2]$. For $\alpha>2$, the required concavity in $\omega_C$ is no longer available. Additivity for $\alpha\in(0,1/2)$ remains open.

\subsection{Joint convexity and properties of the optimizers}
The first lemma establishes the joint convexity of the relevant function in the variables \(\sigma_{BC}\) and \(\eta_{AC}\). The second part of the statement was also obtained in~\cite{lieb1973convex,epstein1973remarks} for $\beta=1$ (see also~\cite{tropp2012joint}).

\begin{lemma}
\label{lem: Joint concavity}
Let $H$ be a Hermitian operator.
    The function
    \begin{equation}
        (A,B)\to \log{\Tr\exp((H+(1-\alpha)\log A +(1-\alpha)\log{B} ))}
    \end{equation}
    is concave for $\alpha \in [1/2,1]$ on the set of positive definite matrices. Moreover, the function
    \begin{equation}
        C\to \log{\Tr\exp(H +\beta\log{C} )}
    \end{equation}
    is concave for $\beta \in [0,1]$ on the set of positive definite matrices.
\end{lemma}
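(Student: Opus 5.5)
The plan is to reduce both concavity claims to the joint convexity of the Umegaki relative entropy. The bridge is the Gibbs variational principle (Lemma~\ref{lem: Gibbs}): for any Hermitian $K$,
\begin{equation}
\log\Tr\exp(K)=\sup_{\tau}\bigl\{\Tr[\tau K]-\Tr[\tau\log\tau]\bigr\},
\end{equation}
where the supremum runs over density operators $\tau$. We also use an elementary fact: if $g(\tau,X)$ is jointly concave on a convex set, then $X\mapsto\sup_\tau g(\tau,X)$ is concave. To check it, take $X=\lambda X_0+(1-\lambda)X_1$ and any $\tau_0,\tau_1$. Then $g(\lambda\tau_0+(1-\lambda)\tau_1,X)\ge\lambda g(\tau_0,X_0)+(1-\lambda)g(\tau_1,X_1)$, and taking suprema over $\tau_0,\tau_1$ gives the claim. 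It therefore suffices to exhibit the relevant integrand as a jointly concave function of $(\tau,A,B)$, respectively of $(\tau,C)$.

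For the first claim, set $p\coloneqq1-\alpha\in[0,1/2]$. For positive definite $A,B$ and a state $\tau$, the Gibbs integrand is $\Tr[\tau H]-\Tr[\tau\log\tau]+p\Tr[\tau\log A]+p\Tr[\tau\log B]$. Splitting $-\Tr[\tau\log\tau]$ into weights $(1-2p)+p+p$, we rewrite it as
\begin{equation}
\Tr[\tau H]+(1-2p)\bigl(-\Tr[\tau\log\tau]\bigr)-p\,D(\tau\|A)-p\,D(\tau\|B).
\end{equation}
Here $D(\tau\|A)\coloneqq\Tr[\tau(\log\tau-\log A)]$ is the relative entropy with an unnormalized positive definite second argument; it is finite because $A>0$. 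We then check each term:
\begin{itemize}
\item The first term is linear in $\tau$.
\item The von Neumann entropy is concave, and its coefficient $1-2p=2\alpha-1$ is nonnegative exactly when $\alpha\ge1/2$. This is where the restriction $\alpha\in[1/2,1]$ enters.
\item By Lieb's concavity theorem (equivalently, Lindblad's joint convexity of relative entropy), $(\tau,A)\mapsto D(\tau\|A)$ is jointly convex on pairs of positive operators. So $-pD(\tau\|A)$ is jointly concave in $(\tau,A)$, and likewise $-pD(\tau\|B)$ in $(\tau,B)$, for $p\ge0$.
\end{itemize}
Summing, the integrand is jointly concave in $(\tau,A,B)$. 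The sup-of-jointly-concave fact then yields concavity of $(A,B)\mapsto\log\Tr\exp(H+p\log A+p\log B)$.

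The second claim follows the same way with the split $-\Tr[\tau\log\tau]+\beta\Tr[\tau\log C]=(1-\beta)(-\Tr[\tau\log\tau])-\beta D(\tau\|C)$. This is jointly concave in $(\tau,C)$ for $\beta\in[0,1]$, and the result follows as before. For $\beta=1$ it recovers the Lieb--Epstein statement cited in the text.

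The only steps needing care are two. The first is that the joint convexity of $D(\tau\|A)$ is applied with an unnormalized second argument. This is standard: it follows from Lieb's theorem through the derivative of $\Tr[\tau^{s}A^{1-s}]$ at $s=1$, or from homogeneity. The second is that the supremum is over the convex compact set of all density operators, so the sup-of-concave argument applies verbatim. I expect the main obstacle, conceptually, to be recognizing the weighting $(1-2p)+p+p$; once it is in place, everything reduces to known convexity results.
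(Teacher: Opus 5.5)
Your proposal is correct and matches the paper's proof essentially step for step. Both use the Gibbs variational principle, rewrite the integrand as $\Tr[\tau H]+(1-2\alpha)\Tr[\tau\log\tau]-(1-\alpha)D(\tau\|A)-(1-\alpha)D(\tau\|B)$ (respectively $\Tr[\tau H]+(\beta-1)\Tr[\tau\log\tau]-\beta D(\tau\|C)$), and conclude from concavity of the entropy, joint convexity of relative entropy, and the fact that a supremum of jointly concave functions is concave. Your extra justifications, namely the explicit sup-of-concave argument and the remark on unnormalized second arguments, are details the paper leaves implicit.
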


\begin{proof}
    By using the Gibbs variational principle (recalled as Lemma~\ref{lem: Gibbs}), we can write 
    \begin{align}
    &\log{\Tr\exp((H+(1-\alpha)\log A +(1-\alpha)\log{B} )) }\notag \\*
    &\qquad\qquad\qquad = \sup_\tau \{\Tr[\tau (H+(1-\alpha)\log A +(1-\alpha)\log{B} )]-\Tr[\tau \log{\tau}]\}\\
    &\qquad\qquad\qquad = \sup_\tau \{ \Tr[\tau H]+(2(1-\alpha)-1)\Tr[\tau \log{\tau}] - (1-\alpha)D(\tau \|A)- (1-\alpha)D(\tau \|B) \} .
    \end{align}
    Hence, in the case $(2(1-\alpha)-1)\leq 0$ and $1-\alpha \geq 0$, we can use that the above objective function is jointly concave in $\tau$, $A$, and $B$ by concavity of the von Neumann entropy and joint convexity of the relative entropy. We then use the fact that the supremum of a jointly concave function over a convex set preserves concavity.

To prove the second statement, we similarly note that
\begin{align}
    &\log{\Tr\exp (H +\beta\log{C} )}  = \sup_\tau \{\Tr[\tau H]+(\beta-1)\Tr[\tau \log{\tau}]-\beta D(\tau\|C)\}\, .
    \end{align}
Since the objective function on the right-hand side is jointly concave in $\tau$ and $C$, the fact that the supremum over jointly concave functions is concave completes the argument.
\end{proof}

Boundary reference states are understood by replacing each varying
argument $X$ with $X+\delta I$ and taking the limit $\delta\downarrow0$.

\begin{lemma}
\label{lem: lower-semicont Delta}
For fixed full-rank $\rho_{ABC}$ and $\omega_C$, and $0<\alpha<1$,
the boundary extension of
$(\eta_{AC},\sigma_{BC})\mapsto\Delta_\alpha^\flat(\rho,\eta,\sigma,\omega)$
is lower semicontinuous. For fixed full-rank $\rho_{ABC}$,
$\eta_{AC}$, and $\sigma_{BC}$, and $\alpha>1$, the boundary extension
of $\omega_C\mapsto\Delta_\alpha^\flat(\rho,\eta,\sigma,\omega)$
is upper semicontinuous.
\end{lemma}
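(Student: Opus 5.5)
The plan is to write $\Delta_\alpha^\flat$ at the boundary as a pointwise limit of continuous functions that is monotone in the regularization parameter $\delta$. A monotone limit is then a supremum (respectively an infimum) of continuous functions, which gives the required semicontinuity directly. Concretely, for $0<\alpha<1$ I would set
\[
g_\delta(\eta,\sigma)\coloneqq\Delta_\alpha^\flat(\rho_{ABC},\eta_{AC}+\delta I,\sigma_{BC}+\delta I,\omega_C),\qquad \delta>0 .
\]
Each $g_\delta$ is continuous on the compact set of states $(\eta_{AC},\sigma_{BC})$, because the arguments stay positive definite and $\log$, $\exp$, $\Tr$ and $\log\Tr$ are continuous there. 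The trace is strictly positive, so $\frac{1}{\alpha-1}\log\Tr[\cdot]$ is continuous and finite.

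The key step is monotonicity in $\delta$. The operator logarithm is operator monotone, so $\log(\eta+\delta I)$ and $\log(\sigma+\delta I)$ increase with $\delta$. The map $H\mapsto \Tr\exp(H)$ is monotone with respect to the Löwner order; this follows, for example, from the Gibbs variational principle (Lemma~\ref{lem: Gibbs}), since $\Tr[\tau H]$ is monotone in $H$ for each $\tau$. With $1-\alpha>0$, the exponent $\alpha\log\rho+(1-\alpha)(\log(\eta+\delta)+\log(\sigma+\delta)-\log\omega)$ therefore increases with $\delta$, so $\Tr\exp(\cdot)$ increases, and after multiplying the logarithm by $\frac{1}{\alpha-1}<0$ the function $g_\delta$ decreases in $\delta$. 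Hence the boundary extension $g_0=\lim_{\delta\downarrow0}g_\delta=\sup_{\delta>0}g_\delta$, with values in $(-\infty,+\infty]$, is a supremum of continuous functions and is therefore lower semicontinuous. The limit exists in the extended sense by monotonicity. On full-rank pairs it agrees with $\Delta_\alpha^\flat$ by continuity, so it really is the boundary extension.

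For $\alpha>1$ the same argument runs in the variable $\omega_C$. Set $h_\delta(\omega)=\Delta_\alpha^\flat(\rho,\eta,\sigma,\omega_C+\delta I)$. The exponent contains $-(1-\alpha)\log(\omega+\delta I)=(\alpha-1)\log(\omega+\delta I)$, which increases with $\delta$. So $\Tr\exp$ increases, and since $\frac{1}{\alpha-1}>0$, $h_\delta$ increases with $\delta$. Then $h_0=\inf_{\delta>0}h_\delta$ is an infimum of continuous functions and hence upper semicontinuous, with values in $[-\infty,\infty)$.

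I expect the main obstacle to be the fact that the paper's boundary convention uses the limit $\delta\downarrow0$. Here it is enough that this limit exists and equals the monotone supremum or infimum, which the monotonicity gives. I would also remark that the limit can take the values $\pm\infty$ when the limiting trace vanishes, which is consistent with semicontinuity in the extended-real sense.
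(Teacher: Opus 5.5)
Your proposal is correct and follows essentially the same route as the paper: regularize the varying arguments by $X\mapsto X+\delta I$, then use operator monotonicity of the logarithm and monotonicity of the trace exponential to write the boundary extension as a monotone limit of continuous functions. The only cosmetic difference is the order of steps. The paper takes the infimum over $\delta$ at the level of the trace function, obtaining upper semicontinuity, and only then applies $\log$ and the factor $1/(\alpha-1)$. You apply $\frac{1}{\alpha-1}\log$ first and then take the supremum or infimum over $\delta$ directly.
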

\begin{proof}
In both cases, the coefficients of the logarithms of the varying
arguments in the trace exponential are positive. Adding $\delta I$
to those arguments gives continuous trace functions, decreasing to
the stated extension as $\delta\downarrow0$, by operator monotonicity
of the logarithm and monotonicity of the trace exponential.
Their infimum is upper semicontinuous. Applying the logarithm
preserves upper semicontinuity, with $\log0=-\infty$.
Multiplication by $1/(\alpha-1)$ gives the two claims.
\end{proof}

The next lemma derives the optimality conditions for the relevant function being optimized. In the case where the function is convex, these conditions follow from the non-negativity of its directional derivatives. We then use them to show that when the input state is full rank, the optimizer must also be full rank. Hence, the optimizer lies in the interior of the feasible set, and the optimality conditions can therefore be rewritten as a fixed-point equation.

For a differentiable function \(f\) at a full-rank state, we write
the directional derivative of \(f\) at \(\tau\) in the direction of \(\sigma\) as
\begin{align}
    \left.\frac{\mathrm d}{\mathrm d x}
    f\big((1-x)\tau+x\sigma\big)\right|_{x=0}
    =
    \big\langle \nabla f(\tau),\sigma-\tau \big\rangle ,
\end{align}
where \(\nabla f(\tau)\) denotes the gradient of \(f\) at \(\tau\).
All trace-exponential functions used below are differentiable on their
positive definite domains. For a jointly differentiable function of
two variables, the directional derivative in a joint direction is
the sum of its two partial directional derivatives. Consequently,
for a jointly convex function it suffices to verify the nonnegativity
of the two partial directional derivatives; for a jointly concave
function, the corresponding derivatives must be nonpositive.
Throughout this section, $\pi_X=I_X/d_X$ denotes the maximally mixed state.

For $L_{AB}>0$ and $0<\beta<1$, we define the function
\begin{align}
    Q_{\beta}^\flat( L_{AB},\sigma_{B}) \coloneqq  \Tr[\exp(\log L_{AB}+\beta\log\sigma_{B})]
\end{align}
and its gradient (up to the constant $\beta$)
\begin{align}
    \Xi_\beta(L_{AB},\tau_{B}) & \coloneqq \beta^{-1}\nabla_{\tau_B} Q_{\beta}^\flat( L_{AB},\tau_{B})\\
    & =  \int_{0}^{\infty} (\tau_{B}+t)^{-1}\Tr_A[\exp(\log L_{AB}+\beta\log\tau_{B})](\tau_{B}+t)^{-1} \d t \, .
\end{align}

We first prove that for full-rank matrices, any optimizer must be full rank for $\beta\in (0,1)$.
\begin{lemma}
\label{lem: full-rank Q}
Let $L_{AB}$ be a positive definite matrix, and let $\beta \in (0,1)$. If $\tau_{B} \in \argmax_{\sigma_{B}} Q_{\beta}^\flat(L_{AB},\sigma_{B}) $, then $\tau_B$ is full rank.
\end{lemma}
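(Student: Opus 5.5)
We argue by contradiction, using the directional derivative of $Q_\beta^\flat$ towards the maximally mixed state. Since $Q^\flat_\beta(L_{AB},\cdot)$ is defined on non-full-rank $\sigma_B$ through the boundary extension (replace $\sigma_B$ by $\sigma_B+\delta I$ and let $\delta\downarrow0$), we first record that the maximum is attained. By the second statement of Lemma~\ref{lem: Joint concavity}, $Q^\flat_\beta$ is log-concave, hence quasi-concave. The boundary extension is upper semicontinuous by the argument of Lemma~\ref{lem: lower-semicont Delta}, since the coefficient $\beta$ of $\log\sigma_B$ is positive. Compactness of the state space then gives attainment.

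Now suppose $\tau_B$ is a maximizer with nontrivial kernel, and let $P$ be the projector onto $\ker\tau_B$. Consider the segment $\tau_x=(1-x)\tau_B+x\pi_B$, which is full rank for $x\in(0,1]$. We will show that $\frac{\d}{\d x}Q^\flat_\beta(L_{AB},\tau_x)\to+\infty$ as $x\downarrow0$. This implies $Q^\flat_\beta(L_{AB},\tau_x)>Q^\flat_\beta(L_{AB},\tau_B)$ for small $x>0$, which contradicts optimality.

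To estimate the derivative, write it as $\beta\Tr[\Xi_\beta(L_{AB},\tau_x)(\pi_B-\tau_B)]$. The negative part $-\beta\Tr[\Xi_\beta(L_{AB},\tau_x)\tau_B]$ should stay bounded. Indeed, by the Duhamel/Fréchet derivative formula, $\Tr[\Xi_\beta(L,\tau)\tau]$ is $\frac{\d}{\d s}\Tr\exp(\log L+\beta\log(s\tau))|_{s=1}$ divided by $\beta$, and this equals $\Tr\exp(\log L+\beta\log \tau_x)$ weighted appropriately. More simply, $\Tr[\Xi_\beta(L,\tau_x)\tau_x]=Q^\flat_\beta(L,\tau_x)$, which is bounded. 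We compare $\tau_B$ with $\tau_x/(1-x)$ on the support, using $\tau_B\le\tau_x/(1-x)$ together with the positivity of the integral kernel.

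The positive part is $\frac{\beta}{d_B}\Tr[\Xi_\beta(L,\tau_x)]\geq \frac{\beta}{d_B}\Tr[P\,\Xi_\beta(L,\tau_x)P]$. On $\ker\tau_B$ we have $\tau_x=x/d_B$, so $P$ commutes with $\tau_x$, and the integral gives $\int_0^\infty(x/d_B+t)^{-2}\d t\,\Tr[P M_x]=\frac{d_B}{x}\Tr[PM_x]$, where $M_x=\Tr_A\exp(\log L_{AB}+\beta\log\tau_x)$. It remains to lower-bound $\Tr[PM_x]$. Since $L_{AB}\geq \lambda_{\min}I$ and the trace exponential is monotone, we use the Golden--Thompson/Peierls-type bound $\Tr[(P\otimes I)\exp(\log L+\beta\log\tau_x)]\geq \exp\bigl(\log\lambda_{\min}+\beta\log(x/d_B)\bigr)\cdot\mathrm{rank}$. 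We will obtain this from Jensen/Peierls applied to a vector $\psi$ in $\ker\tau_B\otimes\mathcal H_A$: $\langle\psi|e^{X}|\psi\rangle\ge e^{\langle\psi|X|\psi\rangle}$. This gives $\Tr[PM_x]\gtrsim x^{\beta}$, so the positive part is at least of order $x^{\beta-1}\to\infty$, because $\beta<1$.

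The main obstacle is controlling the negative term uniformly as $x\downarrow0$, and making the boundedness of $\Tr[\Xi_\beta(L,\tau_x)\tau_B]$ rigorous. The first attempt will use operator monotonicity: $\tau_B\leq(1-x)^{-1}\tau_x$ and $\Xi_\beta(L,\tau_x)\geq0$ give $\Tr[\Xi_\beta\tau_B]\le(1-x)^{-1}\Tr[\Xi_\beta\tau_x]$. The last trace is computed by the scaling identity $\frac{\d}{\d s}Q^\flat_\beta(L,s\tau)|_{s=1}=\beta Q^\flat_\beta(L,\tau)$, which yields $\Tr[\Xi_\beta(L,\tau_x)\tau_x]=Q^\flat_\beta(L,\tau_x)\leq \Tr\exp(\log L)$ uniformly, since $\tau_x\le I$. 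Combining the bounded negative part with the divergent positive part yields the contradiction, so $\tau_B$ is full rank.
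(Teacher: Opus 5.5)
Your proposal is correct and follows essentially the paper's argument. Like the paper, you differentiate along $\tau_x=(1-x)\tau_B+x\pi_B$, bound the $\tau_B$-term by $(1-x)^{-1}Q^\flat_\beta(L_{AB},\tau_x)\le(1-x)^{-1}\Tr L_{AB}$, and show that the $\pi_B$-term grows like $x^{\beta-1}$; for the last step you compress to $\ker\tau_B$ and use Peierls--Jensen, whereas the paper uses $\log L_{AB}+\beta\log\tau_x\ge cI$ together with $\tau_x^{-1}\ge P(\tau_B)+x^{-1}P(\tau_B)^c$. The one step you should write out is that $Q^\flat_\beta(L_{AB},\tau_x)=(1-x)^\beta Q^\flat_\beta\bigl(L_{AB},\tau_B+\tfrac{x}{(1-x)d_B}I_B\bigr)$ tends to the boundary value at $\tau_B$ as $x\downarrow0$, since this is what turns the eventually positive derivative into $Q^\flat_\beta(L_{AB},\tau_x)>Q^\flat_\beta(L_{AB},\tau_B)$; the paper gets this from concavity of $\log Q^\flat_\beta$ instead.
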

\begin{proof}
The boundary extension of $Q_\beta^\flat$ is upper semicontinuous,
by the same limiting argument as in
Lemma~\ref{lem: lower-semicont Delta}, and therefore attains its
maximum on the compact state space. This maximum is positive,
since the value at $\pi_B$ is positive.

    Suppose that $\tau_{B}$ is not full rank. We then consider a mixture of the form  $(1-x)\tau_{B}+x\pi_{B}$ and show that as $x$ goes to zero, the derivative goes to $\infty$. Since the function $\sigma_B \to \log Q_{\beta}^\flat(L_{AB},\sigma_{B})$ is concave by Lemma~\ref{lem: Joint concavity}, this contradicts the fact that $\tau_B$ is an optimizer.

    The derivative with respect to $x$ in the direction of the function
    \begin{equation}
    x\mapsto \log Q_{\beta}^\flat(L_{AB},(1-x)\tau_{B}+x\pi_{B})     
    \end{equation}
    is equal to  (up to a positive factor)
    \begin{align}
&\Tr\Bigg[
\Tr_A\Big[
\exp\big(\log L_{AB}
+ \beta\log((1-x)\tau_{B} + x\pi_{B})
\big)  \frac{\partial}{\partial x} \big(\beta\log((1-x)\tau_{B} + x\pi_{B})
\big)
\Big]
\Bigg]\,.
\label{eq:deriv-exp-1}
\end{align}
    We then use the known formula for the derivative of the logarithm
    \begin{align}
        \frac{\partial}{\partial x} \log{B(x)} = \int_0^\infty  (B(x)+t)^{-1} \left(\frac{\partial}{\partial x} B(x)\right)  (B(x)+t)^{-1}\d t
    \end{align}
    and that $\frac{\partial}{\partial x} ((1-x)\tau_{B} + x\pi_{B})=\pi_B-\tau_B$
to obtain that, up to a positive constant $\beta$, \eqref{eq:deriv-exp-1} is equal to
\begin{align}
&\Tr\Bigg[
(\pi_{B}-\tau_{B})
\int_{0}^{\infty}
\bigl((1-x)\tau_{B} + x\pi_{B} + t\bigr)^{-1}
\notag\\
&\qquad\qquad \times
\Tr_A\Big[
\exp\big(\log L_{AB}
+ \beta\log((1-x)\tau_{B} + x\pi_{B})
\big)
\Big]
\bigl((1-x)\tau_{B} + x\pi_{B} + t\bigr)^{-1}
\, dt
\Bigg]\notag\\
&=\Tr\Bigg[\pi_{B}
\int_{0}^{\infty}
\bigl((1-x)\tau_{B} + x\pi_{B} + t\bigr)^{-1}
\notag\\
&\qquad\qquad \times
\Tr_A\Big[
\exp\big(\log L_{AB}
+ \beta\log((1-x)\tau_{B} + x\pi_{B})
\big)
\Big]
\bigl((1-x)\tau_{B} + x\pi_{B} + t\bigr)^{-1}
\, dt
\Bigg]\notag\\
&\quad -\Tr\Bigg[\tau_{B}
\int_{0}^{\infty}
\bigl((1-x)\tau_{B} + x\pi_{B} + t\bigr)^{-1}
\notag\\
&\qquad\qquad \times
\Tr_A\Big[
\exp\big(\log L_{AB}
+ \beta\log((1-x)\tau_{B} + x\pi_{B})
\big)
\Big]
\bigl((1-x)\tau_{B} + x\pi_{B} + t\bigr)^{-1}
\, dt
\Bigg]\\
&=\Tr\Bigg[\pi_{B}
\int_{0}^{\infty}
\bigl((1-x)\tau_{B} + x\pi_{B} + t\bigr)^{-1}
\notag\\
&\qquad\qquad \times
\Tr_A\Big[
\exp\big(\log L_{AB}
+ \beta\log((1-x)\tau_{B} + x\pi_{B})
\big)
\Big]
\bigl((1-x)\tau_{B} + x\pi_{B} + t\bigr)^{-1}
\, dt
\Bigg]\notag\\
&\quad -\Tr\Bigg[\tau_{B}
\bigl((1-x)\tau_{B} + x\pi_{B}\bigr)^{-1}
\Tr_A\Big[
\exp\big(\log L_{AB}
+ \beta\log((1-x)\tau_{B} + x\pi_{B})
\big)
\Big]\Bigg]\,.
\end{align}
In the last equality, we used that, for the term involving \( \tau_{B} \), the integral can be evaluated by using cyclicity of the trace. For $0<x<1/2$, this nonnegative term is at most
$(1-x)^{-1}Q_\beta^\flat(L_{AB},(1-x)\tau_B+x\pi_B)$, because
$\tau_B((1-x)\tau_B+x\pi_B)^{-1}\le(1-x)^{-1}I_B$.
It is therefore uniformly bounded: the trace exponential is at most
$\Tr L_{AB}$, by monotonicity of the trace exponential and
$\log((1-x)\tau_B+x\pi_B)\le0$.

For the first term in the sum, we then show that if $\tau_B$ is not full rank, the derivative goes to $\infty$, hence contradicting the fact that concavity implies that the derivative in any direction at the optimizers must be non-positive. We use that $(1-x)\tau_B + x \pi_B \leq P(\tau_B)+x P(\tau_B)^c $ where $P(\tau_B)$ is the projector onto the support of $\tau_B$ and $P(\tau_B)^c\coloneqq I_B-P(\tau_B)$. We then use the fact that $L_{AB}$ is full rank, and hence $\log L_{AB} \geq \log(\lambda_{\min} (L_{AB})) I_{AB}$, where $\log(\lambda_{\min} (L_{AB})) \neq  -\infty$ denotes the smallest eigenvalue of $\log L_{AB}$.
 The terms inside the exponential are lower bounded by 
\begin{align}
    \log L_{AB}
+ \beta\log((1-x)\tau_{B} + x\pi_{B}) \geq (\log(\lambda_{\min}(L_{AB}))+\beta\log (x/d_B))I_{AB} \,.
\end{align}
We then use the fact that the exponential operator is monotone for commuting operators, solve the integral, and use the operator antimonotonicity of the inverse to obtain the lower bound
\begin{align}
&\Tr\Bigg[\pi_{B}
\int_{0}^{\infty}
\bigl((1-x)\tau_{B} + x\pi_{B} + t\bigr)^{-1}
\notag\\
&\qquad\qquad \times
\Tr_A\Big[
\exp\big(\log L_{AB}
+ \beta\log((1-x)\tau_{B} + x\pi_{B})
\big)
\Big]
\bigl((1-x)\tau_{B} + x\pi_{B} + t\bigr)^{-1}
\, dt
\Bigg]\notag\\
&\quad \geq \Tr[\pi_{B}
\bigl((1-x)\tau_{B} + x\pi_{B}\bigr)^{-1} \lambda_{\min}(L_{AB})x^{\beta}d_A I_B d_B^{-\beta}]\\
&\quad \geq \Tr[\pi_{B}
(P(\tau_B)+x^{-1} P(\tau_B)^c) \lambda_{\min}(L_{AB})x^{\beta}d_A I_Bd_B^{-\beta}]\\
&\quad \geq \lambda_{\min}(L_{AB})d_B^{-\beta}\Tr[\pi_{B} d_A P(\tau_{B})^c]x^{\beta-1},
\end{align}
which diverges to $\infty$ as $x$ goes to zero for $\beta \in (0,1)$.
\end{proof}

\begin{lemma}
\label{lem; derivative Q}
Let $L_{AB}$ be a positive definite matrix and $\beta \in (0,1)$.
    Then a full-rank state $\tau_{B}$ belongs to $\argmax_{\sigma_{B}} Q_{\beta}^\flat(L_{AB},\sigma_{B})$ if and only if, for all states $\sigma_{B}$, the following equality holds:
    \begin{align}
        \Tr[\sigma_{B}\Xi_\beta(L_{AB},\tau_{B})] = Q_{\beta}^\flat(L_{AB},\tau_{B})\, .
    \end{align}
\end{lemma}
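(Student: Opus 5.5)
The plan is to combine concavity of $\sigma_B\mapsto\log Q_\beta^\flat(L_{AB},\sigma_B)$ (second part of Lemma~\ref{lem: Joint concavity}, with $H=\log L_{AB}$) with an explicit computation of the directional derivative at the full-rank point $\tau_B$. Since $\log$ is increasing, $\tau_B$ maximizes $Q_\beta^\flat$ if and only if it maximizes the concave function $g(\sigma_B)\coloneqq\log Q_\beta^\flat(L_{AB},\sigma_B)$ over the convex state space. A concave function is maximized at $\tau_B$ exactly when every directional derivative at $\tau_B$ toward any other state $\sigma_B$ is nonpositive. Sufficiency uses $g(\sigma_B)\le g(\tau_B)+\langle\nabla g(\tau_B),\sigma_B-\tau_B\rangle$; necessity is immediate from optimality along the segment.

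To compute the derivative, I will differentiate $x\mapsto Q_\beta^\flat(L_{AB},(1-x)\tau_B+x\sigma_B)$ at $x=0$, following the calculation in the proof of Lemma~\ref{lem: full-rank Q}. First, the Duhamel formula together with cyclicity of the trace gives $\frac{\d}{\d x}\Tr e^{X(x)}=\Tr[e^{X(x)}X'(x)]$. Second, the integral representation of the derivative of the logarithm, with $\frac{\d}{\d x}((1-x)\tau_B+x\sigma_B)=\sigma_B-\tau_B$, yields
\begin{equation}
\frac{\d}{\d x}Q_\beta^\flat\big(L_{AB},(1-x)\tau_B+x\sigma_B\big)\Big|_{x=0}=\beta\,\Tr\big[(\sigma_B-\tau_B)\,\Xi_\beta(L_{AB},\tau_B)\big].
\end{equation}
Here I moved the partial trace onto the $B$ factor, using that $\sigma_B-\tau_B$ acts only on $B$. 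Third, cyclicity lets me evaluate the $\tau_B$ term exactly: $\tau_B$ commutes with $(\tau_B+t)^{-1}$ and $\int_0^\infty\tau_B(\tau_B+t)^{-2}\,\d t=I$ on the full support. Hence $\Tr[\tau_B\Xi_\beta(L_{AB},\tau_B)]=Q_\beta^\flat(L_{AB},\tau_B)$. The optimality condition therefore becomes $\Tr[\sigma_B\Xi_\beta(L_{AB},\tau_B)]\le Q_\beta^\flat(L_{AB},\tau_B)$ for all states $\sigma_B$. Dividing by $Q_\beta^\flat>0$ to pass between $g$ and $Q_\beta^\flat$ only contributes a positive factor.

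The remaining step, which I expect to be the main subtlety, is upgrading this inequality to the stated equality. This is where full rank of $\tau_B$ is used: $\tau_B$ lies in the relative interior of the state space, so for any state $\sigma_B$ and small $\epsilon>0$ the operator $\sigma'_B=(1+\epsilon)\tau_B-\epsilon\sigma_B$ is again a state. Applying the inequality to $\sigma'_B$ and using linearity together with $\Tr[\tau_B\Xi_\beta]=Q_\beta^\flat$ gives the reverse inequality $\Tr[\sigma_B\Xi_\beta]\ge Q_\beta^\flat$. This proves equality for every state $\sigma_B$. Conversely, equality for all $\sigma_B$ makes every directional derivative vanish, and concavity then implies $\tau_B$ is a maximizer. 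Equivalently, the condition says that $\Xi_\beta(L_{AB},\tau_B)=Q_\beta^\flat(L_{AB},\tau_B)\,I_B$, which is the fixed-point form used later.
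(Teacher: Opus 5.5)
Your proposal is correct and follows essentially the same route as the paper: concavity of $\log Q_\beta^\flat$ from Lemma~\ref{lem: Joint concavity}, the integral formula for the derivative of the logarithm giving the directional derivative $\beta\,\Tr[(\sigma_B-\tau_B)\Xi_\beta(L_{AB},\tau_B)]$ with $\Tr[\tau_B\Xi_\beta(L_{AB},\tau_B)]=Q_\beta^\flat(L_{AB},\tau_B)$, and the interior position of the full-rank $\tau_B$ to force that derivative to vanish. Your explicit use of $(1+\epsilon)\tau_B-\epsilon\sigma_B$ spells out the step that the paper compresses into ``the directional derivative must in fact vanish in every admissible direction.''
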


\begin{proof}
   The logarithm of the objective function is concave in $\sigma_B$ by Lemma~\ref{lem: Joint concavity}. Hence, \(\tau_B\) is a maximizer if and only if the directional derivative at \(\tau_B\) in every feasible direction toward a state \(\sigma_B\) is non-positive. Moreover, by Lemma~\ref{lem: full-rank Q}, any optimizer is full rank and therefore lies in the interior of the state space. Consequently, the directional derivative must in fact vanish in every admissible direction.
   We then use the known formula for the derivative of the logarithm
    \begin{align}
        \frac{\partial}{\partial x} \log{B(x)} = \int_0^\infty  (B(x)+t)^{-1} \left(\frac{\partial}{\partial x} B(x)\right)  (B(x)+t)^{-1}\d t
    \end{align}
    to obtain that the derivative with respect to $x$ of the function $x\mapsto Q_{\beta}^\flat(L_{AB},(1-x)\tau_{B}+x\sigma_{B})$ at $x=0$ is equal to (up to a constant factor of $\beta$)
    \begin{align}
        \Tr[\sigma_{B} \int_{0}^{\infty} (\tau_{B}+t)^{-1}\Tr_A[\exp(\log L_{AB}+\beta\log\tau_{B})](\tau_{B}+t)^{-1} \d t] - Q_{\beta}^\flat(L_{AB},\tau_{B})\, .
    \end{align}
   Setting the derivative equal to zero yields the result.
For later use, this condition is equivalent to the fixed-point equation
\begin{equation}
 \Tr_A\exp(\log L_{AB}+\beta\log\tau_B)
   =Q_\beta^\flat(L_{AB},\tau_B)\tau_B.
 \label{eq:single-fixed-point}
\end{equation}
Indeed, the equality for every state $\sigma_B$ says that
$\Xi_\beta=Q_\beta^\flat I_B$. In an eigenbasis of $\tau_B$,
the derivative of the logarithm multiplies the $(i,j)$ matrix entry
by $(\log\lambda_i-\log\lambda_j)/(\lambda_i-\lambda_j)$, with
value $1/\lambda_i$ when $\lambda_i=\lambda_j$. These factors are
strictly positive. Inverting this entrywise map sends $I_B$ to
$\tau_B$, which proves~\eqref{eq:single-fixed-point}. Conversely,
that fixed-point equation makes every displayed directional
derivative zero; concavity of $\log Q_\beta^\flat$ then gives global
optimality.
\end{proof}
Next, we derive a similar statement to that of Lemma~\ref{lem: full-rank Q} for $\Delta_\alpha^\flat(\rho_{ABC},\eta_{AC},\sigma_{BC},\omega_C)$ in the range $\alpha \in (1/2,1)$.

\begin{lemma}
\label{lem: full-rank}
    Let $\alpha \in (1/2,1)$, and let $\rho_{ABC}$ and $\omega_C$ be full-rank states. Then all optimizers of 
    \begin{align}
        \min_{\eta_{AC},\sigma_{BC}} \Delta_\alpha^\flat(\rho_{ABC},\eta_{AC},\sigma_{BC},\omega_C)\,,
    \end{align}
    where the optimization is over all states, are full rank.
\end{lemma}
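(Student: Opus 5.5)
We follow the strategy of Lemma~\ref{lem: full-rank Q}: existence, convexity, then a boundary derivative blow-up. Since $\alpha<1$, minimizing $\Delta_\alpha^\flat$ is the same as maximizing
\begin{equation}
F(\eta_{AC},\sigma_{BC})\coloneqq\Tr\exp\!\big(H_{ABC}+(1-\alpha)\log\eta_{AC}+(1-\alpha)\log\sigma_{BC}\big),
\end{equation}
where $H_{ABC}\coloneqq\alpha\log\rho_{ABC}-(1-\alpha)\log\omega_C$. Because $\rho_{ABC}$ and $\omega_C$ are full rank, $H_{ABC}$ is a bounded Hermitian operator.

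First, by Lemma~\ref{lem: lower-semicont Delta} the boundary extension of $(\eta,\sigma)\mapsto\Delta_\alpha^\flat$ is lower semicontinuous, so a minimizer exists on the compact state space. Its value is finite, since at $(\pi_{AC},\pi_{BC})$ the trace is positive. Second, by Lemma~\ref{lem: Joint concavity}, $\log F$ is jointly concave for $\alpha\in[1/2,1]$. Hence any maximizer $(\eta^*,\sigma^*)$ has nonpositive directional derivative toward every feasible point.

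Now suppose $\sigma^*_{BC}$ is singular; the case of $\eta^*_{AC}$ is symmetric. We perturb only that argument, $\sigma_x=(1-x)\sigma^*+x\pi_{BC}$, with $\eta$ replaced by a fixed full-rank regularization $\eta^*+\delta I$, or understood through the $\delta\downarrow0$ boundary limit. We then compute $\frac{\d}{\d x}\log F$ with the integral formula for the derivative of $\log$, exactly as in Lemma~\ref{lem: full-rank Q}. The derivative splits into a positive $\pi_{BC}$ term and a negative $\sigma^*$ term.

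The negative term is bounded by $(1-x)^{-1}F$, using $\sigma^*\sigma_x^{-1}\le(1-x)^{-1}I$ and cyclicity. For the positive term we use the lower bound
\begin{equation}
H+(1-\alpha)\log\eta+(1-\alpha)\log\sigma_x\ge\big(c+(1-\alpha)\log(x/d_{BC})\big)I,
\end{equation}
with a constant $c$ coming from $\lambda_{\min}$ of $H$ and of $\eta$. This gives an integrand bounded below by a multiple of $x^{1-\alpha}\,\Tr[\pi_{BC}\sigma_x^{-1}]\gtrsim x^{1-\alpha}\cdot x^{-1}\Tr[\pi P(\sigma^*)^c]$. This diverges like $x^{-\alpha}$, contradicting optimality.

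The main obstacle is when $\eta^*$ is itself singular. Then $\log\eta^*$ has no finite lower bound, so the constant $c$ is not available. We would handle this in one of two ways. The first is to perturb both arguments simultaneously along $(\eta_x,\sigma_x)=((1-x)\eta^*+x\pi,(1-x)\sigma^*+x\pi)$. Then the exponent is bounded below by $(c'+2(1-\alpha)\log x)I$, and the positive term scales like $x^{2(1-\alpha)}x^{-1}=x^{1-2\alpha}$. This diverges only for $\alpha>1/2$, which explains the hypothesis $\alpha\in(1/2,1)$. The joint directional derivative is the sum of the partial ones, and joint concavity still forces it to be nonpositive, which gives the contradiction.

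Throughout, we must check that the boundary value $F(\eta^*,\sigma^*)$ is positive, so that $\log F$ is finite. We also need the right-derivative at $x=0$ to be the limit of the derivatives for $x>0$. This follows from concavity along the segment: the right-derivative equals $\lim_{x\downarrow0}$ of the derivatives at $x>0$, which we have shown diverge to $+\infty$. That contradicts the derivative being nonpositive at the maximizer.
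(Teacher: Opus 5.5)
Your proposal is correct and is essentially the paper's proof. Both perturb toward the maximally mixed states, bound the subtracted terms by $(1-x)^{-1}$ times the uniformly bounded trace exponential, and lower-bound the positive term by a constant times $x^{2(1-\alpha)-1}$, which diverges exactly when $\alpha>1/2$. The paper uses the joint perturbation in every case, so your separate single-argument case is not needed; drop the regularization $\eta^*+\delta I$, since it is only legitimate when $\eta^*$ is already full rank, and there it is unnecessary.
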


\begin{proof}
The proof is similar to the one of Lemma~\ref{lem: full-rank Q}.
    We assume that at least one of the operators $\eta_{AC}$ and $\sigma_{BC}$ is not full rank. We then consider the convex combinations $(1-x)\eta_{AC}+x\pi_{AC}$ and $(1-x)\sigma_{BC}+x\pi_{BC}$ and show that as $x$ goes to zero, the derivative goes to $-\infty$. Since the function is convex, this contradicts the fact that the boundary states are optimizers.  
    
    We equivalently maximize the positive trace exponential inside the logarithm and show that its derivative is positive for sufficiently small $x$. Its logarithm is concave by Lemma~\ref{lem: Joint concavity}.
    The derivative with respect to $x$ of the trace exponential
\begin{align}
x\mapsto\exp\!\left((\alpha-1)
\Delta_\alpha^\flat(\rho_{ABC},(1-x)\eta_{AC}+x\pi_{AC},
                         (1-x)\sigma_{BC}+x\pi_{BC},\omega_C)\right)
\end{align}
is equal, up to the positive constant $1-\alpha$, to 
    \begin{align}
&\Tr\Bigg[
(\pi_{AC}-\eta_{AC})
\int_{0}^{\infty}
\bigl((1-x)\eta_{AC} + x\pi_{AC} + t\bigr)^{-1}
\notag \\
&\qquad\qquad \times
\Tr_B\Big[
\exp\Big(
\alpha \log \rho_{ABC}
+ (1-\alpha)\log\bigl((1-x)\eta_{AC} + x\pi_{AC}\bigr)
\notag \\
&\qquad\qquad\qquad\qquad\quad
+ (1-\alpha)\log\bigl((1-x)\sigma_{BC} + x\pi_{BC}\bigr)
+ (\alpha-1)\log \omega_C
\Big)
\Big]
\notag \\
&\qquad\qquad \times
\bigl((1-x)\eta_{AC} + x\pi_{AC} + t\bigr)^{-1}
\, dt
\Bigg]
+ \notag \\
&\Tr\Bigg[
(\pi_{BC} - \sigma_{BC})
\int_{0}^{\infty}
\bigl((1-x)\sigma_{BC} + x\pi_{BC} + t\bigr)^{-1}
\notag \\
&\qquad\qquad \times
\Tr_A\Big[
\exp\Big(
\alpha \log \rho_{ABC}
+ (1-\alpha)\log\bigl((1-x)\eta_{AC} + x\pi_{AC}\bigr)
\notag \\
&\qquad\qquad\qquad\qquad\quad
+ (1-\alpha)\log\bigl((1-x)\sigma_{BC} + x\pi_{BC}\bigr)
+ (\alpha-1)\log \omega_C
\Big)
\Big]
\notag \\
&\qquad\qquad \times
\bigl((1-x)\sigma_{BC} + x\pi_{BC} + t\bigr)^{-1}
\, dt
\Bigg].
\end{align}
As in the proof of Lemma~\ref {lem: full-rank Q}, we exploit the linearity of the trace to separate the two terms into four terms involving $\pi_{AC}$, $\eta_{AC}$, $\pi_{BC}$, and $\sigma_{BC}$. We then observe that, for the terms involving \( \eta_{AC} \) and \( \sigma_{BC} \), the integral can be evaluated by using the cyclicity of the trace. The two nonnegative terms subtracted in this decomposition are
uniformly bounded for $0<x<1/2$: each is at most $(1-x)^{-1}$ times
the trace exponential. That trace is uniformly bounded because
$\alpha\log\rho_{ABC}+(\alpha-1)\log\omega_C$ is fixed and both
remaining logarithms are nonpositive. For the remaining two positive terms, we then show that if either one of the states is not full rank, the derivative goes to $\infty$, hence contradicting the fact that concavity implies that the derivative in any direction at the optimizers must be non-positive. We use that $(1-x)\gamma + x \pi \leq P(\gamma)+x P(\gamma)^c $, where $P(\gamma)$ is the projector onto the support of $\gamma$. We then use the fact that the logarithm is operator monotone, and that, by the full-rank assumption, $\alpha\log \rho_{ABC}
+ (\alpha-1)\log\omega_C \geq -k I_{ABC}$ for a positive constant $k$. Let us assume, for example, that \( \eta_{AC} \) is not full rank. The terms inside the exponential are lower bounded by
\begin{align}
&\alpha \log \rho_{ABC}
+ (1-\alpha)\log\bigl((1-x)\eta_{AC} + x\pi_{AC}\bigr)
+ (1-\alpha)\log\bigl((1-x)\sigma_{BC} + x\pi_{BC}\bigr)
+ (\alpha-1)\log \omega_C \notag\\
& \qquad\qquad  \geq \left(-k+2(1-\alpha)\log x-(1-\alpha)\log{(d_Ad_Bd_C^2)}\right)I_{ABC}.
\end{align}
The fact that the exponential operator is monotone for commuting operators implies that the first term is lower bounded by
\begin{align}
    &\Tr\Bigg[
\pi_{AC}
\int_{0}^{\infty}
\bigl((1-x)\eta_{AC} + x\pi_{AC} + t\bigr)^{-1}
\notag \\
&\qquad\qquad \times
\Tr_B\Big[
\exp\Big(
\alpha \log \rho_{ABC}
+ (1-\alpha)\log\bigl((1-x)\eta_{AC} + x\pi_{AC}\bigr)
\notag \\
&\qquad\qquad\qquad\qquad\quad
+ (1-\alpha)\log\bigl((1-x)\sigma_{BC} + x\pi_{BC}\bigr)
+ (\alpha-1)\log \omega_C
\Big)
\Big]
\notag \\
&\qquad\qquad \times
\bigl((1-x)\eta_{AC} + x\pi_{AC} + t\bigr)^{-1}
\, dt
\Bigg]\\
& \qquad \geq \Tr\Big[
\pi_{AC}
\bigl((1-x)\eta_{AC} + x\pi_{AC}\bigr)^{-1}  e^{-k}(d_Ad_Bd_C^2)^{\alpha-1} d_B I_{AC}x^{2(1-\alpha)}\Big]\\
&\qquad \geq \Tr\Big[
\pi_{AC} e^{-k}(d_Ad_Bd_C^2)^{\alpha-1} d_B I_{AC}P(\eta_{AC})^c\Big]x^{2(1-\alpha)-1},
\end{align}
which diverges to $\infty$ as $x$ goes to zero for $\alpha \in (1/2,1)$. Thus the trace exponential is strictly increasing away from the
purported boundary maximizer for all sufficiently small $x$.
As $x\downarrow0$, the trace exponential converges to its boundary value,
so this strict increase contradicts optimality.
\end{proof}

\begin{remark}
\label{rem:additivity-limits}
For fixed $\rho$, the auxiliary-state domain in
Corollary~\ref{cor: variational form triply optimized} is compact and
$D(\tau\|\rho)$ is continuous and bounded on it. Hence
$\alpha\mapsto I_\alpha^{\flat,*}(A:B|C)_\rho$ is continuous on
$(0,1)$ and $(1,2]$: the dependence on $\alpha$ is only through the
continuous coefficient $\alpha/(1-\alpha)$, uniformly over that
compact set. Continuity at one follows from
Lemma~\ref{lem: limit Istar}. This justifies taking limits at
$\alpha=1/2,1,2$ after proving additivity in the open intervals.

For states that are not full rank, the expressions are understood
through the limiting definitions above. The compactness argument in
Corollary~\ref{cor: variational form triply optimized} also applies to
products of depolarized inputs, since the relative entropy converges
on the limiting support. Thus additivity extends to arbitrary states.
\end{remark}

\subsection{Additivity of \texorpdfstring{$I_\alpha^{\flat,*}$ for $\alpha \in [1/2,1]$}{the optimized LE-CMI for alpha in [1/2,1]}}

As discussed above, to establish additivity, we lower bound the triply optimized log-Euclidean expression by restricting to product states of the form \( \omega_C \otimes \omega'_{C'} \), which can be chosen to be full rank. We then show that the resulting expression is additive for any such full-rank product state. This follows by exploiting that the infima are attained at full-rank states, hence in the interior of the feasible set, allowing us to apply first-order optimality conditions which yield a fixed-point equation. From this characterization, additivity readily follows. Superadditivity is then established by taking the supremum over all product states \( \omega_C \otimes \omega'_{C'} \) and exchanging it with the infima via a minimax argument, justified by the joint convexity established above.
\begin{theorem}
\label{thm: optimized additivity}
    Let $\rho_{ABC}$ and $\rho' _{A' B' C' }$ be two quantum states, and let $\alpha \in [1/2,1]$. Then, we have that
    \begin{align}
        I_\alpha^{\flat,*}(AA' :BB'|CC')_{\rho\otimes \rho' } = I_\alpha^{\flat,*}(A:B|C)_\rho+ I_\alpha^{\flat,*}(A' :B'|C')_{\rho'}\, .
    \end{align}
\end{theorem}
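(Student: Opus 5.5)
We prove the two inequalities separately, first for full-rank $\rho,\rho'$ and $\alpha\in(1/2,1)$. The endpoints $\alpha=1/2$ and $\alpha=1$, and non-full-rank inputs, then follow from the continuity and limiting arguments of Remark~\ref{rem:additivity-limits} together with Lemma~\ref{lem: limit Istar}.

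\emph{Subadditivity.} We use Corollary~\ref{cor: variational form triply optimized} and restrict the infimum over $\tau_{AA'BB'CC'}$ to product states $\tau_{ABC}\otimes\tau'_{A'B'C'}$. Relative entropy is additive on tensor products. The conditional mutual information is additive as well: $I(AA':BB'|CC')_{\tau\otimes\tau'}=I(A:B|C)_\tau+I(A':B'|C')_{\tau'}$. Hence the objective splits into two independent terms, and optimizing each gives $I^{\flat,*}_\alpha(AA':BB'|CC')_{\rho\otimes\rho'}\le I^{\flat,*}_\alpha(A:B|C)_\rho+I^{\flat,*}_\alpha(A':B'|C')_{\rho'}$.

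\emph{Superadditivity.} In the definition of $I^{\flat,*}_\alpha$ for the joint state, we lower bound the inner supremum over $\omega_{CC'}$ by its value at full-rank product states $\omega_C\otimes\omega'_{C'}$. This yields
\[
I^{\flat,*}_\alpha(AA':BB'|CC')\ge\inf_{\eta,\sigma}\sup_{\omega,\omega'}\Delta^\flat_\alpha(\rho\otimes\rho',\eta_{AA'CC'},\sigma_{BB'CC'},\omega_C\otimes\omega'_{C'}).
\]
We then exchange $\inf$ and $\sup$ by Sion's theorem. The conditions are as follows. For $\alpha\in[1/2,1)$, Lemma~\ref{lem: Joint concavity} makes $\log\Tr\exp$ jointly concave in $(\eta,\sigma)$, so $\Delta^\flat_\alpha$ is jointly convex there, because we divide by $\alpha-1<0$. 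The map $\omega\mapsto\Delta^\flat_\alpha$ is concave: it is $\frac{1}{\alpha-1}\log\Tr\exp(H+(\alpha-1)\log\omega)$, and $\log\omega\mapsto$ enters with coefficient $\alpha-1<0$. Writing $\log\omega=-\log\omega^{-1}$ does not help directly, so instead we use the Gibbs representation of Lemma~\ref{lem: variational form}: the term $-D(\tau_C\|\omega_C)$ is concave in $\omega$, and an infimum of concave functions is concave. The set of product states is not convex, however, so we apply Sion only to the product parametrization $(\omega,\omega')$. Concavity in $(\omega,\omega')$ holds because $\log(\omega\otimes\omega')=\log\omega\otimes I+I\otimes\log\omega'$ is linear in $(\log\omega,\log\omega')$, and $-D(\tau_{CC'}\|\omega\otimes\omega')=\Tr\tau_C\log\omega+\Tr\tau_{C'}\log\omega'-\Tr\tau\log\tau$ is jointly concave. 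Lower semicontinuity and compactness are supplied by Lemma~\ref{lem: lower-semicont Delta}.

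It then suffices to prove that for fixed full-rank $\omega,\omega'$,
\[
\min_{\eta,\sigma}\Delta^\flat_\alpha(\rho\otimes\rho',\eta,\sigma,\omega\otimes\omega')=\min\Delta^\flat_\alpha(\rho,\cdot,\cdot,\omega)+\min\Delta^\flat_\alpha(\rho',\cdot,\cdot,\omega').
\]
Taking $\sup_{\omega,\omega'}$ on both sides then gives the sum of the $I^{\flat,*}_\alpha$ values. The inequality ``$\le$'' is immediate by plugging in product states. For ``$\ge$'', we show that the product of the marginal optimizers $(\eta^*\otimes\eta'^*,\sigma^*\otimes\sigma'^*)$ is a global optimizer. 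By Lemma~\ref{lem: full-rank}, the marginal optimizers are full rank. Joint convexity then turns optimality into vanishing of the directional derivatives, computed as in Lemma~\ref{lem; derivative Q} in each variable separately. Exactly as in that lemma, this gives the fixed-point equations
\[
\Tr_B\exp(G)=Q\,\eta^*_{AC},\qquad \Tr_A\exp(G)=Q\,\sigma^*_{BC},
\]
where $G=\alpha\log\rho+(1-\alpha)(\log\eta^*+\log\sigma^*-\log\omega)$ and $Q=\Tr\exp G$. For the product candidate, the exponent of the joint problem is $G\otimes I+I\otimes G'$, so its exponential factorizes. The partial traces $\Tr_{BB'}$ and $\Tr_{AA'}$ therefore factorize as well, and the joint fixed-point equations hold with constant $QQ'$. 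By the converse direction of the same argument, concavity of $\log\Tr\exp$ makes a vanishing gradient sufficient for global optimality, so the product is optimal and additivity of the minimum follows.

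The main obstacle is the minimax step. One must justify the exchange over the nonconvex set of product states by working in the $(\omega,\omega')$ parametrization, and verify semicontinuity and attainment so that Lemma~\ref{lem: full-rank} applies at the boundary. Lemma~\ref{lem: full-rank} needs $\alpha>1/2$; this is the reason for the restriction on $\alpha$ and for handling the endpoint $\alpha=1/2$ by continuity.
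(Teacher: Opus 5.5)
Your overall architecture matches the paper's. You get subadditivity by restricting $\tau$ to products in Corollary~\ref{cor: variational form triply optimized}. You get superadditivity by fixing a full-rank product $\omega_C\otimes\omega'_{C'}$ and using Lemma~\ref{lem: full-rank} with the fixed-point equations to show that the inner minimum over $(\eta,\sigma)$ is additive, then taking the supremum over $\omega,\omega'$. The endpoints follow by continuity. The problem is where you place the minimax. At the joint level you only need $\inf_{\eta,\sigma}\sup_{\omega,\omega'}\Delta^\flat_\alpha\ge\sup_{\omega,\omega'}\inf_{\eta,\sigma}\Delta^\flat_\alpha$, which holds for every function. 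Sion and the $(\omega,\omega')$ parametrization are therefore superfluous there, and what you call the main obstacle is not one.

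The step that does need a minimax theorem is your final sentence, and it is a gap as written. After taking $\sup_{\omega,\omega'}$ you have $\sup_{\omega_C}\min_{\eta_{AC},\sigma_{BC}}\Delta^\flat_\alpha(\rho_{ABC},\eta_{AC},\sigma_{BC},\omega_C)$ plus the analogous term for $\rho'$. But $I^{\flat,*}_\alpha(A:B|C)_\rho$ is defined as $\inf_{\eta,\sigma}\sup_{\omega}\Delta^\flat_\alpha$. Concluding therefore requires the nontrivial inequality $\sup_\omega\inf_{\eta,\sigma}\Delta^\flat_\alpha\ge\inf_{\eta,\sigma}\sup_\omega\Delta^\flat_\alpha$ for each single-copy problem, which you never state. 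It is also not supplied by Corollary~\ref{cor: variational form triply optimized}, whose minimax exchanges $\sup_\omega$ with $\inf_\tau$ at fixed $(\eta,\sigma)$.

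The hypotheses you checked (joint convexity in $(\eta,\sigma)$ for $\alpha\ge1/2$, concavity in $\omega$) do suffice for Sion on the marginal problem, where the $\omega_C$ domain is already convex. You would still have to verify the semicontinuity of the boundary extension of $\Delta^\flat_\alpha$, including possible value $+\infty$. The paper takes a cleaner route. By Lemma~\ref{lem: variational form} and commuting the infima over $\eta,\sigma,\tau$, it writes $\sup_\omega\inf_{\eta,\sigma}\Delta^\flat_\alpha=\sup_\omega\inf_\tau\{c_\alpha D(\tau\|\rho)+I(A:B|C)_\tau-D(\tau_C\|\omega_C)\}$ with $c_\alpha=\alpha/(1-\alpha)\ge1$. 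This objective is convex and continuous in $\tau$ on a compact set, which is one place where $\alpha\ge1/2$ enters. Sion then applies directly and returns $I^{\flat,*}_\alpha(A:B|C)_\rho$.
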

\begin{proof}
We first take $\alpha\in(1/2,1)$ and assume that $\rho_{ABC}$ and $\rho'_{A'B'C'}$ are full rank. We first note that in the variational characterization in~\eqref{eq: variational triply optimized}, we can restrict the infimum over all states to tensor-product states. This immediately yields subadditivity
\begin{align}
    I_\alpha^{\flat,*}(AA' :BB'|CC')_{\rho\otimes \rho'} \leq I_\alpha^{\flat,*}(A:B|C)_\rho+ I_\alpha^{\flat,*}(A' :B'|C')_{\rho'}\, .
\end{align}

We are then left to prove superadditivity.
To prove it, we can first lower bound the expression with the tensor product of two full-rank states $\omega_C \otimes \omega' _{C'}$ 
\begin{align}
    &I_\alpha^{\flat,*}(AA' :BB'|CC')_{\rho\otimes \rho'} \geq \inf_{\sigma_{BB' CC'},\eta_{AA' CC'}} \Delta_\alpha^\flat(\rho_{ABC}\otimes \rho'_{A' B' C'},\eta_{AA'CC'},\sigma_{BB'CC'},\omega_C \otimes \omega' _{C'})\, .
\end{align}
Next, we show that the above infimum is additive by proving that the tensor product of the marginal problems is a solution. 
In particular, since the function is jointly convex in the range $\alpha\in [1/2,1)$ by Lemma~\ref{lem: Joint concavity}, this amounts to proving that the derivatives are separately non-negative for the tensor products of the optimizers (the objective is differentiable at these full-rank states, so its differential is the sum of its two partial differentials).
We show the case of the derivative for the optimization problem in $\eta_{AA'CC'}$ as the one for $\sigma_{BB' CC'}$ is similar.

Let $(\eta_{AC}^\star,\sigma_{BC}^\star)$ and $(\eta'^\star _{A' C' },\sigma'^\star _{B' C' })$ be any solution of the marginal minimization problems
\begin{align}
    \min_{\eta_{AC},\sigma_{BC}} \Delta_\alpha^\flat(\rho_{ABC},\eta_{AC},\sigma_{BC},\omega_C) \,,\quad  \min_{\eta'_{A'C'},\sigma'_{B'C'}} \Delta_\alpha^\flat(\rho'_{A'B'C'},\eta'_{A'C'},\sigma'_{B'C'},\omega'_{C'})\,,
\end{align}
respectively. By Lemma~\ref{lem: lower-semicont Delta}, the objective function is lower semicontinuous, and any lower semicontinuous function optimized over a compact set attains its minimum on that set. Moreover, by Lemma~\ref{lem: full-rank}, all optimizers are full rank.
Since the objective functions are convex in each variable by
Lemma~\ref{lem: Joint concavity}, and since the optimizers are full rank
and therefore lie in the interior of the feasible set, the derivative in
every direction must vanish. By Lemma~\ref{lem; derivative Q}, this yields the corresponding fixed-point equations; see also~\cite[Theorem~5]{rubboli2024quantum}. For example, $\eta^\star_{AC}$ must satisfy
\begin{align}
    \eta_{AC}^\star
    \propto
    \Tr_B \exp\!\left(
        \alpha \log\rho_{ABC}
        +(1-\alpha)\log \eta^\star_{AC}
        +(1-\alpha)\log\sigma^\star_{BC}
        +(\alpha-1)\log\omega_C
    \right),
    \label{eq:fixed-point-eq-eta}
\end{align}
where, for a state $\tau$ and a positive semidefinite operator $A$, the notation
$\tau \propto A$ means that $\tau=A/\Tr[A]$.

By differentiability, the partial derivatives at
\((\eta^\star_{AC}\otimes \eta'^\star_{A'C'},\sigma_{BC}^\star\otimes \sigma'^\star_{B'C'})\)
may be considered separately in the directions
\(\eta_{AA'CC'}\) and \(\sigma_{BB'CC'}\).
It therefore suffices to verify the equality from
Lemma~\ref{lem; derivative Q} on the enlarged system \(AA'CC'\). This equality makes the corresponding directional derivative zero.
 We then have that the condition of Lemma~\ref{lem; derivative Q} is satisfied since
\begin{align}
&\Tr\Bigg[
\eta_{AA'CC'}
\int_{0}^{\infty}
(\eta_{AC}^\star \otimes \eta'^\star_{A'C'} + t)^{-1}
\notag \\
&\qquad \times
\Tr_{BB' }\Bigg[
\exp\Big(
\alpha \log(\rho_{ABC} \otimes \rho'_{A'B'C'})
+ (1-\alpha)\log(\sigma^\star_{BC} \otimes \sigma'^\star_{B'C'})
\notag \\
&\qquad\qquad\quad
+ (\alpha-1)\log(\omega_C \otimes \omega'_{C'})
+ (1-\alpha)\log(\eta^\star_{AC} \otimes \eta'^\star_{A'C'})
\Big)
\Bigg]
\notag \\
&\qquad \times
(\eta_{AC}^\star \otimes \eta'^\star_{A'C'} + t)^{-1}
\, dt
\Bigg]
\notag \\[0.5em]
&=
\Tr\Bigg[
\eta_{AA'CC'}
\int_{0}^{\infty}
(\eta^\star_{AC} \otimes \eta'^\star_{A'C'} + t)^{-1}
(\eta_{AC}^\star \otimes \eta'^\star_{A'C'})
\notag \\
&\qquad \times
(\eta^\star_{AC} \otimes \eta'^\star_{A'C'} + t)^{-1}
\, dt
\Bigg]
\notag \\
&\qquad \times
Q_{\alpha}^\flat(\rho_{ABC},\eta^\star_{AC},\sigma^\star_{BC},\omega_C)\,
Q_{\alpha}^\flat(\rho'_{A'B'C'},\eta'^\star_{A'C'},\sigma'^\star_{B'C'},\omega'_{C'})
\\[0.5em]
&=
Q_{\alpha}^\flat(\rho_{ABC},\eta^\star_{AC},\sigma^\star_{BC},\omega_C)\,
Q_{\alpha}^\flat(\rho'_{A'B'C'},\eta'^\star_{A'C'},\sigma'^\star_{B'C'},\omega'_{C'}) \, ,
\end{align}
where the equality follows from \eqref{eq:fixed-point-eq-eta} and a similar fixed-point equation for $\eta'^\star_{A'C'}$ and we defined
\begin{align}
    Q_{\alpha}^\flat(\rho_{ABC},\eta_{AC},\sigma_{BC},\omega_C)\coloneqq \exp\big((\alpha-1) \,\Delta_\alpha^\flat(\rho_{ABC},\eta_{AC},\sigma_{BC},\omega_C)\big)\,.
\end{align}
Hence both partial directional derivatives of the jointly convex function $\Delta_\alpha^\flat$ vanish. The same calculation applies to the $BB'CC'$ variable. Joint differentiability and convexity then prove global optimality and hence additivity for this joint minimization problem.
Overall, we obtain
\begin{align}
    &I_\alpha^{\flat,*}(AA' :BB'|CC')_{\rho\otimes \rho'} \notag \\
    & \qquad \geq \sup_{\omega_C}\inf_{\sigma_{B C},\eta_{A C}} \Delta_\alpha^\flat(\rho_{ABC},\eta_{AC},\sigma_{BC},\omega_C )+\sup_{\omega'_{C'}}\inf_{\sigma'_{B' C'},\eta'_{A' C'}} \Delta_\alpha^\flat( \rho'_{A' B' C'},\eta'_{A'C'},\sigma'_{B'C'}, \omega' _{C'})\\
    & \qquad =I_\alpha^{\flat,*}(A:B|C)_\rho+ I_\alpha^{\flat,*}(A' :B'|C')_{\rho'}\, .
\end{align}
It remains to justify the minimax exchange used in the last
equality. Write $c_\alpha=\alpha/(1-\alpha)\ge1$. By
Lemma~\ref{lem: variational form}, commuting the three infima gives
\begin{align}
 &\sup_{\omega_C}\inf_{\eta_{AC},\sigma_{BC}}
       \Delta_\alpha^\flat(\rho,\eta,\sigma,\omega)=\sup_{\omega_C}\inf_{\tau_{ABC}}
   \{c_\alpha D(\tau_{ABC}\|\rho_{ABC})
       +I(A:B|C)_\tau-D(\tau_C\|\omega_C)\}.
\end{align}
The objective can be written as
\begin{align}
 &-(c_\alpha-1)H(ABC)_\tau-H(B|AC)_\tau-H(A|BC)_\tau-c_\alpha\Tr[\tau_{ABC}\log\rho_{ABC}]
                  +\Tr[\tau_C\log\omega_C].
\end{align}
It is convex and continuous in $\tau_{ABC}$, since ordinary entropy
and conditional entropy are concave, and it is concave and continuous
in the full-rank state $\omega_C$. The $\tau$ state space is compact
and convex. Sion's minimax theorem therefore exchanges these last
two optimizations. Since
$\sup_{\omega_C>0}-D(\tau_C\|\omega_C)=0$, the result is
$\inf_\tau\{c_\alpha D(\tau\|\rho)+I(A:B|C)_\tau\}
=I_\alpha^{\flat,*}(A:B|C)_\rho$ by
Corollary~\ref{cor: variational form triply optimized}. The infima
over full-rank $\eta$ and $\sigma$ used here have value zero for the
corresponding marginal relative entropies, approached by full-rank
approximations if their marginal optimizers are singular.

The endpoint $\alpha=1/2$ follows by taking
$\alpha\downarrow1/2$, using the continuity established in
Remark~\ref{rem:additivity-limits}. The endpoint $\alpha=1$ follows
from Lemma~\ref{lem: limit Istar}. 
\end{proof}

\subsection{Additivity of \texorpdfstring{$I_\alpha^{\flat,*}$ for $\alpha\in [1,2]$}{the optimized LE-CMI for alpha in [1,2]}}
Next, we show additivity for $\alpha \in [1,2]$.
\begin{theorem}
    Let $\rho_{ABC}$ and $\rho'_{A'B'C'}$ be two quantum states and $\alpha \in [1,2]$. Then, we have that
    \begin{align}
        I_\alpha^{\flat,*}(AA' :BB'|CC')_{\rho\otimes \rho'} = I_\alpha^{\flat,*}(A:B|C)_\rho+ I_\alpha^{\flat,*}(A' :B'|C')_{\rho'}\, .
    \end{align}
\end{theorem}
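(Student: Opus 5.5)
The plan is to prove the two inequalities separately, for $\alpha\in(1,2)$ and full-rank $\rho_{ABC}$, $\rho'_{A'B'C'}$. The endpoints and the general case then come from limits. The endpoint $\alpha=1$ follows from Lemma~\ref{lem: limit Istar}. The endpoint $\alpha=2$ follows from continuity on $(1,2]$ (Remark~\ref{rem:additivity-limits}). Non-full-rank states are handled by the depolarizing limit, exactly as in that remark. In this regime the roles of the two directions are swapped relative to Theorem~\ref{thm: optimized additivity}.

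\emph{Superadditivity.} I would use the variational formula of Corollary~\ref{cor: variational form triply optimized} for $\alpha\in(1,2]$:
\[
I_\alpha^{\flat,*}(A:B|C)_\rho=\sup_{\tau}\Big\{\tfrac{\alpha}{1-\alpha}D(\tau\|\rho)+I(A:B|C)_\tau\Big\}.
\]
Restricting the supremum to product states $\tau_{ABC}\otimes\tau'_{A'B'C'}$ gives a lower bound on the left-hand side. Relative entropy and ordinary CMI are both additive on such products, so the restricted supremum equals the sum of the two marginal suprema. This direction is immediate.

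\emph{Subadditivity.} In the definition $\inf_{\sigma,\eta}\sup_\omega\Delta_\alpha^\flat$, I would restrict the outer infimum to products $\eta_{AC}\otimes\eta'_{A'C'}$ and $\sigma_{BC}\otimes\sigma'_{B'C'}$ of full-rank states; this gives an upper bound. It then suffices to show that, for fixed full-rank product arguments, the inner supremum over $\omega_{CC'}$ is additive.

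To see this, set $\beta=\alpha-1\in(0,1)$ and
\[
L_{ABC}=\exp\big(\alpha\log\rho_{ABC}+(1-\alpha)(\log\eta_{AC}+\log\sigma_{BC})\big)>0 .
\]
Then
\[
\Delta_\alpha^\flat=\tfrac{1}{\alpha-1}\log Q_\beta^\flat(L_{ABC},\omega_C),
\]
with $AB$ playing the role of the traced-out system. Because $1/(\alpha-1)>0$, maximizing over $\omega$ means maximizing $Q_\beta^\flat$. The tensor-product structure gives $L_{ABC}\otimes L'_{A'B'C'}$ for the joint problem.

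By Lemma~\ref{lem: full-rank Q}, the marginal maximizers $\omega^\star_C$ and $\omega'^\star_{C'}$ exist and are full rank. By Lemma~\ref{lem; derivative Q}, they satisfy the fixed-point equation~\eqref{eq:single-fixed-point}:
\[
\Tr_{AB}\exp(\log L+\beta\log\omega^\star)=Q_\beta^\flat(L,\omega^\star)\,\omega^\star .
\]
Using $\exp(X\otimes I+I\otimes Y)=e^X\otimes e^Y$, the product $\omega^\star\otimes\omega'^\star$ satisfies the fixed-point equation for $L\otimes L'$, with constant $Q_\beta^\flat(L,\omega^\star)\,Q_\beta^\flat(L',\omega'^\star)$. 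The converse part of Lemma~\ref{lem; derivative Q}, which rests on log-concavity from Lemma~\ref{lem: Joint concavity}, then shows that $\omega^\star\otimes\omega'^\star$ is a global maximizer of the joint problem. Hence $\sup_\omega\Delta$ is additive on product arguments.

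Taking the infimum over product $\eta$ and $\sigma$ then yields
\[
I^{\flat,*}_\alpha(AA':BB'|CC')_{\rho\otimes\rho'}\le I^{\flat,*}_\alpha(A:B|C)_\rho+I^{\flat,*}_\alpha(A':B'|C')_{\rho'} .
\]
The infimum is taken over full-rank states, as in the definition, so no attainment is needed.

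The main obstacle is this fixed-point step. It needs the maximizer over $\omega$ to be interior (full rank), which is why Lemma~\ref{lem: full-rank Q} requires $\beta<1$. At $\alpha=2$ that argument degenerates, and I would rely on continuity in $\alpha$ rather than a direct argument there.
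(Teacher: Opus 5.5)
Your proposal is correct and matches the paper's proof essentially step for step. Superadditivity comes from restricting the variational supremum of Corollary~\ref{cor: variational form triply optimized} to product states. Subadditivity comes from fixing full-rank product states $\eta\otimes\eta'$, $\sigma\otimes\sigma'$ and using Lemmas~\ref{lem: full-rank Q} and~\ref{lem; derivative Q} to show that $\omega^\star_C\otimes\omega'^\star_{C'}$ satisfies the joint fixed-point equation. The endpoints $\alpha=1,2$ and non-full-rank states are handled by the same limiting arguments. Your rewriting $\Delta_\alpha^\flat=\frac{1}{\alpha-1}\log Q^\flat_{\alpha-1}(L_{ABC},\omega_C)$ simply makes explicit the identification $\beta=\alpha-1$ that the paper uses implicitly when it invokes those lemmas.
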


\begin{proof}
We first take $\alpha\in(1,2)$ and assume that $\rho_{ABC}$ and $\rho'_{A'B'C'}$ are full rank. We first note that in the variational characterization in~\eqref{eq: variational triply optimized}, we can restrict the supremum over all states to tensor-product states. This immediately yields superadditivity
\begin{align}
    I_\alpha^{\flat,*}(AA' :BB'|CC')_{\rho\otimes \rho'} \geq I_\alpha^{\flat,*}(A:B|C)_\rho+ I_\alpha^{\flat,*}(A' :B'|C')_{\rho'}\, .
\end{align}

We are then left to prove subadditivity.
To prove it, we can first upper bound the expression with the tensor product of full-rank states $\eta_{AC} \otimes \eta' _{A'C'}$ and $\sigma_{BC} \otimes \sigma' _{B'C'}$ 
\begin{align}
    &I_\alpha^{\flat,*}(AA' :BB'|CC')_{\rho\otimes \rho'} \leq \sup_{\omega_{CC'}} \Delta_\alpha^\flat(\rho_{ABC}\otimes \rho'_{A' B' C'},\eta_{AC} \otimes \eta' _{A'C'},\sigma_{BC} \otimes \sigma' _{B'C'},\omega_{CC'})\, .
\end{align}
Next, we show that the above supremum is additive by proving that the tensor product of the marginal problems $ \omega^\star_{C} \otimes \omega'^\star_{C'}$ is a solution. In particular, we prove that the tensor products of the fixed-point solutions satisfy the required condition for optimality.
Let $\omega^\star_{C}$ and $\omega'^\star_{C'}$ be arbitrary solutions of the marginal maximization problems
\begin{align}
    \max_{\omega_C} \Delta_\alpha^\flat(\rho_{ABC},\eta_{AC},\sigma_{BC},\omega_{C}) \,,\quad  \max_{\omega'_{C'}} \Delta_\alpha^\flat(\rho'_{A'B'C'},\eta'_{A'C'},\sigma'_{B'C'},\omega'_{C'}) \,,
\end{align}
respectively. By Lemma~\ref{lem: lower-semicont Delta}, the objective function is upper semicontinuous, and any upper semicontinuous function maximized over a compact set attains its maximum on that set. Moreover, by Lemma~\ref{lem: full-rank Q}, all optimizers are full rank. Since the objective function is concave in each variable by Lemma~\ref{lem: Joint concavity}, and since the optimizers are full rank and therefore lie in the interior of the feasible set, the derivative in every direction must vanish.  By Lemma~\ref{lem; derivative Q}, this yields the corresponding fixed-point equations; see also~\cite[Theorem~5]{rubboli2024quantum}. For example, $\omega^\star_{C}$ must satisfy
\begin{align}
    \omega^\star_{C} \propto \Tr_{AB}\exp\big(\alpha \log\rho_{ABC}+(1-\alpha)\log \eta_{AC} +(1-\alpha)\log{\sigma_{BC}} +(\alpha-1)\log\omega^\star_C \big)\,,
\end{align}
where, for a state $\tau$ and a positive semidefinite operator $A$, the notation
$\tau \propto A$ means that $\tau=A/\Tr[A]$.

 We then have that the tensor product of the marginal problems $\omega^\star_{C}\otimes \omega'^\star_{C'}$ is an optimizer since the condition of Lemma~\ref{lem; derivative Q} is satisfied. Indeed, we have 
\begin{align}
&\Tr\Bigg[
\omega_{CC'}
\int_{0}^{\infty}
(\omega^\star_{C} \otimes \omega'^\star_{C'} + t)^{-1}
\notag \\
&\qquad \times
\Tr_{AA'BB' }\Bigg[
\exp\Big(
\alpha \log(\rho_{ABC} \otimes \rho'_{A'B'C'})
+ (1-\alpha)\log(\sigma_{BC} \otimes \sigma'_{B'C'})
\notag \\
&\qquad\qquad\quad
+ (\alpha-1)\log(\omega^\star_C \otimes \omega'^\star_{C'})
+ (1-\alpha)\log(\eta_{AC} \otimes \eta'_{A'C'})
\Big)
\Bigg]
\notag \\
&\qquad \times
(\omega^\star_{C} \otimes \omega'^\star_{C'} + t)^{-1}
\, dt
\Bigg]
\notag \\[0.5em]
&=
\Tr\Bigg[
\omega_{CC'}
\int_{0}^{\infty}
(\omega^\star_{C} \otimes \omega'^\star_{C'} + t)^{-1}
(\omega^\star_{C} \otimes \omega'^\star_{C'})
(\omega^\star_{C} \otimes \omega'^\star_{C'} + t)^{-1}
\, dt
\Bigg]
\notag \\
&\qquad \times
Q_{\alpha}^\flat(\rho_{ABC},\eta_{AC},\sigma_{BC},\omega^\star_C)\,
Q_{\alpha}^\flat(\rho'_{A'B'C'},\eta'_{A'C'},\sigma'_{B'C'},\omega'^\star_{C'})
\\[0.5em]
&=
Q_{\alpha}^\flat(\rho_{ABC},\eta_{AC},\sigma_{BC},\omega^\star_C)\,
Q_{\alpha}^\flat(\rho'_{A'B'C'},\eta'_{A'C'},\sigma'_{B'C'},\omega'^\star_{C'}) \, ,
\end{align}
where we defined
\begin{align}
    Q_{\alpha}^\flat(\rho_{ABC},\eta_{AC},\sigma_{BC},\omega_C)= \exp\big((\alpha-1)\,\Delta_\alpha^\flat(\rho_{ABC},\eta_{AC},\sigma_{BC},\omega_C)\big)\,.
\end{align}
Hence, we obtain the condition of the derivative being non-positive for concave functions. 
Overall, we obtain
\begin{align}
    &I_\alpha^{\flat,*}(AA' :BB'|CC')_{\rho\otimes \rho'} \notag \\
    & \qquad \leq \inf_{\eta_{AC},\sigma_{BC}}\sup_{\omega_C}\Delta_\alpha^\flat(\rho_{ABC},\eta_{AC},\sigma_{BC},\omega_C )+\inf_{\eta'_{A' C' },\sigma'_{B' C' }}\sup_{\omega'_{C'}}\Delta_\alpha^\flat( \rho'_{A' B' C'},\eta'_{A'C'},\sigma'_{B'C'}, \omega' _{C'})\\
    & \qquad =I_\alpha^{\flat,*}(A:B|C)_\rho+ I_\alpha^{\flat,*}(A' :B'|C')_{\rho'}\, .
\end{align}

At $\alpha=2$, full rank of the optimizing $\omega_C$ need
not hold, so we do not apply the fixed-point argument directly.
Instead, take $\alpha\uparrow2$ in the already established equality.
Remark~\ref{rem:additivity-limits} justifies the limit on each of
its three terms. The case $\alpha=1$ follows from
Lemma~\ref{lem: limit Istar}.
\end{proof}

\section{Conditional independence and recovery bounds}

\subsection{Conditional independence}
Quantum Markov chains in the order \(A-C-B\) are tripartite states that
can be reconstructed from their \(AC\)-marginal by acting only on the
conditioning system \(C\). Equivalently, \(\rho_{ABC}\) is a quantum
Markov chain in this order if there exists a quantum channel
\(\mathcal{R}_{C\to BC}\) such that
\begin{align}
    \mathcal{R}_{C\to BC}(\rho_{AC})
    =
    \rho_{ABC}.
\end{align}
It is known that this condition is equivalent to the vanishing of the
conditional mutual information \cite{hayden04CMI}, namely
\begin{align}
    I(A:B|C)_\rho = 0 .
\end{align}
Quantum
Markov chains admit a structural characterisation in terms of
a direct-sum decomposition~\cite{hayden04CMI}
\begin{align}\label{eq:direct-sum}
    \rho_{ABC}=\bigoplus_k p_k\, \rho^k_{AC_k^L}\otimes\rho^k_{C_k^RB}\,,
\end{align}
with respect to some induced decomposition $C=\bigoplus_kC_k^L\otimes C_k^R$ and some probability distribution $(p_k)_k$.
The next lemma shows that the LE-CMI faithfully captures exact conditional independence. In particular, it vanishes precisely when the underlying state is a quantum Markov chain.

\begin{lemma}
\label{lem: Markov chain optimized}
Let $\rho_{ABC}$ be a quantum state, and let $\alpha \in (0,1)$.
Then
\begin{align}
I_{\alpha}^{\flat,*}(A:B|C)_\rho = 0
\end{align}
if and only if $\rho_{ABC}$ is a quantum Markov chain.
\end{lemma}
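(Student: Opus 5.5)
We work throughout with the variational characterization of Corollary~\ref{cor: variational form triply optimized}. For $\alpha\in(0,1)$ it gives
\begin{align}
I_\alpha^{\flat,*}(A:B|C)_\rho=\min_{\tau_{ABC}\ll\rho_{ABC}}\Big\{\tfrac{\alpha}{1-\alpha}D(\tau_{ABC}\|\rho_{ABC})+I(A:B|C)_\tau\Big\},
\end{align}
where by Remark~\ref{rem: full-support CE and MI} the minimum is attained. Both terms in the objective are nonnegative: the first by nonnegativity of relative entropy, the second by strong subadditivity. This reduces the claim to identifying when the minimum is zero.

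For the ``if'' direction, suppose $\rho_{ABC}$ is a quantum Markov chain, so that $I(A:B|C)_\rho=0$ by~\cite{hayden04CMI}. Evaluating the objective at $\tau_{ABC}=\rho_{ABC}$ gives $0+I(A:B|C)_\rho=0$. Combined with the nonnegativity of Lemma~\ref{lem: non-negativity}, this yields $I_\alpha^{\flat,*}(A:B|C)_\rho=0$.

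For the ``only if'' direction, suppose $I_\alpha^{\flat,*}(A:B|C)_\rho=0$ and let $\tau^\star$ be a minimizer, which exists by attainment. Both nonnegative terms must then vanish at $\tau^\star$, since $\alpha/(1-\alpha)>0$. The first gives $D(\tau^\star\|\rho)=0$, hence $\tau^\star=\rho$. The second then gives $I(A:B|C)_\rho=0$, and by~\cite{hayden04CMI} $\rho_{ABC}$ is a quantum Markov chain.

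The main point requiring care is the non-full-rank case. There $I_\alpha^{\flat,*}$ is defined as a limit over depolarized states, so we must confirm that the variational formula, with its restriction $\tau\ll\rho$ and attained minimum, holds for the limiting quantity. The final paragraph of the proof of Corollary~\ref{cor: variational form triply optimized} asserts exactly this, and Remark~\ref{rem: full-support CE and MI} gives attainment by continuity of the objective on the compact set $\{\tau:\tau\ll\rho\}$. If this step needed strengthening, we would argue directly: take minimizers $\tau_\varepsilon$ for $\rho(\varepsilon)$ and extract a convergent subsequence $\tau_\varepsilon\to\tau_0$. Since the values tend to zero, lower semicontinuity of relative entropy gives $D(\tau_0\|\rho)=0$. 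Continuity of the conditional mutual information then gives $I(A:B|C)_{\tau_0}=0$, so $\tau_0=\rho$ is a Markov chain, as before.
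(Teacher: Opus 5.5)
Your proposal is correct and follows essentially the same argument as the paper's proof. For the ``if'' direction you evaluate the variational formula at $\tau_{ABC}=\rho_{ABC}$ and use nonnegativity; for the ``only if'' direction you take an attained minimizer and use faithfulness of relative entropy to force $\tau=\rho$. Your extra subsequence argument for the non-full-rank case is a sound supplementary check, though the paper simply relies on the variational formula and Remark~\ref{rem: full-support CE and MI}.
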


\begin{proof}
It is sufficient to prove that $I_{\alpha}^{\flat,*}(A:B|C)_\rho = 0 $ if and only if $
I(A:B|C)_\rho = 0$. Indeed, the condition $I(A:B|C)_\rho=0$
is equivalent to $\rho_{ABC}$ being a quantum Markov chain.

We first prove the implication ``$\Longleftarrow$''. Suppose that $I(A:B|C)_\rho=0$.
Evaluating the variational expression at $\tau_{ABC}=\rho_{ABC}$ yields
\begin{align}
I_{\alpha}^{\flat,*}(A:B|C)_\rho
\le
\frac{\alpha}{1-\alpha}D(\rho_{ABC}\|\rho_{ABC})
+
I(A:B|C)_\rho
=0.
\end{align}
The non-negativity of $I_{\alpha}^{\flat,*}(A:B|C)_\rho$ obtained in Lemma~\ref{lem: non-negativity} implies that $I_{\alpha}^{\flat,*}(A:B|C)_\rho=0$.

We now prove the implication ``$\Longrightarrow$''. Assume that $I_{\alpha}^{\flat,*}(A:B|C)_\rho=0$.
Choose a minimizing state $\tau_{ABC}$ in the variational expression; its existence follows from Remark~\ref{rem: full-support CE and MI}. Since both terms are nonnegative and $\frac{\alpha}{1-\alpha}>0$, it follows that $D(\tau_{ABC}\|\rho_{ABC})= 0$ and $I(A:B|C)_{\tau} = 0$.
Since the relative entropy is zero, its faithfulness implies that $\tau_{ABC}=\rho_{ABC}$ and hence $I(A:B|C)_{\rho}=0$. This proves the claim.
\end{proof}

\begin{lemma}
\label{lem: Markov chain}
Let $\rho_{ABC}$ be a quantum state, and let $\alpha \in (0,1)$.
Then
\begin{align}
I_{\alpha}^{\flat}(A:B|C)_\rho = 0
\end{align}
if and only if $\rho_{ABC}$ is a quantum Markov chain.
\end{lemma}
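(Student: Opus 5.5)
The natural route is to deduce this from Lemma~\ref{lem: Markov chain optimized} together with the ordering of Lemma~\ref{lem: ordering LE-CMIs}, and to handle the converse direction through the variational form of Corollary~\ref{cor: variational form}. As in the optimized case, it suffices to show that $I_\alpha^\flat(A:B|C)_\rho=0$ if and only if $I(A:B|C)_\rho=0$, and then invoke the characterization of quantum Markov chains by vanishing QCMI~\cite{hayden04CMI}.

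For ``$\Longrightarrow$'', suppose $I_\alpha^\flat(A:B|C)_\rho=0$. Lemma~\ref{lem: ordering LE-CMIs} gives $0\le I_\alpha^{\flat,*}(A:B|C)_\rho\le I_\alpha^\flat(A:B|C)_\rho=0$, where the lower bound is Lemma~\ref{lem: non-negativity}. Lemma~\ref{lem: Markov chain optimized} then shows that $\rho_{ABC}$ is a Markov chain. Alternatively, one can argue directly. Take a minimizer $\tau_{ABC}\ll\rho_{ABC}$ in Corollary~\ref{cor: variational form}; it exists by Remark~\ref{rem: full-support CE and MI}. Every term in the objective is nonnegative: $D(\tau_{AC}\|\rho_{AC})-D(\tau_C\|\rho_C)\ge0$ by data processing, and the remaining terms are relative entropies or a CMI. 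Hence $D(\tau_{ABC}\|\rho_{ABC})=0$, so $\tau=\rho$, and $I(A:B|C)_\rho=0$.

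For ``$\Longleftarrow$'', suppose $I(A:B|C)_\rho=0$. Evaluating the infimum in Corollary~\ref{cor: variational form} at $\tau_{ABC}=\rho_{ABC}$ makes every term vanish, so $I_\alpha^\flat(A:B|C)_\rho\le0$. Combined with Lemma~\ref{lem: non-negativity}, this gives equality.

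Both directions are short. The only point needing care is the non-full-rank case, where Corollary~\ref{cor: variational form} is understood through the limiting definition~\eqref{eq:flat-LE-renyi-CMI-limit}. That corollary already asserts the variational formula for arbitrary states, with $\tau\ll\rho$. With this, evaluation at $\tau=\rho$ is legitimate: all relative entropies involving marginals of $\rho$ are finite and zero. I would remark on this explicitly. No other obstacle is expected.
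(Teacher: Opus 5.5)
Your proposal is correct and is essentially the paper's proof. The paper also evaluates the variational form of Corollary~\ref{cor: variational form} at $\tau_{ABC}=\rho_{ABC}$ for ``$\Longleftarrow$'', and for ``$\Longrightarrow$'' it takes an attained minimizer and uses $D(\tau_{AC}\|\rho_{AC})-D(\tau_C\|\rho_C)\geq0$ and $D(\tau_{BC}\|\rho_{BC})\geq0$ to force $\tau_{ABC}=\rho_{ABC}$. Your alternative reduction of ``$\Longrightarrow$'' to Lemma~\ref{lem: Markov chain optimized} via $0\le I_\alpha^{\flat,*}\le I_\alpha^\flat$ is also valid and not circular.
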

\begin{proof}
    The proof is similar to that of Lemma~\ref{lem: Markov chain optimized} with the additional observations that $D(\tau_{AC}\|\rho_{AC})-D(\tau_C\|\rho_C)\geq0$ by data processing and $D(\tau_{BC}\|\rho_{BC})\geq0$. The minimum is again attained, so vanishing forces the minimizing state to equal $\rho_{ABC}$.
\end{proof}

\subsection{Variational expression for convex trace functions}
In this section, we derive several variational expressions for trace functions involving exponentials of operators. These formulas will be particularly useful later for deriving recovery bounds for the LE-CMI, but they may also be of independent interest. The proofs below follow directly from convexity of the trace exponential and its logarithm. These formulas are also consistent with the \(z\to\infty\) limits of the variational formulas for convex trace functions derived in~\cite{frank2013monotonicity}; see also~\cite[Theorem~3.3]{zhang2020wigner}.

We begin by deriving an additive form, expressed in terms of a sum of trace functions.
\begin{lemma}
    Let $\{X_i\}_{i=1}^n$ be a set of Hermitian operators, and let $\{\alpha_i\}_{i=1}^n$ be non-negative coefficients such that $\sum_{i=1}^n \alpha_i = 1$. Then,
    \begin{align}
\Tr\!\left[\exp\!\left(\sum_{i=1}^n \alpha_i X_i\right)\right]
=
\inf_{\substack{Z_1,\dots,Z_n, \\ \sum_i \alpha_i Z_i = 0}}
\sum_{i=1}^n \alpha_i \Tr[\exp(X_i+Z_i)] .
\end{align}
    Here the optimization is taken over Hermitian operators labeled by $Z_i$.
\end{lemma}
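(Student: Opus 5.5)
Two inequalities are needed. The \emph{upper bound} ($\le$) comes from convexity of the trace exponential. The \emph{lower bound} ($\ge$) is shown by exhibiting feasible $Z_i$ that attain equality; where some $\alpha_i$ vanish the infimum is only approached. Write $X\coloneqq\sum_i\alpha_iX_i$.

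For the inequality ``$\le$'', fix any Hermitian $Z_1,\dots,Z_n$ with $\sum_i\alpha_iZ_i=0$. Then
\[
\sum_i\alpha_i(X_i+Z_i)=X .
\]
The map $H\mapsto\Tr[\exp(H)]$ is convex on Hermitian operators. One way to see this is through the Gibbs variational principle (Lemma~\ref{lem: Gibbs}): $\log\Tr\exp(H)=\sup_\tau\{\Tr[\tau H]-\Tr[\tau\log\tau]\}$ is a supremum of affine functions of $H$, hence convex, and composing with the increasing convex function $\exp$ keeps it convex. Jensen's inequality with weights $\alpha_i$ therefore gives
\[
\Tr\exp(X)\le\sum_i\alpha_i\Tr\exp(X_i+Z_i).
\]
Taking the infimum over all admissible $Z$ gives ``$\le$''.

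For the reverse inequality, choose $Z_i\coloneqq X-X_i$ for each $i$. This choice is admissible, since
\[
\sum_i\alpha_iZ_i=\Big(\sum_i\alpha_i\Big)X-\sum_i\alpha_iX_i=X-X=0 .
\]
It also gives $X_i+Z_i=X$ for every $i$, so the objective equals $\sum_i\alpha_i\Tr\exp(X)=\Tr\exp(X)$. The infimum is therefore attained, in fact as a minimum.

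No step is a real obstacle. The only point needing care is convexity of $\Tr\exp$ on Hermitian operators, which is standard and follows from the Gibbs variational principle as above. Indices with $\alpha_i=0$ need no special treatment: their $Z_i$ do not enter the constraint, and they contribute nothing to the objective, so both directions go through unchanged.
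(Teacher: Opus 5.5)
Your proof is correct and follows the paper's argument: convexity of $H\mapsto\Tr\exp(H)$ together with Jensen's inequality gives ``$\le$'', and the choice $Z_i=\sum_j\alpha_jX_j-X_i$ gives ``$\ge$''; the only difference is that you derive convexity from the Gibbs variational principle, whereas the paper cites the general fact that $M\mapsto\Tr f(M)$ inherits convexity from a convex $f$. One small fix: delete the remark in your opening paragraph that the infimum is ``only approached'' when some $\alpha_i$ vanish, since your choice $Z_i=X-X_i$ is feasible and attains the value for every $i$, as you correctly state at the end.
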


\begin{proof}
We first show the inequality ``$\leq$''. Let $\{Z_i\}_{i=1}^n$ be a set of Hermitian operators such that $\sum_i \alpha_i Z_i = 0$. Since the trace functional $M \mapsto \Tr(f(M))$ inherits convexity from the function $f$ (see, e.g.,~\cite{carlen2010trace}), we have
\begin{align}
\Tr\!\left[\exp\!\left(\sum_{i=1}^n \alpha_i (X_i + Z_i)\right)\right]
\leq
\sum_{i=1}^n \alpha_i \Tr[\exp(X_i+Z_i)].
\end{align}
Using $\sum_i \alpha_i Z_i = 0$, this becomes
\begin{align}
\Tr\!\left[\exp\!\left(\sum_{i=1}^n \alpha_i X_i\right)\right]
\leq
\sum_{i=1}^n \alpha_i \Tr[\exp(X_i+Z_i)].
\end{align}
Taking the infimum over all admissible $\{Z_i\}$ yields
\begin{align}
\Tr\!\left[\exp\!\left(\sum_{i=1}^n \alpha_i X_i\right)\right]
\leq
\inf_{\substack{Z_1,\dots,Z_n \\ \sum_i \alpha_i Z_i = 0}}
\sum_{i=1}^n \alpha_i \Tr[\exp(X_i+Z_i)].
\end{align}

For the reverse inequality ``$\geq$'', consider the choice
\begin{align}
Z_i =  - X_i+ \sum_{j=1}^n \alpha_j X_j,
\end{align}
which is Hermitian and satisfies $\sum_i \alpha_i Z_i = 0$. Then,
\begin{align}
X_i + Z_i = \sum_{j=1}^n \alpha_j X_j,
\end{align}
so that
\begin{align}
\sum_{i=1}^n \alpha_i \Tr[\exp(X_i+Z_i)]
=
\sum_{i=1}^n \alpha_i \Tr\!\left[\exp\!\left(\sum_{j=1}^n \alpha_j X_j\right)\right]
=
\Tr\!\left[\exp\!\left(\sum_{j=1}^n \alpha_j X_j\right)\right].
\end{align}
This shows that the infimum is upper bounded by the left-hand side, yielding the reverse inequality.

Combining both inequalities proves the claim.
\end{proof}

\begin{corollary}
Let $X_1$ and $X_2$ be Hermitian operators, and let $\alpha \in (0,1]$. Then,
\begin{align}
\label{eq:var_exp_trace}
\Tr\!\left[\exp\!\left(\alpha X_1 + (1-\alpha) X_2\right)\right]
=
\inf_{\omega >0}
\alpha \Tr\!\left[\exp(X_1+\frac{\alpha-1}{\alpha}\log \omega)\right] +(1-\alpha) \Tr\!\left[\exp(X_2+\log \omega)\right].
\end{align}
\end{corollary}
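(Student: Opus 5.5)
This follows from the preceding lemma with $n=2$, $\alpha_1=\alpha$ and $\alpha_2=1-\alpha$; what remains is to parametrize the constraint set $\alpha Z_1+(1-\alpha)Z_2=0$ by a positive definite $\omega$. For $\alpha\in(0,1)$, let $Z_2=\log\omega$ with $\omega>0$. The constraint then forces $Z_1=-\frac{1-\alpha}{\alpha}\log\omega=\frac{\alpha-1}{\alpha}\log\omega$. Conversely, any Hermitian $Z_2$ can be written as $\log\omega$ with $\omega=\exp(Z_2)>0$, and $Z_1$ is then uniquely determined by $Z_2$. So the map $\omega\mapsto(Z_1,Z_2)$ is a bijection from positive definite operators onto the admissible pairs.

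Substituting this parametrization, the objective $\alpha\Tr[\exp(X_1+Z_1)]+(1-\alpha)\Tr[\exp(X_2+Z_2)]$ becomes exactly the right-hand side of~\eqref{eq:var_exp_trace}. The infimum over admissible pairs therefore equals the infimum over $\omega>0$, and the lemma gives the claim. The infimum is in fact attained at $\omega=\exp(\alpha X_1+(1-\alpha)X_2-X_2)$, which is the choice of $Z_i$ used in the lemma's reverse inequality.

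The endpoint $\alpha=1$ needs a separate remark, since $\frac{\alpha-1}{\alpha}=0$ there. The first term is then $\Tr[\exp X_1]$ for every $\omega$, and the second term carries coefficient $0$. Both sides therefore equal $\Tr[\exp X_1]$, so the identity holds trivially.

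The only point requiring care is surjectivity of the parametrization: every Hermitian $Z_2$ must be the logarithm of some positive definite $\omega$. This holds because the matrix exponential is a bijection from Hermitian operators onto positive definite operators.
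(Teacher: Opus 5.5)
Your proof is correct and is essentially the paper's argument: specialize the additive variational lemma to $n=2$ and reparametrize the constraint $\alpha Z_1+(1-\alpha)Z_2=0$ via $Z_2=\log\omega$ and $Z_1=\frac{\alpha-1}{\alpha}\log\omega$, using that $\exp$ is a bijection from Hermitian onto positive definite operators. Your separate check of the endpoint $\alpha=1$ and the explicit optimizer $\omega=\exp(\alpha(X_1-X_2))$ are both correct, and the paper does not spell them out.
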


\begin{proof}
Setting $Z_i=\log P_i$, the optimization over Hermitian operators $Z_i$ can be equivalently reformulated as an optimization over positive definite operators $P_i$. The constraint $\sum_i \alpha_i Z_i=0$ becomes $\sum_i \alpha_i \log P_i=0$. The claim then follows directly from the additive form; in the case of two operators, this constraint simplifies to $\alpha \log P_1 + (1-\alpha)\log P_2=0$, which is equivalent to $P_1 = P_2^{\frac{\alpha-1}{\alpha}}$.
\end{proof}

Next, we derive a multiplicative variational form.

\begin{lemma}[Multiplicative variational form]
Let $\{X_i\}_{i=1}^n$ be a set of Hermitian operators, and let $\{\alpha_i\}_{i=1}^n$ be non-negative coefficients such that $\sum_{i=1}^n \alpha_i=1$. Then,
\begin{align}
\Tr\!\left[\exp\!\left(\sum_{i=1}^n \alpha_i X_i\right)\right]
=
\inf_{\substack{Z_1,\dots,Z_n, \\ \sum_i \alpha_i Z_i=0}}
\prod_{i=1}^n
\Tr[\exp(X_i+Z_i)]^{\alpha_i},
\end{align}
where the optimization is over Hermitian operators labeled by $Z_i$.
\end{lemma}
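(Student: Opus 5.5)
The plan mirrors the proof of the additive form, with the convexity of the trace exponential replaced by log-convexity.

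First, the inequality ``$\leq$''. The map $M\mapsto \log\Tr[\exp(M)]$ is convex on Hermitian operators. One way to see this is from the Gibbs variational principle (Lemma~\ref{lem: Gibbs}): $\log\Tr e^{M}=\sup_\tau\{\Tr[\tau M]-\Tr[\tau\log\tau]\}$ is a supremum of affine functions of $M$. Equivalently, it follows from Hölder/Golden--Thompson-type log-convexity. Now fix Hermitian $Z_1,\dots,Z_n$ with $\sum_i\alpha_i Z_i=0$. Since $\sum_i\alpha_i(X_i+Z_i)=\sum_i\alpha_i X_i$, convexity gives
\begin{align}
\log\Tr\!\left[\exp\!\left(\sum_i\alpha_i X_i\right)\right]\leq \sum_i\alpha_i\log\Tr[\exp(X_i+Z_i)].
\end{align}
Exponentiating yields the bound by $\prod_i\Tr[\exp(X_i+Z_i)]^{\alpha_i}$. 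Taking the infimum over admissible $Z_i$ gives ``$\leq$''.

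For the reverse inequality, choose the same $Z_i=-X_i+\sum_j\alpha_jX_j$ as in the additive form. This choice is Hermitian and satisfies $\sum_i\alpha_iZ_i=\sum_j\alpha_jX_j-\sum_i\alpha_iX_i=0$, since $\sum_i\alpha_i=1$. With it, every factor in the product equals $\Tr[\exp(\sum_j\alpha_jX_j)]$, so the product equals $\Tr[\exp(\sum_j\alpha_jX_j)]^{\sum_i\alpha_i}$, which is the left-hand side. Hence the infimum is attained, and the two bounds together prove the equality.

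The only step that needs care is justifying the log-convexity of $\Tr\exp$. The Gibbs variational representation already recalled in the paper settles it immediately, so I expect no real obstacle. Coefficients $\alpha_i=0$ pose no problem: such terms contribute a factor $1$ (with the convention $x^0=1$) and drop out of the constraint.
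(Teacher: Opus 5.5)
Your proof is correct and matches the paper's argument: log-convexity of $M\mapsto\log\Tr e^{M}$ gives the ``$\leq$'' direction, and the choice $Z_i=-X_i+\sum_j\alpha_jX_j$ gives the reverse. Your justification of this convexity via the Gibbs variational principle, which writes $\log\Tr e^{M}$ as a supremum of affine functionals, is valid and fills in a step the paper takes for granted.
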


\begin{proof}
Let $Z_1,\dots,Z_n$ be Hermitian operators such that $\sum_i \alpha_i Z_i=0$. By the convexity of the function $X \mapsto \log \Tr[e^X]$, we have
\begin{align}
\log \Tr\!\left[\exp\!\left(\sum_{i=1}^n \alpha_i (X_i+Z_i)\right)\right]
\leq
\sum_{i=1}^n \alpha_i \log \Tr[\exp(X_i+Z_i)].
\end{align}
Using $\sum_i \alpha_i Z_i=0$ and exponentiating both sides gives
\begin{align}
\Tr\!\left[\exp\!\left(\sum_{i=1}^n \alpha_i X_i\right)\right]
\leq
\prod_{i=1}^n
\Tr[\exp(X_i+Z_i)]^{\alpha_i}.
\end{align}
Taking the infimum over all admissible $Z_i$ yields one inequality.

For the reverse inequality, choose
\begin{align}
Z_i= -X_i + \sum_{j=1}^n \alpha_j X_j.
\end{align}
Then $\sum_i \alpha_i Z_i=0$ and
\begin{align}
X_i+Z_i=\sum_{j=1}^n \alpha_j X_j
\end{align}
for every $i$. Hence,
\begin{align}
\prod_{i=1}^n
\Tr[\exp(X_i+Z_i)]^{\alpha_i}
=
\prod_{i=1}^n
\Tr\!\left[\exp\!\left(\sum_{j=1}^n \alpha_j X_j\right)\right]^{\alpha_i}
=
\Tr\!\left[\exp\!\left(\sum_{j=1}^n \alpha_j X_j\right)\right].
\end{align}
Thus the infimum is also upper bounded by the left-hand side, and the equality follows.
\end{proof}

\subsection{Recovery bounds}
\label{sec: recovery bounds}
In this section, we derive several recovery bounds. We first establish upper
and lower bounds on the log-Euclidean CMI in terms of relative entropies
 between the original state and the rotated Petz state recovered from the marginal
$\rho_{AC}$. These bounds involve the quantum extensions of relative entropy, denoted by
$D^\star$ and $D^{\text{\normalfont\leftmoon}}$ and introduced in
Section~\ref{sec: notation}. Both bounds are tight in the classical case, and
the lower bound is always stronger than the fidelity-based bound
of~\cite{junge2015universal}. We then establish a lower bound on the log-Euclidean CMI in terms of the
measured relative entropy between the original state and the recovered state.
This bound generalizes the recovery result of~\cite{sutter2017multivariate},
which is recovered in the limit $\alpha\to1$. It further shows that a small
log-Euclidean CMI guarantees the existence of a recovery map that approximately
reconstructs $\rho_{ABC}$ from $\rho_{AC}$, and hence implies approximate
conditional independence.

\begin{theorem}
\label{thm: integral recovery bounds}
Let $\rho_{ABC}$ be a quantum state. For every $\alpha\in(0,1)$,
\begin{align}
I_\alpha^\flat(A:B|C)_\rho
\geq\int_{\mathbb R}dt\, \beta_0(t)\, D_{\alpha,1-\alpha}(\rho_{ABC}\|\mathcal R^{[t]}_{C\to BC}(\rho_{AC})),
\label{eq: integral lower recovery}
\end{align}
Moreover, if $\rho_{ABC}$ is full-rank, for every $\alpha>1$
\begin{align}
I_\alpha^\flat(A:B|C)_\rho
\leq\int_{\mathbb R}dt\, \beta_0(t)\, D_{\alpha,\alpha-1}(\rho_{ABC}\|\mathcal R^{[t]}_{C\to BC}(\rho_{AC})).
\label{eq: integral upper recovery}
\end{align}
\end{theorem}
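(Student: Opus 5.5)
The plan is to reduce both inequalities to a multivariate Golden--Thompson-type trace inequality with the rotation density $\beta_0$. This is the Sutter--Berta--Tomamichel inequality, which is proved by complex interpolation (Hirschman's refinement of the three-lines theorem). The rotated Petz map arises as the product of complex powers appearing there. Throughout I assume $\rho_{ABC}$ and its marginals are full rank, and I treat the general case for $\alpha\in(0,1)$ at the end through the limiting definition~\eqref{eq:flat-LE-renyi-CMI-limit}.

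\textbf{Algebraic rewriting.} For $\alpha<1$ set $K\coloneqq \frac{\alpha}{1-\alpha}\log\rho_{ABC}+\log\rho_{BC}-\log\rho_C+\log\rho_{AC}$, so that
\[
I^\flat_\alpha(A:B|C)_\rho=\frac{1}{\alpha-1}\log\Tr\, e^{(1-\alpha)K}.
\]
Next, write $(1-\alpha)K$ as twice a sum of Hermitian operators
\[
H_1=\tfrac{\alpha}{2}\log\rho_{ABC},\qquad H_2=\tfrac{1-\alpha}{2}\log\rho_{BC},\qquad H_3=-\tfrac{1-\alpha}{2}\log\rho_C,\qquad H_4=\tfrac{1-\alpha}{2}\log\rho_{AC}.
\]
Then $\Tr\, e^{(1-\alpha)K}=\|\exp(\sum_k H_k)\|_2^2$. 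Consider the ordered product $\rho_{ABC}^{\alpha/2}\,\rho_{BC}^{(1-\alpha)(1+it)/2}\rho_C^{-(1-\alpha)(1+it)/2}\rho_{AC}^{(1-\alpha)/2}$. After the rescaling $r=1-\alpha$, its squared Schatten norm should become exactly
\[
\Tr\Big[\big(\rho_{ABC}^{\frac{\alpha}{2(1-\alpha)}}\,\mathcal R^{[t]}_{C\to BC}(\rho_{AC})\,\rho_{ABC}^{\frac{\alpha}{2(1-\alpha)}}\big)^{1-\alpha}\Big]=\exp\!\big((\alpha-1)D_{\alpha,1-\alpha}(\rho\|\mathcal R^{[t]}(\rho_{AC}))\big).
\]
The key step here is to choose the interpolation parameter $\theta$ (and the Schatten index $p$) so that the powers line up. I would take $H_k$ to be the unscaled halves of the logarithms, $\theta=1-\alpha$, and $p=2$. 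Then $\|\prod_k e^{\theta(1+it)H_k}\|_{2/\theta}^{2/\theta}$ reproduces the displayed trace, with the $\tfrac{\alpha}{1-\alpha}$ power on $\rho_{ABC}$ and unit powers inside the Petz map.

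\textbf{Applying the trace inequality and handling signs.} The SBT inequality gives
\[
\log\|e^{\sum H_k}\|_p\le\int\beta_\theta(t)\,\log\big\|\textstyle\prod_k e^{\theta(1+it)H_k}\big\|_{p/\theta}^{1/\theta}\,dt .
\]
With the choices above this bounds $\log\Tr\, e^{(1-\alpha)K}$ from above by $\int\beta(t)\,(\alpha-1)D_{\alpha,1-\alpha}(\cdots)\,dt$. Dividing by $\alpha-1<0$ reverses the inequality and yields~\eqref{eq: integral lower recovery}. The density $\beta_\theta$ must then be matched to $\beta_0$. I expect this to follow from the $\theta\to0$ form, or from the fact that the rotated Petz map depends only on $t$ through the unitary conjugations $\rho_{BC}^{it}$ and $\rho_C^{-it}$. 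If the index bookkeeping forces $\beta_\theta$ with $\theta\neq0$, I would instead first pass through the limit $\Tr\, e^{X+Y}=\lim_{r\to0}\Tr(e^{rX/2}e^{rY}e^{rX/2})^{1/r}$. I would then use Araki--Lieb--Thirring monotonicity in $r$ to bound the log-Euclidean quantity by its $r$-version, and apply the SBT inequality in the $\theta\to0$ regime, where the density is precisely $\beta_0$.

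For $\alpha>1$ the same structure works with $1-\alpha<0$. The factors $\rho_{BC},\rho_C,\rho_{AC}$ then carry negative exponents. The product then produces $\rho_{ABC}^{\frac{\alpha}{2(\alpha-1)}}\,\mathcal R^{[t]}(\rho_{AC})^{-1}\rho_{ABC}^{\frac{\alpha}{2(\alpha-1)}}$, because the inverse of the Petz-recovered operator factorizes as the reversed product of inverse complex powers. This gives $D_{\alpha,\alpha-1}$, and full rank of $\rho_{ABC}$ is needed so that $\mathcal R^{[t]}(\rho_{AC})$ is invertible. Since now $\frac{1}{\alpha-1}>0$, the same upper bound on $\log\Tr$ directly gives~\eqref{eq: integral upper recovery}, so no sign flip occurs.

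\textbf{Main obstacle.} I expect the hardest step to be verifying that the admissible parameter range of the multivariate inequality ($p\ge1$, $\theta\in(0,1]$) covers every $\alpha\in(0,1)$ and $\alpha>1$, while producing exactly $\beta_0$ and the rotated Petz map. The naive choice $p=2(1-\alpha)$ is below $1$ for $\alpha>1/2$. This is why I would route the argument through the $r\to0$ Lie--Trotter limit combined with Araki--Lieb--Thirring, keeping $p=2$ fixed. For non-full-rank $\rho_{ABC}$ with $\alpha<1$, I would apply the bound to the depolarized states $\rho(\varepsilon)$ and pass to the limit. This uses continuity of $\mathcal R^{[t]}$ and lower semicontinuity of $D_{\alpha,1-\alpha}$, together with dominated convergence in $t$, which is allowed since $\beta_0$ is integrable and $D_{\alpha,1-\alpha}\ge0$.
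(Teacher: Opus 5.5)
There is a genuine gap, and it sits at the step you flag as the main obstacle. Your workaround does not close it.

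\textbf{Why the index cannot be changed.} Suppose the product $\prod_k e^{(1+it)H_k}$ is to produce the rotated Petz map after forming $XX^\dagger$. Then the exponents of $\rho_{BC},\rho_C,\rho_{AC}$ must be $\pm\frac{1+it}{2}$, i.e.\ $H_2=\frac12\log\rho_{BC}$, $H_3=-\frac12\log\rho_C$, $H_4=\frac12\log\rho_{AC}$. Matching $\alpha\log\rho_{ABC}+(1-\alpha)(\log\rho_{AC}+\log\rho_{BC}-\log\rho_C)$ on the left then forces the Schatten index to be $p=2(1-\alpha)$. This is intrinsic to the target: with $a=\frac{\alpha}{2(1-\alpha)}$,
\[
\Tr[(\rho^{a}\sigma\rho^{a})^{1-\alpha}]=\|\sigma^{1/2}\rho^{a}\|_{2(1-\alpha)}^{2(1-\alpha)},
\]
which is a quasi-norm for $\alpha>1/2$.

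\textbf{Why your $p=2$ and $\theta$ variants fail.} Your $p=2$ product $\rho_{ABC}^{\alpha/2}\rho_{BC}^{(1-\alpha)(1+it)/2}\rho_C^{-(1-\alpha)(1+it)/2}\rho_{AC}^{(1-\alpha)/2}$ contains $(1-\alpha)$-powers of the marginals. Its squared Hilbert--Schmidt norm is therefore a Petz-type trace of $\rho_{ABC}^{\alpha}$ against a ``$(1-\alpha)$-powered'' Petz operator. It is not $\exp\bigl((\alpha-1)D_{\alpha,1-\alpha}(\rho_{ABC}\|\mathcal R^{[t]}_{C\to BC}(\rho_{AC}))\bigr)$. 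Your claim that $\|\prod_k e^{\theta(1+it)H_k}\|_{2/\theta}^{2/\theta}$ with $\theta=1-\alpha$ reproduces the target is also false. The parameter $\theta$ rescales every exponent, including those inside the Petz map, so you get $\Tr[(YY^\dagger)^{1/(1-\alpha)}]$ for the same wrong $Y$.

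Converting either expression into the target would need a pointwise multivariate Araki--Lieb--Thirring inequality for four non-commuting factors with complex powers. That is not available; only $\beta$-averaged versions hold. The Lie--Trotter/ALT detour does not help either. It only replaces the exponential by $\Tr[(\rho_{ABC}^{a}e^{L}\rho_{ABC}^{a})^{1-\alpha}]$ with $L=\log\rho_{AC}+\log\rho_{BC}-\log\rho_C$, and $e^{L/2}$ must still be split into Petz form at index $2(1-\alpha)$. The same obstruction recurs for $\alpha\in(1,3/2)$, where $p=2(\alpha-1)<1$. Your remark that ``the same structure works'' for $\alpha>1$ overlooks this.

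\textbf{The missing ingredient.} What is needed is the extension of the multivariate Golden--Thompson inequality to all $p>0$. The paper applies \cite[Corollary~18]{hiai2017logmajorization} with the trace norm and $f(x)=x^p$, for which $x\mapsto\log f(e^x)$ is convex. This gives, for every $p>0$,
\[
\log\Tr\exp\Bigl(p\sum_jH_j\Bigr)\le\int_{\mathbb R}dt\,\beta_0(t)\log\Bigl\|\prod_je^{(1+it)H_j}\Bigr\|_p^p .
\]
The paper then takes $p=2(1-\alpha)$, $H_1=\frac{\alpha}{2(1-\alpha)}\log\rho_{ABC}$ and the half-logarithms above. The unitary phases of $\rho_{ABC}$ and $\rho_{AC}$ are removed by unitary invariance of the norm.

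The density is $\beta_0$ from the start, since this is the Lie--Trotter form, so no $\beta_\theta$-matching arises. The $\theta$-inequality you quote is not the Sutter--Berta--Tomamichel statement: there $\theta$ enters as $\||\prod_k e^{\theta H_k}|^{1/\theta}\|_p$ on the left, with the unscaled product on the right.

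For $\alpha>1$ the paper uses $p=2(\alpha-1)$ with the negated half-logarithms. The product then gives $\bigl(\mathcal R^{[-t]}_{C\to BC}(\rho_{AC})\bigr)^{-1}$, and the substitution $t\mapsto-t$ together with evenness of $\beta_0$ finishes the proof. With only the $p\ge1$ result, your argument (using the naive choice of $p$) covers $\alpha\in(0,1/2]\cup[3/2,\infty)$.

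\textbf{Rank-deficient case.} Use Fatou's lemma, via nonnegativity and lower semicontinuity of $D_{\alpha,1-\alpha}$, rather than dominated convergence, for which you have no dominating function. Also follow the order of limits in~\eqref{eq:flat-LE-renyi-CMI-limit} instead of depolarizing the whole state: depolarizing the whole state would require a continuity of $I_\alpha^\flat$ that you have not proved.
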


\begin{proof}
First suppose that $\rho_{ABC}>0$. For Hermitian $H_1,\ldots,H_n$ and $p>0$, the multivariate trace inequality of~\cite[Corollary~18]{hiai2017logmajorization} gives
\begin{align}
\log\Tr\!\left[\exp\!\left(p\sum_{j=1}^nH_j\right)\right]
\leq\int_{\mathbb R}dt\, \beta_0(t)
 \log\left\|\prod_{j=1}^n e^{(1+it)H_j}\right\|_p^p.
\label{eq: recovery multivariate}
\end{align}
Here $\left\|X\right\|_p^p=\operatorname{Tr}[(X^\dagger X)^{p/2}]$. The inequality holds for all $p>0$, including $0<p<1$, by applying~\cite[Corollary~18]{hiai2017logmajorization} with the trace norm and $f(x)=x^p$, for which $x\mapsto\log f(e^x)$ is convex.

For $0<\alpha<1$, choose
\begin{align}
p=2(1-\alpha),\quad
H_1=\frac{\alpha}{2(1-\alpha)}\log\rho_{ABC},\quad
H_2=\frac12\log\rho_{BC},\quad
H_3=-\frac12\log\rho_C,\quad
H_4=\frac12\log\rho_{AC}.
\end{align}
Unitary invariance of the Schatten norm gives
\begin{align}\label{eq: unitary inv}
\left\|\prod_{j=1}^4e^{(1+it)H_j}\right\|_{2(1-\alpha)}^{2(1-\alpha)}
=\Tr\!\left[\left(\rho_{ABC}^{\frac{\alpha}{2(1-\alpha)}}\mathcal R^{[t]}_{C\to BC}(\rho_{AC})
\rho_{ABC}^{\frac{\alpha}{2(1-\alpha)}}\right)^{1-\alpha}\right].
\end{align}
Dividing~\eqref{eq: recovery multivariate} by $\alpha-1$ proves~\eqref{eq: integral lower recovery}.

For $\alpha>1$, we use a similar argument. Choose
\begin{align}
p=2(\alpha-1),\quad H_1=\frac{\alpha}{2(\alpha-1)}\log\rho_{ABC},\quad
H_2=-\frac12\log\rho_{BC},\quad
H_3=\frac12\log\rho_C,\quad
H_4=-\frac12\log\rho_{AC}.
\end{align}
Applying~\eqref{eq: recovery multivariate}, using unitary invariance of the Schatten norm similar to~\eqref{eq: unitary inv}, and dividing by $\alpha-1$ yields 
\begin{align}
    I_\alpha^\flat(A:B|C)_\rho
\leq\int_{\mathbb R}dt\, \beta_0(t)\, D_{\alpha,\alpha-1}(\rho_{ABC}\|\mathcal R^{[-t]}_{C\to BC}(\rho_{AC})).
\end{align}
The desired upper bound in~\eqref{eq: integral upper recovery} follows by the change of variables, $t\mapsto -t$, in the above integral and noting that $\beta_0$ is even.

For states that are not full rank, the lower bound is understood through~\eqref{eq:flat-LE-renyi-CMI-limit}. Apply~\eqref{eq: recovery multivariate} with $\rho(\delta)$ in $H_1$ and the marginals of $\rho(\varepsilon)$ in the remaining terms, and take $\delta\downarrow0$ before $\varepsilon\downarrow0$. The recovered states converge to $\mathcal R^{[t]}_{C\to BC}(\rho_{AC})$ for each $t$. Lower semicontinuity and nonnegativity of $D_{\alpha,1-\alpha}$ then give~\eqref{eq: integral lower recovery} by Fatou's lemma.
\end{proof}
\begin{corollary}
\label{cor: integral recovery endpoints}
Let $\rho_{ABC}$ be a quantum state. Then, 
\begin{align}
I(A:B|C)_\rho
\geq\int_{\mathbb R}dt\, \beta_0(t)\, D^\star(\rho_{ABC}\|\mathcal R^{[t]}_{C\to BC}(\rho_{AC})),
\label{eq: keyl recovery bound}
\end{align}
Moreover, if $\rho_{ABC}>0$, then
\begin{align}
I(A:B|C)_\rho
\leq\int_{\mathbb R}dt\, \beta_0(t)\, D^{\text{\normalfont\leftmoon}}\!(\rho_{ABC}\|\mathcal R^{[t]}_{C\to BC}(\rho_{AC})).
\label{eq: moon recovery bound}
\end{align}
\end{corollary}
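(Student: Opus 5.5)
The plan is to take the limits $\alpha\to1^-$ and $\alpha\to1^+$ in Theorem~\ref{thm: integral recovery bounds}. On the left-hand side, Corollary~\ref{cor:vN limit} gives $I^\flat_\alpha(A:B|C)_\rho\to I(A:B|C)_\rho$ from both sides. On the right-hand side, the pointwise limits~\eqref{eq: recovery divergence limits} give, for every fixed $t$,
\[
D_{\alpha,1-\alpha}\bigl(\rho_{ABC}\big\|\mathcal R^{[t]}_{C\to BC}(\rho_{AC})\bigr)\to D^\star\bigl(\rho_{ABC}\big\|\mathcal R^{[t]}_{C\to BC}(\rho_{AC})\bigr),
\]
and similarly $D_{\alpha,\alpha-1}\to D^{\text{\normalfont\leftmoon}}$. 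What remains is to justify exchanging each limit with the integral against the probability density $\beta_0$.

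For the lower bound~\eqref{eq: keyl recovery bound}, I will use that $D_{\alpha,1-\alpha}$ is nonnegative, being a divergence between two states. Fatou's lemma then gives
\[
\int dt\,\beta_0(t)\, D^\star(\cdots)\le \liminf_{\alpha\to1^-}\int dt\,\beta_0(t)\, D_{\alpha,1-\alpha}(\cdots)\le \lim_{\alpha\to1^-} I^\flat_\alpha(A:B|C)_\rho = I(A:B|C)_\rho .
\]
Alternatively, monotonicity of $D_{\alpha,1-\alpha}$ in $\alpha$ (it increases toward $D^\star$) lets one apply monotone convergence directly. Either route needs no full-rank assumption, since Theorem~\ref{thm: integral recovery bounds} already covers arbitrary states for $\alpha<1$.

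For the upper bound~\eqref{eq: moon recovery bound} with $\rho_{ABC}>0$, I will use that $D_{\alpha,\alpha-1}$ is monotone in $\alpha$ on $(1,\infty)$, so as $\alpha\downarrow1$ it decreases to $D^{\text{\normalfont\leftmoon}}$. To apply monotone (dominated) convergence I need integrability at a single value, say $\alpha=2$, where $D_{2,\alpha-1}=\widebar D_2$. This is the main obstacle: uniform control in $t$. Because $\rho_{ABC}$ is full rank, the recovered state $\mathcal R^{[t]}_{C\to BC}(\rho_{AC})$ is also full rank. I will bound it below uniformly in $t$ by writing it as $\rho_{BC}^{(1+it)/2}\rho_C^{-(1+it)/2}\rho_{AC}\rho_C^{-(1-it)/2}\rho_{BC}^{(1-it)/2}$. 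The unitary factors $\rho_{BC}^{it/2}$ and $\rho_C^{-it/2}$ preserve operator inequalities under conjugation, and $\rho_{AC}\ge\lambda_{\min}(\rho_{AC})I$. Together these give $\mathcal R^{[t]}(\rho_{AC})\ge c\,I$ with $c=\lambda_{\min}(\rho_{AC})\lambda_{\min}(\rho_{BC})\|\rho_C\|_\infty^{-1}>0$ independent of $t$. Hence $D_{\max}\le -\log c$ uniformly. Since $D_{\alpha,\alpha-1}\le D_{\max}$ by monotonicity and the limit to $D_{\max}$, the integrands are bounded by a constant, and the bound $D_{\alpha,\alpha-1}\ge D^{\text{\normalfont\leftmoon}}\ge D\ge0$ keeps them bounded below. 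Dominated convergence then finishes the upper bound.
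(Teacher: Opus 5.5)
Your proposal is correct and follows essentially the same route as the paper: Fatou's lemma with nonnegativity of $D_{\alpha,1-\alpha}$ for the lower bound, and dominated convergence for the upper bound, based on a $t$-uniform bound $\mathcal R^{[t]}_{C\to BC}(\rho_{AC})\geq cI$. The only cosmetic difference is that the paper dominates the integrand for $1<\alpha\leq 2$ by $D_{2,1}=\log\Tr[\rho_{ABC}^2(\mathcal R^{[t]}_{C\to BC}(\rho_{AC}))^{-1}]\leq-\log c$, whereas you dominate it by $D_{\max}\leq-\log c$ and also give the constant $c$ explicitly.
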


\begin{proof}
The lower bound follows by taking $\alpha\to1^-$ in~\eqref{eq: integral lower recovery}, using~\eqref{eq: recovery divergence limits} and Fatou's lemma. For the upper bound, the full-rank assumption gives $m>0$ such that $\mathcal R^{[t]}_{C\to BC}(\rho_{AC})\geq mI$ for every $t$. Monotonicity in $\alpha$~\cite[Lemma~24]{rubboli2024magic} gives, for $1<\alpha\leq2$,
\begin{align}
0\leq D_{\alpha,\alpha-1}(\rho_{ABC}\|\mathcal R^{[t]}_{C\to BC}(\rho_{AC}))
&\leq D_{2,1}(\rho_{ABC}\|\mathcal R^{[t]}_{C\to BC}(\rho_{AC}))\notag\\
&=\log\Tr\!\left[\rho_{ABC}^2\bigl(\mathcal R^{[t]}_{C\to BC}(\rho_{AC})\bigr)^{-1}\right]\notag\\
&\leq-\log m.
\end{align}
Dominated convergence therefore permits taking $\alpha\to1^+$ in~\eqref{eq: integral upper recovery}. In both cases, $\lim_{\alpha\to1}I_\alpha^\flat=I$.
\end{proof}
Both bounds in Theorem~\ref{thm: integral recovery bounds} and Corollary~\ref{cor: integral recovery endpoints} are tight for classical states.

\begin{remark}
Note that we stated the upper bounds in Theorem~\ref{thm: integral recovery bounds} and Corollary~\ref{cor: integral recovery endpoints} only for full-rank states. These bounds can nevertheless be extended to non-full-rank states by choosing a full-rank approximation path and taking the corresponding limit. However, we restrict the results to full-rank states since these limits may depend on the chosen path.
\end{remark}

\begin{remark}
The smallest value of $\alpha$ for which $D_{\alpha,\alpha-1}$ satisfies data processing is $\alpha=2$~\cite{zhang2020wigner}. At this value,
$D_{2,1}(\rho\|\sigma)=\widebar{D}_2(\rho\|\sigma)$, with $\widebar{D}_2$ being the Petz R\'enyi relative entropy of order two. Hence, for $\rho_{ABC}>0$ and $1<\alpha\leq2$,
\begin{align}
I(A:B|C)_\rho
\leq I_\alpha^\flat(A:B|C)_\rho\leq I_2^\flat(A:B|C)_\rho
\leq\int_{\mathbb R}dt\, \beta_0(t) \, \widebar{D}_2\!\left(\rho_{ABC}\middle\|\mathcal R^{[t]}_{C\to BC}(\rho_{AC})\right).
\label{eq: petz two recovery upper}
\end{align}
Another upper bound can be obtained via max-relative entropy. Indeed, $\lim_{\alpha\to\infty}D_{\alpha,\alpha-1}=D_{\max}$~\cite[Lemma~23]{rubboli2024magic}. Therefore, monotone convergence in~\eqref{eq: integral upper recovery} yields, for $\rho_{ABC}>0$,
\begin{align}
 I(A:B|C)_\rho
 \leq I_\infty^\flat(A:B|C)_\rho
 \leq\int_{\mathbb R}dt\, \beta_0(t)\, D_{\max}(\rho_{ABC}\|\mathcal R^{[t]}_{C\to BC}(\rho_{AC})),
 \label{eq: max recovery upper}
\end{align}
where $I_\infty^\flat:=\lim_{\alpha\to\infty}I_\alpha^\flat$. For a full-rank state, this limit is the largest eigenvalue of
$\log\rho_{ABC}-\log\rho_{AC}-\log\rho_{BC}+\log\rho_C$.
\end{remark}
We next establish a bound in terms of a single averaged recovery channel.

\begin{theorem}
\label{thm: lower bound recovery}
Let $\rho_{ABC}$ be a quantum state, and let $\alpha\in(0,1)$. Then
\begin{align}
I_\alpha^\flat(A:B|C)_\rho\geq D_\alpha^{\mathbb M}\!\left(\rho_{ABC}\middle\|\int_{\mathbb R}dt\, \beta_0(t)\, \mathcal{R}^{[t]}_{C\to BC}(\rho_{AC})\right)\,.\label{eq: measured recovery lower}
\end{align}
\end{theorem}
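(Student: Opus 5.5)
We use the variational form~\eqref{eq: variational QalphaM} of the measured R\'enyi quantity. Since $\alpha<1$, the claim
\begin{align}
I_\alpha^\flat(A:B|C)_\rho\ge D_\alpha^{\mathbb M}(\rho_{ABC}\|\bar\sigma)
\end{align}
with $\bar\sigma\coloneqq\int dt\,\beta_0(t)\mathcal R^{[t]}_{C\to BC}(\rho_{AC})$ is equivalent to $\Tr\exp(\alpha\log\rho_{ABC}+(1-\alpha)L)\le Q_\alpha^{\mathbb M}(\rho_{ABC}\|\bar\sigma)$. Here $L\coloneqq\log\rho_{AC}+\log\rho_{BC}-\log\rho_C$, and we first assume full rank. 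It therefore suffices to show that, for every $\omega>0$,
\begin{align}
\Tr\exp(\alpha\log\rho_{ABC}+(1-\alpha)L)\le \alpha\Tr[\rho_{ABC}\,\omega^{\frac{\alpha-1}{\alpha}}]+(1-\alpha)\Tr[\bar\sigma\,\omega]. \label{eq:plan-target}
\end{align}

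\textbf{Step 1: decouple the two terms.} Apply the corollary~\eqref{eq:var_exp_trace} with $X_1=\log\rho_{ABC}$ and $X_2=L$. This bounds the left-hand side of~\eqref{eq:plan-target} by
\begin{align}
\alpha\Tr\exp\!\bigl(\log\rho_{ABC}+\tfrac{\alpha-1}{\alpha}\log\omega\bigr)+(1-\alpha)\Tr\exp(L+\log\omega).
\end{align}
The first term is at most $\alpha\Tr[\rho_{ABC}\,\omega^{(\alpha-1)/\alpha}]$ by Golden--Thompson.

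\textbf{Step 2: bound the second term by the recovered state.} We need
\begin{align}
\Tr\exp(\log\rho_{AC}+\log\rho_{BC}-\log\rho_C+\log\omega)\le\int dt\,\beta_0(t)\Tr[\mathcal R^{[t]}(\rho_{AC})\,\omega].
\end{align}
This is the four-matrix Sutter--Berta--Tomamichel multivariate Golden--Thompson inequality (the $p=2$ case of~\eqref{eq: recovery multivariate}). Take $H_1=\tfrac12\log\omega$, $H_2=\tfrac12\log\rho_{BC}$, $H_3=-\tfrac12\log\rho_C$ and $H_4=\tfrac12\log\rho_{AC}$, but use the form with the average \emph{outside} the logarithm:
\begin{align}
\Tr e^{\sum_j 2H_j}\le\int dt\,\beta_0(t)\,\Bigl\|\prod_j e^{(1+it)H_j}\Bigr\|_2^2 .
\end{align}
The squared Hilbert--Schmidt norm then equals $\Tr[\omega^{1/2}\mathcal R^{[t]}(\rho_{AC})\omega^{1/2}]=\Tr[\omega\,\mathcal R^{[t]}(\rho_{AC})]$, which is linear in the recovered state. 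Averaging over $t$ therefore produces exactly $\Tr[\bar\sigma\,\omega]$.

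The main obstacle is that~\eqref{eq: recovery multivariate} as stated has $\log$ inside the $t$-integral. We must instead invoke the $\log$-free version, which follows from it by Jensen's inequality (concavity of $\log$ gives $\int\beta_0\log\le\log\int\beta_0$). Exponentiating then yields the needed bound, i.e.\ the original Sutter--Berta--Tomamichel form.

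Combining Steps~1 and~2 gives~\eqref{eq:plan-target} for every $\omega>0$. Taking the infimum over $\omega$ and applying $\frac{1}{\alpha-1}\log$, which reverses the inequality since $\alpha<1$, proves the claim for full-rank states. For general states, we replace the marginals by $\rho(\varepsilon)$ as in~\eqref{eq:flat-LE-renyi-CMI-limit} and let $\varepsilon\downarrow0$. The recovered states converge, and $D^{\mathbb M}_\alpha$ is lower semicontinuous, being a supremum of continuous functions, so the bound survives the limit.
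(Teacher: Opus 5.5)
Your proposal is correct and follows essentially the paper's route: \eqref{eq:var_exp_trace}, Golden--Thompson on the first term, the four-matrix Sutter--Berta--Tomamichel bound on the second, then \eqref{eq: variational QalphaM}; the one difference is that you obtain that bound from \eqref{eq: recovery multivariate} at $p=2$ via Jensen, whereas the paper cites it directly. For singular states, $\log\rho_{ABC}$ in $X_1$ also has to be regularized (the paper uses $\rho(\delta)$ and sends $\delta\downarrow0$ before $\varepsilon\downarrow0$), which your limiting step covers because $X_1$ and the marginals in $L$ enter Steps~1--2 independently.
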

\begin{proof}
First suppose $\rho_{ABC}>0$. The variational formula~\eqref{eq:var_exp_trace}, followed by Golden--Thompson and the multivariate trace inequality~\cite[Eq.~(54)]{sutter2017multivariate}, gives
\begin{align}
&\Tr \exp(\alpha\log\rho_{ABC}+(1-\alpha)(\log\rho_{AC}+\log\rho_{BC}-\log\rho_C))\notag\\
&=\inf_{\omega>0}\left\{\alpha\Tr \exp\Big(\log\rho_{ABC}+\frac{\alpha-1}{\alpha}\log\omega\Big)
 +(1-\alpha)\Tr \exp(\log\omega+\log\rho_{AC}+\log\rho_{BC}-\log\rho_C)\right\}\notag\\
&\leq\inf_{\omega>0}\Bigg\{\alpha\Tr[\rho_{ABC}\omega^{\frac{\alpha-1}{\alpha}}]+(1-\alpha)\Tr\!\left[
 \left(\int_{\mathbb R}dt\, \beta_0(t)\mathcal R^{[t]}_{C\to BC}(\rho_{AC})\right)\omega\right]\Bigg\}\notag\\
&=Q_\alpha^{\mathbb M}\!\left(\rho_{ABC}\middle\|
 \int_{\mathbb R}dt\, \beta_0(t)\mathcal R^{[t]}_{C\to BC}(\rho_{AC})\right).
\end{align}
The last equality follows from~\eqref{eq: variational QalphaM}. Taking the logarithm and dividing by $\alpha-1<0$ proves~\eqref{eq: measured recovery lower}. For the states that are not full rank, the same ordered limits as in Theorem~\ref{thm: integral recovery bounds}, together with lower semicontinuity of the measured divergence, gives the desired result.
\end{proof}
\begin{remark}[Comparison of recovery bounds]
For $0<\alpha\leq1/2$, data processing of $D_{\alpha,1-\alpha}$
and convexity of the measured divergence in its second argument give
\begin{align}
\int_{\mathbb R}dt\, \beta_0(t)
D_{\alpha,1-\alpha}\!\left(
\rho_{ABC}\middle\|\mathcal R^{[t]}_{C\to BC}(\rho_{AC})
\right)&\geq\int_{\mathbb R}dt\, \beta_0(t)
D_\alpha^{\mathbb M}\!\left(
\rho_{ABC}\middle\|\mathcal R^{[t]}_{C\to BC}(\rho_{AC})
\right)\notag\\
&\geq D_\alpha^{\mathbb M}\!\left(
\rho_{ABC}\middle\|
\int_{\mathbb R}dt\, \beta_0(t)
\mathcal R^{[t]}_{C\to BC}(\rho_{AC})
\right).
\label{eq: integral comparison small alpha}
\end{align}
Thus~\eqref{eq: integral lower recovery} is at least as strong as
\eqref{eq: measured recovery lower} in this range.

To compare with fidelity-based recovery bounds, recall from
Section~\ref{sec: notation} that
$D^\star(\rho\|\sigma)\geq-\log F(\rho,\sigma)$. Hence,
\begin{align}
I(A:B|C)_\rho
&\geq\int_{\mathbb R}dt\, \beta_0(t)
D^\star\!\left(
\rho_{ABC}\middle\|\mathcal R^{[t]}_{C\to BC}(\rho_{AC})
\right)\notag\\
&\geq-\int_{\mathbb R}dt\, \beta_0(t)
\log F\!\left(
\rho_{ABC},\mathcal R^{[t]}_{C\to BC}(\rho_{AC})
\right)\notag\\
&\geq-\log F\!\left(
\rho_{ABC},
\int_{\mathbb R}dt\, \beta_0(t)
\mathcal R^{[t]}_{C\to BC}(\rho_{AC})
\right).
\label{eq: keyl fidelity recovery comparison}
\end{align}
The last inequality follows from concavity of the root fidelity
and Jensen's inequality. Thus, the $D^\star$ recovery bound is
at least as strong as the fidelity-based bound for the same
averaged recovery map.

For $1/2\leq\alpha<1$, the analogous comparison follows from
$D_{\alpha,1-\alpha}\geq D_{1/2,1/2}=-\log F$.
In particular, at $\alpha=1/2$,
\eqref{eq: integral lower recovery} gives the fidelity bound with
$I_{1/2}^\flat(A:B|C)_\rho\leq I(A:B|C)_\rho$
on the left-hand side.

We emphasize that neither the $D^\star$ bound
nor the measured-relative-entropy bound dominates the other in
general (see Appendix~\ref{app:numerical-recovery} for examples).
\end{remark}
In the next corollary, we use the result in Theorem~\ref{thm: lower bound recovery} to derive a recovery bound in terms of sandwiched relative entropies. The proof follows the strategy of~\cite[Theorem~4.1]{berta2021composite}.
\begin{corollary}
\label{cor: lower bound recovery sandwiched renyi}
Let $\rho_{ABC}$ be a quantum state, and let $\alpha \in [1/2,1]$.
Then,
\begin{align}
I^{\flat}_{\alpha}(A:B|C)_\rho
\ge\limsup_{n\to\infty}{\frac{1}{n}
\widetilde{D}_{\alpha}\!\left(\rho_{ABC}^{\otimes n}\middle\|\int_{-\infty}^{\infty}dt\,
\beta_0(t)\left(\mathcal{R}^{[t]}_{C\to BC}(\rho_{AC})\right)^{\otimes n}\right)}\,.
\end{align}
At $\alpha=1$, the divergences and the LE-CMI are understood by their limits.
\end{corollary}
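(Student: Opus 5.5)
The strategy is to apply Theorem~\ref{thm: lower bound recovery} to the $n$-fold tensor power and then pass from the measured to the sandwiched R\'enyi divergence in the limit. This follows the approach of~\cite[Theorem~4.1]{berta2021composite}.

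\textbf{Step 1: additivity.} By Lemma~\ref{lem: additivity Ialpha}, for every $n$ we have $n I^\flat_\alpha(A:B|C)_\rho = I^\flat_\alpha(A^n:B^n|C^n)_{\rho^{\otimes n}}$.

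\textbf{Step 2: apply the recovery bound to $\rho^{\otimes n}$.} The rotated Petz map built from $\rho^{\otimes n}$ factorizes as $(\mathcal R^{[t]}_{C\to BC})^{\otimes n}$. This is because every power of $\rho^{\otimes n}_{BC}$ and $\rho^{\otimes n}_C$ is a tensor power, and the same $t$ appears in every factor. Its action on $\rho_{AC}^{\otimes n}$ is therefore $(\mathcal R^{[t]}_{C\to BC}(\rho_{AC}))^{\otimes n}$. Theorem~\ref{thm: lower bound recovery} then gives
\begin{align}
n I^\flat_\alpha(A:B|C)_\rho \geq D^{\mathbb M}_\alpha\!\left(\rho_{ABC}^{\otimes n}\middle\|\sigma_n\right),\qquad \sigma_n\coloneqq\int_{\mathbb R}dt\,\beta_0(t)\left(\mathcal R^{[t]}_{C\to BC}(\rho_{AC})\right)^{\otimes n}.
\end{align}
The key point is that $\beta_0$ integrates over a common $t$, so $\sigma_n$ is a mixture of i.i.d.\ states and not the $n$-th power of the averaged state.

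\textbf{Step 3: measured versus sandwiched.} We need $D^{\mathbb M}_\alpha(\rho^{\otimes n}\|\sigma_n)\geq \widetilde D_\alpha(\rho^{\otimes n}\|\sigma_n)-o(n)$. The standard tool is pinching. Although $\sigma_n$ is not permutation invariant in the $t$-direction, it is permutation invariant under $S_n$. One would like a permutation-invariant operator $\omega_n$ with $\sigma_n\leq \omega_n$ (or a universal symmetric state $\omega_n\geq \mathrm{poly}(n)^{-1}\sigma_n$) having only $\mathrm{poly}(n)$ distinct eigenvalues.

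Instead, I would pinch with respect to $\sigma_n$ itself. Let $v_n$ denote the number of distinct eigenvalues of $\sigma_n$, and let $\mathcal P_{\sigma_n}$ be the associated pinching map. The pinching inequality $\rho^{\otimes n}\leq v_n\,\mathcal P_{\sigma_n}(\rho^{\otimes n})$ and the data processing inequality for $\widetilde D_\alpha$ with $\alpha\in[1/2,1)$ give
\begin{align}
D^{\mathbb M}_\alpha(\rho^{\otimes n}\|\sigma_n)\geq \widetilde D_\alpha(\mathcal P_{\sigma_n}(\rho^{\otimes n})\|\sigma_n)\geq \widetilde D_\alpha(\rho^{\otimes n}\|\sigma_n)-\frac{\alpha}{1-\alpha}\log v_n\cdot c.
\end{align}
Here $c$ is an absolute constant, and the right form of the correction for $\alpha<1$ follows Hayashi/Tomamichel's pinching bound, as in~\cite{berta2021composite}. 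The first inequality holds because $\mathcal P_{\sigma_n}(\rho^{\otimes n})$ commutes with $\sigma_n$, so a measurement in the common eigenbasis achieves it.

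\textbf{Main obstacle.} The difficulty is that $v_n$ need not be polynomial, since $\sigma_n$ is a continuous mixture. The plan is to avoid pinching by $\sigma_n$ and instead pinch by a permutation-invariant state $\omega_n$ commuting with all $U^{\otimes n}$ twirls. Concretely, take the universal symmetric state $\omega_n$ of Hayashi, for which $\sigma\leq \mathrm{poly}(n)\,\omega_n$ for every permutation-invariant $\sigma$ and which has $\mathrm{poly}(n)$ eigenvalues. Then $\sigma_n\leq g_n\omega_n$ with $g_n=\mathrm{poly}(n)$. This is where the averaging structure needs to be handled, and it works precisely because $\sigma_n$ is permutation invariant, as in~\cite{berta2021composite}.

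\textbf{Step 4: conclude.} Divide by $n$ and take $\limsup$; the $\log\mathrm{poly}(n)/n$ corrections vanish. The endpoint $\alpha=1$ follows by the limiting convention, using monotonicity in $\alpha$ (Corollary~\ref{cor:monotonicity}) to pass to the limit.
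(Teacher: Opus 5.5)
Your Steps 1 and 2 and the first version of Step 3 (pinching $\rho^{\otimes n}$ by $\sigma_n$ itself) are exactly the paper's argument. Your remark that the rotated Petz map of $\rho^{\otimes n}$ factorizes with a common $t$ is correct. The genuine gap is the ``main obstacle'' you then raise. It does not exist, and the detour you take around it does not work.

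Because $\sigma_n$ commutes with every permutation of the $n$ copies, Schur--Weyl duality forces $\sigma_n=\bigoplus_\lambda X_\lambda\otimes I_{P_\lambda}$. Here $\lambda$ ranges over at most $(n+1)^d$ Young diagrams, and each $X_\lambda$ acts on a $U(d)$-irrep of dimension at most $(n+1)^{d(d-1)/2}$, with $d=d_Ad_Bd_C$. Hence $v_n\le(n+1)^{d^2}$, no matter how $\sigma_n$ was assembled; being a continuous mixture over $t$ is irrelevant. This is precisely what the paper uses: it notes that $\sigma_n$ is permutation invariant and therefore has at most $(n+1)^{d^2}$ distinct eigenvalues. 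With that, your pinching-by-$\sigma_n$ bound finishes the proof, since $\frac1n\log v_n\to0$.

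The universal-state route cannot replace this, for three reasons.
\begin{itemize}
\item Pinching by $\omega_n$ leaves $\rho^{\otimes n}$ unchanged, because $\rho^{\otimes n}$ is already block diagonal in the $\lambda$-decomposition. So it never produces an operator commuting with $\sigma_n$.
\item The domination $\sigma_n\le g_n\omega_n$ only gives $\widetilde D_\alpha(\rho^{\otimes n}\|\sigma_n)\ge\widetilde D_\alpha(\rho^{\otimes n}\|\omega_n)-\log g_n$. That is a lower bound, but you need an upper bound on $\widetilde D_\alpha(\rho^{\otimes n}\|\sigma_n)$.
\item Since $\omega_n\ge\rho^{\otimes n}/\mathrm{poly}(n)$, one has $\widetilde D_\alpha(\rho^{\otimes n}\|\omega_n)\le\log\mathrm{poly}(n)$. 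Any chain passing through $\omega_n$ therefore loses all information about $\sigma_n$.
\end{itemize}

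Two smaller corrections.
\begin{itemize}
\item For $\alpha<1$, the operator inequality $\rho^{\otimes n}\le v_n\mathcal P_{\sigma_n}(\rho^{\otimes n})$ gives the wrong direction once you divide by $\alpha-1<0$. The needed bound is the one the paper cites, \cite[Eqs.~(4.58)--(4.59)]{tomamichel2015quantum}. It follows by writing $\mathcal P_{\sigma_n}$ as a uniform mixture of $v_n$ unitary conjugations commuting with $\sigma_n$ and using $\Tr[(\sum_y X_y)^\alpha]\le\sum_y\Tr[X_y^\alpha]$. This yields an additive correction of order $\log v_n$, with no $\alpha/(1-\alpha)$ factor.
\item The endpoint $\alpha=1$ should be handled by running the same argument directly: use $D^{\mathbb M}$ (the $\alpha\to1$ limit of Theorem~\ref{thm: lower bound recovery}) together with the Umegaki pinching bound $D(\rho\|\sigma)\le D(\mathcal P_\sigma(\rho)\|\sigma)+\log v$. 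Monotonicity in $\alpha$ does not suffice, because the fixed-$\alpha$ bounds only control $\sup_{\alpha<1}\limsup_n\frac1n\widetilde D_\alpha(\rho^{\otimes n}\|\sigma_n)$. That quantity need not dominate $\limsup_n\frac1n D(\rho^{\otimes n}\|\sigma_n)$.
\end{itemize}
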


\begin{proof}
    First, observe that the following relations hold
    \begin{align}
        I^{\flat}_{\alpha}(A:B|C)_\rho&=\frac{1}{n}I^{\flat}_{\alpha}(A^n:B^n|C^n)_{\rho^{\otimes n}} \\
        &\geq \frac{1}{n}D^{\mathbb M}_{\alpha}\!\left(\rho_{ABC}^{\otimes n}\middle\|\int_{-\infty}^{\infty}dt\,
\beta_0(t)\left(\mathcal{R}^{[t]}_{C \to BC}(\rho_{AC})\right)^{\otimes n}\right) \\
        &\geq \frac{1}{n}\widetilde{D}_{\alpha}\!\left(\rho_{ABC}^{\otimes n}\middle\|\int_{-\infty}^{\infty}dt\,
\beta_0(t)\left(\mathcal{R}^{[t]}_{C \to BC}(\rho_{AC})\right)^{\otimes n}\right)\notag \\
&\qquad  -\frac{2}{n}\log\left | \operatorname{spec}\!\left(\int_{-\infty}^{\infty}dt\,
\beta_0(t)\left(\mathcal{R}^{[t]}_{C \to BC}(\rho_{AC})\right)^{\otimes n}\right) \right| \\   
        &\geq \frac{1}{n}\widetilde{D}_{\alpha}\!\left(\rho_{ABC}^{\otimes n}\middle\|\int_{-\infty}^{\infty}dt\,
\beta_0(t)\left(\mathcal{R}^{[t]}_{C \to BC}(\rho_{AC})\right)^{\otimes n}\right)-\frac{2}{n}\log\operatorname{poly}(n),
    \end{align}
    where the first line holds by additivity of the LE-CMI, the second line follows from Theorem~\ref {thm: lower bound recovery}, the third line is an application of~\cite[Eqs.~(4.58)--(4.59)]{tomamichel2015quantum}, and the last line holds by noting that $\int_{-\infty}^{\infty}dt\,
\beta_0(t)\left(\mathcal{R}^{[t]}_{C \to BC}(\rho_{AC})\right)^{\otimes n}$ is a permutation-invariant state. For fixed $d=\dim(ABC)$, its number of distinct eigenvalues is at most $(n+1)^{d^2}$, so the polynomial correction divided by $n$ vanishes. The proof then holds by taking the limit superior $n\to\infty$.
\end{proof}

\subsection{Bounds in terms of the relative entropy distance to Markov states}

In this section, we derive a bound in terms of the relative entropy distance to the set of Markov states.

An important question is whether the conditional mutual information can be interpreted as a distance to the set of Markov states. Classically, the relative entropy distance to the set of Markov states coincides exactly with the conditional mutual information. In the quantum case, however, the relative entropy distance provides only an upper bound on the conditional mutual information~\cite{ibinson2008robustness}.

The next upper bound generalizes the result of~\cite{ibinson2008robustness}, which is recovered in the limit $\alpha \to 1$.
\begin{proposition}
\label{prop: upper bound recovery}
Let $\rho_{ABC}$ be a quantum state, and let $\alpha \in (0,1)$. 
Then, 
\begin{align}
I_{\alpha}^{\flat,*}(A:B|C)_\rho 
 \leq \inf_{\omega_{ABC} \in \mathcal{M}_{A:B|C} }D^\flat_{\alpha}(\rho_{ABC}\| \omega_{ABC}),
\end{align}
where $\mathcal{M}_{A:B|C}$ is the set of quantum Markov states on $ABC$ in the order $A-C-B$. Moreover, in the classical case, where all states in the optimizations are diagonal in the same fixed product basis, the inequality is saturated.
\end{proposition}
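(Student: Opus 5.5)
Fix an arbitrary Markov state $\omega_{ABC}\in\mathcal M_{A:B|C}$ and compare the variational forms. By Corollary~\ref{cor: variational form triply optimized},
\begin{align}
I_\alpha^{\flat,*}(A:B|C)_\rho=\inf_{\tau_{ABC}}\Big\{\tfrac{\alpha}{1-\alpha}D(\tau_{ABC}\|\rho_{ABC})+I(A:B|C)_\tau\Big\},
\end{align}
while~\eqref{eq: variational form LE} gives
\begin{align}
D_\alpha^\flat(\rho_{ABC}\|\omega_{ABC})=\inf_{\tau_{ABC}}\Big\{\tfrac{\alpha}{1-\alpha}D(\tau_{ABC}\|\rho_{ABC})+D(\tau_{ABC}\|\omega_{ABC})\Big\},
\end{align}
with $\tau$ ranging over states supported on $\supp\rho\cap\supp\omega$. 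It therefore suffices to show, for every such $\tau$, that
\begin{align}
I(A:B|C)_\tau\le D(\tau_{ABC}\|\omega_{ABC}),
\end{align}
which is the $\alpha=1$ result of~\cite{ibinson2008robustness}. Taking the infimum over $\tau$ and then over $\omega$ proves the inequality. If the support intersection is empty, the right-hand side is $+\infty$ and there is nothing to prove.

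To prove the pointwise inequality, use the structure~\eqref{eq:direct-sum}: $\omega_{ABC}=\bigoplus_k p_k\,\omega^k_{AC_k^L}\otimes\omega^k_{C_k^RB}$. For such states, $\log\omega_{ABC}=\log\omega_{AC}+\log\omega_{BC}-\log\omega_C$ on the support. This is the standard fact that Markov states satisfy equality in strong subadditivity in operator form; it can be checked blockwise, since within block $k$ all four logarithms decompose as sums of local logarithms. Then
\begin{align}
D(\tau\|\omega)=-H(ABC)_\tau-\Tr[\tau_{AC}\log\omega_{AC}]-\Tr[\tau_{BC}\log\omega_{BC}]+\Tr[\tau_C\log\omega_C],
\end{align}
and by \cite[Lemma 1]{berta2015renyi} (as used in Lemma~\ref{lem: variational form}) this equals
\begin{align}
I(A:B|C)_\tau+D(\tau_{AC}\|\omega_{AC})+D(\tau_{BC}\|\omega_{BC})-D(\tau_C\|\omega_C).
\end{align}
The last three terms are nonnegative in total, because $D(\tau_{AC}\|\omega_{AC})\ge D(\tau_C\|\omega_C)$ by data processing and $D(\tau_{BC}\|\omega_{BC})\ge0$. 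This gives the pointwise bound. The main technical care is in the support issues when $\omega$ is not full rank. One approach is to first restrict to full-rank Markov states, approximating a general $\omega$ by mixing each block factor with the maximally mixed state, which preserves the Markov structure, and then use lower semicontinuity. Alternatively, work directly on supports, where the operator identity still holds.

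For the classical saturation, take all states diagonal in a fixed product basis, and let $\tau^\star$ be a minimizer in the variational formula of Corollary~\ref{cor: variational form triply optimized} (attained by Remark~\ref{rem: full-support CE and MI}). Choose $\omega=\tau^\star_{AC}\tau^\star_{BC}/\tau^\star_C$, which is a classical Markov state. Then $D(\tau^\star\|\omega)=I(A:B|C)_{\tau^\star}$, so
\begin{align}
D_\alpha^\flat(\rho\|\omega)\le\tfrac{\alpha}{1-\alpha}D(\tau^\star\|\rho)+I(A:B|C)_{\tau^\star}=I_\alpha^{\flat,*}(A:B|C)_\rho,
\end{align}
which together with the first part yields equality. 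The restriction of the infimum to diagonal $\tau$ is harmless because in the classical setting the minimizer may be taken diagonal: pinching in the product basis decreases $D(\tau\|\rho)$ and $I(A:B|C)_\tau$ by data processing under local pinchings.
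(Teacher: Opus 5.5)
Your proof of the inequality follows the paper's: compare the variational formula of Corollary~\ref{cor: variational form triply optimized} with~\eqref{eq: variational form LE}, insert the Ibinson--Linden--Winter bound $I(A:B|C)_\tau\le D(\tau_{ABC}\|\omega_{ABC})$ for Markov $\omega$, and take infima. Your blockwise derivation of that bound, compressing $\log\omega_{AC}+\log\omega_{BC}-\log\omega_C$ to $\supp\omega$, correctly reconstructs the cited result. For saturation, your choice $\omega=\tau^\star_{AC}\tau^\star_{BC}/\tau^\star_C$ satisfies $\tau^\star\ll\omega$ and $D(\tau^\star\|\omega)=I(A:B|C)_{\tau^\star}$. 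This makes explicit the paper's remark that every inequality in the chain becomes an equality classically.

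Your final justification is wrong, though. Conditional mutual information satisfies data processing only under channels on $A$ and $B$, not on the conditioning system, and pinching $C$ can increase it. Take $\tau=\frac12(|00\rangle\langle00|_{AB}\otimes|+\rangle\langle+|_C+|11\rangle\langle11|_{AB}\otimes|-\rangle\langle-|_C)$, which has $I(A:B|C)_\tau=0$. Its pinching is $\rho=\frac12(|00\rangle\langle00|+|11\rangle\langle11|)_{AB}\otimes I_C/2$, with $I(A:B|C)_\rho=\log2$. With this diagonal $\rho$ as input, pinching raises the objective $\frac{\alpha}{1-\alpha}D(\tau\|\rho)+I(A:B|C)_\tau$ from $\frac{\alpha}{1-\alpha}\log2$ to $\log2$ whenever $\alpha<1/2$. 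So pinching does not show that a quantum minimizer can be replaced by a diagonal one.

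You do not need that claim. Under the statement's convention that all states in the optimizations are diagonal, the auxiliary states $\eta_{AC},\sigma_{BC},\omega_C$ in the definition of $I^{\flat,*}_\alpha$ are diagonal. The Gibbs minimizers in Lemma~\ref{lem: variational form} are then diagonal, and the derivation of Corollary~\ref{cor: variational form triply optimized} runs entirely within diagonal states. This gives the variational formula with $\tau$ restricted to diagonal states, which is what the paper implicitly uses. Replace the pinching sentence with this observation.

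A smaller point concerns your optional full-rank approximation. Lower semicontinuity gives $\liminf_{\varepsilon}D(\tau\|\omega_\varepsilon)\ge D(\tau\|\omega)$, which is the wrong direction for passing an upper bound to the limit. You would instead use $\omega_\varepsilon\ge c_\varepsilon\,\omega$ with $c_\varepsilon\to1$, which your blockwise mixing provides, together with antimonotonicity of the relative entropy in its second argument. Your alternative of working directly on supports is valid and avoids this issue altogether.
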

\begin{proof}
    We recall that
    \begin{align}
        I_{\alpha}^{\flat,*}(A:B|C)_\rho =\inf_{\tau_{ABC}}
\left\{
\frac{\alpha}{1-\alpha}D(\tau_{ABC}\|\rho_{ABC})
+
I(A:B|C)_\tau
\right\}.
    \end{align}
Next, we use the bound of the CMI in terms of the minimum relative entropy to the set of Markov states derived in~\cite[Theorem~4]{ibinson2008robustness}
\begin{align}
I(A:B|C)_\tau \leq \inf_{\omega_{ABC} \in \mathcal{M}_{A:B|C} }D(\tau_{ABC}\| \omega_{ABC}) \,.
\end{align}
Hence, we can rewrite
\begin{align}
    I_{\alpha}^{\flat,*}(A:B|C)_\rho \leq \inf_{\omega_{ABC} \in \mathcal{M}_{A:B|C} } \inf_{\tau_{ABC}}
\left\{
\frac{\alpha}{1-\alpha}D(\tau_{ABC}\|\rho_{ABC})
+
D(\tau_{ABC}\| \omega_{ABC})
\right\}.
\end{align}
Then, we use the variational form of the log-Euclidean relative entropy to obtain
\begin{align}
    \inf_{\tau_{ABC}}
\left\{
\frac{\alpha}{1-\alpha}D( \tau_{ABC} \|\rho_{ABC})
+
D(\tau_{ABC}\| \omega_{ABC})
\right\} = D^\flat_\alpha(\rho_{ABC} \| \omega_{ABC}).
\end{align}

Finally, in the classical case, the conditional mutual information of every state $\tau_{ABC}$ coincides with its relative-entropy distance to the set of classical Markov states~\cite{ibinson2008robustness}:
\begin{align}
    I(A:B|C)_\tau
    =
    \inf_{\omega_{ABC} \in \mathcal{M}_{A:B|C}}
    D\!\left(\tau_{ABC}\middle\| \omega_{ABC}\right).
\end{align}
In the above proof, all inequalities are equalities, and therefore, for this specific case,
\begin{align}
    I_{\alpha}^{\flat,*}(A:B|C)_\rho 
 = \inf_{\omega_{ABC} \in \mathcal{M}_{A:B|C} }D^\flat_{\alpha}(\rho_{ABC}\| \omega_{ABC}),
\end{align}
thus concluding the proof.
\end{proof}

Next, we derive a lower bound in terms of the relative entropy distance, measured with respect to one-way LOCC from \(A\) to \(B\), to the set of separable states. We note that Markov states become separable after tracing out the conditioning system. In contrast, the analogous lower bound by the relative entropy distance to exact quantum Markov states is false in general, even for the QCMI; analogous obstructions also occur for measured relative entropy~\cite{ibinson2008robustness,berta2024entanglement}.
At $\alpha=1$, the corresponding lower bound is the established QCMI bound~\cite{brandao2011faithful,li2018squashed,li2014relative,berta2024entanglement}.
\begin{proposition}
\label{prop: lower bound separable}
Let $\rho_{ABC}$ be a quantum state, and let $\alpha \in (0,1)$.
Then,
\begin{align}
I^{\flat,*}_{\alpha}(A:B|C)_\rho
\ge
\inf_{\sigma_{AB} \in \mathrm{Sep}(A:B)}D^{\mathrm{LOCC}_1(A\to B)}_{\alpha}\left(\rho_{AB}\middle\|\sigma_{AB}\right),
\end{align}
where $\mathrm{Sep}(A:B)$ is the set of separable states on $AB$.
\end{proposition}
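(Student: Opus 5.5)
Start from the variational characterization in Corollary~\ref{cor: variational form triply optimized}:
\begin{align}
I_\alpha^{\flat,*}(A:B|C)_\rho=\min_{\tau_{ABC}}\left\{\frac{\alpha}{1-\alpha}D(\tau_{ABC}\|\rho_{ABC})+I(A:B|C)_\tau\right\}.
\end{align}
The minimum is attained by Remark~\ref{rem: full-support CE and MI}. The plan is to lower bound each of the two terms separately at the optimizer $\tau$. First, apply the known $\alpha=1$ result~\cite{li2018squashed,li2014relative}:
\begin{align}
I(A:B|C)_\tau\geq \inf_{\sigma_{AB}\in\mathrm{Sep}(A:B)}D^{\mathrm{LOCC}_1(A\to B)}(\tau_{AB}\|\sigma_{AB}).
\end{align}
Here the right-hand side uses the measured Umegaki relative entropy with respect to one-way LOCC. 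Second, apply data processing of the Umegaki relative entropy under the partial trace over $C$ and under any measurement $\mathcal M$. This gives $D(\tau_{ABC}\|\rho_{ABC})\geq D(\tau_{AB}\|\rho_{AB})\geq D(\mathcal M(\tau_{AB})\|\mathcal M(\rho_{AB}))$.

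Next, fix a separable $\sigma_{AB}$ and a one-way LOCC measurement $\mathcal M$ that is (near-)optimal for $D^{\mathrm{LOCC}_1}(\tau_{AB}\|\sigma_{AB})$. Combining the two bounds with this common $\mathcal M$ gives, with $p=\mathcal M(\rho_{AB})$, $q=\mathcal M(\sigma_{AB})$ and $r=\mathcal M(\tau_{AB})$,
\begin{align}
I_\alpha^{\flat,*}(A:B|C)_\rho\geq \frac{\alpha}{1-\alpha}D(r\|p)+D(r\|q)-\varepsilon\geq \inf_{r'}\left\{\frac{\alpha}{1-\alpha}D(r'\|p)+D(r'\|q)\right\}-\varepsilon .
\end{align}
The last infimum equals the classical R\'enyi divergence $D_\alpha(p\|q)$, by the classical case of the variational formula~\eqref{eq: variational form LE}, since log-Euclidean and classical R\'enyi divergences coincide for commuting arguments. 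This is in turn at most $D^{\mathrm{LOCC}_1(A\to B)}_\alpha(\rho_{AB}\|\sigma_{AB})$. Taking the infimum over $\sigma_{AB}$ and sending $\varepsilon\to 0$ would finish the proof.

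The delicate step is the order of quantifiers. The known bound gives $\inf_\sigma\sup_{\mathcal M}$, and I need to use the same pair $(\sigma,\mathcal M)$ on both terms. This is handled by picking $\sigma^\star$ to be the minimizer for $\tau_{AB}$; the set of separable states is compact and the measured entropy is lower semicontinuous. Then, for that $\sigma^\star$, I pick $\mathcal M$ achieving the supremum up to $\varepsilon$. The resulting quantity is at least $\inf_{\mathcal M'}$-independent, because I lower bound it by $D_\alpha(\mathcal M(\rho)\|\mathcal M(\sigma^\star))\leq \sup_{\mathcal M'}D_\alpha(\mathcal M'(\rho)\|\mathcal M'(\sigma^\star))$. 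Note this inequality points the wrong way for a lower bound; the argument actually needs $D_\alpha(\mathcal M(\rho)\|\mathcal M(\sigma^\star))\geq$ the target. Since the target is $\inf_\sigma\sup_{\mathcal M'}D_\alpha$, which may exceed the value at this particular $\mathcal M$, I must proceed differently.

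The fix is to first convexify. I would use a minimax (Sion) argument to write
\begin{align}
\inf_\sigma \sup_{\mathcal M}D(\mathcal M(\tau)\|\mathcal M(\sigma))=\sup_{\mathcal M}\inf_\sigma D(\mathcal M(\tau)\|\mathcal M(\sigma)),
\end{align}
and likewise for the $D_\alpha$ version. This uses convexity of the set of separable states, convexity of the one-way LOCC measurement set after allowing classical post-processing/mixing, and joint convexity properties. With $\mathcal M$ fixed first, the chain above gives $\frac{\alpha}{1-\alpha}D(r\|p)+\inf_\sigma D(r\|q_\sigma)\geq \inf_\sigma D_\alpha(p\|q_\sigma)$. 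Taking $\sup_{\mathcal M}$ and applying minimax again yields the claim. Verifying the minimax hypotheses is the main obstacle, in particular quasi-convexity in $\sigma$ of the classical $D_\alpha$ for $\alpha\in(0,1)$ and concavity-type behavior in $\mathcal M$. Failing that, I would instead cite the $D_\alpha$-analogue of the Li--Winter argument directly, following~\cite{berta2024entanglement}.
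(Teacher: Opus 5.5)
The ingredients you assemble are exactly those of the paper:
\begin{itemize}
\item Corollary~\ref{cor: variational form triply optimized};
\item the $\alpha=1$ bound $I(A:B|C)_\tau\ge\inf_{\sigma}D^{\mathrm{LOCC}_1(A\to B)}(\tau_{AB}\|\sigma_{AB})$;
\item data processing under the partial trace and under $\mathcal M$;
\item the classical variational formula for $D_\alpha$.
\end{itemize}
As written, however, the proposal does not close. You correctly notice that your first chain only controls $D_\alpha(\mathcal M(\rho_{AB})\|\mathcal M(\sigma^\star_{AB}))$ for one particular $\mathcal M$. You then switch to a route that puts $\inf_\sigma$ inside $\sup_{\mathcal M}$. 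That route only yields $\sup_{\mathcal M}\inf_\sigma D_\alpha(\mathcal M(\rho_{AB})\|\mathcal M(\sigma_{AB}))$. Upgrading this to the target $\inf_\sigma\sup_{\mathcal M}D_\alpha(\mathcal M(\rho_{AB})\|\mathcal M(\sigma_{AB}))$ requires the nontrivial direction of a minimax theorem. You leave its hypotheses unverified: quasi-convexity in $\sigma$, and a suitable convex structure with quasi-concavity in $\mathcal M$. The fallback of citing an unspecified R\'enyi analogue of Li--Winter is not an argument.

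The difficulty is self-inflicted, because there is no reason to choose $\mathcal M$ near-optimal for the $\tau$-problem. Keep $\inf_\sigma$ outermost; it commutes trivially with $\inf_\tau$. Then observe that, for a fixed separable $\sigma_{AB}$ and \emph{every} $\mathcal M\in\mathrm{LOCC}_1(A\to B)$,
\begin{align*}
\frac{\alpha}{1-\alpha}D(\tau_{ABC}\|\rho_{ABC})+D^{\mathrm{LOCC}_1(A\to B)}(\tau_{AB}\|\sigma_{AB})
&\ge\frac{\alpha}{1-\alpha}D(\mathcal M(\tau_{AB})\|\mathcal M(\rho_{AB}))+D(\mathcal M(\tau_{AB})\|\mathcal M(\sigma_{AB}))\\
&\ge D_\alpha(\mathcal M(\rho_{AB})\|\mathcal M(\sigma_{AB})).
\end{align*}
This holds for three reasons: $D^{\mathrm{LOCC}_1(A\to B)}$ dominates each individual measured term, data processing holds for each $\mathcal M$, and $\mathcal M(\tau_{AB})$ is one admissible distribution in the classical variational formula.

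The left-hand side does not depend on $\mathcal M$. You may therefore take the supremum over $\mathcal M$ on the right, which gives $D_\alpha^{\mathrm{LOCC}_1(A\to B)}(\rho_{AB}\|\sigma_{AB})$. Then take the infima over $\sigma_{AB}$ and $\tau_{ABC}$.

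This is precisely the paper's proof, which phrases the step as $\inf_\tau\sup_{\mathcal M}\ge\sup_{\mathcal M}\inf_\tau$ at fixed $\sigma$. Only the trivial direction of minimax is used. No compactness of $\mathrm{Sep}(A:B)$, lower semicontinuity, $\varepsilon$-optimal measurements, or Sion-type hypotheses are needed.
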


\begin{proof}
      We recall that
    \begin{align}
        I_{\alpha}^{\flat,*}(A:B|C)_\rho =\inf_{\tau_{ABC}}
\left\{
\frac{\alpha}{1-\alpha}D(\tau_{ABC}\|\rho_{ABC})
+
I(A:B|C)_\tau
\right\}.
    \end{align}
    For every state $\tau_{ABC}$, we have the bound~\cite[Equation~(12)]{berta2024entanglement}
    \begin{align}
        I(A:B|C)_\tau \geq \inf_{\sigma_{AB} \in \mathrm{Sep}(A:B)}D^{\mathrm{LOCC}_1(A\to B)}\left(\tau_{AB}\middle\|\sigma_{AB}\right).
    \end{align}
    We can exchange the infima to obtain
    \begin{align}
        I_{\alpha}^{\flat,*}(A:B|C)_\rho \geq\inf_{\sigma_{AB} \in \mathrm{Sep}(A:B)}\inf_{\tau_{ABC}}
\left\{
\frac{\alpha}{1-\alpha}D(\tau_{ABC}\|\rho_{ABC})
+
D^{\mathrm{LOCC}_1(A\to B)}\left(\tau_{AB}\middle\|\sigma_{AB}\right)
\right\}.
    \end{align}
    We can use that $ \inf \sup \geq \sup \inf$ and the DPI under partial trace of the Umegaki relative entropy and under measurements to obtain 
\begin{multline}
        I_{\alpha}^{\flat,*}(A:B|C)_\rho \geq\\
        \inf_{\sigma_{AB} \in \mathrm{Sep}(A:B)}\sup_{\mathcal{M} \in \mathrm{LOCC}_1(A\to B)}\inf_{\tau_{ABC}}
\left\{
\frac{\alpha}{1-\alpha}D(\mathcal{M}(\tau_{AB})\|\mathcal{M}(\rho_{AB}))
+
D\left(\mathcal{M}(\tau_{AB})\middle\|\mathcal{M}(\sigma_{AB})\right)
\right\}.
    \end{multline}
We can then lower bound the expression by allowing $\mathcal M(\tau_{AB})$ to range over all probability distributions on the measurement outcome alphabet, and use the variational form of classical Rényi relative entropy to obtain that
\begin{align}
        I_{\alpha}^{\flat,*}(A:B|C)_\rho \geq\inf_{\sigma_{AB} \in \mathrm{Sep}(A:B)} \sup_{\mathcal{M} \in \mathrm{LOCC}_1(A\to B)}D_\alpha(\mathcal{M}(\rho_{AB})\|\mathcal{M}(\sigma_{AB})),
    \end{align}
    thus concluding the proof.
\end{proof}

\begin{remark}
    The same lower bound, expressed in terms of the measured relative entropy to the set of separable states, also applies to the non-optimized quantity $I_{\alpha}^{\flat}$, since by Lemma~\ref{lem: ordering LE-CMIs} this quantity is no smaller than the triply optimized one $I_{\alpha}^{\flat,*}$.
\end{remark}

\section{Chain rules}\label{sec: chain rules}
For tripartite quantum states, the QCMI satisfies the identities
\begin{align}
     I(A:B|C)_\rho
    = H(B|C)_\rho-H(B|AC)_\rho = I(AC:B)_{\rho} - I(C:B)_\rho .
\end{align}
Moreover, in the case of four parties, it holds that
\begin{align}
    I
    (A:B|CD)_\rho = I(A:BC|D)_\rho-I(A:C|D)_\rho \,.
\end{align}
These identities can be interpreted as chain rules, since they relate different
entropic quantities evaluated on the marginals of the same multipartite state. 

In the R\'enyi setting, entropic quantities in quantum information theory often satisfy analogues of the standard chain rules. Unlike in the latter setting, however, these R\'enyi chain rules typically take the form of inequalities, and often require the R\'enyi parameters to be adjusted according to specific parameter relations. This is the case, for instance, for chain rules involving R\'enyi conditional entropies and mutual informations
\cite{dupuis2015chain,rubboli2024quantum,mckinlay2020decomposition}.
In what follows, we establish a family of chain rules for the LE-CMI. To the best of our knowledge, these chain rules are novel even in the classical setting.

\subsection{Chain rules involving conditional entropies}

\begin{lemma}
Let \(\rho_{ABC}\) be a tripartite quantum state, and let
\(\alpha,\beta,\gamma \in (0,1)\cup (1,\infty)\). Suppose that
\((\alpha-1)(\beta-1)>0\).
If \(\gamma\in (0,1)\), it holds that
\begin{alignat}{2}
I_\alpha^{\flat,*}(A:B|C)_\rho
&\leq H_\gamma^{\flat,\uparrow}(B|C)_\rho
      -H_\beta^{\flat,\uparrow}(B|AC)_\rho
&\qquad&\text{when}\quad
\frac{\beta}{\beta-1}
=\frac{\alpha}{\alpha-1}+\frac{\gamma}{\gamma-1},
\;\;\text{and}\;\; \alpha\leq2,
\\
I_\alpha^{\flat}(A:B|C)_\rho
&\leq H_\gamma^{\flat,\downarrow}(B|C)_\rho
      -H_\beta^{\flat,\downarrow}(B|AC)_\rho
&\qquad&\text{when}\quad
\frac{\beta}{\beta-1}
=\frac{\alpha}{\alpha-1}+\frac{1}{\gamma-1}.
\end{alignat}
If instead \(\gamma>1\), then the inequalities are reversed.
\end{lemma}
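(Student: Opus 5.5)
The plan is to work entirely at the level of the variational formulas, using Lemma~\ref{lem:LE_cond} for the conditional entropies and Corollaries~\ref{cor: variational form} and~\ref{cor: variational form triply optimized} for the two LE-CMIs. For brevity write $c_\theta\coloneqq\theta/(1-\theta)$.

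Consider first $\gamma\in(0,1)$. Then $(\alpha-1)(\beta-1)>0$ together with the parameter relation forces $\alpha,\beta\in(0,1)$. Take the non-optimized inequality first. Lemma~\ref{lem:LE_cond} gives
\[
-H_\beta^{\flat,\downarrow}(B|AC)_\rho=\inf_{\tau_{ABC}}\bigl\{c_\beta D(\tau_{ABC}\|\rho_{ABC})+D(\tau_{AC}\|\rho_{AC})-H(B|AC)_\tau\bigr\}.
\]
Since $H_\gamma^{\flat,\downarrow}(B|C)_\rho$ is a supremum, it is bounded below by its objective evaluated at the marginal $\tau_{BC}$ of any $\tau_{ABC}$. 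This gives
\[
H_\gamma^{\flat,\downarrow}(B|C)_\rho\geq -c_\gamma D(\tau_{BC}\|\rho_{BC})-D(\tau_C\|\rho_C)+H(B|C)_\tau.
\]
Adding the two bounds, and using $H(B|C)_\tau-H(B|AC)_\tau=I(A:B|C)_\tau$, I expect
\[
H_\gamma^{\flat,\downarrow}(B|C)_\rho-H_\beta^{\flat,\downarrow}(B|AC)_\rho\geq\inf_{\tau_{ABC}}\bigl\{c_\beta D(\tau_{ABC}\|\rho_{ABC})-c_\gamma D(\tau_{BC}\|\rho_{BC})+D(\tau_{AC}\|\rho_{AC})-D(\tau_C\|\rho_C)+I(A:B|C)_\tau\bigr\}.
\]

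Next I compare this objective termwise with the objective of Corollary~\ref{cor: variational form}. The difference between the two is
\[
(c_\beta-c_\alpha)D(\tau_{ABC}\|\rho_{ABC})-(1+c_\gamma)D(\tau_{BC}\|\rho_{BC}).
\]
The relation $\frac{\beta}{\beta-1}=\frac{\alpha}{\alpha-1}+\frac{1}{\gamma-1}$ is exactly $c_\beta-c_\alpha=\frac{1}{1-\gamma}=1+c_\gamma>0$. The difference is therefore nonnegative by data processing under $\Tr_A$, and taking the infimum yields $I_\alpha^\flat(A:B|C)_\rho$ on the right.

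The optimized inequality goes the same way with the $\uparrow$ formulas, which carry no marginal correction terms, and with Corollary~\ref{cor: variational form triply optimized}. This is where $\alpha\le 2$ is needed, since that corollary is only available there. Now the difference is $(c_\beta-c_\alpha)D(\tau_{ABC}\|\rho_{ABC})-c_\gamma D(\tau_{BC}\|\rho_{BC})$. The relation $\frac{\beta}{\beta-1}=\frac{\alpha}{\alpha-1}+\frac{\gamma}{\gamma-1}$ gives precisely $c_\beta-c_\alpha=c_\gamma$, so data processing closes the argument again.

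For $\gamma>1$ we have $\alpha,\beta>1$, and every infimum and supremum swaps. Now $-H_\beta$ is a supremum, the $H_\gamma$ term is bounded above by its objective at $\tau_{BC}$, and the LE-CMI is a supremum. The same bookkeeping then gives $c_\beta-c_\alpha=1+c_\gamma$ (respectively $c_\gamma$). These coefficients are now negative, because $1+c_\gamma=\frac{1}{1-\gamma}<0$ and $c_\gamma<0$, so data processing yields the reversed inequality.

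The main obstacle I anticipate is sign bookkeeping rather than any conceptual step. I must check in each of the four cases that the inf/sup directions line up: the entropy term being bounded by plugging in a particular state must be the outer optimization in the right direction. I must also check that the leftover coefficient multiplying $D(\tau_{ABC}\|\rho_{ABC})-D(\tau_{BC}\|\rho_{BC})$ has the sign that data processing can absorb.

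A secondary point is states that are not full rank. There I restrict all optimizations to $\tau\ll\rho$, where every relative entropy is finite and the extrema are attained (Remark~\ref{rem: full-support CE and MI}). The termwise comparison then holds pointwise on this common domain.
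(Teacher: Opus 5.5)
Your core argument matches the paper's. You compare the variational objectives at a common $\tau_{ABC}$. The parameter relation then turns the coefficient mismatch into $\kappa\,[D(\tau_{ABC}\|\rho_{ABC})-D(\tau_{BC}\|\rho_{BC})]$, with $\kappa=\frac{1}{1-\gamma}$ or $\kappa=c_\gamma$, and data processing under $\Tr_A$ absorbs it. The paper fixes an optimizer $\tau^\star$ of $H_\beta$ and evaluates the LE-CMI objective there, which is equivalent to your infimum bookkeeping.

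There is, however, a concrete error in your case analysis. You claim that $\gamma\in(0,1)$ together with $(\alpha-1)(\beta-1)>0$ forces $\alpha,\beta\in(0,1)$, and that $\gamma>1$ forces $\alpha,\beta>1$. Both claims are false:
\begin{itemize}
\item $\alpha=10/9$, $\beta=8/7$, $\gamma=1/2$ satisfies $\frac{\beta}{\beta-1}=\frac{\alpha}{\alpha-1}+\frac{1}{\gamma-1}$, since $8=10-2$.
\item $\alpha=10/9$, $\beta=9/8$, $\gamma=1/2$ satisfies the optimized relation with $\alpha\le 2$.
\item $\alpha=10/11$, $\beta=9/10$, $\gamma=2$ satisfies the non-optimized relation, since $-9=-10+1$.
\end{itemize}
So the mixed sign patterns $\gamma<1<\alpha,\beta$ and $\alpha,\beta<1<\gamma$, which the statement covers, are missing from your proof. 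Your step ``$-H_\beta$ is an infimum, then compare infima'' does not apply there, because $-H_\beta$ and the LE-CMI are then optimized in the opposite direction from what you assumed.

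The repair uses nothing new. In the mixed patterns, the LE-CMI, $H_\gamma$ and $-H_\beta$ are all optimizations in the same direction: all suprema if $\gamma<1<\alpha,\beta$, and all infima if $\alpha,\beta<1<\gamma$. The same coefficient identity plus data processing then gives a pointwise statement in $\tau_{ABC}$. The LE-CMI objective at $\tau$ is at most (resp.\ at least) the $H_\gamma$-objective at $\tau_{BC}$ plus the $(-H_\beta)$-objective at $\tau$. These two terms are in turn at most (resp.\ at least) $H_\gamma$ and $-H_\beta$. Taking the supremum (resp.\ infimum) over $\tau$ gives the claimed inequality, respectively its reversal. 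The paper writes out only the case $\alpha,\beta,\gamma<1$, so add these two cases explicitly.
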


\begin{proof}
We only prove the case where $\alpha,\beta,\gamma<1$, and note that other cases follow a similar proof strategy. The optimizers of the variational expression are always achieved (see Remark~\ref{rem: full-support CE and MI}).
    Let $\tau_{ABC}^\star$ be such that
    \begin{align}
        H_\beta^{\flat,\uparrow}(B|AC)_\rho=\frac{\beta}{\beta-1}D(\tau^\star_{ABC}\|\rho_{ABC})+H(B|AC)_{\tau^\star}.
    \end{align}
    Then, we have
    \begin{align}\label{eq:QCMI_cond_unopt_p1}
        I_\alpha^{\flat,*}(A:B|C)_\rho&\leq\frac{\alpha}{1-\alpha}D(\tau^\star_{ABC}\| \rho_{ABC})+I(A:B|C)_{\tau^\star} \\
        &=\left(\frac{\gamma}{\gamma-1}-\frac{\beta}{\beta-1}\right)D(\tau^\star_{ABC}\| \rho_{ABC})+H(B|C)_{\tau^\star}-H(B|AC)_{\tau^\star} \\
        &=\left(\frac{\gamma}{\gamma-1}D(\tau^\star_{ABC}\| \rho_{ABC})+H(B|C)_{\tau^\star}\right)-\left(\frac{\beta}{\beta-1}D(\tau^\star_{ABC}\| \rho_{ABC})+H(B|AC)_{\tau^\star}\right) \\
    &\leq\left(\frac{\gamma}{\gamma-1}D(\tau^\star_{BC}\| \rho_{BC})+H(B|C)_{\tau^\star}\right)-H_\beta^{\flat,\uparrow}(B|AC)_\rho \\
        &\leq\sup_{\tau_{BC}}\left\{ \frac{\gamma}{\gamma-1}D(\tau_{BC}\| \rho_{BC})+H(B|C)_{\tau}\right\}-H_\beta^{\flat,\uparrow}(B|AC)_\rho \\
        &=H_\gamma^{\flat,\uparrow}(B|C)_\rho -H_\beta^{\flat,\uparrow}(B|AC)_\rho.
    \end{align}
    The first line follows from the variational representation in
Corollary~\ref{cor: variational form triply optimized}. The second line follows
from the chain rule for the QCMI,
\(I(A:B|C)_\rho = H(B|C)_\rho - H(B|AC)_\rho\), together with the prescribed
relation between the parameters. The fourth line follows from
Eq.~\eqref{eq:LE_cond} and the data-processing inequality applied to
\(D(\tau^\star_{ABC}\| \rho_{ABC})\).

    The proof of the second chain rule is similar. Consider $\bar{\tau}_{ABC}$ to be such that
    \begin{align}
        H_\beta^{\flat,\downarrow}(B|AC)_\rho=\frac{\beta}{\beta-1}D(\bar{\tau}_{ABC}\|\rho_{ABC})-D(\bar{\tau}_{AC}\|\rho_{AC})+H(B|AC)_{\bar{\tau}}.
    \end{align}
    Then, 
    \begin{align}
        I_\alpha^{\flat}(A:B|C)_\rho&\leq\frac{\alpha}{1-\alpha}D(\bar{\tau}_{ABC}\|\rho_{ABC})+D(\bar{\tau}_{AC}\|\rho_{AC})+D(\bar{\tau}_{BC}\|\rho_{BC})-D(\bar{\tau}_C\|\rho_C)+I(A:B|C)_{\bar{\tau}} \\
        &=\left(\frac{1}{\gamma-1}D(\bar{\tau}_{ABC}\|\rho_{ABC})+D(\bar{\tau}_{BC}\|\rho_{BC})-D(\bar{\tau}_C\|\rho_C)+H(B|C)_{\bar{\tau}}\right) \notag \\
        &\qquad\qquad-\left(\frac{\beta}{\beta-1}D(\bar{\tau}_{ABC}\|\rho_{ABC})-D(\bar{\tau}_{AC}\|\rho_{AC})+H(B|AC)_{\bar{\tau}}\right) \\
        &\leq\left(\frac{\gamma}{\gamma-1}D(\bar{\tau}_{BC}\|\rho_{BC})-D(\bar{\tau}_C\|\rho_C)+H(B|C)_{\bar{\tau}}\right)-H_\beta^{\flat,\downarrow}(B|AC)_\rho \\
        &\leq\sup_{\tau_{BC}}\left\{\frac{\gamma}{\gamma-1}D(\tau_{BC}\|\rho_{BC})-D(\tau_C\|\rho_C)+H(B|C)_{\tau}\right\}-H_\beta^{\flat,\downarrow}(B|AC)_\rho \\
        &=H_\gamma^{\flat,\downarrow}(B|C)_\rho-H_\beta^{\flat,\downarrow}(B|AC)_\rho\,.
    \end{align}
   The second line is obtained by substituting the relation among the parameters
and using the chain rule for the QCMI. The second inequality then follows from
an application of data processing, together with the variational
representation of the conditional entropy.
\end{proof}

\subsection{Chain rules involving mutual informations}

\begin{lemma}
Let \(\rho_{ABC}\) be a tripartite quantum state, and let
\(\alpha,\beta,\gamma \in (0,1)\cup (1,\infty)\). Suppose that
\((\alpha-1)(\beta-1)>0\).
If \(\gamma\in (0,1)\), it holds that
\begin{alignat}{2}
I_\alpha^{\flat,*}(A:B|C)_\rho
&\leq I_\beta^{\flat,\downarrow\downarrow}(AC:B)_\rho
      -I_\gamma^{\flat,\downarrow\downarrow}(C:B)_\rho
&\qquad&\text{when}\quad
\frac{\beta}{\beta-1}
=\frac{\alpha}{\alpha-1}+\frac{\gamma}{\gamma-1},
\;\;\text{and}\;\; \alpha\leq2,
\\
I_\alpha^{\flat}(A:B|C)_\rho
&\leq I_\beta^{\flat,\uparrow\uparrow}(AC:B)_\rho
      -I_\gamma^{\flat,\uparrow\uparrow}(C:B)_\rho
&\qquad&\text{when}\quad
\frac{\beta}{\beta-1}
=\frac{\alpha}{\alpha-1}+\frac{1}{\gamma-1}.
\end{alignat}
If instead \(\gamma>1\), then the inequalities are reversed.
\end{lemma}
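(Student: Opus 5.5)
The plan mirrors the proof of the preceding chain rule for conditional entropies, with the variational forms for mutual informations from Lemma~\ref{cor:LE_mutual} replacing those for conditional entropies. We treat the case $\alpha,\beta,\gamma<1$ in detail; the remaining sign cases follow by reversing the direction of each optimization.

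For the first inequality, let $\tau^\star_{ABC}$ attain the infimum in the variational formula for $I_\beta^{\flat,\downarrow\downarrow}(AC:B)_\rho$. For $\beta<1$ the coefficient $\beta/(1-\beta)$ is positive and the optimization is an infimum, so
\begin{align}
I_\beta^{\flat,\downarrow\downarrow}(AC:B)_\rho=\frac{\beta}{1-\beta}D(\tau^\star_{ABC}\|\rho_{ABC})+I(AC:B)_{\tau^\star}.
\end{align}
The parameter relation $\frac{\beta}{\beta-1}=\frac{\alpha}{\alpha-1}+\frac{\gamma}{\gamma-1}$ gives $\frac{\alpha}{1-\alpha}=\frac{\beta}{1-\beta}-\frac{\gamma}{1-\gamma}$. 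We evaluate Corollary~\ref{cor: variational form triply optimized} at $\tau^\star$ and use the chain rule $I(A:B|C)_{\tau^\star}=I(AC:B)_{\tau^\star}-I(C:B)_{\tau^\star}$. This yields
\begin{align}
I_\alpha^{\flat,*}(A:B|C)_\rho\leq I_\beta^{\flat,\downarrow\downarrow}(AC:B)_\rho-\Big(\frac{\gamma}{1-\gamma}D(\tau^\star_{ABC}\|\rho_{ABC})+I(C:B)_{\tau^\star}\Big).
\end{align}
Data processing under $\Tr_A$ gives $D(\tau^\star_{ABC}\|\rho_{ABC})\ge D(\tau^\star_{BC}\|\rho_{BC})$, and $\gamma/(1-\gamma)>0$. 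Hence the bracket is at least the infimum defining $I_\gamma^{\flat,\downarrow\downarrow}(C:B)_\rho$, which proves the first inequality. Note that $\alpha/(1-\alpha)>0$ forces $\beta>\gamma$ here.

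For the second inequality, let $\bar\tau_{ABC}$ attain the infimum for $I_\beta^{\flat,\uparrow\uparrow}(AC:B)_\rho$. Its objective is
\begin{align}
\frac{\beta}{1-\beta}D(\bar\tau_{ABC}\|\rho_{ABC})+D(\bar\tau_{AC}\|\rho_{AC})+D(\bar\tau_B\|\rho_B)+I(AC:B)_{\bar\tau}.
\end{align}
We evaluate Corollary~\ref{cor: variational form} at $\bar\tau$. After inserting the mutual-information chain rule, the terms $D(\bar\tau_{AC}\|\rho_{AC})$ cancel. The relation $\frac{\beta}{\beta-1}=\frac{\alpha}{\alpha-1}+\frac{1}{\gamma-1}$ gives $\frac{\alpha}{1-\alpha}-\frac{\beta}{1-\beta}=-\frac{1}{1-\gamma}$. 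We then obtain
\begin{align}
I_\alpha^{\flat}(A:B|C)_\rho\le I_\beta^{\flat,\uparrow\uparrow}(AC:B)_\rho-\Big(\frac{1}{1-\gamma}D(\bar\tau_{ABC}\|\rho_{ABC})-D(\bar\tau_{BC}\|\rho_{BC})+D(\bar\tau_C\|\rho_C)+D(\bar\tau_B\|\rho_B)+I(C:B)_{\bar\tau}\Big).
\end{align}
Writing $\frac{1}{1-\gamma}=\frac{\gamma}{1-\gamma}+1$, the bracket is at least
\begin{align}
\frac{\gamma}{1-\gamma}D(\bar\tau_{BC}\|\rho_{BC})+D(\bar\tau_C\|\rho_C)+D(\bar\tau_B\|\rho_B)+I(C:B)_{\bar\tau}.
\end{align}
This lower bound uses data processing under $\Tr_A$, applied to both $\frac{\gamma}{1-\gamma}D(\bar\tau_{ABC}\|\rho_{ABC})$ and $D(\bar\tau_{ABC}\|\rho_{ABC})-D(\bar\tau_{BC}\|\rho_{BC})\ge0$. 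The resulting expression is exactly the objective of Lemma~\ref{cor:LE_mutual} for $I_\gamma^{\flat,\uparrow\uparrow}(C:B)_\rho$ evaluated at $\bar\tau_{BC}$, so it is bounded below by that infimum.

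The main obstacle is bookkeeping in the $\gamma>1$ and $\alpha,\beta>1$ cases, where the signs of the coefficients change. There we choose the optimizer for the term on the appropriate side. For instance, with $\alpha,\beta<1<\gamma$ the $\gamma$-quantity is a supremum, and each data-processing step must be checked to point in the needed direction after the coefficient changes sign. Where a direction fails, we instead start from the optimizer of $I_\alpha$ and bound the other two terms. The restriction $\alpha\le2$ is needed only so that Corollary~\ref{cor: variational form triply optimized} applies to $I^{\flat,*}_\alpha$. Attainment of all optimizers follows from Remark~\ref{rem: full-support CE and MI}.
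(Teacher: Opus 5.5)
Your proposal is correct and follows the paper's proof essentially step for step. You take the optimizer of the $\beta$-quantity from Lemma~\ref{cor:LE_mutual} and evaluate the $\alpha$ variational form (Corollary~\ref{cor: variational form triply optimized} or~\ref{cor: variational form}) there. You then insert the parameter relation and the chain rule $I(A:B|C)=I(AC:B)-I(C:B)$, and bound the leftover bracket below by the $\gamma$-quantity using data processing under $\Tr_A$.

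Your extra remark correctly fills in what the paper leaves as ``a similar strategy.'' In some sign cases one must instead start from the optimizer of $I_\alpha$. For $\gamma>1$ in the second inequality, you should split $\frac{1}{1-\gamma}D(\bar\tau_{ABC}\|\rho_{ABC})-D(\bar\tau_{BC}\|\rho_{BC})$ as $\frac{1}{1-\gamma}\bigl(D(\bar\tau_{ABC}\|\rho_{ABC})-D(\bar\tau_{BC}\|\rho_{BC})\bigr)+\frac{\gamma}{1-\gamma}D(\bar\tau_{BC}\|\rho_{BC})$, not as $\frac{\gamma}{1-\gamma}+1$.
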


\begin{proof}
We only prove the statement for the case where $\alpha,\beta,\gamma<1$ and note that other cases follow a similar strategy.
    Let $\tau^\star_{ABC}$ be such that 
\begin{align}
        I_\beta^{\flat,\downarrow\downarrow}(AC:B)_\rho =\frac{\beta}{1-\beta}D(\tau^\star_{ABC}\|\rho_{ABC})+I(AC:B)_{\tau^\star}.
    \end{align}
    Then, we have
\begin{align}\label{eq:QCMI_cond_unopt_2 p}
        I_\alpha^{\flat,*}(A:B|C)_\rho&\leq\frac{\alpha}{1-\alpha}D(\tau^\star_{ABC}\| \rho_{ABC})+I(A:B|C)_{\tau^\star} \\
        &=\left(\frac{\beta}{1-\beta}-\frac{\gamma}{1-\gamma}\right)D(\tau^\star_{ABC}\| \rho_{ABC})+I(AC:B)_{\tau^\star}-I(C:B)_{\tau^\star} \\
        &=\left(\frac{\beta}{1-\beta}D(\tau^\star_{ABC}\| \rho_{ABC})+I(AC:B)_{\tau^\star}\right)-\left(\frac{\gamma}{1-\gamma}D(\tau^\star_{ABC}\| \rho_{ABC})+I(C:B)_{\tau^\star}\right) \\
        &\leq I_\beta^{\flat,\downarrow\downarrow}(AC:B)_{\rho}-\left(\frac{\gamma}{1-\gamma}D(\tau^\star_{BC}\| \rho_{BC})+I(C:B)_{\tau^\star}\right) \\
        &\leq I_\beta^{\flat,\downarrow\downarrow}(AC:B)_{\rho}-\inf_{\tau_{BC}}\left\{\frac{\gamma}{1-\gamma}D(\tau_{BC}\| \rho_{BC})+I(C:B)_{\tau}\right\} \\
        &= I_\beta^{\flat,\downarrow\downarrow}(AC:B)_{\rho}  - I_\gamma^{\flat,\downarrow\downarrow}(C:B)_{\rho},
    \end{align}
   The first line is an application of the variational representation. The second
line follows by inserting the prescribed relation among the parameters and
using the chain rule \(I(A:B|C)_\rho = I(AC:B)_\rho - I(C:B)_\rho\). The
fourth line then follows from the DPI of relative entropy under the
partial trace, applied to \(D(\tau^\star_{ABC}\| \rho_{ABC})\).

    The proof of the second chain rule follows from similar steps as above. Consider $\bar{\tau}_{ABC}$ to be such that
    \begin{align}
        I_\beta^{\flat,\uparrow\uparrow}(AC:B)_\rho =\frac{\beta}{1-\beta}D(\bar{\tau}_{ABC}\|\rho_{ABC})+D(\bar{\tau}_{AC}\|\rho_{AC})+D(\bar{\tau}_{B}\|\rho_{B})+I(AC:B)_{\bar{\tau}}.
    \end{align}
    Then, 
    \begin{align}
I_\alpha^\flat(A:B|C)_\rho&\leq\frac{\alpha}{1-\alpha}D(\bar{\tau}_{ABC}\|\rho_{ABC})+D(\bar{\tau}_{AC}\|\rho_{AC})+D(\bar{\tau}_{BC}\|\rho_{BC})-D(\bar{\tau}_C\|\rho_C)+I(A:B|C)_{\bar{\tau}} \\
        &=\left(\frac{\beta}{1-\beta}D(\bar{\tau}_{ABC}\|\rho_{ABC})+D(\bar{\tau}_{AC}\|\rho_{AC})+D(\bar{\tau}_B\|\rho_B)+I(AC:B)_{\bar{\tau}}\right) \notag \\
        &\quad-\left(\frac{1}{1-\gamma}D(\bar{\tau}_{ABC}\|\rho_{ABC})-D(\bar{\tau}_{BC}\|\rho_{BC})+D(\bar{\tau}_{C}\|\rho_{C})+D(\bar{\tau}_{B}\|\rho_{B})+I(C:B)_{\bar{\tau}}\right) \\
        &\leq I_\beta^{\flat,\uparrow\uparrow}(AC:B)_\rho-\left(\frac{\gamma}{1-\gamma}D(\bar{\tau}_{BC}\|\rho_{BC})+D(\bar{\tau}_C\|\rho_C)+D(\bar{\tau}_B\|\rho_B)+I(C:B)_{\bar{\tau}}\right) \\
        &\leq I_\beta^{\flat,\uparrow\uparrow}(AC:B)_\rho-\inf_{\tau_{BC}}\left\{\frac{\gamma}{1-\gamma}D(\tau_{BC}\|\rho_{BC})+D(\tau_C\|\rho_C)+D(\tau_B\|\rho_B)+I(C:B)_{\tau}\right\} \\
        &=I_\beta^{\flat,\uparrow\uparrow}(AC:B)_\rho-I_\gamma^{\flat,\uparrow\uparrow}(C:B)_\rho.
    \end{align}
    The first line follows from the variational expression. The second line is
obtained by using the relation among the parameters together with the chain
rule for the QCMI in terms of mutual informations. The second inequality then
follows from an application of data processing.
\end{proof}

\Needspace{12\baselineskip}
\subsection{Chain rules involving conditional mutual informations}

\begin{lemma}
\label{lem: chain rules Renyi QCMI}
Let \(\rho_{ABCD}\) be a quantum state, and let
\(\alpha,\beta,\gamma \in (0,1)\cup (1,\infty)\). Suppose that
\((\alpha-1)(\beta-1)>0\).
If \(\gamma\in (0,1)\), it holds that
\begin{alignat}{2}
	I_\alpha^{\flat,*}(A:B|CD)_\rho
	&\leq I_\beta^{\flat,*}(A:BC|D)_\rho
	-I_\gamma^{\flat,*}(A:C|D)_\rho
	&\;\;&\text{when}\quad
	\frac{\beta}{\beta-1}
	=\frac{\alpha}{\alpha-1}+\frac{\gamma}{\gamma-1},
	\;\;\; \text{and}\;\; \alpha,\beta,\gamma\leq2,
	\\
	I_\alpha^{\flat}(A:B|CD)_\rho
	&\leq I_\beta^{\flat}(A:BC|D)_\rho
	-I_\gamma^{\flat}(A:C|D)_\rho
	&\;\;\;&\text{when}\quad
	\frac{\beta}{\beta-1}
	=\frac{\alpha}{\alpha-1}+\frac{1}{\gamma-1}.
\end{alignat}
If instead \(\gamma>1\), then the inequalities are reversed.
\end{lemma}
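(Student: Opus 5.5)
We follow the template of the two preceding chain-rule lemmas and treat only the case $\alpha,\beta,\gamma<1$; the cases with $\gamma>1$ (and hence $\alpha,\beta>1$) are handled identically, with suprema and infima swapped and all inequalities reversed. The basic tool is the QCMI chain rule applied to an arbitrary auxiliary state $\tau_{ABCD}$,
\begin{align}
I(A:B|CD)_\tau=I(A:BC|D)_\tau-I(A:C|D)_\tau ,
\end{align}
together with the variational forms in Corollary~\ref{cor: variational form} and Corollary~\ref{cor: variational form triply optimized}. In each case we choose $\tau$ to be an optimizer for the $\beta$-quantity, which exists by Remark~\ref{rem: full-support CE and MI}.

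For the triply optimized inequality, let $\tau^\star_{ABCD}$ attain
\begin{align}
I^{\flat,*}_\beta(A:BC|D)_\rho=\frac{\beta}{1-\beta}D(\tau^\star_{ABCD}\|\rho_{ABCD})+I(A:BC|D)_{\tau^\star}.
\end{align}
Evaluating the variational infimum for $I^{\flat,*}_\alpha(A:B|CD)_\rho$ at $\tau^\star$ and using the relation $\frac{\alpha}{1-\alpha}=\frac{\beta}{1-\beta}-\frac{\gamma}{1-\gamma}$, we split the bound into the $\beta$-objective at $\tau^\star$ minus
\begin{align}
\frac{\gamma}{1-\gamma}D(\tau^\star_{ABCD}\|\rho_{ABCD})+I(A:C|D)_{\tau^\star}.
\end{align}
Data processing under $\Tr_B$ lets us replace $D(\tau^\star_{ABCD}\|\rho_{ABCD})$ by $D(\tau^\star_{ACD}\|\rho_{ACD})$. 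This makes the subtracted term smaller, so the bound only gets weaker in the correct direction. The resulting expression is at least the infimum over $\tau_{ACD}$, which equals $I^{\flat,*}_\gamma(A:C|D)_\rho$. The constraints $\alpha,\beta,\gamma\le2$ are needed only so that Corollary~\ref{cor: variational form triply optimized} applies in the reversed case.

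For the unoptimized inequality we use the relation $\frac{\alpha}{1-\alpha}=\frac{\beta}{1-\beta}-\frac{1}{1-\gamma}$. Writing $\frac{1}{1-\gamma}=\frac{\gamma}{1-\gamma}+1$, the extra unit coefficient must absorb the mismatch between the marginal terms. Take $\bar\tau$ optimal for $I^\flat_\beta(A:BC|D)_\rho$, whose objective contains
\begin{align}
D(\bar\tau_{AD}\|\rho_{AD})+D(\bar\tau_{BCD}\|\rho_{BCD})-D(\bar\tau_D\|\rho_D).
\end{align}
The $\alpha$-objective at $\bar\tau$ contains
\begin{align}
D(\bar\tau_{ACD}\|\rho_{ACD})+D(\bar\tau_{BCD}\|\rho_{BCD})-D(\bar\tau_{CD}\|\rho_{CD}).
\end{align}
Subtracting the two, the remainder that must be bounded below by the $\gamma$-objective is
\begin{align}
\frac{1}{1-\gamma}D(\bar\tau_{ABCD}\|\rho_{ABCD})+D(\bar\tau_{AD}\|\rho_{AD})-D(\bar\tau_{ACD}\|\rho_{ACD})+D(\bar\tau_{CD}\|\rho_{CD})-D(\bar\tau_D\|\rho_D)+I(A:C|D)_{\bar\tau}.
\end{align}
We bound $\frac{1}{1-\gamma}D(\bar\tau_{ABCD}\|\rho)\ge\frac{\gamma}{1-\gamma}D(\bar\tau_{ACD}\|\rho_{ACD})+D(\bar\tau_{ACD}\|\rho_{ACD})$ by data processing. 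The two copies of $D(\bar\tau_{ACD}\|\rho_{ACD})$ then cancel, leaving exactly the $\gamma$-objective of Corollary~\ref{cor: variational form} for $I^\flat_\gamma(A:C|D)$ evaluated at $\bar\tau_{ACD}$:
\begin{align}
\frac{\gamma}{1-\gamma}D(\bar\tau_{ACD}\|\rho_{ACD})+D(\bar\tau_{AD}\|\rho_{AD})+D(\bar\tau_{CD}\|\rho_{CD})-D(\bar\tau_D\|\rho_D)+I(A:C|D)_{\bar\tau}.
\end{align}
This is at least $I^\flat_\gamma(A:C|D)_\rho$.

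The main obstacle is this bookkeeping of the marginal relative-entropy terms in the unoptimized case. The point to check carefully is that the single unit of $D(\bar\tau_{ABCD}\|\rho_{ABCD})$ coming from $\frac{1}{1-\gamma}$ is exactly what is needed to cancel $-D(\bar\tau_{ACD}\|\rho_{ACD})$, with data processing applied in the correct direction. In the reversed case $\gamma>1$, the quantity $\frac{1}{\gamma-1}$ is positive, the suprema become infima, and every inequality flips. For the $\beta$-quantity in the reversed optimized case, we would first confirm that $\beta\le 2$ and that the relevant optimizer is attained.
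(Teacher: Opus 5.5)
Your argument for $\alpha,\beta,\gamma<1$ is correct and is the paper's proof. In both parts you evaluate the $\alpha$-objective at an optimizer of the $\beta$-quantity, split it using the QCMI chain rule and the parameter relation, and apply data processing under $\operatorname{Tr}_B$ to the $\gamma$-part. In the unoptimized case you use $\frac{1}{1-\gamma}=\frac{\gamma}{1-\gamma}+1$ to cancel $-D(\bar\tau_{ACD}\|\rho_{ACD})$, and then bound by the $\gamma$-infimum.

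The gap is in how you dispose of the remaining cases. The parenthetical ``$\gamma>1$ (and hence $\alpha,\beta>1$)'' is false:
\begin{itemize}
\item $\alpha=3/4$, $\beta=1/2$, $\gamma=2$ satisfies $\frac{\beta}{\beta-1}=\frac{\alpha}{\alpha-1}+\frac{\gamma}{\gamma-1}$ (namely $-1=-3+2$), with $(\alpha-1)(\beta-1)>0$ and $\alpha,\beta,\gamma\le2$;
\item $\alpha=3/2$, $\beta=2$, $\gamma=1/2$ gives $2=3-1$;
\item $\alpha=3/4$, $\beta=1/2$, $\gamma=3/2$ satisfies the unoptimized relation.
\end{itemize}
So all four sign patterns occur. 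In the two mixed patterns your template does not work. For $\alpha,\beta<1<\gamma$ you need a \emph{lower} bound on $I^{\flat,*}_\alpha(A:B|CD)_\rho$. Evaluating its variational infimum at the $\beta$-optimizer only gives an upper bound.

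The fix is to evaluate at an optimizer $\tau^\star$ of the $\alpha$-quantity, which exists by Remark~\ref{rem: full-support CE and MI}:
\begin{align}
I^{\flat,*}_\alpha(A:B|CD)_\rho
=\Big[\tfrac{\beta}{1-\beta}D(\tau^\star_{ABCD}\|\rho_{ABCD})+I(A:BC|D)_{\tau^\star}\Big]
-\Big[\tfrac{\gamma}{1-\gamma}D(\tau^\star_{ABCD}\|\rho_{ABCD})+I(A:C|D)_{\tau^\star}\Big].
\end{align}
The first bracket is at least $I^{\flat,*}_\beta(A:BC|D)_\rho$, because that quantity is an infimum. For the second bracket, $\frac{\gamma}{1-\gamma}<0$, so data processing bounds it above by its value with $D(\tau^\star_{ACD}\|\rho_{ACD})$ in place of $D(\tau^\star_{ABCD}\|\rho_{ABCD})$. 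That value is in turn at most the supremum $I^{\flat,*}_\gamma(A:C|D)_\rho$, using $\gamma\le 2$. This gives the reversed inequality.

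The case $\alpha,\beta>1>\gamma$, and both mixed unoptimized cases, work the same way with the $\alpha$-optimizer. Only $\alpha,\beta,\gamma>1$ is the literal mirror image of your argument with the $\beta$-optimizer. The paper itself writes out only $\alpha,\beta,\gamma<1$ and says the other cases follow a similar strategy. So what is missing is a correct case split, not a new idea, but as written your reduction relies on a false claim.
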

\begin{proof}
We only prove the statement for the case where $\alpha,\beta,\gamma<1$ and note that other cases follow a similar strategy.
    Let $\tau_{ABCD}^\star$ be such that 
\begin{align}\label{eq:tau_opt}
    I_\beta^{\flat,*}(A:BC|D)_\rho=\frac{\beta}{1-\beta}D(\tau^\star_{ABCD}\|\rho_{ABCD})+I(A:BC| D)_{\tau^\star}.
    \end{align}
    Then, with similar reasoning as the previous lemma, we have
\begin{align}\label{eq:QCMI_cond_unopt_full p1}
        I_\alpha^{\flat,*}(A:B|CD)_\rho&\leq\frac{\alpha}{1-\alpha}D(\tau^\star_{ABCD}\|\rho_{ABCD})+I(A:B|CD)_{\tau^\star} \\
        &=\left(\frac{\beta}{1-\beta}-\frac{\gamma}{1-\gamma}\right)D(\tau^\star_{ABCD}\| \rho_{ABCD})+I(A:BC|D)_{\tau^\star}-I(A:C|D)_{\tau^\star} \\
        &=\left(\frac{\beta}{1-\beta}D(\tau^\star_{ABCD}\| \rho_{ABCD})+I(A:BC|D)_{\tau^\star}\right)\notag \\
        & \qquad \qquad-\left(\frac{\gamma}{1-\gamma}D(\tau^\star_{ABCD}\| \rho_{ABCD})+I(A:C|D)_{\tau^\star} \right) \\
        &\leq  I_\beta^{\flat,*}(A:BC|D)_\rho-\left(\frac{\gamma}{1-\gamma}D(\tau^\star_{ACD}\| \rho_{ACD})+I(A:C|D)_{\tau^\star} \right) \\
        &\leq I_\beta^{\flat,*}(A:BC|D)_\rho-\inf_{\tau_{ACD}}\left\{\frac{\gamma}{1-\gamma}D(\tau_{ACD}\| \rho_{ACD})+I(A:C|D)_{\tau} \right\} \\
        &=I_\beta^{\flat,*}(A:BC|D)_\rho-I_\gamma^{\flat,*}(A:C|D)_\rho.
    \end{align}
        The proof of the second relation follows from similar steps as above. Consider $\bar{\tau}_{ABCD}$ to be such that
    \begin{multline}
        I_\beta^\flat(A:BC|D)_\rho=\\
        \frac{\beta}{1-\beta}D(\bar{\tau}_{ABCD}\|\rho_{ABCD})+D(\bar{\tau}_{AD}\|\rho_{AD})+D(\bar{\tau}_{BCD}\|\rho_{BCD})-D(\bar{\tau}_{D}\|\rho_{D})+I(A:BC|D)_{\bar{\tau}}.
    \end{multline}
    Then, with a similar reasoning as the previous lemma, we have
    \begin{align}
        &I_\alpha^\flat(A:B|CD)_\rho \notag \\
        &\leq\frac{\alpha}{1-\alpha}D(\bar{\tau}_{ABCD}\|\rho_{ABCD})+D(\bar{\tau}_{ACD}\|\rho_{ACD})+D(\bar{\tau}_{BCD}\|\rho_{BCD})-D(\bar{\tau}_{CD}\|\rho_{CD})+I(A:B|CD)_{\bar{\tau}} \\
        &=\left(\frac{\beta}{1-\beta}D(\bar{\tau}_{ABCD}\|\rho_{ABCD})+D(\bar{\tau}_{AD}\|\rho_{AD})+D(\bar{\tau}_{BCD}\|\rho_{BCD})-D(\bar{\tau}_D\|\rho_D)+I(A:BC|D)_{\bar{\tau}}\right)\notag  \\
        &\quad-\bigg(\frac{1}{1-\gamma}D(\bar{\tau}_{ABCD}\|\rho_{ABCD})-D(\bar{\tau}_{ACD}\|\rho_{ACD})+D(\bar{\tau}_{AD}\|\rho_{AD})\notag\\
        & \qquad \qquad \qquad \qquad \qquad\qquad \qquad +D(\bar{\tau}_{CD}\|\rho_{CD})-D(\bar{\tau}_{D}\|\rho_{D})+I(A:C|D)_{\bar{\tau}}\bigg) \\
        &\leq  I_\beta^\flat(A:BC|D)_\rho-
        \Big(\frac{\gamma}{1-\gamma}D(\bar{\tau}_{ACD}\|\rho_{ACD})+D(\bar{\tau}_{AD}\|\rho_{AD})\notag \\
        & \qquad\qquad\qquad \qquad\qquad+D(\bar{\tau}_{CD}\|\rho_{CD})-D(\bar{\tau}_{D}\|\rho_{D})+I(A:C|D)_{\bar{\tau}}\Big)\\
        &\leq I_\beta^\flat(A:BC|D)_\rho\notag \\
        &\quad -\inf_{\tau_{ACD}}\left\{\frac{\gamma}{1-\gamma}D(\tau_{ACD}\|\rho_{ACD})+D(\tau_{AD}\|\rho_{AD})+D(\tau_{CD}\|\rho_{CD})-D(\tau_{D}\|\rho_{D})+I(A:C|D)_{\tau}\right\} \\
        &=I_\beta^\flat(A:BC|D)_\rho-I_\gamma^\flat(A:C|D)_\rho.
    \end{align}
\end{proof}

\begin{remark}
   The chain rules established in this section are tight in the limit
\(\alpha \to 1\). Indeed, in this limit they reduce to the usual chain rules
for the QCMI. Let us illustrate this for the first chain rule in
Lemma~\ref{lem: chain rules Renyi QCMI}.

Fix \(0<\varepsilon<1\), and choose
\(\alpha=1-\varepsilon\), \(\gamma=1-\varepsilon\), and
\(\beta=1-\varepsilon/(2-\varepsilon)\). These parameters satisfy the required
relation among the parameters. Taking the limit
\(\varepsilon \to 0\), the corresponding inequality gives
\begin{align}
\label{eq:vN_full_upper}
    I(A:B|CD)_\rho
    \leq
    I(A:BC|D)_\rho
    -
    I(A:C|D)_\rho .
\end{align}
To obtain the reverse inequality, choose instead
\(\alpha=1+\varepsilon\), \(\gamma=1+\varepsilon\), and
\(\beta=1+\varepsilon/(2+\varepsilon)\). Passing again to the limit
\(\varepsilon \to 0\), we obtain
\begin{align}
\label{eq:vN_full_lower}
    I(A:B|CD)_\rho
    \geq
    I(A:BC|D)_\rho
    -
    I(A:C|D)_\rho .
\end{align}
Combining~\eqref{eq:vN_full_upper} and~\eqref{eq:vN_full_lower} yields the
standard QCMI chain rule
\begin{align}
    I(A:B|CD)_\rho
    =
    I(A:BC|D)_\rho
    -
    I(A:C|D)_\rho .
\end{align}
\end{remark}

\begin{remark}
We note that the parameter relations can be written in the unified form
\begin{align}
\frac{1}{1-\alpha}
=
\frac{1}{1-\beta}
+\mu
-
\frac{1}{1-\gamma},
\end{align}
where \(\mu=1\) for the optimized quantities and \(\mu=0\) for the non-optimized quantities. An analogous relation for classical conditional entropies appears in~\cite[Eq.~(313)]{rubboli2024quantum}.
\end{remark}

\section{Duality relations}
\label{sec: Duality}
For a pure state \(\rho_{ABCD}\), the quantum conditional mutual information satisfies the duality relation
\begin{align}
I(A:B | C)_\rho = I(A:B | D)_\rho .
\end{align}
Thus, the QCMI is self-dual: under the duality transformation, it is mapped back to a quantity of the same form. In particular, the dual of the QCMI is again a conditional mutual information. This duality has an operational interpretation in terms of time reversing the state redistribution protocol \cite{DevetakYard2008ExactCost}, and it has been explored in other operational contexts related to information measures inspired by conditional mutual information~\cite{PhysRevA.94.022310}.

A prominent class of quantities for which duality has been widely studied is quantum conditional entropies. When defined by axioms such as monotonicity under an appropriate class of quantum channels and additivity under tensor products, the duals of several known quantum conditional entropies are themselves quantum conditional entropies; see, for instance, \cite{Gour2025QuantumResourceTheories,JiGourWilde2025WorkCosts,rubboli2024quantum}. It is therefore natural to ask whether the quantity obtained by applying this duality transformation to the LE-CMI can itself be interpreted as a conditional mutual information. A basic necessary requirement for such an interpretation is data processing under local channels. In the following, we first derive the dual expression associated with the LE-CMI. We then show, by means of an explicit counterexample, that this dual quantity fails to satisfy the data-processing inequality under local channels. This raises the question of whether closure under duality, commonly required for conditional entropies, is equally natural for conditional mutual informations.

We begin with an auxiliary lemma.
\begin{lemma}
\label{lem:schmidt-compression-transpose}
Let $|\rho\rangle_{RD}$ be a pure state with Schmidt decomposition
\begin{align}
    |\rho\rangle_{RD}
    =
    \sum_{i=1}^{r}
    \sqrt{\lambda_i}\,
    |i\rangle_R |i\rangle_D,
    \qquad
    \lambda_i>0.
    \label{eq:schmidt-RD}
\end{align}
Let $V_{D\to R}$ be the Schmidt isometry from $\supp(\rho_D)$ onto $\supp(\rho_R)$ defined by
$V_{D\to R}|i\rangle_D=|i\rangle_R$ for all $i=1,\dots,r$. Then, for every
Hermitian operator $X_R$,
\begin{align}
    \bigl(V_{D\to R}^{\dagger}X_RV_{D\to R}\bigr)^T
    =
    \rho_D^{-1/2}
    \operatorname{Tr}_R\!\left[
        (X_R\otimes I_D)\rho_{RD}
    \right]
    \rho_D^{-1/2},
    \label{eq:compression-transpose}
\end{align}
where the transpose is taken in the Schmidt basis
$\{|i\rangle_D\}_{i=1}^{r}$.
\end{lemma}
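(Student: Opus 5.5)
The plan is to compute both sides of \eqref{eq:compression-transpose} in the Schmidt bases and check that their matrix entries agree. Write $X_R=\sum_{j,k}x_{jk}|j\rangle\langle k|_R$ in an orthonormal basis of $R$ that extends $\{|i\rangle_R\}_{i=1}^r$. Entries with $j>r$ or $k>r$ never survive: on the left they are killed by the compression $V^\dagger X_R V$, since $V$ maps only into $\operatorname{span}\{|i\rangle_R\}_{i\le r}=\supp(\rho_R)$. On the right they are killed because $\rho_{RD}$ is supported on $\supp(\rho_R)\otimes\supp(\rho_D)$. Both sides are operators on $\supp(\rho_D)$, with $\rho_D^{-1/2}$ the inverse square root taken on the support, as everywhere in the paper.

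For the left-hand side, $V^\dagger X_R V=\sum_{j,k\le r}x_{jk}|j\rangle\langle k|_D$. Taking the transpose in the Schmidt basis gives $\sum_{j,k\le r}x_{jk}|k\rangle\langle j|_D$.

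For the right-hand side, first expand
\[
\rho_{RD}=\sum_{i,l}\sqrt{\lambda_i\lambda_l}\,|i\rangle\langle l|_R\otimes|i\rangle\langle l|_D .
\]
Then
\[
\operatorname{Tr}_R\!\left[(X_R\otimes I_D)\rho_{RD}\right]=\sum_{i,l}\sqrt{\lambda_i\lambda_l}\,\langle l|X_R|i\rangle\,|i\rangle\langle l|_D=\sum_{i,l}\sqrt{\lambda_i\lambda_l}\,x_{li}\,|i\rangle\langle l|_D .
\]
Since $\rho_D=\sum_i\lambda_i|i\rangle\langle i|_D$, conjugating by $\rho_D^{-1/2}$ divides the $(i,l)$ entry by $\sqrt{\lambda_i\lambda_l}$. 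This leaves $\sum_{i,l}x_{li}|i\rangle\langle l|_D$, which matches the left-hand side after relabelling $(i,l)=(k,j)$.

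The main obstacle is bookkeeping rather than any real difficulty. One must keep the order of indices straight when taking the partial trace, since that order is where the transpose comes from, and one must interpret $\rho_D^{-1/2}$ on the support of $\rho_D$. Hermiticity of $X_R$ is not needed for the identity itself; it only ensures both sides are Hermitian. The identity is linear in $X_R$, so checking it on the matrix units $|j\rangle\langle k|_R$ already suffices.
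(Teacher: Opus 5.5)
Your proposal is correct and takes the same route as the paper: you expand $\rho_{RD}$ in the Schmidt basis and compute the partial trace, whose $(i,l)$ entry is $\sqrt{\lambda_i\lambda_l}\,\langle l|X_R|i\rangle_R$; the paper recognizes this as $\rho_D^{1/2}\bigl(V_{D\to R}^{\dagger}X_RV_{D\to R}\bigr)^T\rho_D^{1/2}$ and conjugates by $\rho_D^{-1/2}$ on the support. The differences are cosmetic: you compare matrix entries of both sides directly rather than factoring first, and your remarks about entries outside $\supp(\rho_R)$ are correct but not needed.
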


\begin{proof}
The reduced states are $\rho_R
    =
    \sum_i\lambda_i |i\rangle\!\langle i|_R$ and $\rho_D
    =
    \sum_i\lambda_i |i\rangle\!\langle i|_D$.
Moreover,
\begin{align}
    \rho_{RD}
    =
    \sum_{i,j}
    \sqrt{\lambda_i\lambda_j}\,
    |i\rangle\!\langle j|_R
    \otimes
    |i\rangle\!\langle j|_D .
\end{align}
Hence
\begin{align}
    \operatorname{Tr}_R\!\left[
        (X_R\otimes I_D)\rho_{RD}
    \right]
    &=
    \sum_{i,j}
    \sqrt{\lambda_i\lambda_j}\,
    \langle j|X_R|i\rangle_R\,
    |i\rangle\!\langle j|_D
     =
    \rho_D^{1/2}
    \bigl(
        V_{D\to R}^{\dagger}X_RV_{D\to R}
    \bigr)^T
    \rho_D^{1/2}.
    \label{eq:partial-trace-index}
\end{align}
Multiplying on the left and right by $\rho_D^{-1/2}$ proves the claim.
\end{proof}

\subsection{Duality relation for the non-optimized log-Euclidean conditional mutual information}

Next, we prove the duality relation.
We define 
\begin{align}
    J_\alpha(A:B| C)_\rho
    \coloneqq
    \frac{1}{\alpha-1}
    \log
    \operatorname{Tr}\exp\!\left[
        \alpha\log\rho_C
        +
        (1-\alpha)G_C(\rho_{ABC})
    \right],
\end{align}
where
\begin{align}
    G_C(\rho_{ABC})
    \coloneqq
    \rho_C^{-1/2}
    \Bigl(
        \operatorname{Tr}_B(\rho_{BC}\log\rho_{BC})
        +
        \operatorname{Tr}_A(\rho_{AC}\log\rho_{AC})
        -
        \operatorname{Tr}_{AB}(\rho_{ABC}\log\rho_{ABC})
    \Bigr)
    \rho_C^{-1/2}.
\end{align}
Logarithms and inverse powers act on their supports; the trace is
taken on $\supp(\rho_C)$.

\begin{theorem}[Duality relation]
\label{thm:duality}
Let $\rho_{ABCD}$ be a pure state and let $\alpha\in (0,1)\cup(1,\infty)$. Then,
\begin{align}
    I^\flat_\alpha(A:B| C)_\rho
    =
    J_\alpha(A:B| D)_\rho .
\end{align}
\end{theorem}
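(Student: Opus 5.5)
The plan is to reduce both sides to a trace exponential of a single operator on $\supp(\rho_{ABC})\cong\supp(\rho_D)$, and then transport everything across the purification using Lemma~\ref{lem:schmidt-compression-transpose} with $R=ABC$. Write
\begin{align}
K_{ABC}\coloneqq\log\rho_{AC}+\log\rho_{BC}-\log\rho_C ,
\end{align}
let $P$ be the projector onto $\supp(\rho_{ABC})$, and let $V\coloneqq V_{D\to ABC}$ be the Schmidt isometry.

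\emph{Step 1 (support reduction).} I would first show that the limiting definition~\eqref{eq:flat-LE-renyi-CMI-limit} gives
\begin{align}
I_\alpha^\flat(A:B|C)_\rho=\frac{1}{\alpha-1}\log\Tr\exp\!\big(\alpha\log\rho_{ABC}+(1-\alpha)PK_{ABC}P\big),
\end{align}
where the trace is over $\supp(\rho_{ABC})$. Here $K_{ABC}$ is built from the support-restricted logarithms of the marginals. This is justified because $\supp\rho_{ABC}\subseteq\supp(\rho_{AC}\otimes I_B)\cap\supp(\rho_{BC}\otimes I_A)$, so the $\varepsilon$-regularized $K$ converges on $P$. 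In addition, $\alpha\log(\rho_{ABC}+\delta)$ tends to $-\infty$ off the support. The standard limit for trace exponentials with such a term (a Hansen/Lie--Trotter type compression statement) then yields the compressed expression. I expect this step to be the main obstacle, since the off-support blocks of $K$ and the order of the $\delta,\varepsilon$ limits must be controlled. I would first try to prove it via the variational formula, Corollary~\ref{cor: variational form}, restricted to $\tau\ll\rho_{ABC}$. On that set every term depends only on $PK P$ through $\Tr[\tau K]$, and the Gibbs principle on $\supp(\rho_{ABC})$ returns the compressed trace exponential.

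\emph{Step 2 (transport to $D$).} Since $\Tr\exp(X)=\Tr\exp(V^\dagger XV)=\Tr\exp\big((V^\dagger XV)^T\big)$ for $X$ supported on $\supp(\rho_{ABC})$, it suffices to compute the transported operators term by term. First, $\big(V^\dagger\log\rho_{ABC}V\big)^T=\log\rho_D$, because $V$ maps eigenvectors to eigenvectors with the same eigenvalues and the result is diagonal in the Schmidt basis. Second, by Lemma~\ref{lem:schmidt-compression-transpose},
\begin{align}
\big(V^\dagger K_{ABC}V\big)^T=\rho_D^{-1/2}\Tr_{ABC}\!\big[(K_{ABC}\otimes I_D)\rho_{ABCD}\big]\rho_D^{-1/2}.
\end{align}

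\emph{Step 3 (identifying $G_D$).} I would evaluate each piece of $\Tr_{ABC}[(K\otimes I_D)\rho_{ABCD}]$ by a second application of Lemma~\ref{lem:schmidt-compression-transpose}, now to the bipartition $AC|BD$ of the pure state. This gives
\begin{align}
\Tr_{AC}[(\log\rho_{AC}\otimes I_{BD})\rho_{ABCD}]=\rho_{BD}^{1/2}\big(V'^\dagger\log\rho_{AC}V'\big)^T\rho_{BD}^{1/2}=\rho_{BD}\log\rho_{BD}.
\end{align}
Tracing out $B$ then yields $\Tr_B(\rho_{BD}\log\rho_{BD})$. In the same way, $\log\rho_{BC}$ contributes $\Tr_A(\rho_{AD}\log\rho_{AD})$ via the split $BC|AD$, and $\log\rho_C$ contributes $\Tr_{AB}(\rho_{ABD}\log\rho_{ABD})$ via the split $C|ABD$. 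Combining these, the transported $K$ is exactly $G_D(\rho_{ABD})$, and so
\begin{align}
I_\alpha^\flat(A:B|C)_\rho=\frac{1}{\alpha-1}\log\Tr\exp\big(\alpha\log\rho_D+(1-\alpha)G_D\big)=J_\alpha(A:B|D)_\rho .
\end{align}
The support conventions in $J_\alpha$ (trace on $\supp\rho_D$) match those of Step 1. The argument is the same for $\alpha<1$ and $\alpha>1$.
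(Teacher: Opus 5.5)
Your proposal is correct and follows essentially the same route as the paper: compress to the Schmidt support, transport via $V_{D\to ABC}$ together with the transpose, apply Lemma~\ref{lem:schmidt-compression-transpose} with $R=ABC$, and identify the three pieces on the complementary cuts. The differences are minor: you spell out the support-reduction step that the paper asserts in one line, and you obtain the identities of Lemma~\ref{lem:complement-identities} by applying Lemma~\ref{lem:schmidt-compression-transpose} a second time instead of going through Lemma~\ref{lem:log-transport}, which comes to the same computation.
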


\begin{proof}
Choose a Schmidt decomposition of $|\rho\rangle_{ABCD}$ across the cut
$ABC:D$:
\begin{align}
    |\rho\rangle_{ABCD}
    =
    \sum_{i=1}^{r}
    \sqrt{\lambda_i}\,
    |i\rangle_{ABC} |i\rangle_D .
\end{align}
Let $V_{D\to ABC}$ be the corresponding Schmidt isometry. The limiting
definitions give the first trace on the Schmidt support, with the
exponent compressed there. By
unitary invariance of this trace,
\begin{align}
    &\operatorname{Tr}\exp\!\left[
        \alpha\log\rho_{ABC}
        +
        (1-\alpha)
        \bigl(
            \log\rho_{AC}\otimes I_B
            +
            \log\rho_{BC}\otimes I_A
            -
            \log\rho_C\otimes I_{AB}
        \bigr)
    \right]
    \notag \\
    &\qquad =
    \operatorname{Tr}\exp\!\left[
        \alpha\log\rho_D
        +
        (1-\alpha)G_D(\rho_{ABD})
    \right],
    \label{eq:duality-proof}
\end{align}
where
\begin{align}
    G_D(\rho_{ABD})
    =
    \left(
        V_{D\to ABC}^{\dagger}
        \bigl(
            \log\rho_{AC}\otimes I_B
            +
            \log\rho_{BC}\otimes I_A
            -
            \log\rho_C\otimes I_{AB}
        \bigr)
        V_{D\to ABC}
    \right)^T .
\end{align}
 Here, the transpose is taken in the Schmidt basis of $D$, in which
$\log\rho_D$ is diagonal, and we use
$\operatorname{Tr}e^A=\operatorname{Tr}e^{A^T}$.

It remains to identify $G_D(\rho_{ABD})$. Applying
Lemma~\ref{lem:schmidt-compression-transpose} term by term, and then
using Lemma~\ref{lem:complement-identities}, gives
\begin{align}
    G_D(\rho_{ABD})
    =
    \rho_D^{-1/2}
    \Bigl(
        \operatorname{Tr}_B(\rho_{BD}\log\rho_{BD})
        +
        \operatorname{Tr}_A(\rho_{AD}\log\rho_{AD})
        -
        \operatorname{Tr}_{AB}(\rho_{ABD}\log\rho_{ABD})
    \Bigr)
    \rho_D^{-1/2}.
\end{align}
Substituting this expression into \eqref{eq:duality-proof} gives exactly
the exponential trace appearing in
$J_\alpha(A:B| D)_\rho$. Taking logarithms and dividing by $\alpha-1$ proves the claim.
\end{proof}

\section{Conclusion and outlook}
In this work, we introduced log-Euclidean R\'enyi conditional mutual informations and established data processing under local quantum channels for $0<\alpha\leq1$, monotonicity in the R\'enyi parameter, and convergence to the standard QCMI. The non-optimized quantity is additive for all $\alpha>0$, while the optimized quantity is additive for $\alpha\in[1/2,2]$. The optimized additivity proof uses joint convexity and fixed-point conditions for the optimizers.

We established bounds in terms of the distances to quantum Markov states, separable states, and states prepared by rotated Petz recovery maps. In particular, the lower recovery bound involving Keyl's rate function strengthens the corresponding log-fidelity bounds and can strictly improve on the measured-relative-entropy bound, although these last two bounds are not universally ordered. We also obtained a complementary upper bound involving $D^{\text{\normalfont\leftmoon}}$ for full-rank states. 

Finally, we proved chain rules with explicit parameter conditions, and a duality relation and showed that the corresponding dual quantity is not itself
a conditional mutual information, since it fails to satisfy data processing under local quantum channels. The possible use of these quantities in strong-converse problems and in R\'enyi analogues of squashed entanglement remains a subject for future work.

As in the case of relative entropies, where several quantum generalizations of the classical Rényi divergence are possible, most notably the Petz and sandwiched Rényi divergences, it is natural to ask whether alternative quantum generalizations of the classical Rényi conditional mutual information can be constructed beyond the one developed here. 
A possible route is suggested by the Golden--Thompson inequality, which relates the Petz and log-Euclidean Rényi relative entropies. It is therefore natural to ask whether a Petz-type Rényi conditional mutual information can similarly be derived from the log-Euclidean quantity introduced in this work by means of a multivariate Golden--Thompson inequality. However, in the multivariate case, as we argue below, some difficulty arises.  For two Hermitian matrices \(H_1\) and \(H_2\), one has
\begin{align}
\Tr \exp(H_1+H_2)
\leq
\Tr \exp(H_1)\exp(H_2).
\end{align}
Both sides are invariant under exchanging \(H_1\) and \(H_2\); hence, no ordering ambiguity arises. In contrast, the multivariate extension~\cite{sutter2017multivariate}
\begin{align}
\log \left\|
\exp\left(\sum_{k=1}^{n} H_k\right)
\right\|_{p}
\leq
\int_{-\infty}^{\infty} \d t\,
\beta_0(t) 
\log \left\|
\prod_{k=1}^{n}
\exp\left((1+it)H_k\right)
\right\|_{p} , 
\end{align}
holding for $p\geq 1$ (and extended to every $p>0$ in~\cite{hiai2017logmajorization}),
depends on the order of the matrices on the right-hand side, while the left-hand side is permutation invariant. Here, \(\lVert\cdot\rVert_p\) denotes the Schatten \(p\)-norm, and \(\beta_0\) is a fixed probability density.
Thus, applying a multivariate Golden--Thompson inequality to the LE-CMI, which satisfies a permutation symmetry, naturally leads to Petz-type expressions that are not permutation invariant at the operator level. This ordering dependence reflects a more fundamental difficulty: different placements of noncommuting operators inside the trace may give inequivalent quantities, and such asymmetry can obstruct DPI under local operations on both systems. These issues have also appeared in the study of Rényi conditional mutual informations~\cite{berta2015renyi} and multivariate fidelities~\cite{nuradha2025multivariate}. Although one could attempt to restore symmetry by minimizing the right hand side over all orderings, DPI arguments typically require symmetries already present at the operator level.

The recovery bounds established here provide concrete expressions arising from these trace inequalities, but do not establish data processing for a corresponding Petz or sandwiched conditional mutual information. Determining whether such constructions can satisfy the same structural properties remains a natural direction for future work.

\section{Acknowledgments}
RR thanks Daniel Stilck França for discussions on conditional independence and Thomas C. Fraser for discussions on Keyl’s rate function. RR acknowledges financial support from the ERC grant GIFNEQ 101163938. AA conducted research at the Institute for Quantum
Computing, at the University of Waterloo, which is supported by Innovation, Science, and Economic Development
Canada. Support was also provided by NSERC under the
Discovery Grants Program, Grant No. 341495. MMW
acknowledges support from the National Science
Foundation under grant nos.~2329662 and 2611810.

\section{Statement on the use of artificial intelligence}

The main results in Sections~\ref{sec: intro}-\ref{sec: Duality}, along with Appendices~\ref{app: lemmas}-\ref{app: Taylor expansion}, were derived by the authors and incorporated into the manuscript by early September 2026.

After this point, ChatGPT was used for proofreading, mathematical and bibliographic checks, and assistance with some derivations in Sections~\ref{sec: recovery bounds},~\ref{sec: Duality}, and Appendices~\ref{app:numerical-recovery}-\ref{app: counter example DPI dual}.

\bibliography{library}

\appendix

\clearpage

\section{Useful lemmas}\label{app: lemmas}

We begin with the Gibbs variational principle. See~\cite{petz1988variational} for the usual form, and see also~\cite[Lemma~1.2]{hiai1993golden} for a finite-dimensional proof.

\begin{lemma}[Gibbs variational principle]
\label{lem: Gibbs}
Let $H$ be a self-adjoint operator. Then,
\begin{align}
    -\log\Tr[\exp(-H)]
    =\min_{\tau}\{\Tr[\tau\log\tau]+\Tr[H\tau]\},
\end{align}
where the minimum is over states $\tau$. 
\end{lemma}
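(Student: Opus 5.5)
This is the standard Gibbs variational principle. I will prove it by reducing to non-negativity of the Umegaki relative entropy against the Gibbs state. Set $Z\coloneqq\Tr[\exp(-H)]>0$ and define the Gibbs state $\gamma\coloneqq \exp(-H)/Z$. This state is positive definite and hence full rank, so $\log\gamma=-H-(\log Z)I$ is well defined. Every state $\tau$ then satisfies $\tau\ll\gamma$, and $D(\tau\|\gamma)$ is finite.

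The key step is the identity
\begin{align}
D(\tau\|\gamma)=\Tr[\tau\log\tau]-\Tr[\tau\log\gamma]=\Tr[\tau\log\tau]+\Tr[H\tau]+\log Z,
\end{align}
which uses $\Tr\tau=1$. Rearranging gives
\begin{align}
\Tr[\tau\log\tau]+\Tr[H\tau]=-\log\Tr[\exp(-H)]+D(\tau\|\gamma).
\end{align}
By Klein's inequality, $D(\tau\|\gamma)\geq0$, so the objective is bounded below by $-\log Z$ for every state $\tau$. Choosing $\tau=\gamma$ gives $D(\gamma\|\gamma)=0$, so this lower bound is attained. Hence the infimum is a minimum, and the minimizer $\tau=\gamma$ is unique by faithfulness of the relative entropy.

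The only point that needs care is singular $\tau$, where $\Tr[\tau\log\tau]$ must be read on the support of $\tau$ using the convention $0\log0=0$. Because $\gamma$ is full rank, the relative-entropy identity still holds with this convention, so nothing beyond this bookkeeping is required. I do not expect a real obstacle here, since the whole argument is the single identity above combined with Klein's inequality.
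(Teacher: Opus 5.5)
Your proposal is correct and complete: the identity $\Tr[\tau\log\tau]+\Tr[H\tau]=-\log\Tr[e^{-H}]+D(\tau\|\gamma)$ with the full-rank Gibbs state $\gamma=e^{-H}/\Tr[e^{-H}]$, together with Klein's inequality and the choice $\tau=\gamma$, gives both the bound and its attainment. The paper gives no proof of its own here and only cites Petz and Hiai--Petz (Lemma~1.2); your relative-entropy argument is the standard finite-dimensional proof of this principle.
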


In the following, logarithms and inverse powers of reduced states are taken on
the corresponding supports.

\begin{lemma}
\label{lem:log-transport}
Let $|\rho\rangle_{AB}$ be pure. Then
\begin{align}
    \bigl(\log\rho_A\otimes I_B\bigr)|\rho\rangle_{AB}
    =
    \bigl(I_A\otimes\log\rho_B\bigr)|\rho\rangle_{AB} .
    \label{eq:log-transport}
\end{align}
\end{lemma}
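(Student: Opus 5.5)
The approach is to use the Schmidt decomposition of $|\rho\rangle_{AB}$ and let both sides act on it directly. First, write
\begin{align}
|\rho\rangle_{AB}=\sum_{i=1}^r\sqrt{\lambda_i}\,|i\rangle_A|i\rangle_B,\qquad \lambda_i>0,
\end{align}
with orthonormal families $\{|i\rangle_A\}$ and $\{|i\rangle_B\}$. From this, $\rho_A=\sum_i\lambda_i|i\rangle\!\langle i|_A$ and $\rho_B=\sum_i\lambda_i|i\rangle\!\langle i|_B$. These reduced states have the same nonzero spectrum, and their supports are spanned by the respective Schmidt vectors.

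Next, we evaluate each side. By the stated convention, $\log\rho_A$ is taken on its support, so $\log\rho_A=\sum_i\log\lambda_i\,|i\rangle\!\langle i|_A$, and it acts as zero on the orthogonal complement of the support. Hence $(\log\rho_A\otimes I_B)|\rho\rangle=\sum_i\sqrt{\lambda_i}\log\lambda_i\,|i\rangle_A|i\rangle_B$. The symmetric computation gives $(I_A\otimes\log\rho_B)|\rho\rangle=\sum_i\sqrt{\lambda_i}\log\lambda_i\,|i\rangle_A|i\rangle_B$. The two expressions are equal, which proves \eqref{eq:log-transport}.

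The only point that needs care is degenerate Schmidt coefficients, where the Schmidt vectors are not unique. This causes no difficulty: the spectral projector of $\rho_A$ for an eigenvalue $\lambda$ is $\sum_{i:\lambda_i=\lambda}|i\rangle\!\langle i|_A$ in any Schmidt basis, so the functional calculus is well defined. The same argument shows more generally that $(f(\rho_A)\otimes I_B)|\rho\rangle=(I_A\otimes f(\rho_B))|\rho\rangle$ for any function $f$ defined on the positive spectrum. This general form is presumably what is used later in the duality identities.
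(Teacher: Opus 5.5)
Your proof is correct and matches the paper's argument: both choose a Schmidt decomposition and observe that each side equals $\sum_i\sqrt{\lambda_i}\log(\lambda_i)\,|i\rangle_A|i\rangle_B$, with logarithms taken on the supports. Your remarks on degenerate Schmidt coefficients and on the extension to a general function $f$ are accurate but not needed for the paper's use.
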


\begin{proof}
Choose a Schmidt decomposition
\begin{align}
    |\rho\rangle_{AB}
    =
    \sum_i
    \sqrt{\lambda_i}\,
    |i\rangle_A|i\rangle_B,
    \qquad
    \lambda_i>0.
\end{align}
Both sides of \eqref{eq:log-transport} are equal to
\begin{align}
    \sum_i
    \sqrt{\lambda_i}\log(\lambda_i)\,
    |i\rangle_A |i\rangle_B ,
\end{align}
thus concluding the proof.
\end{proof}

The next identities are the concrete forms used in the main proof.

\begin{lemma}
\label{lem:complement-identities}
Let $\rho_{ABCD}$ be pure. Then
\begin{align}
    \operatorname{Tr}_{ABC}\!\left[
        \bigl(\log\rho_{AC}\otimes I_{BD}\bigr)
        \rho_{ABCD}
    \right]
    &=
    \operatorname{Tr}_B\!\left(
        \rho_{BD}\log\rho_{BD}
    \right),
    \label{eq:AC-to-BD}
    \\
    \operatorname{Tr}_{ABC}\!\left[
        \bigl(\log\rho_{BC}\otimes I_{AD}\bigr)
        \rho_{ABCD}
    \right]
    &=
    \operatorname{Tr}_A\!\left(
        \rho_{AD}\log\rho_{AD}
    \right),
    \label{eq:BC-to-AD}
    \\
    \operatorname{Tr}_{ABC}\!\left[
        \bigl(\log\rho_C\otimes I_{ABD}\bigr)
        \rho_{ABCD}
    \right]
    &=
    \operatorname{Tr}_{AB}\!\left(
        \rho_{ABD}\log\rho_{ABD}
    \right).
    \label{eq:C-to-ABD}
\end{align}
\end{lemma}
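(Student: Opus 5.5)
The plan is to apply Lemma~\ref{lem:log-transport} to suitable bipartitions of the pure state $\rho_{ABCD}$, turning each partial trace over $ABC$ into a partial trace over the complementary system. I will do \eqref{eq:AC-to-BD} in detail; \eqref{eq:BC-to-AD} and \eqref{eq:C-to-ABD} follow by relabeling.

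\textbf{Step 1 (rewrite as a trace over $AC$).} For \eqref{eq:AC-to-BD}, regard $\rho_{ABCD}$ as a pure state on $(AC)\otimes(BD)$. Since $\log\rho_{AC}$ acts only on $AC$, taking $\Tr_B$ first gives
\[
\Tr_{ABC}\bigl[(\log\rho_{AC}\otimes I_{BD})\rho_{ABCD}\bigr]=\Tr_B\Tr_{AC}\bigl[(\log\rho_{AC}\otimes I_{BD})\,|\rho\rangle\!\langle\rho|\bigr].
\]

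\textbf{Step 2 (transport the logarithm).} Lemma~\ref{lem:log-transport}, applied to the cut $AC:BD$, gives
\[
(\log\rho_{AC}\otimes I_{BD})|\rho\rangle=(I_{AC}\otimes\log\rho_{BD})|\rho\rangle.
\]
The operator $\log\rho_{BD}$ now acts on the untraced system, so it can be pulled out of $\Tr_{AC}$:
\[
\Tr_{AC}\bigl[(I_{AC}\otimes\log\rho_{BD})|\rho\rangle\!\langle\rho|\bigr]=(\log\rho_{BD})\,\rho_{BD}.
\]
Taking $\Tr_B$ of $\rho_{BD}\log\rho_{BD}$, which equals the same expression because the two operators commute, yields the right-hand side of \eqref{eq:AC-to-BD}.

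\textbf{Step 3 (remaining identities).} The same argument with the cut $BC:AD$ gives \eqref{eq:BC-to-AD}, and with the cut $C:ABD$ gives \eqref{eq:C-to-ABD}, where the final partial trace is over $AB$.

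The main point needing care is the support convention. Lemma~\ref{lem:log-transport} holds with logarithms taken on supports: $\log\rho_{AC}$ is defined on $\supp(\rho_{AC})$ and set to zero elsewhere, and $|\rho\rangle$ lies in $\supp(\rho_{AC})\otimes\supp(\rho_{BD})$. Both sides therefore only see the Schmidt components with $\lambda_i>0$, so no infinities arise. The other small check is that the one-sided action $X|\rho\rangle\!\langle\rho|$ under a partial trace is legitimate. This holds because $\Tr_{AC}[(I\otimes Y)M]=Y\,\Tr_{AC}M$ for any operator $Y$ on the remaining system.
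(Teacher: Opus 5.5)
Your proof is correct and follows the paper's argument essentially step for step: apply Lemma~\ref{lem:log-transport} across the cut $AC:BD$ (respectively $BC:AD$ and $C:ABD$), pull the transported logarithm out of the partial trace over the traced-out factor, and use that $\rho_{BD}$ commutes with $\log\rho_{BD}$. Your remarks on the support convention and on one-sided multiplication under the partial trace are accurate, and they spell out details the paper leaves implicit.
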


\begin{proof}
We prove \eqref{eq:AC-to-BD}; the other two identities follow by the
same argument with the obvious relabeling. Applying
Lemma~\ref{lem:log-transport} to the bipartition $AC:BD$ gives
\begin{align}
    \bigl(\log\rho_{AC}\otimes I_{BD}\bigr)|\rho\rangle
    =
    \bigl(I_{AC}\otimes\log\rho_{BD}\bigr)|\rho\rangle .
\end{align}
Therefore
\begin{align}
    \operatorname{Tr}_{ABC}\!\left[
        \bigl(\log\rho_{AC}\otimes I_{BD}\bigr)
        \rho_{ABCD}
    \right]
    & =
    \operatorname{Tr}_{ABC}\!\left[
        \bigl(I_{AC}\otimes\log\rho_{BD}\bigr)
        \rho_{ABCD}
    \right]
     \\
    & =
    \operatorname{Tr}_B\!\left[
        \log\rho_{BD}\,
        \operatorname{Tr}_{AC}\rho_{ABCD}
    \right]
     \\
    & =
    \operatorname{Tr}_B\!\left(
        \log\rho_{BD}\,\rho_{BD}
    \right)
     \\
    & =
    \operatorname{Tr}_B\!\left(
        \rho_{BD}\log\rho_{BD}
    \right),
\end{align}
where the last equality uses that $\rho_{BD}$ commutes with
$\log\rho_{BD}$.
\end{proof}

\section{Expansion of \texorpdfstring{$I_\alpha^\flat$}{the LE-CMI} around \texorpdfstring{$\alpha=1$}{alpha = 1}}

\label{app: Taylor expansion}

In this appendix, we derive the expansion of the LE-CMI $ I_\alpha^\flat$ around $\alpha=1$.

We first introduce some notation.
For a quantum state \(\rho\) and operators \(X,Y\), define
\begin{align}
\langle X,Y\rangle_{\rho}^{\KM}
\coloneqq 
\int_0^1 ds\,
\Tr\!\big[
\rho^s X^\dagger \rho^{1-s}Y
\big].
\end{align}
When \(\rho\) is full rank, this defines an inner product, commonly known as the Kubo--Mori, or Bogoliubov--Kubo--Mori, inner product. The associated variance is defined by
\begin{align}
\mathrm{V}_{\rho}^{\KM}(X)
\coloneqq 
\left\langle
X-\langle X\rangle_{\rho},
X-\langle X\rangle_{\rho}
\right\rangle_{\rho}^{\KM}\,,
\end{align}
where \(\langle X\rangle_{\rho}\coloneqq \Tr[\rho X]\) denotes the expectation value of \(X\) in the state \(\rho\). 
For a tripartite state \(\rho_{ABC}\), we define the conditional mutual information operator by
\begin{align}
K_{ABC}
\coloneqq 
\log\rho_{ABC}
+\log\rho_C
-\log\rho_{AC}
-\log\rho_{BC}\,.
\end{align}
Here, $K_{ABC}$ is compressed to $\supp(\rho_{ABC})$, and marginal
logarithms act on their supports. 
We then define the conditional mutual information variance as
\begin{align}
\mathrm{V}_{\KM}(A:B|C)_\rho
\coloneqq 
\mathrm{V}_{\rho_{ABC}}^{\KM}(K_{ABC}).
\end{align}

\begin{proposition}
Let \(\rho_{ABC}\) be a quantum state. Then the following expansion holds around \(\alpha=1\):
\begin{align}
I_{\alpha}^{\flat}(A:B|C)_\rho
=
I(A:B|C)_\rho
+
\frac{\alpha-1}{2}
\mathrm{V}_{\KM}(A:B|C)_\rho
+
O\!\left((\alpha-1)^2\right).
\end{align}
\end{proposition}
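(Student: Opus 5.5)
We plan to Taylor expand the function
\begin{align}
f(\alpha)\coloneqq\log\Tr\exp\!\big(\alpha\log\rho_{ABC}+(1-\alpha)H_{ABC}\big),\qquad H_{ABC}\coloneqq\log\rho_{AC}+\log\rho_{BC}-\log\rho_C,
\end{align}
to second order around $\alpha=1$, first for full-rank $\rho_{ABC}$. Since $f(1)=0$, we have $I_\alpha^\flat(A:B|C)_\rho=f(\alpha)/(\alpha-1)$. It follows that $I_\alpha^\flat=f'(1)+\tfrac{\alpha-1}{2}f''(1)+O((\alpha-1)^2)$, because $f$ is real analytic in $\alpha$ (the trace of the exponential of an affine Hermitian family is analytic and strictly positive). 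The proof of Lemma~\ref{lem: general limit} already gives $f'(1)=\Tr[\rho_{ABC}(\log\rho_{ABC}-H_{ABC})]=\Tr[\rho_{ABC}K_{ABC}]=I(A:B|C)_\rho$. It therefore remains to show that $f''(1)=\mathrm V^{\KM}_{\rho_{ABC}}(K_{ABC})$.

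For the second derivative we write $X(\alpha)=\alpha\log\rho+(1-\alpha)H$, so that $X'=K$ is constant, and set $Z(\alpha)=\Tr e^{X(\alpha)}$. Then
\begin{align}
f''=\frac{Z''}{Z}-\Big(\frac{Z'}{Z}\Big)^2 .
\end{align}
Duhamel's formula gives $\frac{d}{d\alpha}e^{X}=\int_0^1 e^{sX}K e^{(1-s)X}ds$. Combined with cyclicity, this yields $Z'=\Tr[e^XK]$ and
\begin{align}
Z''=\int_0^1\Tr[e^{sX}Ke^{(1-s)X}K]\,ds .
\end{align}
At $\alpha=1$ we have $e^{X}=\rho$ and $Z=1$. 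Hence $f''(1)=\langle K,K\rangle^{\KM}_\rho-\Tr[\rho K]^2$. Expanding $\langle K-c,K-c\rangle^{\KM}_\rho$ with $c=\Tr[\rho K]$, and using $\langle I,Y\rangle^{\KM}_\rho=\Tr[\rho Y]$, gives exactly $\mathrm V^{\KM}_\rho(K)$. The error term is controlled by Taylor's theorem with the third derivative of $f$ bounded on a neighbourhood of $1$.

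The main obstacle is the non-full-rank case, where $I_\alpha^\flat$ is given by the limiting definition~\eqref{eq:flat-LE-renyi-CMI-limit} and $K$ is compressed to $\supp(\rho_{ABC})$. Following the convention used in Lemma~\ref{lem: general limit} and Corollary~\ref{cor:vN limit}, we plan to identify the limit with the compressed expression $\Tr_{\supp\rho}\exp(\alpha\log\rho+(1-\alpha)PH_\varepsilon P)$, taking $\varepsilon\downarrow0$ in the marginals first. The key technical point is that for $\alpha$ near $1$ the $\delta$-regularization of $\rho_{ABC}$ kills the component off its support: the terms $\alpha\log(\rho+\delta)$ tend to $-\infty$ there, and a block-perturbation estimate shows their contribution vanishes.

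Granting this reduction, the same computation applies verbatim on $\supp\rho$ with $P K P$. The remaining check is that the regularization $\rho_X(\varepsilon)$ of the marginals converges, as $\varepsilon\to0$, to the support-restricted logarithms in the compressed sense stated in the definition of $K_{ABC}$. If uniformity of the $O((\alpha-1)^2)$ term in $\varepsilon$ proves delicate, we will simply state the expansion with the compressed $K_{ABC}$, as the definition already indicates.
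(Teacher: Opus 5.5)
Your proposal is correct and follows essentially the same route as the paper. The paper also writes $I_\alpha^\flat=\frac{1}{\alpha-1}\log Z$ with $Z=\Tr\exp(\log\rho_{ABC}+(\alpha-1)K_{ABC})$, obtains $I(A:B|C)_\rho$ from the first derivative and the Kubo--Mori variance from the Duhamel formula for the second derivative, and treats non-full-rank states by the same computation on $\supp(\rho_{ABC})$. Your worry about uniformity in $\varepsilon$ is unnecessary: once the limits are taken, $I_\alpha^\flat$ equals the compressed expression exactly for each fixed $\alpha$, because $\supp\rho_{ABC}\subseteq\supp(\rho_{AC})\otimes\mathcal{H}_B$ and its analogues make $PH_\varepsilon P$ converge to $PHP$, and that expression is analytic in $\alpha$.
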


\begin{proof}
We give the proof for full-rank states. For states that are not full
rank, the limiting definitions give the same calculation on their supports.

Define $\beta\coloneqq\alpha-1$. Then, $I_{\alpha}^{\flat}(A:B|C)_{\rho}=\frac{1}{\beta}\log Z(\beta)$, 
where $Z(\beta)\coloneqq\Tr\!\left[\exp\!\left(\log\rho_{ABC}+\beta K_{ABC}\right)\right]$.
Let us determine the Taylor expansion of $\beta\mapsto\log Z(\beta)$
at $\beta=0$. Consider that
\begin{align}
\left.\log Z(\beta)\right|_{\beta=0}  =\log\Tr\!\left[\exp\!\left(\log\rho_{ABC}\right)\right]=\log\Tr\!\left[\rho_{ABC}\right] =0.
\end{align}
Now observe that
\begin{align}
Z(\beta)\left(\frac{d}{d\beta}\log Z(\beta)\right) & =\frac{d}{d\beta}Z(\beta)\\
 & =\frac{d}{d\beta}\Tr\!\left[\exp\!\left(\log\rho_{ABC}+\beta K_{ABC}\right)\right]\\
 & =\Tr\!\left[\exp\!\left(\log\rho_{ABC}+\beta K_{ABC}\right)\frac{d}{d\beta}\left(\log\rho_{ABC}+\beta K_{ABC}\right)\right]\\
 & =\Tr\!\left[\exp\!\left(\log\rho_{ABC}+\beta K_{ABC}\right)K_{ABC}\right],
\end{align}
implying that
\begin{align}
\left.\frac{d}{d\beta}\log Z(\beta)\right|_{\beta=0}  =\frac{\Tr\!\left[\exp\!\left(\log\rho_{ABC}\right)K_{ABC}\right]}{\Tr\!\left[\exp\!\left(\log\rho_{ABC}\right)\right]} =\Tr\!\left[\rho_{ABC}K_{ABC}\right] =I(A:B|C)_{\rho}.
\end{align}
Defining
\begin{equation}
\rho_{ABC}(\beta)\coloneqq\frac{\exp\!\left(\log\rho_{ABC}+\beta K_{ABC}\right)}{Z(\beta)},
\end{equation}
now observe that
\begin{align}
 & \frac{d^{2}}{d\beta^{2}}\log Z(\beta)\nonumber \\
 & =\frac{d}{d\beta}\left(\frac{d}{d\beta}\log Z(\beta)\right)\\
 & =\frac{d}{d\beta}\left(Z(\beta)^{-1}\Tr\!\left[\exp\!\left(\log\rho_{ABC}+\beta K_{ABC}\right)K_{ABC}\right]\right)\\
 & =-Z(\beta)^{-2}\frac{d}{d\beta}Z(\beta)\Tr\!\left[\exp\!\left(\log\rho_{ABC}+\beta K_{ABC}\right)K_{ABC}\right]\notag \\
 & \qquad +Z(\beta)^{-1}\frac{d}{d\beta}\Tr\!\left[\exp\!\left(\log\rho_{ABC}+\beta K_{ABC}\right)K_{ABC}\right]\\
 & =-Z(\beta)^{-2}\Tr\!\big[\exp\!\left(\log\rho_{ABC}+\beta K_{ABC}\right)K_{ABC}\big]^{2}\nonumber \\
 & \quad+Z(\beta)^{-1}\Tr\!\bigg[\int_{0}^{1}dse^{s\left(\log\rho_{ABC}+\beta K_{ABC}\right)}\notag \\
 & \qquad\quad\qquad \qquad  \times \left(\frac{d}{d\beta}\left(\log\rho_{ABC}+\beta K_{ABC}\right)\right)e^{\left(1-s\right)\left(\log\rho_{ABC}+\beta K_{ABC}\right)}K_{ABC}\bigg]\\
 & =-\left\langle K_{ABC}\right\rangle _{\rho_{ABC}(\beta)}^{2}+\Tr\!\left[\int_{0}^{1}ds\rho_{ABC}(\beta)^{s}K_{ABC}\rho_{ABC}(\beta)^{1-s}K_{ABC}\right]\\
 & =\int_{0}^{1}ds\,\Tr\!\left[\rho_{ABC}(\beta)^{s}\left(K_{ABC}-\left\langle K_{ABC}\right\rangle _{\rho_{ABC}(\beta)}\right)\rho_{ABC}(\beta)^{1-s}\left(K_{ABC}-\left\langle K_{ABC}\right\rangle _{\rho_{ABC}(\beta)}\right)\right].
\end{align}
Then
\begin{align}
\left.\frac{d^{2}}{d\beta^{2}}\log Z(\beta)\right|_{\beta=0} & =\int_{0}^{1}ds\,\Tr\!\left[\rho_{ABC}^{s}\left(K_{ABC}-\left\langle K_{ABC}\right\rangle _{\rho_{ABC}}\right)\rho_{ABC}^{1-s}\left(K_{ABC}-\left\langle K_{ABC}\right\rangle _{\rho_{ABC}}\right)\right]\\
 & =\mathrm{V}_{\KM}(A:B|C)_\rho.
\end{align}
Thus, we conclude the claimed expansion about $\alpha=1$.
\end{proof}

\section{Comparison of recovery bounds}
\label{app:numerical-recovery}
\label{app:mixed-recovery}

We give explicit examples showing that neither the $D^\star$ recovery
bound nor the measured-relative-entropy recovery bound is stronger in
general. We then compare with the fidelity bounds.

\subsection{The measured-relative-entropy bound can be stronger}

Let $C$ be trivial and consider
\begin{equation}
 |\varphi\rangle_{AB}
 =\sqrt{\frac9{10}}|00\rangle+\sqrt{\frac1{10}}|11\rangle,
 \qquad \rho_{AB}=|\varphi\rangle\langle\varphi|.
\end{equation}
Every rotated Petz recovery map used here prepares $\rho_B$, so the recovered state is
$\rho_A\otimes\rho_B$, independently of $t$. For a pure first argument,
\eqref{eq: keyl rate} gives
\begin{equation}
 D^\star(\rho_{AB}\|\rho_A\otimes\rho_B)
 =-\log\langle\varphi|\rho_A\otimes\rho_B|\varphi\rangle
 =\log\frac{100}{73}.
\end{equation}
Measuring both systems in the computational basis gives
\begin{equation}
 D^{\mathbb M}(\rho_{AB}\|\rho_A\otimes\rho_B)
 \geq -\frac9{10}\log\frac9{10}-\frac1{10}\log\frac1{10}
 >\log\frac{100}{73}.
\end{equation}
Thus, the measured-relative-entropy recovery bound is strictly stronger.
The same measurement also shows that $D^\star$ does not satisfy data
processing.
\subsection{The \texorpdfstring{$D^\star$}{D-star} recovery bound can be stronger}
\label{ex:simpler-keyl-recovery}
Let $A$, $B$, and $C$ be qubits, and consider the fully separable state
\begin{equation}
 \rho_{ABC}
 =\frac12|000\rangle\langle000|
  +\frac12|11+\rangle\langle11+|,
 \qquad |+\rangle=\frac{|0\rangle+|1\rangle}{\sqrt2}.
\end{equation}
Thus, $A$ and $B$ contain the same classical bit, while $C$ is prepared
in $|0\rangle$ or $|+\rangle$. Let $X_A$ be the bit flip on $A$.
A direct calculation of the rotated Petz recovery map gives
\begin{equation}
 \mathcal R^{[t]}_{C\to BC}(\rho_{AC})
 =q(t)\rho_{ABC}+(1-q(t))X_A\rho_{ABC}X_A,
 \qquad
 q(t)=\frac12\left[1+
 \frac{\cos\!\bigl(t\log(1+\sqrt2)\bigr)}{\sqrt2}\right].
\end{equation}
The states $\rho_{ABC}$ and $X_A\rho_{ABC}X_A$ have orthogonal
supports. Hence, the input commutes with every recovered state and
with their average. Both divergences therefore reduce to classical
relative entropy, and
\begin{align}
 &\int_{\mathbb R}\, dt\beta_0(t)
 D^\star\!\left(\rho_{ABC}\middle\|
       \mathcal R^{[t]}_{C\to BC}(\rho_{AC})\right)
 =-\int_{\mathbb R}\, dt\beta_0(t)\log q(t)\notag\\
 &\qquad>-\log\!\left(\int_{\mathbb R}\, dt\beta_0(t)q(t)\right)
 =D^{\mathbb M}\!\left(\rho_{ABC}\middle\|
       \int_{\mathbb R}\, dt\beta_0(t)
       \mathcal R^{[t]}_{C\to BC}(\rho_{AC})\right).
\end{align}
The inequality is strict by Jensen's inequality, since $q(t)$ is
positive and nonconstant, and $\beta_0(t)>0$ for all $t\in\mathbb R$.
\subsection{Strict improvement over the fidelity bounds}

A classical full-rank state already suffices for this comparison. Let
$C$ be trivial and take $\rho_{AB}=\operatorname{diag}(3,1,1,3)/8$ in
the computational basis. Both marginals are maximally mixed, so
every recovered state considered here, as well as their average, equals $I_{AB}/4$.
Since the two states commute,
\begin{align}
 D^\star(\rho_{AB}\|I_{AB}/4)
 &=\frac34\log3-\log2,\\
 -\log F(\rho_{AB},I_{AB}/4)
 &=\log\frac4{2+\sqrt3}
 <\frac34\log3-\log2.
\end{align}
The strict inequality also follows from strict Jensen's inequality
for the corresponding classical distributions. Thus, the $D^\star$
recovery bound can be strictly stronger than both fidelity bounds
in~\eqref{eq: keyl fidelity recovery comparison}, even for full-rank
states.

\section{Counterexample to the DPI for the dual quantity}
\label{app: counter example DPI dual}
We now prove that the quantity $J_\alpha$ generally fails to satisfy the DPI under local channels, so that it is not a conditional mutual information.

\begin{proposition}
    The quantity $J_\alpha$ is not a Rényi conditional mutual information for $\alpha=1/2$. 
\end{proposition}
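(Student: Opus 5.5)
The plan is to exhibit an explicit tripartite state $\rho_{ABC}$ and a local channel $\mathcal N_{A\to A'}$ such that
\begin{align}
J_{1/2}(A':B|C)_{\mathcal N(\rho)} > J_{1/2}(A:B|C)_{\rho},
\end{align}
which violates the data-processing inequality. First, some sanity checks that constrain where to look. By Theorem~\ref{thm:duality}, $J_{1/2}(A:B|D)_\rho=I^\flat_{1/2}(A:B|C)_\rho$ for any purification. Hence $J_{1/2}\ge 0$ by Lemma~\ref{lem: non-negativity}, and $J_{1/2}$ vanishes exactly on states whose purifying dual is a Markov chain (Lemma~\ref{lem: Markov chain}). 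So the naive strategy of discarding $A$ (which gives $J=0$) cannot produce a violation.

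Second, in the classical case a direct computation gives
\begin{align}
G_C=\sum_c p(c)^{-1}\Bigl(p(c)\log p(c) - p(c)\,I(A:B|C=c)\Bigr)|c\rangle\langle c| ,
\end{align}
so $J_\alpha$ reduces to a classical R\'enyi-type quantity that is monotone. Any counterexample must therefore exploit non-commutativity between $\log\rho_C$ and $G_C$, or between the conditional blocks.

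I would therefore work with the smallest nontrivial dimensions: $A$, $B$, $C$ qubits. I would use a state of the form $\rho_{ABC}=\sum_k p_k\,\rho^k_{AB}\otimes|\psi_k\rangle\langle\psi_k|_C$, with nonorthogonal $|\psi_k\rangle$ (as in the example of Appendix~\ref{app:numerical-recovery}), so that $G_C$ does not commute with $\rho_C$. For the channel I would take a simple one-parameter family on $A$, namely full dephasing in the computational basis or a partial depolarizing channel. Dephasing is natural because it removes the coherences on $A$ that feed the term $\operatorname{Tr}_A(\rho_{AC}\log\rho_{AC})$ in $G_C$ while keeping $\rho_C$ fixed.

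For each candidate I would compute $G_C$ and the $2\times2$ trace exponential $\operatorname{Tr}\exp\bigl[\tfrac12\log\rho_C+\tfrac12 G_C\bigr]$ before and after the channel. With $\alpha=1/2$ the prefactor $1/(\alpha-1)=-2$ is negative, so a DPI violation means that this trace \emph{decreases} under the channel. I would first scan the parameters (the $p_k$, the overlap $\langle\psi_0|\psi_1\rangle$, the dephasing strength) numerically to locate a violating point. Then I would fix rational or simple parameters and verify the strict inequality by an explicit evaluation, or by rigorous interval bounds on the two traces.

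The main obstacle is finding a violating instance at all. The duality guarantees that both values are nonnegative, and the classical reduction shows that the effect is purely noncommutative, so the gap is likely small. If dephasing on $A$ fails, I would next try channels on $B$, or a measure-and-prepare channel on $A$ that rotates into a basis misaligned with $\rho_{AC}$. I would also try enlarging $A$ to a qutrit. The final write-up would present a single explicit state and channel together with the two computed values.
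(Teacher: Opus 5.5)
Your proposal does not yet prove the statement. The proposition rests entirely on exhibiting a concrete state and a local channel with a verified violation, and you give a search strategy rather than an instance. You also give no argument that your chosen family contains a violating point: qubit systems, states $\sum_k p_k\rho^k_{AB}\otimes|\psi_k\rangle\langle\psi_k|_C$ with nonorthogonal $|\psi_k\rangle$, and dephasing or depolarizing on $A$. You flag this yourself as the main obstacle, so the essential step is missing.

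The idea you are missing is to use Theorem~\ref{thm:duality} constructively, not only for the nonnegativity and faithfulness checks. Under duality, discarding part of the first argument of $J_\alpha$ corresponds to moving that part into the conditioning system of $I^\flat_\alpha$.

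The paper takes the pure state $|\sigma\rangle_{AEBC}=\sum_{a,e,b}\sqrt{p(a,e,b)}\,|a,e\rangle_{AE}|b\rangle_B|a,e,b\rangle_C$, in which $C$ holds a coherent copy of a classical triple, and the channel $\Tr_E$. Duality then gives two purely classical values:
\begin{align}
J_{1/2}(AE:B|C)_\sigma&=-2\log\sum_{a,e,b}\sqrt{p(a,e,b)\,p_{AE}(a,e)\,p_B(b)},\\
J_{1/2}(A:B|C)_\sigma&=-2\log\sum_{a,e,b}\sqrt{p(a,e,b)\,p_{AE}(a,e)\,p_{BE}(b,e)/p_E(e)}.
\end{align}
So a DPI violation is exactly a classical failure of the R\'enyi analogue of $I(AE:B)\geq I(A:B|E)$. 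The paper exhibits one with an explicit eight-point rational distribution, giving values $\approx 0.0866<0.0877$.

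This also shows your guiding heuristic points the wrong way. In this construction, $\log\sigma_C$ and $G_C$ are simultaneously diagonal in $\{|a,e,b\rangle_C\}$, both before and after $\Tr_E$. So noncommutativity between $G_C$ and $\rho_C$ is not needed.

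Your classical computation extends verbatim to states $\sum_c p(c)\rho^c_{AB}\otimes|c\rangle\langle c|_C$, where $G_C$ has entries $\log p(c)-I(A:B)_{\rho^c}\le\log p(c)$. It therefore rules out only states that are block diagonal in a fixed basis of $C$. Coherence of $C$ across such blocks already lets $\log p(c)-G_C(c)$ become negative, and this is all the counterexample uses.
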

\begin{proof}
We construct an explicit counterexample. 
Consider the pure state
\begin{align}
|\sigma\rangle_{AEBC}
=
\sum_{a,e,b}
\sqrt{p(a,e,b)}\,
|a,e\rangle_{AE}
|b\rangle_B
|a,e,b\rangle_C .
\end{align}
where $p(a,e,b)$ is defined as
\begin{align}
55\,p(a,e,b)
=
\begin{array}{c|cc}
(a,e)\backslash b & 0 & 1 \\ \hline
(0,0) & 7 & 2 \\
(0,1) & 1 & 3 \\
(1,0) & 1 & 20 \\
(1,1) & 4 & 17
\end{array}.
\end{align}
We show that for this state,
\(J_\alpha\) does not satisfy local data processing under the
channel \(\Tr_E:AE\to A\). Explicitly, it does not satisfy the inequality
\begin{align}
J_\alpha(AE:B|C)_\sigma
\ge
J_\alpha(A:B|C)_{\sigma}.
\end{align}
Indeed, duality identifies these quantities with the classical
quantities $I_{1/2}^{\flat}(AE:B)_p$ and
$I_{1/2}^{\flat}(A:B|E)_p$, respectively. Define
\begin{align}
 q(a,e,b)=p_{AE}(a,e)p_B(b),\quad
 r(a,e,b)=\frac{p_{AE}(a,e)p_{BE}(b,e)}{p_E(e)}.
\end{align}
Consequently, the two values are given exactly by
\begin{align}
 J_{\frac{1}{2}}(AE:B|C)_\sigma=-2\log\sum_{a,e,b}\sqrt{p(a,e,b)q(a,e,b)},\;
 J_{\frac{1}{2}}(A:B|C)_\sigma=-2\log\sum_{a,e,b}\sqrt{p(a,e,b)r(a,e,b)}.
\end{align}
Direct evaluation gives
\begin{align}
J_{\frac{1}{2}}(AE:B|C)_\sigma
\approx 0.08658818
<
0.08772824
\approx
J_{\frac{1}{2}}(A:B|C)_\sigma .
\end{align}
On the other hand, one can verify that the violation disappears at
\(\alpha=1\).
\end{proof}
\end{document}